\pgfplotsset{compat=1.9}
\newcommand{\assign}{\longleftarrow}
\definecolor{thesisblue}         {rgb}{0.337, 0.592, 0.773}
\definecolor{thesisblue-dark}    {rgb}{0.212, 0.471, 0.655}
\definecolor{thesisblue-light}   {rgb}{0.490, 0.710, 0.867}
\definecolor{thesisblue-vlight}  {rgb}{0.871, 0.937, 0.988}
\definecolor{thesisred}          {rgb}{0.776, 0.357, 0.396}
\definecolor{thesisred-dark}     {rgb}{0.663, 0.235, 0.271}
\definecolor{thesisred-light}    {rgb}{0.867, 0.502, 0.533}
\definecolor{thesisred-vlight}   {rgb}{0.984, 0.871, 0.878}
\definecolor{thesisgreen}        {rgb}{0.337, 0.765, 0.235}
\definecolor{thesisgreen-dark}   {rgb}{0.267, 0.612, 0.184}
\definecolor{thesisgreen-light}  {rgb}{0.443, 0.871, 0.341}
\definecolor{thesisgreen-vlight} {rgb}{0.710, 0.984, 0.643}
\definecolor{thesisyellow}       {rgb}{0.808, 0.659, 0.263}
\definecolor{thesisyellow-dark}  {rgb}{0.667, 0.529, 0.180}
\definecolor{thesisyellow-light} {rgb}{0.898, 0.765, 0.412}
\definecolor{thesisyellow-vlight}{rgb}{0.992, 0.933, 0.776}
\definecolor{black}              {rgb}{0,0,0}
\definecolor{black70}            {rgb}{0.30,0.30,0.30}
\definecolor{black50}            {rgb}{0.50,0.50,0.50}
\definecolor{black30}            {rgb}{0.70,0.70,0.70}
\definecolor{black15}            {rgb}{0.85,0.85,0.85}
\definecolor{black7}             {rgb}{0.93,0.93,0.93}
\colorlet{primarycolor-dark}      {thesisblue-dark}
\colorlet{primarycolor}           {thesisblue}
\colorlet{primarycolor-light}     {thesisblue-light}
\colorlet{primarycolor-vlight}    {thesisblue-vlight}
\colorlet{secondarycolor-dark}    {thesisred-dark}
\colorlet{secondarycolor}         {thesisred}
\colorlet{secondarycolor-light}   {thesisred-light}
\colorlet{secondarycolor-vlight}  {thesisred-vlight}
\tikzset{>=stealth'}
\tikzstyle{figure}=[font=\small]
\tikzstyle{vertex}=[circle, inner sep=0pt, minimum size=6pt, draw=black, fill=primarycolor-vlight]
\tikzstyle{bigvertex}=[vertex, minimum size=11pt]
\tikzstyle{textvertex}=[bigvertex, font=\footnotesize]
\tikzstyle{highlightedvertex}=[vertex, fill=secondarycolor-light]
\tikzstyle{edgelabel}=[inner sep=1pt, fill=white]
\tikzstyle{edges}=[shorten >=1pt, shorten <=1pt]
\tikzstyle{edgesbackground}=[line width=2pt, draw=white]
\tikzstyle{function}=[draw=primarycolor,very thick,line cap=rect]
\tikzstyle{functionterm}=[text=primarycolor,edgelabel]
\tikzstyle{discontinuity}=[circle, inner sep=0pt, minimum size=3.5pt,very thick]
\tikzstyle{discontinuityblank}=[discontinuity,draw=primarycolor,fill=white]
\tikzstyle{discontinuityfilled}=[discontinuity,draw=primarycolor,fill=primarycolor]
\tikzstyle{plotgrid}=[color=black30,very thin,dash pattern=on 2pt off 1pt]
\tikzstyle{chargingfunction}=[function,draw=secondarycolor]
\tikzstyle{helperline}=[black]
\tikzstyle{breakpoint}=[fill,primarycolor,circle,inner sep=0pt,minimum size=3.5pt,very thick]
\tikzstyle{plotpoint}=[fill=black,circle,inner sep=0pt,minimum size=3.5pt]
\tikzstyle{tangentline}=[shorten >=-20pt,shorten <= -15pt]
\tikzstyle{markSign} = [mark=*]
\tikzstyle{heuristicMarkSign} = [mark=o]
\tikzstyle{exactMarkSign} = [mark=*]
\tikzstyle{shortenLines} = [shorten <= 3.5pt,shorten >= 3.5pt]
\pgfplotsset{
  /pgfplots/legend line with mark/.style={
    legend image code/.code={
        \draw[##1,no markers,shorten >= 3.5pt,line width=1pt]
         plot coordinates {
         (0cm,0cm)
         (0.3cm,0cm)
        };
        \draw[##1,only marks,markSign,line width=1pt]
         plot coordinates {
         (0.3cm,0cm)
        };
        \draw[##1,no markers,shorten <= 3.5pt,line width=1pt]
         plot coordinates {
         (0.3cm,0cm)
         (0.6cm,0cm)
        };
    }
  }
}
\pgfplotsset{
  /pgfplots/legend line with heuristic mark/.style={
    legend image code/.code={
        \draw[##1,no markers,shorten >= 3.5pt,line width=1pt]
         plot coordinates {
         (0cm,0cm)
         (0.3cm,0cm)
        };
        \draw[##1,only marks,heuristicMarkSign,line width=1pt]
         plot coordinates {
         (0.3cm,0cm)
        };
        \draw[##1,no markers,shorten <= 3.5pt,line width=1pt]
         plot coordinates {
         (0.3cm,0cm)
         (0.6cm,0cm)
        };
    }
  }
}
\pgfplotsset{
  /pgfplots/legend line with exact mark/.style={
    legend image code/.code={
        \draw[##1,no markers,shorten >= 3.5pt,line width=1pt]
         plot coordinates {
         (0cm,0cm)
         (0.3cm,0cm)
        };
        \draw[##1,only marks,exactMarkSign,line width=1pt]
         plot coordinates {
         (0.3cm,0cm)
        };
        \draw[##1,no markers,shorten <= 3.5pt,line width=1pt]
         plot coordinates {
         (0.3cm,0cm)
         (0.6cm,0cm)
        };
    }
  }
}
\newcommand{\eg}{e.\,g.\xspace}
\newcommand{\ie}{i.\,e.\xspace}
\newcommand{\wrt}{wrt.\xspace}
\newcommand{\cf}{c.\,f.\xspace}
\newcommand{\case}[1]{\emph{#1:}}
\DeclareMathOperator{\mergeop}{merge}
\DeclareMathOperator{\linkop}{link}
\DeclareMathOperator{\convertop}{convert}
\DeclareMathOperator{\boundop}{bound}
\DeclareRobustCommand{\ubar}[1]{\ensuremath{\underaccent{\bar}{#1}}}
\newcommand{\obar}[1]{\ensuremath{\bar{#1}}}
\newcommand{\graph}{\ensuremath{G}}
\newcommand{\vertices}{\ensuremath{V}}
\newcommand{\arcs}{\ensuremath{A}}
\newcommand{\vertex}{\ensuremath{v}}
\newcommand{\vertexa}{\ensuremath{u}}
\newcommand{\vertexb}{\ensuremath{v}}
\newcommand{\vertexc}{\ensuremath{w}}
\newcommand{\arc}{\ensuremath{a}}
\newcommand{\apath}{\ensuremath{P}}
\newcommand{\source}{\ensuremath{s}}
\newcommand{\target}{\ensuremath{t}}
\newcommand{\potential}{\ensuremath{\pi}}
\newcommand{\backwardgraph}{\ensuremath{\obar{\graph}}}
\newcommand{\backwardarcs}{\ensuremath{\obar{\arcs}}}
\newcommand{\functionspace}{\ensuremath{\mathbb{F}}}
\newcommand{\consumptionfunctionspace}{\ensuremath{\mathbb{F}}}
\newcommand{\interval}{\ensuremath{I}}
\newcommand{\convexlowerboundfunction}{\ensuremath{\varphi}}
\newcommand{\convexfunctionbreakpoints}{\ensuremath{\Phi}}
\newcommand{\drivingtimepotential}{\ensuremath{\potential_\drivingtimefunction}}
\newcommand{\convexweightfunction}{\ensuremath{\varphi}}
\newcommand{\convexpotential}{\ensuremath{\potential_\convexweightfunction}}
\newcommand{\costfunction}{\ensuremath{z}}
\newcommand{\drivingtimefunction}{\ensuremath{d}}
\newcommand{\consumptionfunction}{\ensuremath{c}}
\newcommand{\drivingtimelowerboundfunction}{\ensuremath{\ubar{d}}}
\newcommand{\consumptionlowerboundfunction}{\ensuremath{\ubar{c}}}
\newcommand{\distance}{\text{dist}}
\newcommand{\maxbattery}{\ensuremath{M}}
\newcommand{\soc}{\ensuremath{b}}
\newcommand{\socprofile}{\ensuremath{f}}
\newcommand{\labelset}{\ensuremath{L}}
\newcommand{\alabel}{\ensuremath{\ell}}
\DeclareMathOperator{\queuekey}{key}
\newcommand{\queue}{\ensuremath{Q}}
\newcommand{\reals}{\ensuremath{\mathbb{R}}}
\newcommand{\posreals}{\ensuremath{\reals_{\ge0}}}
\newcommand{\strictposreals}{\ensuremath{\reals_{>0}}}
\newcommand{\negreals}{\ensuremath{\reals_{\le0}}}
\newcommand{\naturals}{\ensuremath{\mathbb{N}}}
\newcommand{\bigO}{\ensuremath{\mathcal{O}}}
\newcommand{\NP}{\ensuremath{\mathcal{NP}}}
\newcommand{\pointa}{\ensuremath{p}}
\newcommand{\pointb}{\ensuremath{q}}
\newcommand{\pointc}{\ensuremath{r}}
\newcommand{\aline}{\ensuremath{\ell}}
\newcommand{\speedconsumptionfunction}{\ensuremath{h}}
\newcommand{\velocity}{\ensuremath{v}}
\newcommand{\slope}{\ensuremath{s}}
\newcommand{\length}{\ensuremath{\ell}}
\newcommand{\atime}{\ensuremath{x}}
\newcommand{\acoefficient}{\ensuremath{\lambda}}
\newcommand{\tcinner}{\ensuremath{g}}
\newcommand{\tcfunction}{\ensuremath{c}}
\newcommand{\tcfunctionupperbound}{\ensuremath{\obar{g}}}
\newcommand{\aparam}{\ensuremath{\alpha}}
\newcommand{\bparam}{\ensuremath{\beta}}
\newcommand{\cparam}{\ensuremath{\gamma}}
\newcommand{\aconst}{\ensuremath{\acoefficient_1}}
\newcommand{\bconst}{\ensuremath{\acoefficient_2}}
\newcommand{\cconst}{\ensuremath{\acoefficient_3}}
\newcommand{\leftintervalborder}{\ensuremath{\ubar{\tau}}}
\newcommand{\rightintervalborder}{\ensuremath{\obar{\tau}}}
\newcommand{\cvalue}{\ensuremath{\Delta}}
\newcommand{\cfunction}{\ensuremath{\Delta_{\operatorname{opt}}}}
\newcommand{\cfunctionsecond}{\ensuremath{\bar{\Delta}_{\operatorname{opt}}}}
\newcommand{\pparam}{\ensuremath{\lambda}}
\newcommand{\qparam}{\ensuremath{\mu}}
\newcommand{\tcfunctionderivative}{\tcinner'}
\newcommand{\tcfunctionleftderivative}{\tcinner'}
\newcommand{\tcfunctionrightderivative}{\tcinner'}
\newcommand{\tcfunctionshortcut}{\ensuremath{\tcfunction_{\smash{\arc}}}}
\newcommand{\leftbordershortcut}{\ensuremath{\leftintervalborder}}
\newcommand{\rightbordershortcut}{\ensuremath{\rightintervalborder}}
\newcommand{\intervalborder}{\ensuremath{\tau}}
\newcommand{\accumulateddrivingtime}{\ensuremath{\tau}}
\newcommand{\accumulatedsoc}{\ensuremath{\beta}}
\newcommand{\vertical}[1]{\begin{rotate}{90}#1\end{rotate}}
\newcommand{\instanceNoAux}{PG\xspace}
\newcommand{\instanceAux}{AX\xspace}
\newcommand{\instanceEurNoAux}{Eur-\instanceNoAux}
\newcommand{\instanceGerNoAux}{Ger-\instanceNoAux}
\newcommand{\instanceEurAux}{Eur-\instanceAux}
\newcommand{\instanceGerAux}{Ger-\instanceAux}
\newcommand{\firstleftboundarytermfull}{\ensuremath{\leftintervalborder_1 + \bparam_2 +  \left( \leftintervalborder_1 - \bparam_1 \right) \sqrt[3]{\frac{\aparam_2}{\aparam_1}}}}
\newcommand{\secondleftboundarytermfull}{\ensuremath{\leftintervalborder_2 + \bparam_1 +  \left( \leftintervalborder_2 - \bparam_2 \right) \sqrt[3]{\frac{\aparam_1}{\aparam_2}}}}
\newcommand{\firstrightboundarytermfull}{\ensuremath{\rightintervalborder_1 + \bparam_2 +  \left( \rightintervalborder_1 - \bparam_1 \right) \sqrt[3]{\frac{\aparam_2}{\aparam_1}}}}
\newcommand{\secondrightboundarytermfull}{\ensuremath{\rightintervalborder_2 + \bparam_1 + \left( \rightintervalborder_2 - \bparam_2 \right) \sqrt[3]{\frac{\aparam_1}{\aparam_2}}}}
\newcommand{\firstleftboundaryterm}{\ensuremath{\leftintervalborder^*_1}}
\newcommand{\secondleftboundaryterm}{\ensuremath{\leftintervalborder^*_2}}
\newcommand{\firstrightboundaryterm}{\ensuremath{\rightintervalborder^*_1}}
\newcommand{\secondrightboundaryterm}{\ensuremath{\rightintervalborder^*_2}}
\newcommand{\ExampleInftyOffset}{1}
\newcommand{\SoCProfileExampleInftyOffset}{1}
\newcommand{\PlotXAxisName}[3]
{
 \node[edgelabel,above=4pt] at (#1,#2) {#3};
}
\newcommand{\PlotXAxisNameFlipped}[3]
{
 \node[edgelabel,below=4pt] at (#1,#2) {#3};
}
\newcommand{\PlotYAxisName}[3]
{
 \node[edgelabel,right=2pt] at (#1,#2) {#3};
}
\newcommand{\DrawDiscontinuitySymbolWithLines}[1]
{
 \pgfgettransformentries{\mya}{\myb}{\myc}{\myd}{\mys}{\myt}
 \pgfmathsetmacro{\grow}{0.75/\mya}
 \begin{scope}[scale=\grow,rotate=#1]
  \clip (-0.15,-0.5/\grow) rectangle (0.15,0.5/\grow);
   
  \draw (0,0.06) -- (0,0.5/\grow);
  \draw (0,-0.06) -- (0,-0.5/\grow);

  \draw[line cap=rect,decorate,decoration={snake,segment length=6pt, amplitude=1.25pt}] (-0.435,0.06) -- (0.5,0.06);
  \draw[line cap=rect,decorate,decoration={snake,segment length=6pt, amplitude=1.25pt}] (-0.435,-0.06) -- (0.5,-0.06);  
 \end{scope}
} 
\newcommand{\ExampleDrawCoordinateSystem}[6]
{
 \begin{scope}
  \begin{pgfinterruptboundingbox}
   \clip (#5,#3-1) rectangle (#2+1,#4+1);
   \draw[plotgrid] (#1-1,#3) grid (#2,#4);
  \end{pgfinterruptboundingbox}
 \end{scope}
 
 \begin{pgfonlayer}{foreground}
  \begin{scope}
   \begin{pgfinterruptboundingbox}
    \clip (#5,#3-1) rectangle (#2+1,#4+1);
    \draw[color=black] (#1-1,#6) -- (#2,#6);
    \draw[shift={(#1-1,#6)},color=black] (0pt,2pt) -- (0pt,-2pt);  
   \end{pgfinterruptboundingbox}
  \end{scope}

  \draw[color=black] (#5,#3) -- (#5,#4);
 \end{pgfonlayer}
 
 \node at (#2,#4) {};
}
\newcommand{\ExampleDrawCoordinateSystemWithTicks}[6]
{
 \ExampleDrawCoordinateSystem{#1}{#2}{#3}{#4}{#5}{#6}

 \begin{pgfonlayer}{foreground}
  \foreach \x in {#1,...,#2}
   \draw[shift={(\x,#6)}] (0pt,2pt) -- (0pt,-2pt) node[below=2pt,inner sep=1pt,fill=white] {\vphantom{$\x$}};
  \foreach \x in {#1,...,#2}
   \draw[shift={(\x,#6)}] (0pt,2pt) -- (0pt,-2pt) node[below=2pt,inner sep=1pt] {$\x$};
   
  \foreach \y in {#3,...,#4}
   \draw[shift={(#5,\y)}] (2pt,0pt) -- (-2pt,0pt) node[left] {$\y$};
 \end{pgfonlayer}
}
\newcommand{\ExampleDrawCoordinateSystemWithPositiveInfty}[6]
{
 \ExampleDrawCoordinateSystemWithTicks{#1}{#2}{#3}{#4}{#5}{#6}

 \begin{pgfonlayer}{foreground}
  \draw[shift={(#5,#4+\ExampleInftyOffset)}] (2pt,0pt) -- (-2pt,0pt) node[left] {$\infty$};
  \draw[shift={(#5,#4+\ExampleInftyOffset)}] (0pt,2pt) -- (0pt,-2pt);
 
  \begin{scope}[shift={(#5,#4+0.5*\ExampleInftyOffset)}]
   \DrawDiscontinuitySymbolWithLines{0}
  \end{scope}
 \end{pgfonlayer}
}
\newcommand{\ExampleDrawCoordinateSystemWithNegativeInfty}[6]
{
 \ExampleDrawCoordinateSystemWithTicks{#1}{#2}{#3}{#4}{#5}{#6}

 \begin{pgfonlayer}{foreground}
  \draw[shift={(#5,#3-\ExampleInftyOffset)}] (2pt,0pt) -- (-2pt,0pt) node[left] {$-\infty$};
  \draw[shift={(#5,#3-\ExampleInftyOffset)}] (0pt,2pt) -- (0pt,-2pt);
 
  \begin{scope}[shift={(#5,#3-0.5*\ExampleInftyOffset)}]
   \DrawDiscontinuitySymbolWithLines{0}
  \end{scope}
 \end{pgfonlayer}
}
\newcommand{\ExampleDrawCoordinateSystemFlipX}[6]
{
 \ExampleDrawCoordinateSystem{#1}{#2}{#3}{#4}{#5}{#6}

 \begin{pgfonlayer}{foreground}
  \foreach \x in {#1,...,#2}
   \draw[shift={(\x,#6)}] (0pt,2pt) -- (0pt,-2pt) node[above=4pt,inner sep=1pt,fill=white] {\vphantom{$\x$}};
  \foreach \x in {#1,...,#2}
   \draw[shift={(\x,#6)}] (0pt,2pt) -- (0pt,-2pt) node[above=4pt,inner sep=1pt] {$\x$};
   
  \foreach \y in {#3,...,#4}
   \draw[shift={(#5,\y)}] (2pt,0pt) -- (-2pt,0pt) node[left] {$\y$};
 \end{pgfonlayer}
}
\newcommand{\ExampleDrawCoordinateSystemWithPositiveInftyFlipX}[6]
{
 \ExampleDrawCoordinateSystemFlipX{#1}{#2}{#3}{#4}{#5}{#6}

 \begin{pgfonlayer}{foreground}
  \draw[shift={(#5,#4+\ExampleInftyOffset)}] (2pt,0pt) -- (-2pt,0pt) node[left] {$\infty$};
  \draw[shift={(#5,#4+\ExampleInftyOffset)}] (0pt,2pt) -- (0pt,-2pt);
 
  \begin{scope}[shift={(#5,#4+0.5*\ExampleInftyOffset)}]
   \DrawDiscontinuitySymbolWithLines{0}
  \end{scope}
 \end{pgfonlayer}
}
\newcommand{\ExampleDrawCoordinateSystemWithPositiveInftyNoTicks}[6]
{
 \ExampleDrawCoordinateSystem{#1}{#2}{#3}{#4}{#5}{#6}

 \begin{pgfonlayer}{foreground}
  \draw[shift={(#5,#4+\ExampleInftyOffset)}] (2pt,0pt) -- (-2pt,0pt) node[left] {$\infty$};
  \draw[shift={(#5,#4+\ExampleInftyOffset)}] (0pt,2pt) -- (0pt,-2pt);
 
  \begin{scope}[shift={(#5,#4+0.5*\ExampleInftyOffset)}]
   \DrawDiscontinuitySymbolWithLines{0}
  \end{scope}
 \end{pgfonlayer}
}
\newacronym[description={Battery Electric Vehicle}]     {ev}{EV}{battery electric vehicle}
\newacronym[description={State of Charge}]              {soc}{SoC}{state of charge}
\newacronym{vrp}{VRP}{Vehicle Routing Problem}
\newacronym{csp}{CSP}{Constrained Shortest Path}
\newacronym{spprc}{SPPRC}{Shortest Path Problem with Resource Constraints}
\newacronym{evcsp}{EVCSP}{EV Constrained Shortest Path}
\newacronym{evcas}{EVCAS}{EV Continuous Adaptive Speeds Shortest Path}
\newacronym{alt}{ALT}{A*, Landmarks, Triangle Inequality}
\newacronym{ch}{CH}{Contraction Hierarchies}
\newacronym{cch}{CCH}{Customizable Contraction Hierarchies}
\newacronym{crp}{CRP}{Customizable Route Planning}
\newacronym{mld}{MLD}{Multilevel Dijkstra}
\newacronym{phast}{PHAST}{PHAST Hardware-Accelerated Shortest Path Trees}
\newacronym[description={Breadth-First Search}]         {bfs}{BFS}{breadth-first search}
\newacronym[description={Depth-First Search}]           {dfs}{DFS}{depth-first search}
\newacronym[description={Directed Acyclic Graph}]       {dag}{DAG}{directed acyclic graph}
\newacronym[description={Single-Source Shortest Path}]  {sssp}{SSSP}{single-source shortest path}
\newacronym[description={Single-Pair Shortest Path}]    {spsp}{SPSP}{single-pair shortest path}
\newacronym[description={All-Pairs Shortest Path}]      {apsp}{APSP}{all-pairs shortest path}
\newacronym{srtm}{SRTM}{Shuttle Radar Topography Mission}
\newacronym{phem}{PHEM}{Passenger Car and Heavy Duty Emission Model}
\newacronym{hbefa}{HBEFA}{Handbook on Emission Factors for Road Traffic}
\newacronym{osm}{OSM}{OpenStreetMap}
\newacronym[description={First-In-First-Out}]           {fifo}{FIFO}{first-in-first-out}
\newacronym{ed}{ED}{Edge Difference}
\newacronym{cq}{CQ}{Cost of Queries}
\newacronym{sc}{SC}{Shortcut Complexity}
\newacronym{dn}{DN}{Deleted Neighbors}
\newacronym[description={Bicriteria Shortest Path (Algorithm)}]      {bsp}{BSP}{bicriteria shortest path}
\newacronym[description={Tradeoff Function Propagating (Algorithm)}] {tfp}{TFP}{tradeoff function propagating}
\newacronym                                             {champ}{CHAsp}{CH, A*, Adaptive Speeds}
\newacronym[description={Fully Polynomial-Time Approximation Scheme}] {fptas}{FPTAS}{fully polynomial-time approximation scheme}
\newtheorem{definition}{Definition}
\newtheorem{lemma}{Lemma}
\newcommand{\email}[1]{\texttt{#1}}
\title{Modeling and Engineering Constrained Shortest Path Algorithms for Battery Electric Vehicles\thanks{Our work was partially supported by DFG Research Grant WA\,654/16--2 and DFG Research Grant WA\,654/23--1.
This manuscript is the full version of an extended abstract that appeared in the proceedings of the 25th Annual European Symposium on Algorithms~\cite{Bau17a}. It builds upon the PhD thesis of one of the authors~\cite{Bau18b}}}
\author[1]{Moritz Baum}
\author[2]{Julian Dibbelt}
\author[1]{Dorothea Wagner}
\author[1]{Tobias Zündorf}
\affil[1]{Karlsruhe Institute of Technology, Germany\\\email{moritz@ira.uka.de,dorothea.wagner@kit.edu,tobias.zuendorf@kit.edu}}
\affil[2]{Sunnyvale, CA, United States\\\email{algo@dibbelt.de}}
\date{}
\begin{document}

\maketitle

\begin{abstract}
We study the problem of computing constrained shortest paths for battery electric vehicles. Since battery capacities are limited, fastest routes are often infeasible. Instead, users are interested in fast routes on which the energy consumption does not exceed the battery capacity. For that, drivers can deliberately reduce speed to save energy. Hence, route planning should provide both path \emph{and} speed recommendations. To tackle the resulting \NP-hard optimization problem, previous work trades correctness or accuracy of the underlying model for practical running times.
We present a novel framework to compute \emph{optimal} constrained shortest paths (without charging stops) for electric vehicles that uses more realistic physical models, while taking speed adaptation into account. Careful algorithm engineering makes the approach practical even on large, realistic road networks: We compute optimal solutions in less than a second for typical battery capacities, matching the performance of previous inexact methods. For even faster query times, the approach can easily be extended with heuristics that provide high quality solutions within milliseconds.
\end{abstract}

\section{Introduction}\label{sec:introduction}

\Glspl*{ev} have matured, giving the prospect of high powertrain efficiency and independence of fossil fuels, but a major hindrance of their adoption remains the limited battery capacity of most vehicles combined with a lengthy recharge time.
To overcome \emph{range anxiety}, careful route planning that prevents battery depletion during a ride is paramount.
To avoid time-consuming charging stops, it can pay off to save energy by deliberately driving below posted speed limits, especially along high-speed roads. Speed planning becomes even more relevant with the advent of autonomous vehicles, where driving speeds can be planned in advance to ensure that the target is reached.
Another substantial difference to vehicles run by combustion engines is the ability to recuperate energy when braking.
In this work, we study route planning algorithms that incorporate these modeling considerations to capture the characteristics and needs of \glspl*{ev}.
We discuss different ways to model \emph{adaptive speeds} and propose algorithmic solutions to solve the resulting \NP-hard problem of finding a fast route on which the target can be reached without intermediate recharging.
We demonstrate the practicality of our approaches in a comprehensive evaluation on realistic input.

\paragraph{Related Work.}
Classic route planning approaches make use of a graph-based representation of the considered transportation network, where scalar arc costs correspond to, \eg, travel times. A shortest path is then found by Dijkstra's algorithm~\cite{Dij59}.
A wide range of \emph{speedup techniques}~\cite{Bas14} enable provably correct but faster queries in practice. For instance, A*~search~\cite{Har68} uses \emph{vertex potentials} to guide the search towards the target.
\Gls*{ch} introduced by Geisberger et~al.~\cite{Gei12b} employs a preprocessing step to obtain a search graph that allows skipping vast parts of the network at query time. For that, \gls*{ch} iteratively \emph{contracts} vertices according to a heuristic vertex ranking, while adding \emph{shortcut} arcs to maintain distances within the remaining graph. 
Extensions to multicriteria settings exist for both A*~search~\cite{Erb14,Mac12,Man10,San13} and \gls*{ch}~\cite{Fun13,Gei10b,Sto12d}.
Moreover, A*~search and \gls*{ch} can be combined to Core-\acrshort*{alt}~\cite{Bau08b}, where all but the highest-ranked vertices are contracted, which form the \emph{core} graph. On that, a variant of A*~search uses precomputed distances to \emph{landmark} vertices for potentials~\cite{Gol05b}.

Route planning for \glspl*{ev} involves battery capacity constraints and negative arc weights due to recuperation, which is tractable when optimizing energy consumption as a single criterion, \ie, ignoring driving speed entirely~\cite{Bau13a,Eis11,Sac11}. Such energy-optimal routes often exhibit disproportionate detours, as using minor, slow roads can save energy due to less air drag~\cite{Bau13a}.
Variants of the \NP-hard \emph{\gls*{csp}} problem~\cite{Han80a} overcome this by minimizing energy consumption without exceeding a given time limit~\cite{Jur14,Liu16,Sto12c} or finding the fastest route (without adaptive speeds) that does not exceed battery constraints~\cite{Bau18,Wan13}.
These works extend well-known bicriteria search algorithms~\cite{Han80b,Mar84} to tackle \gls*{csp} problem variants for~\glspl*{ev}. Storandt~\cite{Sto12c} proposes a combination with \gls*{ch}, which computes optimal results in the order of milliseconds on subcountry-scale networks.

All works mentioned so far assume that the driving speed per road segment is \emph{fixed} in the sense that it cannot be adjusted by the driver, neglecting attractive solutions that may still use major roads (\eg, motorways) and save energy by deliberately driving below posted speed limits.
Flores et~al.~\cite{Flo15} consider the problem of planning driving speeds of an \gls*{ev} for a given route in the network, \ie, they only plan speeds, but not the route itself.
Lv et~al.~\cite{Lv16} introduce a dynamic programming approach to optimally plan the speed of a solar-powered \gls*{ev}. Designed for simulation purposes, it is too slow for interactive applications.
Fontana~\cite{Fon13} proposes a variant of the \gls*{csp} problem with the additional requirement of determining velocities for all road segments. Taking robustness criteria into account to deal with uncertainty \wrt time and energy consumption, a heuristic approach based on Lagrangian relaxation achieves reasonable running times (few seconds) on small graphs.
By sampling discrete alternative speeds, tradeoffs can also be modeled as parallel arcs~\cite{Bau14a}, but this yields too many nondominated intermediate solutions, growing exponentially even for chains of vertices. Accordingly, only heuristics offer acceptable performance for common vehicle ranges when using parallel arcs.
Instead, Hartmann and Funke~\cite{Har14} model tradeoffs as continuous \emph{functions}, assuming the driver can go at \emph{any} speed up to a given speed limit per arc. However, their solution is still based on discrete sampling of these continuous functions (it also ignores battery constraints). Despite employing a variant of~\gls*{ch} as a preprocessing-based speedup technique, their approach does not scale well to large networks, with query times in the order of minutes.
Alternatively, \emph{two-phase paths} consist of a fastest and an energy-optimal subpath~\cite{Goo14}, where driving speed along arcs is determined by the type of the respective subpath. This allows for reasonably fast queries in the order of seconds (without preprocessing), but results are not optimal in general.
Strehler et~al.~\cite{Mer15} give theoretical insights for a \gls*{csp} problem including variable speeds for~\glspl*{ev}. Most importantly, they develop a \gls*{fptas} for this problem. Unfortunately, the algorithm is slow in practice. They also propose a heuristic search based on discretized speeds, but do not evaluate their approach.

Speed planning and multicriteria optimization are also relevant in related Crew Scheduling and \glspl*{vrp}. A subproblem that is often considered in this context is called~\emph{\gls*{spprc}}~\cite{Irn05}, which exists in many variants and can be seen as a generalization of \gls*{csp} in which a shortest path subject to one or more resource constraints is sought. These constraints can be represented by resource extension functions~\cite{Irn07}, which define the change in resource consumption or state when traversing an arc of the network~\cite{Sch15}.
Often, these state transitions are modeled as linear~\cite{Ioa98,Lib11,Til18} or nonlinear functions~\cite{He18}. Another common approach is to use parallel arcs to represent different tradeoff choices~\cite{Gar16,Let14}.
For an overview of variants of \gls*{spprc} and the \gls*{vrp} with realistic constraints and road attributes, see Ticha et~al.~\cite{Tic18}.
Works dealing with speed planning can be found in Bekta\c{s} et~al.~\cite{Bek16} and Fukasawa et~al.~\cite{Fuk18}. Another related problem is the \emph{Pollution Routing Problem}~\cite{Bek11}, which aims at minimizing emissions of vehicles. \Glspl*{ev} are explicitly considered in several works~\cite{Des16,Goe15,Mon15,Sch14b}. As a general observation, \gls*{vrp} settings are very complex, as they extend the Traveling Salesman Problem and often involve routing of multiple vehicles. Consequently, approaches in the literature (typically making use of mathematical programming and heuristics) do not scale to large, fully modeled road networks.
Finally, we note that speed planning for vehicles is relevant not only in the context of route planning for \glspl*{ev}, but also in several other areas of transportation, such as supply chain management~\cite{Ber16}, aviation~\cite{Xu16}, or ship routing~\cite{Men14,Nor11}.

\paragraph{Contribution and Outline.}
We study a generalization of the \gls*{csp} problem that considers \emph{continuous, adaptive speeds} for~\glspl*{ev}: We allow the vehicle to adjust its speed to reach its target quickly and without recharging (if possible). We derive a realistic, nonlinear consumption model and develop an algorithm to solve the resulting \NP-hard problem \emph{optimally} on challenging, realistic instances.
This complements our previous work~\cite{Bau18}, in which we considered fastest feasible routes with charging stops (but for fixed driving speeds).
In what follows, we first recap the \gls*{csp} problem and discuss adaptive speed models in Section~\ref{sec:model}. We propose functions mapping travel time to energy consumption, which yield the most challenging but precise problem setting.
Afterwards, Section~\ref{sec:operations} discusses realistic, nonlinear consumption models and derives operations to compute and compare speed--consumption tradeoffs on paths of the network.
In Section~\ref{sec:approach}, we introduce our basic exponential-time algorithm that uses these operations. Propagating continuous consumption functions during network exploration, it aims to improve both running time \emph{and} solution quality compared to previous discretized approaches.
In Section~\ref{sec:approach}, we extend known variants of A*~search to our setting, improving the practical performance of our basic algorithm.
We then incorporate a technique based on \gls*{ch} in Section~\ref{sec:ch}, for which a particular challenge is the computation of shortcuts that represent \emph{bivariate functions} to capture the constraints imposed by our consumption model.
Combining A*~search and~\gls*{ch}, we obtain our fastest approach, which is presented in Section~\ref{sec:chasp}.
Our experimental evaluation, given in Section~\ref{sec:experiments}, reveals that we can compute optimal solutions within seconds and below for typical battery capacities, on par or faster than previous \emph{heuristic} algorithms.
Our own heuristic variant provides high-quality solutions and is fast enough for interactive applications.
We conclude our work with final remarks in Section~\ref{sec:conclusion}.

\section{Models and Problem Settings}\label{sec:model}

We begin by introducing a basic problem setting in Section~\ref{sec:model:evcsp}, in which we assume that the speed on every arc is fixed. We discuss a simple way to extend graphs to take adaptive speeds into account by adding multi-arcs and demonstrate its drawbacks. Afterwards, we introduce a more sophisticated model of adaptive speeds and the corresponding problem generalization in Section~\ref{sec:model:evcas}. Finally, Section~\ref{sec:model:example} illustrates how solutions can be obtained in this model on a simplified example.

In each setting discussed below, we are given a directed graph~$\graph=(\vertices,\arcs)$. Arcs in the graph are directed to model, \eg, one-way streets.
Traversing an arc in the graph consumes both time and energy. While the time to traverse an arc is always (strictly) positive, energy consumption can become negative to model recuperation, \eg, when going downhill. Moreover, the vehicle is equipped with a battery that has a given capacity~$\maxbattery\in\posreals$. This imposes constraints on the feasibility of paths, as the \gls*{soc} of the vehicle must not drop below~$0$. Additionally, the \gls*{soc} can become at most $\maxbattery$ after traversing an arc with negative energy consumption.

In what follows, we focus on computing fastest feasible paths for~\glspl*{ev}, \ie, fastest paths under the constraint that they are reachable by the~\gls*{ev} (given its initial~\gls*{soc}). However, we point out that it is not hard to adapt the algorithms we present to closely related problem settings, such as
\begin{itemize}
 \item computing a path with minimum driving time such that the target is reached with at least a certain minimum \gls*{soc};
 \item computing a path with minimum driving time such that the \gls*{soc} does not fall below a certain threshold at \emph{any} point during a journey;
 \item computing an energy-optimal path that does not exceed a certain travel time;
 \item computing the full Pareto set of nondominated solutions at the target (\wrt driving time and energy consumption).
\end{itemize}

\subsection{Constrained Shortest Paths for \glspl*{ev}}\label{sec:model:evcsp}

In a problem resembling traditional \gls*{csp} settings, we are given a graph $\graph=(\vertices,\arcs)$ together with two cost functions $\drivingtimefunction\colon\arcs\to\strictposreals$ and $\consumptionfunction\colon\arcs\to\reals$ representing \emph{driving time} and \emph{energy consumption} on an arc, respectively.
Although energy consumption can become negative, physical constraints prevent that cycles in the graph have negative energy consumption. In addition to that, the \emph{battery constraints} mentioned above apply: If the battery of the \gls*{ev} has the \gls*{soc} $\soc_\vertexa\in[0,\maxbattery]$ at some vertex~$\vertexa\in\vertices$, then traversing an arc $(\vertexa,\vertexb)\in\arcs$ results in the \gls*{soc}
\begin{align*}
 \soc_\vertexb:= \begin{cases}
 -\infty     & \mbox{if } \soc_\vertexa-\consumptionfunction(\vertexa,\vertexb)<0\mbox{,}\\
 \maxbattery & \mbox{if } \soc_\vertexa-\consumptionfunction(\vertexa,\vertexb)>\maxbattery\mbox{,}\\
 \soc_\vertexa-\consumptionfunction(\vertexa,\vertexb) & \mbox{otherwise.} \end{cases}
\end{align*}
Here, an \gls*{soc} of $-\infty$ implies that the arc cannot be traversed because the battery would run empty. A path is feasible if the \gls*{soc} never drops to $-\infty$ when traversing its arcs. We consider the \emph{\gls*{evcsp} Problem}, which is defined as follows.
\begin{definition}[\acrlong*{evcsp}]
\label{def:evcsp}
 Given a source vertex~$\source\in\vertices$, a target vertex~$\target\in\vertices$, and an initial \gls*{soc}~$\soc_\source\in[0,\maxbattery]$, find a path that is feasible and minimizes driving time.
\end{definition}
Note that if we did not allow recuperation and hence, all consumption values became nonnegative, this problem would be equivalent to the well-known \NP-hard \gls*{csp} problem~\cite{Han80a}, which (in the terminology of our setting) asks for a path with minimum driving time such that energy consumption does not exceed the threshold~$\soc_\source$. Consequently, our problem at hand is \NP-hard, too.

\paragraph{The Bicriteria Shortest Path Algorithm.}
To solve the~\gls*{evcsp} problem, we can adapt the well-known exponential-time \emph{\gls*{bsp}} algorithm~\cite{Han80b,Mar84} in a straightforward manner.
The algorithm maintains label sets, in our case containing tuples $(\accumulateddrivingtime,\accumulatedsoc)$ of driving time~$\accumulateddrivingtime\in\posreals$ and~\gls*{soc}~$\accumulatedsoc\in\reals$. A label $(\accumulateddrivingtime,\accumulatedsoc)$ \emph{dominates} another label $(\accumulateddrivingtime',\accumulatedsoc')$ if $\accumulateddrivingtime\le\accumulateddrivingtime'$ and~$\accumulatedsoc\ge\accumulatedsoc'$.
Starting with the label~$(0,\soc_\source)$ at the source~$\source$, the algorithm \emph{settles} in each step the label~$\alabel=(\accumulateddrivingtime,\accumulatedsoc)$ with minimum driving time~$\accumulateddrivingtime\ge0$ among any labels not settled before. For all arcs~$(\vertexa,\vertexb)\in\arcs$ outgoing from the vertex $\vertexa\in\vertices$ this label belongs to, it then generates a new label $\alabel'$ with driving time~$\accumulateddrivingtime+\drivingtimefunction(\vertexa, \vertexb)$. We set the \gls*{soc} of the new label~$\alabel'$ to~$\min\{\maxbattery,\accumulatedsoc-\consumptionfunction(\vertexa,\vertexb)\}$. If this results in nonnegative~\gls*{soc}, the label~$\alabel'$ represents a feasible path and is added to the labels at $\vertexb$ if it is not dominated by any of them. Labels at $\vertexb$ dominated by $\alabel'$ are discarded.
Given that driving times are strictly positive, this algorithm is \emph{label setting}, \ie, settled labels are never dominated afterwards.\footnote{Note that if we were scanning vertices in increasing order of energy consumption instead (which may be negative), potential shifting would be necessary to make the algorithm label setting~\cite{Bau13a,Eis11,Joh77,Sac11}.}
An optimal (constrained) solution is found when a label at the target~$\target$ is settled.

\paragraph{Modeling Adaptive Speeds in the \gls*{evcsp} Problem.}
The \gls*{bsp} algorithm finds the fastest \emph{route} subject to battery constraints.
As argued before, travel time and energy consumption are not only affected by the choice of the route itself, but also by \emph{driving behavior}.
Allowing \emph{multiple} driving speeds (and consumption values) per road segment, one could,~\eg, save energy on the motorway by driving at reasonable speeds below the posted speed limits.
Therefore, we consider \emph{adaptive speeds}, \ie, we allow the \gls*{ev} to adjust its speed within reasonable limits to reach the destination as fast as possible and with sufficient~\gls*{soc}.
In addition to the actual route from the source to the target, we then also have to specify (optimal) driving speeds along that route.

\begin{figure}[t]
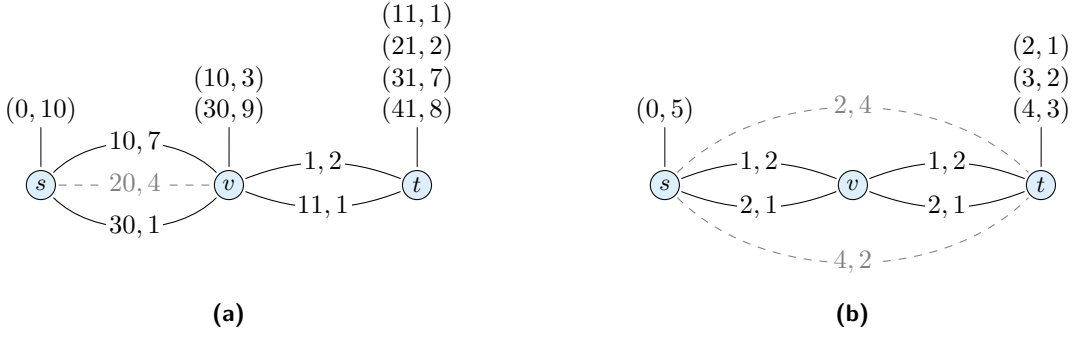

 \centering%
 \begin{subfigure}[b]{.5\textwidth}%
 \centering%
 \ConstrainedExampleDrawParallelEdgesDominance
 \caption{}%
 \label{fig:tradeoff-parallel-arcs:dominance}%
 \end{subfigure}%
 \begin{subfigure}[b]{.5\textwidth}%
 \centering%
 \ConstrainedExampleDrawParallelEdgesContraction
 \caption{}%
 \label{fig:tradeoff-parallel-arcs:contraction}%
 \end{subfigure}%
 \caption{Adaptive speeds modeled as parallel arcs. Arc labels indicate tuples of driving time and energy consumption. We also show the label sets of vertices as computed by the \acrshort*{bsp} algorithm in an $\source$--$\target$~query (ignoring dashed arcs). (a)~The initial \gls*{soc} is~$10$. Adding the dashed arc $(\source,\vertex)$ results in a new label~$(20,6)$ at~$\vertex$ and a label $(21,4)$ at~$\target$, which dominates the label~$(21,2)$ in the current set. (b)~The initial \gls*{soc} is~$5$. Assume that the original graph consists only of the vertices $\source$ and~$\target$, connected by the two dashed arcs. Thus, there are two labels $(2,1)$ and $(4,3)$ at the target vertex when the algorithm terminates. Inserting the vertex~$\vertex$, \eg, for visualization in practice, results in the depicted graph with solid arcs. Even though the distances between $\source$ and $\target$ are maintained, the algorithm now computes the additional label $(3,2)$ on the modified input.}%
\label{fig:tradeoff-parallel-arcs}%
\end{figure}

A straightforward way to model adaptive speeds is to \emph{sample} reasonable speeds for each road segment~\cite{Bau14a,Goo14,Har14,Mer15}. Then, one can add \emph{parallel} arcs in the underlying graph representation, which correspond to alternative driving speeds (inducing certain values of driving time and energy consumption); see Figure~\ref{fig:tradeoff-parallel-arcs}.
The major benefit of this approach is its simplicity: Since we still obtain an instance of~\gls*{evcsp}, we can immediately apply the \gls*{bsp} algorithm to solve the problem.
However, it also comes with several drawbacks. First of all, parallel arcs greatly increase running time, due to a larger number of nondominated solutions. In fact, the number of nondominated (\ie, Pareto-optimal) solutions can be exponential even on a single route; see Figure~\ref{fig:tradeoff-parallel-arcs:dominance} and \cf Letchford et~al.~\cite{Let14}. Consequently, only heuristic algorithms achieve practical running times~\cite{Bau14a,Goo14,Har14}.
By discretizing a continuous range of tradeoffs, parallel arcs that model alternative speeds have other undesirable effects, such as producing many insignificant, yet nondominated solutions. Figure~\ref{fig:tradeoff-parallel-arcs:dominance} shows such an example in which some solutions at the target vertex $\target$ provide rather unattractive tradeoffs, namely, spending ten extra units of time to save only one unit of energy. Adding another sample (indicated by the dashed arc), on the other hand, would result in a new label at $\target$ that dominates one of these less favorable solutions. In other words, the number of samples influences both running time and quality of the results: More samples increase running time, but fewer samples reduce quality.
Similarly, adding or contracting degree-two vertices in the graph, which are commonly included for visualization purposes, affects the solution space even if distances are maintained; see Figure~\ref{fig:tradeoff-parallel-arcs:contraction}. This is clearly not desirable, since such modeling decisions should not have any impact on the optimal solution.
To remedy the above issues, we propose a more sophisticated model, which uses continuous \emph{functions} to model the tradeoffs on arcs.

\subsection{Constrained Shortest Paths with Continuous Adaptive Speeds}\label{sec:model:evcas}

In the \gls*{evcsp} problem, scalar values represent driving time and energy consumption of an arc $\arc\in\arcs$ in the input graph~$\graph=(\vertices,\arcs)$. Instead, we now assume that there is a \emph{tradeoff function} $\tcinner_\arc\colon\strictposreals\to\reals$ based on a physical consumption model, mapping a desired driving time $\atime\in\strictposreals$ along $\arc$ to the resulting energy consumption~$\tcinner_\arc(\atime)$.
In reality, the driving time on a road segment cannot be chosen arbitrarily. Lower bounds are induced by speed limits and the maximum speed of a vehicle. On the other hand, driving slower than a reasonable minimum speed would mean to become an obstacle for other drivers. Additionally, there is a certain point at which driving slower will no longer pay off in terms of energy consumption.
This yields (positive) minimum and maximum driving times~$\leftintervalborder_\arc\in\strictposreals$ and~$\rightintervalborder_\arc\in\strictposreals$, respectively, for the function~$\tcinner_\arc$, with~$\leftintervalborder_\arc\le\rightintervalborder_\arc$.
We incorporate these bounds into a \emph{consumption function}~$\tcfunction_\arc \colon \posreals\to\reals\cup\{\infty\}$ with 
\begin{align}
 \label{eq:tradeoff-functiongeneralform}
 \tcfunction_\arc(\atime) := \begin{cases}
   \infty                                     & \mbox{if } \atime < \leftintervalborder_\arc, \\
   \tcinner_\arc(\rightintervalborder_\arc) & \mbox{if } \atime > \rightintervalborder_\arc, \\
   \tcinner_\arc(\atime)                     & \mbox{otherwise.}
 \end{cases}
\end{align}
Thus, driving times below~$\leftintervalborder_\arc$ are infeasible (modeled as infinite consumption) and driving times above~$\rightintervalborder_\arc$ become unprofitable.
A driving time $\atime\in\posreals$ is also called \emph{admissible} if~$\atime\in[\leftintervalborder_\arc,\rightintervalborder_\arc]$, \ie, it lies in the relevant subdomain of~$\tcfunction_\arc$.
In the special (degenerate) case~$\leftintervalborder_\arc = \rightintervalborder_\arc$, the function $\tcfunction_\arc$ represents a constant pair $(\leftintervalborder_\arc, \tcfunction_\arc(\leftintervalborder_\arc))$ of fixed driving time and energy consumption. We call $\arc$ and $\tcfunction_\arc$ \emph{constant} in this case, as the arc allows no speed adaptation.
Due to physical constraints, the minimum values $\tcfunction_\arc(\rightintervalborder_\arc)$ of the consumption functions $\tcfunction_\arc$ of arcs $\arc\in\arcs$ must not induce cycles with negative energy consumption in the graph (going in a cycle cannot increase the~\gls*{soc}).

As before, we assume that the \gls*{ev} is equipped with a battery that has a certain \emph{capacity}~$\maxbattery\in\posreals$ and that its \gls*{soc} must not drop below $0$ nor exceed~$\maxbattery$.
When incorporating these constraints into our setting, we obtain a bivariate \emph{\gls*{soc}~function} $\socprofile_\arc\colon\posreals\times[0,\maxbattery]\cup\{-\infty\}\to[0,\maxbattery]\cup\{-\infty\}$ for every arc $\arc=(\vertexa,\vertexb)\in\arcs$, mapping the \gls*{soc} at~$\vertexa$ to the resulting \gls*{soc} at~$\vertexb$ when traversing $\arc$ with a specific driving time. The function $\socprofile_\arc$ is given by
\begin{align}
 \label{eq:tradeoff-socfunction}
 \socprofile_\arc(\atime,\soc):=
\begin{cases}
 -\infty     & \mbox{if } \soc-\tcfunction_\arc(\atime)<0, \\
 \maxbattery & \mbox{if } \soc-\tcfunction_\arc(\atime)>\maxbattery, \\
 \soc - \tcfunction_\arc(\atime) & \mbox{otherwise,}
\end{cases}
\end{align}
where an \gls*{soc} of $-\infty$ denotes an empty battery. Hence, $\socprofile_\arc(\atime,\soc)=-\infty$ implies that the arc cannot be traversed at the corresponding speed (as it would cause the battery to run empty).
Further, we have $\socprofile_\arc(\atime,-\infty)=-\infty$ for arbitrary~$\atime\in\posreals$.

Given the \gls*{soc} $\soc_\source\in[0,\maxbattery]$ at a source~$\source\in\vertices$, an $\source$--$\target$~path $[\source=\vertex_1,\dots,\vertex_k=\target]$, and driving times $\atime_i\in\posreals$ for the arcs $(\vertex_i,\vertex_{i+1})$ of the path, with~$i\in\{1,\dots,k-1\}$, we can determine a corresponding \gls*{soc} $\soc_\target\in[0,\maxbattery]\cup\{-\infty\}$ at the target~$\target\in\vertices$ as follows.
Starting at the source $\source=\vertex_1$, the \gls*{soc} at $\vertex_k=\target$ is obtained after iteratively evaluating the \gls*{soc} function $\socprofile_{(\vertex_i,\vertex_{i+1})}$ at $\atime_i$ and the \gls*{soc} at the previous vertex~$\vertex_i$.
Formally, we get
\begin{align}
 \soc_1&:=\soc_\source\text{,}\nonumber\\
 \soc_{i+1}&:=\socprofile_{(\vertex_{i},\vertex_{i+1})}(x_i,\soc_i)&& \mbox{for } i\in\{1,\dots,k-1\}\text{,}\nonumber\\
 \soc_\target&:=\soc_{k}\text{.}\label{eq:soc-eval}
\end{align}
Note that $\soc_\target=-\infty$ holds if the path is infeasible (for the given driving times).

Using this more sophisticated model, Definition~\ref{def:evcas} formally introduces the \emph{\gls*{evcas} Problem}, which we examine in the subsequent sections of this work.
\begin{definition}[\acrlong*{evcas}]
\label{def:evcas}
 Given a source vertex~$\source\in\vertices$, a target vertex~$\target\in\vertices$, and an initial \gls*{soc}~$\soc_\source\in[0,\maxbattery]$, find an $\source$--$\target$~path $\apath=[\source=\vertex_1,\vertex_2\dots,\vertex_k=\target]$ together with driving times~$\atime_i\in\posreals$ for every arc $(\vertex_i,\vertex_{i+1})$ of~$\apath$, with~$i\in\{1,\dots,k-1\}$, such that battery constraints are respected, \ie, $\soc_\target\ge0$ in Equation~\ref{eq:soc-eval}, and the overall driving time~$\atime:=\sum_{i=1}^{k-1} \atime_i$ is minimized.
\end{definition}
An instance of \gls*{evcas} where all functions are degenerate constant tuples is also an input instance to~\gls*{evcsp}, which is \NP-hard. Hence, \gls*{evcas} is \NP-hard as well.

\subsection{Solving \gls*{evcas} on a Simplified Example Instance}\label{sec:model:example}

In this work, we focus on solving the \gls*{evcas} problem.
To gain insights about the structure of optimal solutions, we now derive consumption functions and (bivariate) \gls*{soc} functions for given \emph{paths} instead of arcs.
We illustrate such functions in an example using piecewise linear tradeoff functions.
Note that piecewise linear functions are commonly used to model constraints in variants of the \gls*{spprc}~\cite{Ioa98,Lib11,Til18} and in time-dependent route planning~\cite{Bat13,Bau16b,Del09b,Kon16b}.
In Section~\ref{sec:operations}, we propose a more realistic nonlinear model, which is used in the remainder of this work.

\begin{figure}[t]
 \centering%
 \begin{subfigure}[b]{.28\textwidth}%
 \centering%
 \ConstrainedExampleDrawPositiveTradeoffFunction
 \caption{}%
 \label{fig:tradeoff-tradeoff-simple-model:positive}%
 \end{subfigure}%
 \begin{subfigure}[b]{.28\textwidth}%
 \centering%
 \ConstrainedExampleDrawNegativeTradeoffFunction
 \caption{}%
 \label{fig:tradeoff-tradeoff-simple-model:negative}%
 \end{subfigure}%
 \begin{subfigure}[b]{.42\textwidth}%
 \centering%
 \ConstrainedExampleDrawTradeoffFunctionPath
 \caption{}%
 \label{fig:tradeoff-tradeoff-simple-model:path}%
 \end{subfigure}%
 \caption{Consumption functions in a simplistic model. (a)~The consumption function $\tcfunction_{(\vertexa,\vertexb)}$ of an arc~$(\vertexa,\vertexb)$ with minimum driving time $\leftintervalborder_{(\vertexa,\vertexb)}=1$ and maximum driving time~$\rightintervalborder_{(\vertexa,\vertexb)}=3$. (b)~The consumption function $\tcfunction_{(\vertexb,\vertexc)}$ of an arc $(\vertexb,\vertexc)$ with minimum driving time $\leftintervalborder_{(\vertexa,\vertexb)}=1$ and maximum driving time~$\rightintervalborder_{(\vertexa,\vertexb)}=2$. (c)~The consumption function $\tcfunction_{\apath}$ of the path $\apath=[\vertexa,\vertexb,\vertexc]$. The shaded area indicates possible pairs of driving time and consumption along the path.}%
\label{fig:tradeoff-simple-model}%
\end{figure}

For now, assume that the tradeoff function $\tcinner_\arc$ of every arc~$\arc\in\arcs$ is \emph{decreasing} and \emph{linear}, \ie, $\tcinner_\arc(\atime)=\aparam\atime+\bparam$ for all~$\atime\in\posreals$, where $\aparam\in\negreals$ and $\bparam\in\reals$ are constant coefficients.
The values $\aparam$ and $\bparam$ may differ between arcs, though, to reflect different road types or other relevant factors~\cite{Bru05,Yao13}.
Figure~\ref{fig:tradeoff-tradeoff-simple-model:positive} and Figure~\ref{fig:tradeoff-tradeoff-simple-model:negative} show consumption functions (plugging in limits $\leftintervalborder_{(\vertexa,\vertexb)}$ and $\rightintervalborder_{(\vertexa,\vertexb)}$ on driving time) for two arcs $(\vertexa,\vertexb)\in\arcs$ and~$(\vertexb,\vertexc)\in\arcs$.

To find the minimum driving time from $\vertexa$ to~$\vertexc$, we are interested in the consumption function of the path~$\apath=[\vertexa,\vertexb,\vertexc]$, \ie, a function $\tcfunction_\apath$ that maps driving time $\atime\in\posreals$ spent on the $\vertexa$--$\vertexc$ path to the \emph{minimum} energy consumption $\tcfunction_\apath(\atime)$ on the path.
Formally, to get the value of $\tcfunction_\apath(\atime)$ for some driving time~$\atime\in\posreals$, we have to pick (nonnegative) values $\atime_1\in\posreals$ and~$\atime_2\in\posreals$, such that $\atime=\atime_1+\atime_2$ and $\tcfunction_{(\vertexa,\vertexb)}(\atime_1)+\tcfunction_{(\vertexb,\vertexc)}(\atime_2)$ is minimized.
The shaded area in Figure~\ref{fig:tradeoff-tradeoff-simple-model:path} indicates possible distributions of driving times $\atime_1$ and $\atime_2$ among the two arcs and the resulting energy consumption. The lower envelope of this area yields the desired function~$\tcfunction_\apath$.
Intuitively, we want to spend as much of the available extra time (exceeding the minimum~$2$) as possible on the arc that provides the best tradeoff, \ie, the consumption function with the steeper slope (where spending additional time saves most energy).
As a result, the consumption function of a path is always \emph{convex} on the subdomain where its image is finite; see Figure~\ref{fig:tradeoff-tradeoff-simple-model:path}. Observe that, while tradeoff functions of single arcs $\arc\in\arcs$ are linear on the interval~$[\leftintervalborder_\arc,\rightintervalborder_\arc]$ of admissible driving times, the tradeoff function of a path is \emph{piecewise} linear on the interval induced by its minimum and maximum driving time. The number of linear subfunctions defining the function $\tcfunction_\apath$ is bounded by the number of arcs in the path.

\paragraph{Battery Constraints.}
The situation becomes more involved if we also take battery constraints into account. Then, energy consumption not only depends on the driving time we are willing to spend along a path, but also on the initial \gls*{soc}.
Hence, we obtain a bivariate function~$\socprofile_\apath$, which maps driving time and initial \gls*{soc} at~$\vertexa$ to the \gls*{soc} at~$\vertexc$.
Note that consumption is positive on the first arc $(\vertexa,\vertexb)$ and negative on the second arc~$(\vertexb,\vertexc)$ in Figure~\ref{fig:tradeoff-simple-model}.
As before, the second arc $(\vertexb,\vertexc)$ provides the better tradeoff. However, for low initial \gls*{soc}, we have to ensure that the first arc $(\vertexa,\vertexb)$ can be traversed. Hence, spending some additional time on this arc may be inevitable to obtain a feasible solution. In contrast, high \gls*{soc} values may prevent recuperation along the second arc~$(\vertexb,\vertexc)$, so driving slower no longer pays off at some point.
Figure~\ref{fig:tradeoff-tradeoff-bivariate-functions} sketches the resulting bivariate \gls*{soc} function for specific values of initial \gls*{soc} and driving time.
For a given initial \gls*{soc} $\soc\in[0,\maxbattery]$ at~$\vertexa$, we see how spending more time on the $\vertexa$--$\vertexb$~path can increase the \gls*{soc} at~$\vertexc$ (Figure~\ref{fig:tradeoff-tradeoff-bivariate-functions:time}).
When fixing the driving time $\atime\in\posreals$ (Figure~\ref{fig:tradeoff-tradeoff-bivariate-functions:soc}), the optimal amount of time spent on each arc varies with the initial \gls*{soc} at~$\vertexa$. Consequently, even if we fix the total driving time that is spent on the path, the \gls*{soc} function of a path $\apath$ no longer has the specific form as in the case of scalar energy consumption values~\cite{Bau17b,Eis11}.

The optimal solution to \gls*{evcas} is the minimum driving time of $\tcfunction_{\apath}$ for the corresponding initial \gls*{soc} (which is part of the input). For example, the optimal driving time is $4$ if the initial \gls*{soc} is~$1$, or $2$ if the initial \gls*{soc} is at least~$2$.

\begin{figure}[t]
 \centering%
 \begin{subfigure}[b]{.5\textwidth}%
 \centering%
 \ConstrainedExampleDrawBivariateTimeSoCTradeoffFunction
 \caption{}%
 \label{fig:tradeoff-tradeoff-bivariate-functions:time}%
 \end{subfigure}%
 \begin{subfigure}[b]{.5\textwidth}%
 \centering%
 \ConstrainedExampleDrawBivariateSoCSoCTradeoffFunction
 \caption{}%
 \label{fig:tradeoff-tradeoff-bivariate-functions:soc}%
 \end{subfigure}%
 \caption{The bivariate \gls*{soc} function of the path $\apath$ from Figure~\ref{fig:tradeoff-simple-model}, assuming a battery capacity of~$\maxbattery=4$. (a)~The \gls*{soc} $\socprofile_\apath(\atime,\soc)$ at $\vertexc$, subject to driving time $\atime\in\posreals$ on $\apath$ for different fixed values $\soc\in\{1,2,3,4\}$ of initial~\gls*{soc}. (b)~The \gls*{soc} $\socprofile_\apath(\atime,\soc)$ at $\vertexc$, subject to initial \gls*{soc} $\soc\in[0,\maxbattery]$ for different fixed driving times~$\atime\in\{2,3,4,5\}$. }%
\label{fig:tradeoff-tradeoff-bivariate-functions}%
\end{figure}

\section{Realistic Consumption Functions and Basic Operations}\label{sec:operations}

Realistic tradeoff functions are nonlinear. We introduce such tradeoff functions in Section~\ref{sec:operations:consumption-functions}.
Although these realistic functions require a more technical analysis, many observations made for the simplistic, linear model discussed in Section~\ref{sec:model:example} carry over to the realistic, nonlinear tradeoff functions.

If we want to incorporate the propagation of consumption functions into the \gls*{bsp} algorithm in order to solve~\gls*{evcas}, we need to generalize some of its basic operations. First, \gls*{bsp} computes scalar sums to obtain the driving time and the \gls*{soc} at a vertex. Since we are now dealing with \emph{continuous} functions, we require operations that compute functions for (best) tradeoffs of \emph{paths} instead of arcs. In the previous section, we sketched how such consumption functions can be obtained in a simplistic model based on piecewise linear functions. In Section~\ref{sec:operations:linking}, we describe how this is done in the realistic model. Second, the \gls*{bsp} algorithm identifies dominated labels to keep the number of labels small. We discuss generalized dominance schemes for consumption functions in Section~\ref{sec:operations:dominance}.

\subsection{A Realistic Consumption Model}\label{sec:operations:consumption-functions}

Both considered metrics, driving time and energy consumption, depend on the vehicle's speed. In accordance with realistic physical models established in the literature~\cite{Agr16,Asa16,Bed15,Fio16,Har14,Lar12,Lv16}, we assume that energy consumption on a certain road segment~$\arc\in\arcs$ can be expressed by a function $\speedconsumptionfunction_\arc\colon\strictposreals\to\reals$ given as
\begin{align}
 \label{eq:tradeoff-physicalmodel}
 \speedconsumptionfunction_\arc(\velocity)=\aconst\velocity^2+\bconst\slope+\cconst \text{,}
\end{align}
where $\velocity\in\strictposreals$ is the (constant) vehicle speed, $\slope\in\reals$ is the (constant) slope of the road segment, and~$\aconst\in\posreals$,~$\bconst\in\posreals$, and~$\cconst\in\posreals$ are constant nonnegative coefficients of the consumption model.
The term $\aconst\velocity^2$ is caused by aerodynamic drag, which increases with driving speed in a quadratic fashion.
Note that energy consumption can become negative for downhill segments with $\slope<0$.
The parameters~$\aconst$,~$\bconst$, and~$\cconst$ may vary for different arcs due to, \eg, different road types or other factors affecting energy consumption~\cite{Bru05,Swe11,Yao13}.
Assuming constant speed and slope per arc is not a restriction, since we can add intermediate vertices in the graph to model changing conditions. Furthermore, one can show that deliberately varying the speed along a single road segment (with constant slope and speed limit) never pays off in our model~\citep[Corollary~1]{Har14}.

Since we are interested in functions mapping \emph{driving time} $\atime\in\strictposreals$ to energy consumption~$\tcinner_\arc(\atime)$, we substitute $\velocity=\length/\atime$ in Equation~\eqref{eq:tradeoff-physicalmodel}, where $\length\in\strictposreals$ denotes the length of the road segment. As slope and length of an arc are fixed, we simplify this below by setting $\aparam:=\aconst\length^2$ and~$\cparam:=\bconst\slope+\cconst$. Observe that $\aparam\in\posreals$ is nonnegative, while $\cparam\in\reals$ may be a negative value (for downhill arcs).
We introduce a third constant $\bparam\in\posreals$, which we will need later to shift functions along the time axis.
Altogether, we obtain the tradeoff function~$\tcinner_\arc\colon\strictposreals\to\reals$ mapping driving time to energy consumption, which is defined as
\begin{align}
 \label{eq:tradeoff-functioninnerform}
 \tcinner_\arc(\atime):=\frac{\aparam}{(\atime-\bparam)^2}+\cparam\text{.}
\end{align}
For single arcs, we always obtain $\bparam=0$ and assume driving time $\atime$ to be \emph{strictly} positive. Thus, the term $\atime-\bparam$ in the denominator is strictly positive and $\tcinner_\arc(\atime)$ is a finite real value.
Furthermore, note that $\tcinner_\arc$ is \emph{decreasing} and \emph{convex} on its domain $\strictposreals$ in this case.
Tradeoff functions of \emph{paths} may require values $0<\bparam<\atime$ to reflect additional time spent on previous arcs.
In the simplistic model discussed above, we saw that tradeoff functions of paths may be piecewise linear. Similarly, we allow tradeoff functions in the realistic model to be defined as \emph{piecewise} functions, so they may consist of multiple subfunctions of the form of Equation~\eqref{eq:tradeoff-functioninnerform}.

\begin{figure}[t]
 \centering%
 \ConstrainedExampleDrawRealisticTradeoffFunction
 \caption{A consumption function, defined by a single tradeoff function with parameters $\aparam=3$,~$\bparam=1$, and~$\cparam=1$. The indicated subdomain borders induced by its minimum and maximum driving time, respectively, are $\leftintervalborder=2$ and~$\rightintervalborder=6$.}%
\label{fig:tradeoff-consumption-function}%
\end{figure}

Given a tradeoff function $\tcinner\colon\strictposreals\to\reals$, we plug in the minimum and maximum driving times~$\leftintervalborder\in\strictposreals$ and~$\rightintervalborder\in\strictposreals$ to obtain the corresponding \emph{consumption function}~$\tcfunction\colon\posreals\to\reals\cup\{\infty\}$; see Figure~\ref{fig:tradeoff-consumption-function} for an example.
In general, we require consumption functions to be \emph{continuous} in the interval~$[\leftintervalborder,\infty)$, but not necessarily differentiable.
In particular, we demand that $\bparam<\atime$ holds for all subfunctions of the form of Equation~\eqref{eq:tradeoff-functioninnerform} within their respective subdomain of~$[\leftintervalborder,\rightintervalborder]$. Hence, the denominator term $\atime-\bparam$ is always strictly positive.
Together with the assumption~$\alpha\ge0$, this implies that consumption functions are either \emph{constant} (if~$\alpha=0$, in which case we further assume~$\leftintervalborder=\rightintervalborder$) or \emph{strictly decreasing} on the interval~$[\leftintervalborder,\rightintervalborder]$, \ie, $\tcfunction(\atime_1)>\tcfunction(\atime_2)$ holds for all $\atime_1,\atime_2\in[\leftintervalborder,\rightintervalborder]$ with~$\atime_1<\atime_2$.
At certain points below, we also make use of the \emph{inverse function} $\tcfunction^{-1}\colon[\tcfunction(\rightintervalborder),\tcfunction(\leftintervalborder)]\to\strictposreals$ of a consumption function~$\tcfunction$. Observe that it is well-defined on the specified domain.

\subsection{Linking Consumption Functions}\label{sec:operations:linking}

To compute best possible tradeoffs on paths consisting of multiple arcs, we define a \emph{link operation} $\linkop\colon\consumptionfunctionspace\times\consumptionfunctionspace\to\consumptionfunctionspace$ on the function space $\consumptionfunctionspace$ of consumption functions as specified in Section~\ref{sec:operations:consumption-functions}.
For two paths $\apath_1=[\vertex_1,\dots,\vertex_i]$ and $\apath_2=[\vertex_i,\dots,\vertex_k]$ in the graph, let $\apath_1\circ\apath_2:=[\vertex_1,\dots,\vertex_i,\dots,\vertex_k]$ denote their concatenation.
Given two consumption functions $\tcfunction_1$ and $\tcfunction_2$ representing energy consumption on two paths $\apath_1$ and~$\apath_2$, respectively, linking $\tcfunction_1$ and $\tcfunction_2$ results in a consumption function $\tcfunction:=\linkop(\tcfunction_1,\tcfunction_2)$ that maps driving time spent when traversing the path $\apath:=\apath_1\circ\apath_2$ to the \emph{minimum} possible energy consumption (if we ignore battery constraints for now).
Let $\leftintervalborder_1\in\strictposreals$ and $\rightintervalborder_1\in\strictposreals$ denote the minimum and maximum driving time of $\tcfunction_1$, respectively. Similarly, let $\leftintervalborder_2\in\strictposreals$ and $\rightintervalborder_2\in\strictposreals$ denote the corresponding driving times of~$\tcfunction_2$.
Clearly, we obtain $\tcfunction(\atime)=\infty$ for all $\atime<\leftintervalborder_1+\leftintervalborder_2$ and $\tcfunction(\atime)=\tcfunction_1(\rightintervalborder_1)+\tcfunction_2(\rightintervalborder_2)$ for $\atime>\rightintervalborder_1+\rightintervalborder_2$.
For any remaining value~$\atime\in[\leftintervalborder_1+\leftintervalborder_2,\rightintervalborder_1+\rightintervalborder_2]$, we have to determine times $\atime_1\in[\leftintervalborder_1,\rightintervalborder_1]$ and $\atime_2\in[\leftintervalborder_2, \rightintervalborder_2]$ that sum up to~$\atime_1+\atime_2=\atime$ and minimize overall consumption (as in the simple model discussed in Section~\ref{sec:model}).
We set~$\cvalue:=\atime_1$ below, which yields
\begin{align}
 \label{eq:tradeoff-linkedfunction}
 \tcfunction(\atime) = \min_{\substack{\cvalue\in[\leftintervalborder_1,\rightintervalborder_1]\\\cvalue\in[\atime-\rightintervalborder_2,\atime-\leftintervalborder_2]}}\tcfunction_1(\cvalue)+\tcfunction_2(\atime-\cvalue)
\end{align}
for all $\atime\in[\leftintervalborder_1+\leftintervalborder_2,\rightintervalborder_1+\rightintervalborder_2]$. In other words, to minimize the energy consumption for a given time~$\atime\in[\leftintervalborder_1+\leftintervalborder_2,\rightintervalborder_1+\rightintervalborder_2]$, we have to divide the amount of time that exceeds the minimum possible total driving time $\leftintervalborder_1+\leftintervalborder_2$ among the two paths, such that consumption is minimized.
We would like to spend this additional amount of time $\atime-(\leftintervalborder_1+\leftintervalborder_2)$ on the path corresponding to the function with steeper slope, since it provides the better tradeoff (we save more energy per additional unit of time spent on the corresponding arc). This is illustrated in Figure~\ref{fig:tradeoff-example-linking}.
In what follows, we formally derive the link operation.

\begin{figure}[t]
 \centering
 \begin{subfigure}[b]{.29\textwidth}%
 \centering%
 \ConstrainedExampleDrawRealisticTradeoffFunctionFirstEdge
 \caption{}%
 \label{fig:tradeoff-example-linking:first-arc}%
 \end{subfigure}%
 \begin{subfigure}[b]{.29\textwidth}%
 \centering%
 \ConstrainedExampleDrawRealisticTradeoffFunctionSecondEdge
 \caption{}%
 \label{fig:tradeoff-example-linking:second-arc}%
 \end{subfigure}%
 \begin{subfigure}[b]{.42\textwidth}%
 \centering%
 \ConstrainedExampleDrawRealisticTradeoffFunctionLinkedFunction
 \caption{}%
 \label{fig:tradeoff-example-linking:result}%
 \end{subfigure}%
 \caption{Linking two consumption functions. (a)~The consumption function~$\tcfunction_1$, defined by the parameters $\aparam_1=4$,~$\bparam_1=1$,~$\cparam_1=-1$,~$\leftintervalborder_1=2$, and~$\rightintervalborder_1=4$. (b)~The consumption function~$\tcfunction_2$, defined by the parameters $\aparam_2=0.5$,~$\bparam_2=1$,~$\cparam_2=1$,~$\leftintervalborder_2=2$, and~$\rightintervalborder_2=5$. (c)~The consumption function $\tcfunction=\linkop(\tcfunction_1,\tcfunction_2)$, with $\tcfunction(\atime)=\tcfunction_1(\cfunction(\atime))+\tcfunction_2(\atime-\cfunction(\atime))$ on the interval~$[4,9]$. It is defined by three subfunctions with the indicated subdomains~$[4,5]$,~$[5,6.5]$, and~$[6.5,9]$.}
 \label{fig:tradeoff-example-linking}
\end{figure}

\paragraph{Linking Functions Defined by Single Tradeoff Functions.}
For now, assume that each of the given functions $\tcfunction_1$ and $\tcfunction_2$ is defined by a \emph{single} tradeoff subfunction, rather than multiple ones. For example, this is the case if both paths $\apath_1$ and $\apath_2$ consist of single arcs, \ie, $\apath_1=[\vertexa,\vertexb]$ and~$\apath_2=[\vertexb,\vertexc]$ for some $(\vertexa,\vertexb)\in\arcs$ and~$(\vertexb,\vertexc)\in\arcs$.
The result $\tcfunction:=\linkop(\tcfunction_1,\tcfunction_2)$ of linking $\tcfunction_1$ and $\tcfunction_2$ is a piecewise-defined consumption function, which may consist of multiple subfunctions of the form as in Equation~\eqref{eq:tradeoff-functioninnerform}; see Figure~\ref{fig:tradeoff-example-linking} for an example.
Intuitively, the first subfunction of $\tcfunction$ represents a shifted part of the steeper input function for small values of~$\atime$. It is followed by a combination of both functions and a subfunction that corresponds to the input function that is gentler for large values of~$\atime$. Any of these parts may collapse (for example, the combined part only exists if one can pick admissible driving times such that both functions have identical negative slopes).

We now show how $\tcfunction=\linkop(\tcfunction_1,\tcfunction_2)$ is computed. Apparently, the best choice of~$\cvalue$ in Equation~\eqref{eq:tradeoff-linkedfunction} depends on the value~$\atime\in[\leftintervalborder_1+\leftintervalborder_2,\rightintervalborder_1+\rightintervalborder_2]$. Therefore, we consider the \emph{$\cvalue$-function} $\cfunction\colon[\leftintervalborder_1+\leftintervalborder_2,\rightintervalborder_1+\rightintervalborder_2]\to\posreals$ that maps every admissible value of~$\atime$ to the optimal choice of~$\cvalue$. Then, we immediately get for arbitrary $\atime\in\posreals$ that
\begin{align}
 \label{eq:tradeoff-linkedfunction-alt}
 \tcfunction(\atime) =
 \begin{cases}
  \infty                                                                          & \mbox{if } \atime < \leftintervalborder_1 + \leftintervalborder_2, \\
  \tcfunction_1(\rightintervalborder_1) + \tcfunction_2(\rightintervalborder_2)   & \mbox{if } \atime > \rightintervalborder_1 + \rightintervalborder_2, \\
  \tcfunction_1(\cfunction(\atime)) + \tcfunction_2 (\atime - \cfunction(\atime)) & \mbox{otherwise.}
 \end{cases}
\end{align}
Hence, we essentially need to compute~$\cfunction$ to obtain the desired function~$\tcfunction$.
To this end, consider an arbitrary fixed driving time~$\atime\in[\leftintervalborder_1+\leftintervalborder_2,\rightintervalborder_1+\rightintervalborder_2]$. To identify the optimal value $\cfunction(\atime)$, we examine the derivative~$\tcfunction'_\atime$ of the term $\tcfunction_\atime(\cvalue):=\tcfunction_1(\cvalue)+\tcfunction_2(\atime-\cvalue)$.
We obtain a unique zero $\cvalue^*_\atime$ for this derivative under the assumption that $\bparam_1<\cvalue$ and~$\cvalue<\atime-\bparam_2$; see Appendix~\ref{app:linking:arcs} for details.
The value $\cvalue^*_\atime$ minimizes energy consumption for an unrestricted distribution of driving times that sum up to~$\atime$. However, from Equation~\eqref{eq:tradeoff-linkedfunction} we get the additional constraints $\cfunction(\atime)\ge\max\{\leftintervalborder_1,\atime-\rightintervalborder_2\}$ and~$\cfunction(\atime)\le\min\{\rightintervalborder_1,\atime-\leftintervalborder_2\}$.
Since $\cvalue^*_\atime$ is the unique zero of $\tcfunction_\atime'$ in the open interval~$(\bparam_1,\atime-\bparam_2)$, monotonicity of $\tcfunction_\atime$ in the intervals $(\bparam_1,\cvalue^*_\atime]$ and $[\cvalue^*_\atime,\atime-\bparam_2)$ follows and we get
\begin{align}
 \label{eq:tradeoff-cfunction}
 \cfunction(\atime) =
 \begin{cases}
  \max\{\leftintervalborder_1,\atime-\rightintervalborder_2\} & \mbox{if }\cvalue^*_\atime<\max\{\leftintervalborder_1,\atime-\rightintervalborder_2\}\text{,}\\
  \min\{\rightintervalborder_1,\atime-\leftintervalborder_2\} & \mbox{if }\cvalue^*_\atime>\min\{\rightintervalborder_1,\atime-\leftintervalborder_2\}\text{,}\\
  \cvalue^*_\atime & \mbox{otherwise.}
 \end{cases}
\end{align}
Equations~\eqref{eq:tradeoff-linkedfunction-alt} and~\eqref{eq:tradeoff-cfunction} together are sufficient to specify the desired function~$\tcfunction$.

In Appendix~\ref{app:linking:arcs}, we show that the resulting function $\tcfunction$ is indeed a consumption function, \ie, a function that has the general form as in Equation~\eqref{eq:tradeoff-functiongeneralform}, is continuous and decreasing on the interval $[\leftintervalborder_1+\leftintervalborder_2,\infty)$, and whose tradeoff function is defined by a (small) constant number of subfunctions that have the form as in Equation~\eqref{eq:tradeoff-functioninnerform}. Degenerate cases are possible, too. For example, if $\tcfunction_1$ or $\tcfunction_2$ is constant, the link operation becomes a simple shift on the x-axis and~y-axis.
In conclusion, the link operation requires constant time in the special case where both input functions are defined by a single tradeoff function. Lemma~\ref{lem:tc-simplelink} summarizes our results.

\begin{lemma}
\label{lem:tc-simplelink}
 Given two consumption functions~$\tcfunction_1$ and $\tcfunction_2$, each defined by a single tradeoff subfunction as defined in Equation~\ref{eq:tradeoff-functioninnerform}, the link operation $\linkop(\tcfunction_1,\tcfunction_2)$ requires constant time and its result is a consumption function that is uniquely described by at most three tradeoff subfunctions as defined in Equation~\ref{eq:tradeoff-functioninnerform}.
\end{lemma}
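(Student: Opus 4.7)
The plan is to dissect the three cases of Equation~\eqref{eq:tradeoff-cfunction} and argue that each contributes a piece of the form of Equation~\eqref{eq:tradeoff-functioninnerform}, then bound how many distinct pieces may appear across the domain $[\leftintervalborder_1+\leftintervalborder_2,\rightintervalborder_1+\rightintervalborder_2]$. First I would compute the unconstrained optimum $\cvalue^*_\atime$ explicitly. Setting
\begin{equation*}
 \tcfunction'_\atime(\cvalue) = -\frac{2\aparam_1}{(\cvalue-\bparam_1)^3} + \frac{2\aparam_2}{(\atime-\cvalue-\bparam_2)^3} = 0
\end{equation*}
and taking cube roots gives the linear equation $\aparam_1^{1/3}(\atime-\cvalue-\bparam_2)=\aparam_2^{1/3}(\cvalue-\bparam_1)$. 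Writing $\lambda:=\sqrt[3]{\aparam_2/\aparam_1}>0$, it solves to $\cvalue^*_\atime = \bparam_1 + (\atime-\bparam_1-\bparam_2)/(1+\lambda)$, affine in $\atime$ with slope $1/(1+\lambda)\in(0,1)$; positivity of $\tcfunction''_\atime$ confirms that this is the unique minimizer.

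I would then substitute $\cvalue^*_\atime$ back. With $u:=\cvalue^*_\atime-\bparam_1$ and $v:=\atime-\cvalue^*_\atime-\bparam_2$, the optimality relation yields $v=\lambda u$, so $\aparam_2/v^2 = \aparam_1^{2/3}\aparam_2^{1/3}/u^2$; combining with $u=(\atime-\bparam_1-\bparam_2)/(1+\lambda)$ gives
\begin{equation*}
 \tcfunction_1(\cvalue^*_\atime)+\tcfunction_2(\atime-\cvalue^*_\atime) = \frac{(\aparam_1^{1/3}+\aparam_2^{1/3})^3}{(\atime-\bparam_1-\bparam_2)^2} + \cparam_1+\cparam_2\text{.}
\end{equation*}
This interior piece has the required form with $\aparam=(\aparam_1^{1/3}+\aparam_2^{1/3})^3$, $\bparam=\bparam_1+\bparam_2$, $\cparam=\cparam_1+\cparam_2$. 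In each boundary case, $\cfunction(\atime)$ is either a constant in $\{\leftintervalborder_1,\rightintervalborder_1\}$, leaving a horizontally shifted copy of $\tcfunction_2$ plus a constant, or an affine expression $\atime-\leftintervalborder_2$ or $\atime-\rightintervalborder_2$, leaving a shifted copy of $\tcfunction_1$ plus a constant. Every such piece again has the form of Equation~\eqref{eq:tradeoff-functioninnerform}.

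To bound the number of pieces, I would set $f:=\cvalue^*_\atime$, $g_L:=\max\{\leftintervalborder_1,\atime-\rightintervalborder_2\}$, $g_U:=\min\{\rightintervalborder_1,\atime-\leftintervalborder_2\}$, so that the three cases of Equation~\eqref{eq:tradeoff-cfunction} are $f<g_L$, $f\in[g_L,g_U]$, and $f>g_U$. Since $f$ has slope strictly in $(0,1)$, while $g_L$ transitions from slope $0$ to slope $1$ at $\atime=\leftintervalborder_1+\rightintervalborder_2$ and $g_U$ from slope $1$ to $0$ at $\atime=\leftintervalborder_2+\rightintervalborder_1$, a slope comparison yields the irreversibility facts that once $f<\atime-\rightintervalborder_2$ or $f>\rightintervalborder_1$ it persists, that $f$ cannot re-enter the interior phase after leaving, and that the pieces $\cfunction=\leftintervalborder_1$ and $\cfunction=\atime-\leftintervalborder_2$ can only appear at the beginning of the domain. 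Enumerating the surviving phase sequences (such as $L_1\to I\to U_2$, $L_1\to L_2$, or $U_1\to I\to L_2$) shows that at most three distinct subfunctions ever occur. Continuity at breakpoints is immediate from the case distinction in Equation~\eqref{eq:tradeoff-cfunction}, and every step—computing $\cvalue^*_\atime$, locating the at most two breakpoints, and instantiating the at most three pieces—is a fixed sequence of arithmetic operations in the six parameters and the four border values, giving constant time complexity.

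The main obstacle is the phase enumeration: a naive reading of Equation~\eqref{eq:tradeoff-cfunction} admits up to five syntactically different pieces (two for each boundary max/min plus the interior), so the crux of the three-piece bound is the slope-based monotonicity argument that rules out sequences such as $I\to L_1$, $I\to U_1$, $L_2\to I$, or $U_2\to I$. The remaining algebra, in particular the collapse of $\aparam_1+\aparam_2/\lambda^2$ together with $(1+\lambda)^2$ into the cube $(\aparam_1^{1/3}+\aparam_2^{1/3})^3$, is routine once $\cvalue^*_\atime$ has been identified.
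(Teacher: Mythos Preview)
Your proof is correct and reaches the same conclusion as the paper's, but the organization is genuinely different. The paper (Appendix~\ref{app:linking:arcs}) first derives explicit threshold values $\firstleftboundaryterm,\secondleftboundaryterm,\firstrightboundaryterm,\secondrightboundaryterm$ at which the cases of Equation~\eqref{eq:tradeoff-cfunction} switch, relates these to derivative comparisons $\tcinner'_i(\cdot)\lessgtr\tcinner'_j(\cdot)$ at the four endpoints via Lemma~\ref{lem:tc-functionslopes}, and then performs a three-way case distinction on the possible orderings of $\tcinner'_1(\leftintervalborder_1),\tcinner'_1(\rightintervalborder_1),\tcinner'_2(\leftintervalborder_2),\tcinner'_2(\rightintervalborder_2)$ (up to symmetry), writing out the resulting $\tcfunction$ explicitly in each case. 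You instead argue dynamically: since $\cvalue^*_\atime$ is affine with slope in $(0,1)$ while $g_L,g_U$ are piecewise affine with slopes in $\{0,1\}$, the ``boundary phases'' $L_1,U_1$ can only occur as an initial prefix and $L_2,U_2$ are absorbing, which directly bounds the number of distinct pieces without enumerating derivative orderings. Your route is more conceptual and avoids the auxiliary lemma; the paper's route produces the explicit breakpoints $\leftintervalborder^*_i,\rightintervalborder^*_i$ needed for an implementation. You also compute the closed form $\aparam=(\aparam_1^{1/3}+\aparam_2^{1/3})^3$, $\bparam=\bparam_1+\bparam_2$ for the interior piece, which the paper leaves implicit. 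One small omission: you tacitly assume $\aparam_1,\aparam_2>0$ so that $\lambda$ is well-defined; the paper handles the degenerate case $\aparam_i=0$ separately (the link then degenerates to a shift).
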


\paragraph{Linking General Consumption Functions.}
We tackle the general case, in which we are given two consumption functions $\tcfunction_1$ and~$\tcfunction_2$, each possibly consisting of \emph{multiple} tradeoff subfunctions with the general form as in Equation~\eqref{eq:tradeoff-functioninnerform}, and want to compute the function~$\tcfunction:=\linkop(\tcfunction_1,\tcfunction_2)$.
Consider a tradeoff subfunction $\tcinner$ of $\tcfunction_1$ and its subdomain~$[\leftintervalborder_\tcinner,\rightintervalborder_\tcinner)$. The subfunction $\tcinner$ itself induces a consumption function $\tcfunction_\tcinner$ with minimum driving time $\leftintervalborder_\tcinner$ and maximum driving time~$\rightintervalborder_\tcinner$; see Equation~\eqref{eq:tradeoff-functiongeneralform}.
By the fact that $\tcfunction_1$ is decreasing and by construction of the consumption function~$\tcfunction_\tcinner$, we get $\tcfunction_\tcinner(\atime)\ge\tcfunction_1(\atime)$ for all $\atime\in\posreals$ and $\tcfunction_\tcinner(\atime)=\tcfunction_1(\atime)$ for all~$\atime\in[\leftintervalborder_\tcinner,\rightintervalborder_\tcinner)$.
Since the same argument can be made for subfunctions of~$\tcfunction_2$, it follows directly from Equation~\eqref{eq:tradeoff-linkedfunction} that we obtain $\tcfunction$ after applying the constant-time link operation from Lemma~\ref{lem:tc-simplelink} to all pairs of consumption functions induced by subfunctions of $\tcfunction_1$ and~$\tcfunction_2$. The lower envelope of all resulting \emph{candidate subfunctions} yields the desired function~$\tcfunction$.
Obviously, $\tcfunction$ is again a piecewise function. Moreover, as the lower envelope of functions that are decreasing on a common domain must be decreasing on this domain as well, the function $\tcfunction$ is decreasing.
Finally, the following Lemma~\ref{lem:tradeoff-function-continuous} claims that $\tcfunction$ is also continuous on its interval of admissible driving time (see Appendix~\ref{app:linking:lemma} for a formal proof).
Therefore, the function space of consumption functions is indeed closed under the link operation. Note that the link operation is also commutative and associative.
\begin{lemma}
 \label{lem:tradeoff-function-continuous}
 Given two consumption functions $\tcfunction_1$ and~$\tcfunction_2$, each defined by an arbitrary number of subfunctions as defined in Equation~\ref{eq:tradeoff-functioninnerform}, the function $\tcfunction:=\linkop(\tcfunction_1,\tcfunction_2)$ is continuous on the interval~$[\leftintervalborder,\infty)$, where $\leftintervalborder\in\strictposreals$ denotes the minimum driving time of $\tcfunction$.
\end{lemma}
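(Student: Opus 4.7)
The plan is to exploit the parametric-minimization representation in Equation~\eqref{eq:tradeoff-linkedfunction} and invoke the standard continuity-under-the-minimum theorem. Writing $\leftintervalborder := \leftintervalborder_1 + \leftintervalborder_2$ and $\rightintervalborder := \rightintervalborder_1 + \rightintervalborder_2$, the interval $[\leftintervalborder,\infty)$ splits naturally into two pieces on which the function is defined differently: on $[\leftintervalborder,\rightintervalborder]$ the function is given by the minimum in Equation~\eqref{eq:tradeoff-linkedfunction}, while on $(\rightintervalborder,\infty)$ it is the constant value $\tcfunction_1(\rightintervalborder_1) + \tcfunction_2(\rightintervalborder_2)$. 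The only nontrivial work is continuity on the first piece plus a matching condition at $\rightintervalborder$.

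For $\atime \in [\leftintervalborder,\rightintervalborder]$, consider the joint function $F(\atime,\cvalue) := \tcfunction_1(\cvalue) + \tcfunction_2(\atime-\cvalue)$ together with the correspondence
\begin{align*}
 C(\atime) := \bigl[\, \max\{\leftintervalborder_1,\atime-\rightintervalborder_2\},\ \min\{\rightintervalborder_1,\atime-\leftintervalborder_2\}\,\bigr].
\end{align*}
The endpoints of $C(\atime)$ are continuous functions of $\atime$, and an elementary check shows that $C(\atime)$ is a nonempty compact interval for every $\atime \in [\leftintervalborder,\rightintervalborder]$, so $C$ is a continuous compact-valued correspondence. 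Because each input $\tcfunction_i$ is continuous on its admissible domain $[\leftintervalborder_i,\infty)$ (this is the standing assumption on consumption functions, cf.~Section~\ref{sec:operations:consumption-functions}), the composite $F$ is continuous on the set $\{(\atime,\cvalue) : \atime \in [\leftintervalborder,\rightintervalborder],\ \cvalue \in C(\atime)\}$. Applying Berge's maximum theorem to $-F$ (or arguing directly by an $\varepsilon$–$\delta$ argument on a compact set) yields that $\tcfunction(\atime) = \min_{\cvalue \in C(\atime)} F(\atime,\cvalue)$ is continuous on $[\leftintervalborder,\rightintervalborder]$.

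It remains to glue the two pieces together at $\rightintervalborder$. At $\atime=\rightintervalborder$, the set $C(\rightintervalborder)$ collapses to the single point $\{\rightintervalborder_1\}$, so $\tcfunction(\rightintervalborder) = \tcfunction_1(\rightintervalborder_1) + \tcfunction_2(\rightintervalborder_2)$, which coincides with the constant value of $\tcfunction$ on $(\rightintervalborder,\infty)$. Hence $\tcfunction$ is continuous on $[\leftintervalborder,\infty)$.

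The main obstacle is justifying continuity of the parametric minimum at the two boundary cases where $C(\atime)$ degenerates, namely at $\atime = \leftintervalborder$ (where $C$ shrinks to $\{\leftintervalborder_1\}$ from the right) and at $\atime = \rightintervalborder$. One has to verify that both endpoints of $C(\atime)$ approach the same limit as $\atime$ approaches the respective boundary, so that the feasible set converges in the Hausdorff sense to a singleton; given this, the composition with the continuous $F$ forces the minimum to converge to $F$ evaluated at the limiting point. An alternative, more elementary route (avoiding Berge's theorem) is to use the explicit case distinction for $\cfunction(\atime)$ from Equation~\eqref{eq:tradeoff-cfunction} to express $\tcfunction$ on $[\leftintervalborder,\rightintervalborder]$ as the lower envelope of the finitely many candidate subfunctions produced by Lemma~\ref{lem:tc-simplelink} (each continuous on its respective subdomain), and check that adjacent candidates agree at the breakpoints where the envelope switches, which follows because at such a breakpoint $\cfunction(\atime)$ transitions between two of the three branches of Equation~\eqref{eq:tradeoff-cfunction} continuously in $\cvalue$.
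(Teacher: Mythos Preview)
Your argument is correct and takes a genuinely different route from the paper's proof. The paper argues by contradiction: it assumes a discontinuity of $\tcfunction$ at some $\atime>\leftintervalborder$, traces it back to the discontinuity of one of the candidate subfunctions obtained by pairwise linking of tradeoff pieces of $\tcfunction_1$ and $\tcfunction_2$, and then shows that the minimizing nature of $\tcfunction$ forces a contradiction in two cases (depending on whether the relevant piece's left endpoint coincides with $\leftintervalborder_1$). Your approach instead bypasses the piecewise structure entirely, going straight to the parametric-minimum representation \eqref{eq:tradeoff-linkedfunction} and invoking Berge's theorem on a continuous objective over a continuous compact-valued correspondence. This is cleaner and more general---it uses only that $\tcfunction_1$ and $\tcfunction_2$ are continuous on $[\leftintervalborder_1,\infty)$ and $[\leftintervalborder_2,\infty)$, not that they are built from subfunctions of the form \eqref{eq:tradeoff-functioninnerform}. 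The paper's argument, on the other hand, stays closer to the concrete construction used elsewhere (the lower envelope of pairwise links), which may make it easier to follow for a reader tracking the implementation. One minor caveat: your ``alternative route'' at the end via Equation~\eqref{eq:tradeoff-cfunction} and Lemma~\ref{lem:tc-simplelink} only directly covers the single-subfunction case; extending it to general $\tcfunction_1,\tcfunction_2$ would require iterating over all subfunction pairs as in the paper's na\"{\i}ve link, so it is not really a simpler alternative to your Berge argument.
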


The running time of the na\"{i}ve link operation described above is quadratic in the number of subfunctions of $\tcfunction_1$ and~$\tcfunction_2$. In Appendix~\ref{app:linking:linear-time}, we show how the complexity of the link operation for general consumption functions can be reduced to linear time.
In our experiments, the number of subfunctions per consumption function was relatively small on average, so the speedup provided by the more sophisticated linear-time method was limited (less than 10\,\%). Hence, the algorithm presented in Appendix~\ref{app:linking:linear-time} may rather be considered to be of theoretical interest.

\subsection{Dominance Tests for Consumption Functions}\label{sec:operations:dominance}

If we are given multiple consumption functions corresponding to paths between the same pair of vertices, we only want to propagate those which represent \emph{optimal} tradeoffs. Assume we are given two consumption functions $\tcfunction_1$ and $\tcfunction_2$ corresponding to two such paths, each possibly defined by \emph{multiple} subfunctions (\cf Section~\ref{sec:operations:linking}). We say that $\tcfunction_1$ \emph{dominates} $\tcfunction_2$ if $\tcfunction_1(\atime)\le\tcfunction_2(\atime)$ holds for all~$\atime\in\posreals$. Apparently, $\tcfunction_2$ cannot be part of an optimal solution in this case, so it would not need to be propagated by the search algorithm.

To efficiently test whether $\tcfunction_1$ dominates~$\tcfunction_2$, we inspect their respective subfunctions $\tcinner_1^1,\dots,\tcinner_1^k$ and~$\tcinner_2^1,\dots,\tcinner_2^\ell$. For some $i\in\{1,\dots,k\}$ and $j\in\{1,\dots,\ell\}$, consider the pair $\tcinner_1^i$ and $\tcinner_2^{\smash{j}}$ of subfunctions, both having the form of Equation~\eqref{eq:tradeoff-functioninnerform}, and let $[\leftintervalborder_1^i,\rightintervalborder_1^i)$ and~$[\leftintervalborder_2^{\smash{j}},\rightintervalborder_2^{\smash{j}})$ denote their respective subdomains.
Observe that if the subdomains do not overlap, dominance can be checked easily by inspecting function values at the endpoints of these subdomains.
Hence, assume that their intersection is not empty, \ie,~$[\leftintervalborder_1^i,\rightintervalborder_1^i)\cap[\leftintervalborder_2^{\smash{j}},\rightintervalborder_2^{\smash{j}})\neq\emptyset$.
We can test in constant time whether $\tcinner_1^i(\atime)\le\tcinner_2^{\smash{j}}(\atime)$ holds for all~$\atime\in[\max\{\leftintervalborder_1^i,\leftintervalborder_2^{\smash{j}}\},\min\{\rightintervalborder_1^i,\rightintervalborder_2^{\smash{j}}\}]$ as follows.
The subfunction $\tcinner_1^i$ dominates the subfunction $\tcinner_2^{\smash{j}}$ in this interval if and only if $\tcinner_1^i(\atime)\le\tcinner_2^{\smash{j}}(\atime)$ holds for the two values $\atime=\max\{\leftintervalborder_1^i,\leftintervalborder_2^{\smash{j}}\}$ and~$\atime=\min\{\rightintervalborder_1^i,\rightintervalborder_2^{\smash{j}}\}$ at the borders of their subdomain intersection, as well as the unique extreme point
of $\tcinner_1^i-\tcinner_2^{\smash{j}}$ (if it falls within the considered intersection of their subdomains; see also Appendix~\ref{app:linking:dominance}).
Since we only have to compare subfunctions if their subdomains intersect, we can test whether $\tcfunction_1$ dominates $\tcfunction_2$ in a single linear scan (comparing subfunctions in increasing order of driving time).
Hence, the running time of a dominance test is linear in the number of subfunctions of $\tcfunction_1$ and~$\tcfunction_2$.

\begin{figure}[t]
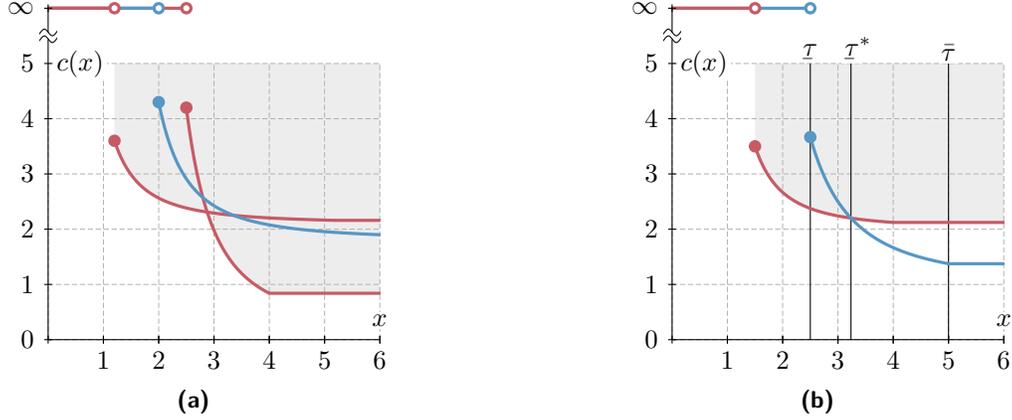

 \centering
 \begin{subfigure}[b]{.5\textwidth}%
 \centering%
 \ConstrainedExampleDrawTradeoffFunctionDominance
 \caption{}%
 \label{fig:tradeoff-example-dominance:set-dominance}%
 \end{subfigure}%
 \begin{subfigure}[b]{.5\textwidth}%
 \centering%
 \ConstrainedExampleDrawTradeoffFunctionPartialDominance
 \caption{}%
 \label{fig:tradeoff-example-dominance:partial-dominance}%
 \end{subfigure}%
 \caption{Dominance of consumption functions. The consumption function $\tcfunction$ (blue) is compared to other functions corresponding to a set of alternative paths (red). Shaded areas indicate function values that are dominated by these other functions. (a) The consumption function $\tcfunction$ is not dominated by any function alone, but by the lower envelope of their two functions. In other words, for each $\atime\in\posreals$, one of the two other functions yields lower consumption. Hence, $\tcfunction$ does not provide an optimal solution for any driving time. (b) The consumption function $\tcfunction$ is partially dominated by the other function, so its minimum driving time can be increased from $\leftintervalborder$ to~$\leftintervalborder^*$.}
 \label{fig:tradeoff-example-dominance}
\end{figure}

\paragraph{Computing Dominated Subfunctions.}
The dominance test described above enables pairwise comparison of consumption functions to identify dominates ones, similar to the pairwise checks between labels in the \gls*{bsp} algorithm.
However, since we are dealing with continuous functions, a consumption function may also be \emph{partially} dominated or dominated only by a \emph{set} $\labelset$ of other consumption functions corresponding to paths to the same vertex; see Figure~\ref{fig:tradeoff-example-dominance}.
In other words, even if $\tcfunction$ is not dominated by any other consumption function alone, it is possible that for some (or even all) admissible driving times $\atime\in[\leftintervalborder,\rightintervalborder]$ of~$\tcfunction$, there is a function $\tcfunction^*\in\labelset$ with~$\tcfunction^*(\atime)\le\tcfunction(\atime)$. We discuss different strategies to efficiently identify dominated subfunctions of~$\tcfunction$.

A common approach in the literature to avoid pairwise dominance checks in related settings is to explicitly compute the \emph{lower envelope} of all functions currently stored at a vertex~\cite{Del09b,Ioa98}. This can also go in hand with a change in the label propagation strategy~\cite{Rai09}.
Unfortunately, managing lower envelopes comprising multiple functions (each consisting of several tradeoff subfunctions) is difficult in our setting, because the quadratic-time link operation described in Section~\ref{sec:operations:linking} becomes too expensive. However, our linear-time operation requires \emph{convex} consumption functions (see Appendix~\ref{app:linking:linear-time}). It is easy to see that the lower envelope of multiple convex consumption function is not convex in general. Although one can generalize the link operation to handle non-convex functions (by essentially dividing it into convex subfunctions), label propagation becomes more expensive in theory and in practice~\cite{And15}.
Another important aspect that makes this approach much more complicated in practice is the fact that computing the lower envelope of multiple functions requires us to identify their \emph{intersection points}, \ie, given two consumption functions $\tcfunction_1\in\labelset$ and~$\tcfunction_2\in\labelset$, we have to find driving times $\atime\in\posreals$ with~$\tcfunction_1(\atime)=\tcfunction_2(\atime)$. While this equation has at most two solutions, it is relatively expensive to solve, as expanding the above equation results in a large number of terms.
In addition to that, computing intersection points is prone to rounding errors and numerical instability, which can have a significant effect on maintainability and the quality of solutions in practice.

Instead, we propose an alternative approach that avoids pairwise dominance checks, but does not require us to explicitly compute and maintain the lower envelope of multiple consumption functions. It tries to identify dominated \emph{parts} of the function $\tcfunction$ in question.
To this end, we compute a value $\leftintervalborder^*\in[\leftintervalborder,\rightintervalborder]$ such that for all $\atime\in[\leftintervalborder,\leftintervalborder^*]$, there is at least one function $\tcfunction^*$ in the set $\labelset$ with~$\tcfunction^*(\atime)\le\tcfunction(\atime)$.
This value $\leftintervalborder^*$ can be computed in a single (coordinated) linear scan over the subfunctions of $\tcfunction$ and \emph{all} subfunctions of the functions in~$\labelset$.
If we obtain $\leftintervalborder^*=\rightintervalborder$ in this scan, $\tcfunction$ is dominated for all~$\atime\in\posreals$, so it cannot be part of an optimal solution.
Otherwise, we set~$\leftintervalborder=\leftintervalborder^*$. Analogously, we then compute a value $\rightintervalborder^*\in[\leftintervalborder^*,\rightintervalborder]$ such that $\tcfunction$ is dominated for all $\atime\in[\rightintervalborder^*,\rightintervalborder]$ and set~$\rightintervalborder=\rightintervalborder^*$. Afterwards, we discard any subfunction of $\tcfunction$ whose subdomain does not intersect~$[\leftintervalborder^*,\rightintervalborder^*]$.

To compute the values of $\leftintervalborder^*$ and~$\rightintervalborder^*$, we still face the problem that we may have to compute the intersection points of (subfunctions of) $\tcfunction$ and $\tcfunction^*\in\labelset$. To avoid the costly and error-prone computation of intersection points, our implementation computes \emph{approximate} values $\leftintervalborder^*$ and~$\rightintervalborder^*$.
It uses the simple test described before to identify subfunctions that dominate a subfunction of $\tcfunction$ in the \emph{whole} intersection of their respective subdomains. Clearly, we do not necessarily find the optimal values for $\leftintervalborder^*$ and $\rightintervalborder^*$ that way. Furthermore, note that there may still exist values $\atime$ in the open interval $(\leftintervalborder^*,\rightintervalborder^*)$ with $\tcfunction^*(\atime)\le\tcfunction(\atime)$ for some function~$\tcfunction^*\in\labelset$, which we could also aim to identify~\cite{He18}. However, as argued above, the fact that we avoid the computation of intersection points makes our dominance test efficient while keeping it relatively easy to implement and maintain.

\section{The Tradeoff Function Propagating Algorithm}\label{sec:approach}

We are now ready to describe our \emph{\gls*{tfp}} algorithm, which utilizes the tools from Section~\ref{sec:operations} to generalize the exponential-time \gls*{bsp} algorithm and solve~\gls*{evcas}.
For a source~$\source\in\vertices$, a target~$\target\in\vertices$, and an initial \gls*{soc} $\soc_\source\in[0,\maxbattery]$, it computes the fastest $\source$--$\target$ path and optimal driving times such that battery constraints are respected.

\begin{algorithm}[tp]
  \caption{Pseudocode of the \acrshort*{tfp} algorithm.}\label{alg:tfp}%
  \tcp{initialize label sets}%
  \ForEach{$\vertex\in\vertices$}{\label{line:tfp:init}%
    $\labelset_{\operatorname{set}}(\vertex)\assign\emptyset$\;%
    $\labelset_{\operatorname{uns}}(\vertex)\assign\emptyset$\;%
  }%
  $\labelset_{\operatorname{uns}}(\source)\assign\{\tcfunction_\source\equiv\maxbattery-\soc_\source\}$\;\label{line:tfp:initsourcelabel}%
  $\queue$.\queueInsert{$\source,\queuekey(\source,\tcfunction_\source)$}\;\label{line:tfp:initsource}%
  \BlankLine
  \tcp{run main loop}%
  \While{$\queue$.\queueIsNotEmpty{}}{%
    \BlankLine
    \tcp{extract next vertex}%
    $\vertexa\assign\queue$.\queueMinElement{}\;\label{line:tfp:extractmin}%
    $\tcfunction_\vertexa\assign\labelset_{\operatorname{uns}}(\vertexa)$.\queueDeleteMin{}\;%
    $\labelset_{\operatorname{set}}(\vertexa)$.\queueInsert{$\tcfunction_\vertexa$}\;\label{line:tfp:settle}%
    \tcp{return if target found}%
    \If{$\vertexa=\target$}{%
      \BlankLine
      \tcp{see Section~\ref{sec:approach:paths} for full solution retrieval}%
      \Return $\queuekey(\vertexa,\tcfunction_\vertexa)$\;\label{line:tfp:stoppingcriterion}%
      }%
    \BlankLine
    \tcp{update priority queue}%
    \eIf{$\labelset_{\operatorname{uns}}(\vertexa)$.\queueIsNotEmpty{}}{\label{line:tfp:updatemin}%
      $\tcfunction\assign\labelset_{\operatorname{uns}}(\vertexa)$.\queueMinElement{}\;%
      $\queue$.\queueUpdate{$\vertexa,\queuekey(\vertexa,\tcfunction)$}\;%
    }{%
      $\queue$.\queueDeleteMin{}\;\label{line:tfp:deletemin}%
    }%
    \BlankLine
    \tcp{scan outgoing arcs}%
    \ForEach{$(\vertexa,\vertexb)\in\arcs$}{\label{line:tfp:scanarcs}%
      $\tcfunction\assign\linkop(\tcfunction_\vertexa,\tcfunction_{(\vertexa,\vertexb)})$\;%
      $\leftintervalborder_\tcfunction\assign\min(\{\infty\}\cup\{\atime\in\posreals\mid\tcfunction(\atime)\le\maxbattery\})$\;\label{line:tfp:cropleftborder}
      $\rightintervalborder_\tcfunction\assign\max(\{\leftintervalborder_\tcfunction\}\cup\{\atime\in\posreals\mid\tcfunction(\atime)\ge0\})$\;\label{line:tfp:croprightborder}
      \If{$\tcfunction\not\equiv\infty$}{%
        $\labelset_{\operatorname{uns}}(\vertexb)$.\queueInsert{$\tcfunction$}\;%
        \If{$\tcfunction=\labelset_{\operatorname{uns}}(\vertexb)$.\queueMinElement{}}{%
         $\queue$.\queueUpdate{$\vertexb,\queuekey(\vertexb,\tcfunction)$}\;\label{line:tfp:inserthead}%
        }%
      }%
    }%
  }%
\end{algorithm}

\subsection{Algorithm Description}\label{sec:approach:description}

Algorithm~\ref{alg:tfp} shows pseudocode of~\gls*{tfp}. The algorithm exploits the fact that the \gls*{soc} at the source is part of the query input. Rather than computing bivariate \gls*{soc} functions (\cf Equation~\eqref{eq:tradeoff-socfunction} in Section~\ref{sec:model:evcas} and the example from Section~\ref{sec:model:example}), this enables us to propagate univariate consumption functions (defined by sequences of tradeoff subfunctions) and ensure on-the-fly that battery constraints are not violated.

\paragraph{Data Structures.}
To keep the number of label comparisons small, we use the same method as in a previous publication~\cite{Bau18}, in which each vertex $\vertex\in\vertices$ maintains a set~$\labelset_\text{set}(\vertex)$ and a heap~$\labelset_\text{uns}(\vertex)$ containing settled (\ie, extracted) and unsettled labels, respectively.
Each label is a (piecewise) consumption function, mapping the driving time on a certain $\source$--$\vertex$~path to energy consumption.
The \emph{key} of a label $\tcfunction$ at the vertex $\vertex$ is defined as
\begin{align}
 \label{eq:queuekey-tfp}
 \queuekey(\vertex,\tcfunction):=\leftintervalborder_\tcfunction,
\end{align}
\ie, its minimum driving time. The corresponding maximum energy consumption $\tcfunction(\leftintervalborder_\tcfunction)$ is used to break ties. (Observe that the key does not depend on the vertex~$\vertex$, however, we will need $\vertex$ as a parameter of the key function when we modify it in Section~\ref{sec:astar}.)

As in Baum~et~al.~\cite{Bau18}, we maintain the invariant that for each~$\vertex\in\vertices$, $\labelset_\text{uns}(\vertex)$ is empty or the unsettled label in~$\labelset_\text{uns}(\vertex)$ with \emph{minimum} key is not dominated by any settled label in~$\labelset_\text{set}(\vertex)$.
New labels are pushed into~$\labelset_\text{uns}(\vertex)$. Whenever the minimum element of~$\labelset_\text{uns}(\vertex)$ changes (because an element is added or extracted), we check whether the new minimum element is dominated by some settled label in~$\labelset_\text{set}(\vertex)$ and discard it in this case.
The algorithm uses a priority queue to determine the next vertex to be scanned. Note that unlike the \gls*{bsp} algorithm, our priority queue contains vertices rather than labels. This does not alter the order in which labels are scanned by the algorithm, but the number of entries in the priority queue is smaller.

In summary, we use the following data structures: The label sets $\labelset_\text{uns}(\cdot)$ and $\labelset_\text{set}(\cdot)$ store the consumption functions of a vertex (keys are calculated on-the-fly from these functions if required, so they do not have to be stored explicitly), whereas the priority queue operates on pairs of vertices and keys.

\paragraph{Label Propagation.}
Label sets and the priority queue are initialized in lines~\ref{line:tfp:init}--\ref{line:tfp:initsource} of Algorithm~\ref{alg:tfp}. Initially, all label sets are empty, except for the constant function $\tcfunction_\source\equiv\maxbattery-\soc_\source$ at the source~$\source$ (the admissible driving time is $0$ and the ``consumption'' at $\source$ equals the difference between the battery capacity and the initial~\gls*{soc}).
The source vertex is also added to the priority queue, which uses the minimum key among any unsettled labels of a vertex as its key; \cf Equation~\eqref{eq:queuekey-tfp}.

In each step of its main loop, the algorithm extracts some vertex~$\vertexa\in\vertices$ with minimum key from the priority queue and settles the corresponding label $\tcfunction_\vertexa$ by taking it from the set $\labelset_\text{uns}(\vertexa)$ and inserting it into~$\labelset_\text{set}(\vertexa)$ (lines~\ref{line:tfp:extractmin}--\ref{line:tfp:settle} of Algorithm~\ref{alg:tfp}). Afterwards, the priority queue and the set $\labelset_\text{uns}(\vertexa)$ are updated accordingly~(lines~\ref{line:tfp:updatemin}--\ref{line:tfp:deletemin}).
Finally, outgoing arcs are scanned (lines~\ref{line:tfp:scanarcs}--\ref{line:tfp:inserthead}).
For every arc~$(\vertexa,\vertexb)\in\arcs$, the function $\tcfunction:=\linkop(\tcfunction_\vertexa,\tcfunction_{(\vertexa,\vertexb)})$ is computed.
Note that~$\tcfunction$ may violate battery constraints, so we set the minimum driving time $\leftintervalborder_\tcfunction$ to the smallest driving time value that yields a consumption of at most~$\maxbattery$ in line~\ref{line:tfp:cropleftborder} of Algorithm~\ref{alg:tfp} (if no such value exists, we set $\leftintervalborder_\tcfunction=\infty$ to indicate that there are no admissible driving times). Similarly, we identify~$\leftintervalborder_\tcfunction$, the largest nonnegative consumption value (a consumption of $0$ corresponds to a fully charge battery) in line~\ref{line:tfp:croprightborder}.
To find these values efficiently in practice, we first check whether $0$ or $\maxbattery$ are in the domain of the inverse $\tcfunction^{-1}$ of~$\tcfunction$. If this is the case, we set the minimum driving time of $\tcfunction$ to $\leftintervalborder_\tcfunction=\tcfunction^{-1}(\maxbattery)$ or the maximum driving time to~$\rightintervalborder_\tcfunction=\tcfunction^{-1}(0)$, respectively.
Observe that this corresponds to setting $\tcfunction(\atime)=\infty$ for all driving times $\atime\in\posreals$ with $\tcfunction(\atime)>\maxbattery$ and $\tcfunction(\atime)=0$ for all $\atime\in\posreals$ with~$\tcfunction(\atime)<0$.
If the resulting function yields finite consumption for some driving time, it is added to $\labelset_\text{uns}(\vertexb)$ and the key of $\vertexb$ in the priority queue is updated, if necessary.

Note that because minimum driving times are strictly positive, the algorithm is \emph{label setting}, \ie, labels extracted from the queue are never dominated by another label later on. Thus, an optimal path is found as soon as a label at~$\target$ is extracted. The minimum driving time of this label is the optimal driving time; see line~\ref{line:tfp:stoppingcriterion} of Algorithm~\ref{alg:tfp}.

\paragraph{Dominance Tests.}
In accordance with Section~\ref{sec:operations:dominance}, we distinguish different ways to identify dominated labels. First, our basic \gls*{tfp} algorithm implements the \emph{lightweight} pairwise dominance tests described at the beginning of Section~\ref{sec:operations:dominance}. For a consumption function $\tcfunction$ in the set $\labelset_\text{uns}(\vertex)$ of some vertex~$\vertex\in\vertices$, we perform a pairwise comparison with each function in $\labelset_\text{set}(\vertex)$ to determine whether $\tcfunction$ is dominated by one of them.
Recall that in doing so, the algorithm may miss that $\tcfunction$ is dominated by a set of other labels (as in Figure~\ref{fig:tradeoff-example-dominance:set-dominance}) or partially dominated (as in Figure~\ref{fig:tradeoff-example-dominance:partial-dominance}).

Second, our \emph{improved} variant computes values $\leftintervalborder^*\in\posreals$ and $\rightintervalborder^*\in\posreals$ as described in Section~\ref{sec:operations:dominance}, such that for each $\atime<\leftintervalborder^*$ and~$\atime>\rightintervalborder^*$, there is at least one function $\tcfunction^*$ contained in the set $\labelset_\text{set}(\vertex)$ with~$\tcfunction^*(\atime)\le\tcfunction(\atime)$.
Note that after this procedure, even if $\tcfunction$ is not discarded (because it is not dominated for all~$\atime\in\posreals$), it may no longer be the function with minimum key in~$\labelset_\text{uns}(\vertex)$, in which case we repeat the dominance check for the new minimum element of~$\labelset_\text{uns}(\vertex)$.
Using the improved dominance check greatly reduces the number of labels in practice, at the cost of (little) additional overhead per dominance test (we now require a coordinated scan over multiple subfunctions).

Recall from Section~\ref{sec:operations:dominance} that we avoid exact computation of $\leftintervalborder^*$ and $\rightintervalborder^*$ for the sake of numerical stability. To distinguish, we call this variant of our improved dominance check \emph{stable}, as opposed to the \emph{exact} variant that requires intersection points to find the actual values of $\leftintervalborder^*$ and $\rightintervalborder^*$. Note that all three variants (lightweight, improved stable, and improved exact) maintain correctness of the algorithm, but the number of labels and subfunctions the algorithm has to propagate may differ.

\subsection{Path Retrieval}\label{sec:approach:paths}

To obtain the actual $\source$--$\target$~path, each label maintains two pointers to its corresponding parent label and vertex.
To retrieve the optimal driving time (and speed) per arc, we explicitly store with each label $\tcfunction$ the function $\cfunctionsecond$, given as $\cfunctionsecond(\atime):=\atime-\cfunction(\atime)$, \wrt the previous link operation that led to the creation of~$\tcfunction$ (see Section~\ref{sec:operations:linking} and Appendix~\ref{app:approach:paths}). It is defined on the domain $[\leftintervalborder,\rightintervalborder]$ induced by the minimum driving time $\leftintervalborder\in\strictposreals$ and the maximum driving time $\rightintervalborder\in\strictposreals$ of $\tcfunction$.

After the search has found the target~$\target$, the following backtracking routine yields the path itself and driving times on all arcs.
We start backtracking from~$\target$. Let~$\atime=\leftintervalborder_\tcfunction$ denote the minimum accumulated driving time of the label~$\tcfunction$ extracted at~$\target$ and let $\vertex\in\vertices$ be the parent vertex of~$\tcfunction$. Then the driving time on the arc $(\vertex,\target)$ is $\cfunctionsecond(\atime)$, where $\cfunctionsecond$ is the function stored with the label~$\tcfunction$. We continue this procedure recursively at its parent label and for the driving time value~$\atime-\cfunctionsecond(\atime)$, until the source vertex $\source$ is reached.

\subsection{A Polynomial-Time Heuristic}\label{sec:approach:heuristic}

Even with improved dominance checks, \gls*{tfp} has exponential running time.
However, the algorithm can easily be extended to a more efficient heuristic search, at the cost of inexact results. We propose a polynomial-time approach that is based on $\varepsilon$-dominance~\cite{Bat11}.
During the search, when performing the dominance test for a label $\tcfunction\in\labelset_\text{uns}(\vertex)$ in the unsettled label set of some vertex~$\vertex\in\vertices$, it is kept in $\labelset_\text{uns}(\vertex)$ only if it yields an improvement over settled labels in $\labelset_\text{set}(\vertex)$ by at least a certain fraction $\varepsilon\maxbattery$, with~$\varepsilon\in(0,1]$, for some driving time.
Thus, when identifying dominated parts of~$\tcfunction$, we test for each driving time $\atime\in\posreals$ whether $\tcfunction^*(\atime)\le\tcfunction(\atime)+\varepsilon\maxbattery$ holds for some settled label~$\tcfunction^*\in\labelset_\text{set}(\vertex)$.
Lemma~\ref{lem:poly-label-size-heuristic} shows that this implies that the number of settled labels per label set can become at most~$\lceil1/\varepsilon\rceil+1$, provided that exact improved dominance checks are applied. Given that the algorithm is label setting, each of these labels is extracted from the priority queue and added to $\labelset_\text{set}(\cdot)$ at most once, so this yields polynomial running time in $|\vertices|$ and~$\lceil1/\varepsilon\rceil$. For a formal proof of Lemma~\ref{lem:poly-label-size-heuristic}, see Appendix~\ref{app:approach:lemma}.

\begin{lemma}
 \label{lem:poly-label-size-heuristic}
 The number of settled labels contained in the set $\labelset_\text{set}(\vertex)$ of each vertex $\vertex\in\vertices$ is at most $\lceil1/\varepsilon\rceil+1$ when running \gls*{tfp} with $\varepsilon$-dominance and exact improved dominance checks.
\end{lemma}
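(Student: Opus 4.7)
My plan is to enumerate the settled labels at $\vertex$ by settling order as $\tcfunction_1,\tcfunction_2,\ldots,\tcfunction_k$ and, for each $i$, to denote by $\leftintervalborder^*_i$ the (possibly cropped) minimum driving time of $\tcfunction_i$ at the instant it is extracted from the priority queue. Since the queue key defined in Equation~\eqref{eq:queuekey-tfp} is exactly this minimum driving time, the global settling order is monotone in keys and, restricted to $\vertex$, yields $\leftintervalborder^*_1 \le \leftintervalborder^*_2 \le \cdots \le \leftintervalborder^*_k$. The crux will be to show inductively that the consumption at each successive label's left boundary drops by at least $\varepsilon\maxbattery$; combined with the cropping of consumption values to lie in $[0,\maxbattery]$, this immediately caps~$k$.

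Concretely, I would establish the invariant
\[
 \tcfunction_i(\leftintervalborder^*_i) \le \bigl(1-(i-1)\varepsilon\bigr)\maxbattery
\]
by induction on $i$. The base case $i=1$ is immediate from line~\ref{line:tfp:cropleftborder} of Algorithm~\ref{alg:tfp}, which enforces $\tcfunction(\leftintervalborder_\tcfunction) \le \maxbattery$ on every label inserted into an unsettled label set. For the induction step, the exact improved $\varepsilon$-dominance test applied when $\tcfunction_{i+1}$ is settled guarantees that $\tcfunction_{i+1}$ is not $\varepsilon$-dominated at any point of its cropped domain; in particular at $\leftintervalborder^*_{i+1}$ we have $\tcfunction_{i+1}(\leftintervalborder^*_{i+1}) + \varepsilon\maxbattery < \tcfunction_j(\leftintervalborder^*_{i+1})$ for every previously settled $\tcfunction_j$. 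Picking $j=i$ and using that $\tcfunction_i$ is decreasing together with $\leftintervalborder^*_{i+1} \ge \leftintervalborder^*_i$, we get $\tcfunction_i(\leftintervalborder^*_{i+1}) \le \tcfunction_i(\leftintervalborder^*_i)$, and the inductive hypothesis closes the step.

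To conclude, line~\ref{line:tfp:croprightborder} together with monotonicity of consumption functions forces $\tcfunction_k(\leftintervalborder^*_k) \ge \tcfunction_k(\rightintervalborder_{\tcfunction_k}) \ge 0$. Combining this with the invariant at $i=k$ gives $(k-1)\varepsilon\maxbattery \le \maxbattery$, whence $k \le \lceil 1/\varepsilon\rceil + 1$.

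The main obstacle I foresee is justifying the monotone settling order $\leftintervalborder^*_1 \le \cdots \le \leftintervalborder^*_k$ in the presence of cropping: when the exact improved dominance check increases a label's left boundary, its queue key grows, and one has to argue that this update happens before the label becomes the extracted minimum, so that no label at $\vertex$ is ever settled with a key strictly smaller than that of an earlier settled label at $\vertex$. Once this bookkeeping is pinned down, the induction itself is short and uses only monotonicity of consumption functions and the defining inequality of exact improved $\varepsilon$-dominance.
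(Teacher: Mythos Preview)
Your proposal is correct and follows essentially the same argument as the paper's proof: both exploit that each newly settled label must improve on its predecessor by at least $\varepsilon\maxbattery$ at its (cropped) left boundary, then use the $[0,\maxbattery]$ cropping to bound the count. The paper phrases this as a contradiction and chains the inequalities directly, whereas you package the same estimate as an explicit invariant; you are also more careful than the paper in flagging the monotonicity $\leftintervalborder^*_1\le\cdots\le\leftintervalborder^*_k$ as something to justify, which the paper simply asserts via the label-setting property.
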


\section{A* Search}\label{sec:astar}

Even the heuristic variant of the \gls*{tfp} algorithm described above remains impractical for realistic ranges and reasonable values of~$\varepsilon$.
Therefore, this section proposes extensions of A*~search to speedup \gls*{tfp} and its heuristic variant, whereas Section~\ref{sec:ch} discusses extensions of~\gls*{ch}.

\paragraph{Potential Shifting.}
A well-known approach to reduce (practical) running times in multicriteria settings is the adaptation of A*~search~\cite{Har68,Man10}.
The key idea of this approach is to \emph{guide} the search towards the target by assigning smaller keys (in the priority queue) to vertices that are closer to the target.
This is achieved by making use of a \emph{potential function}~$\potential\colon\vertices\to\reals$ on the vertices~\cite{Joh77}. The potential function is called~\emph{consistent} (\wrt a cost function $\costfunction\colon\arcs\to\reals$ on the arcs) if $\costfunction(\vertexa, \vertexb)-\potential(\vertexa)+\potential(\vertexb)\geq0$ for all~$(\vertexa,\vertexb)\in\arcs$.
Vertex potentials are added to the keys of labels in the priority queue. Using a consistent potential function, the label-setting property of search algorithms is maintained.

In Section~\ref{sec:astar:single-criterion} and Section~\ref{sec:astar:pwl-functions} below, we propose two variants of A*~search that compute a consistent potential function at query time, prior to running~\gls*{tfp}.
We use these potentials to modify the keys of labels when running Algorithm~\ref{alg:tfp}; \cf Equation~\eqref{eq:queuekey-tfp}.

\subsection{Potentials Based on Single-Criterion Search}\label{sec:astar:single-criterion}

Similar to the approach by Tung and Chew~\cite{Tun92}, our first variant aims at directing the search towards the target by preferring arcs on (unrestricted) fastest paths.
Before running~\gls*{tfp}, we execute a single backward search (\ie, Dijkstra's algorithm running on the \emph{reverse} input graph~$\backwardgraph$, which has all arc directions reversed) from the target~$\target\in\vertices$. This search uses the cost function $\drivingtimelowerboundfunction\colon\arcs\to\posreals$ with $\drivingtimelowerboundfunction(\arc):=\leftintervalborder_\arc$ for all $\arc\in\arcs$, representing minimum driving time on an arc.
This yields, for each vertex~$\vertex\in\vertices$, the minimum unrestricted driving time from $\vertex$ to $\target$, denoted~$\distance_{\drivingtimelowerboundfunction}(\vertex,\target)$, which is a \emph{lower bound} on the constrained driving time from $\vertex$ to~$\target$.
We obtain a potential function $\drivingtimepotential\colon\vertices\to\posreals$ by setting $\drivingtimepotential(\vertex):=\distance_{\drivingtimelowerboundfunction}(\vertex,\target)$. By the triangle inequality, the potential function $\drivingtimepotential$ fulfills the condition $\atime-\drivingtimepotential(\vertexa)+\drivingtimepotential(\vertexb)\ge0$ for all arcs $(\vertexa,\vertexb)\in\arcs$ and all \emph{admissible} driving times~$\atime\in[\leftintervalborder_\arc,\rightintervalborder_\arc]$.
Thus, \gls*{tfp} is label setting when using~$\drivingtimepotential$, which implies that correctness of the algorithm is maintained~\cite{Joh77}.
We replace the key function from Equation~\eqref{eq:queuekey-tfp} by
\begin{align*}
 \queuekey(\vertex,\tcfunction):=\leftintervalborder_\tcfunction+\drivingtimepotential(\vertex)
\end{align*}
for a given consumption function $\tcfunction$ with minimum driving time $\leftintervalborder_\tcfunction$ at some vertex~$\vertex\in\vertices$.

\paragraph{Improvements.}
To compute~$\drivingtimepotential$, the backward search described above visits all vertices in the (reverse) graph. This can be wasteful, especially for small vehicle ranges.
To avoid scans of unreachable vertices, we make use of a second cost function $\consumptionlowerboundfunction\colon\arcs\to\reals$ with $\consumptionlowerboundfunction(\arc):=\tcfunction_\arc(\rightintervalborder_\arc)$ for all~$\arc\in\arcs$, which yields the minimum energy consumption of an arc. We run a backward search on $\backwardgraph$ using this cost function to compute, for all~$\vertex\in\vertices$, lower bounds $\distance_{\consumptionlowerboundfunction}(\vertex,\target)$ on the energy consumption required to reach $\target$ from~$\vertex$. As some arc costs may be negative, we use a \emph{label-correcting} variant of Dijkstra's algorithm, \ie, a variant in which the label of a vertex is not necessarily final when it is extracted from the queue (the vertex may be reinserted and extracted from the queue several times). This algorithm has exponential running time in the worst case~\cite{Joh73}, but it performs well in practice if the number arcs with negative costs is small\footnote{Note that we could also apply the shifting technique of Johnson~\cite{Joh77} to make the search label setting and restore polynomial running time if its performance became an issue.}~\cite{Art10a,Bau13a,Bau18}.
We \emph{prune} the search, \ie, we do not relax any outgoing arcs, whenever the distance label of some scanned vertex exceeds the battery capacity~$\maxbattery$.
Then, we run Dijkstra's algorithm on $\backwardgraph$ to compute~$\drivingtimepotential(\vertex)$ as before, but restrict the search to vertices for which lower bound on energy consumption is below~$\maxbattery$.
Afterwards, we restrict the \gls*{tfp} search to the same set of vertices.
Note that a similar approach can be used to compute vertex potentials for the \gls*{bsp} algorithm when solving~\gls*{evcsp} on a graph with multi-arcs and scalar costs~\cite{Bau14a}.

Additionally, we use lower bounds~$\distance_{\consumptionlowerboundfunction}(\cdot,\cdot)$ for pruning in~\gls*{tfp}:
Before adding a new label $\tcfunction$ to the label set of some vertex~$\vertex\in\vertices$, we first set $\tcfunction(\atime)=\infty$ for all driving times $\atime\in\posreals$ with~$\tcfunction(\atime)+\distance_{\consumptionlowerboundfunction}(\vertex,\target)>\maxbattery$. To do so, we check (in a linear scan over the subfunctions defining the inverse function~$\tcfunction^{-1}$) whether $\maxbattery-\distance_{\consumptionlowerboundfunction}(\vertex,\target)$ is in the domain of $\tcfunction^{-1}$ and set the minimum driving time of $\tcfunction$ to~$\tcfunction^{-1}(\maxbattery-\distance_{\consumptionlowerboundfunction}(\vertex,\target))$ if this is the case. Otherwise, we either obtain $\tcfunction\equiv\infty$ or the function $\tcfunction$ remains unchanged.

\subsection{Potentials Based on Bound Function Propagation}\label{sec:astar:pwl-functions}

The potential function~$\drivingtimepotential$ may be too conservative if the consumption on the fastest path is very high.
In such cases, it pays off to adapt the potential function~$\convexpotential\colon\vertices\times[0,\maxbattery]\cup\{-\infty\}\to\posreals\cup\{\infty\}$ introduced in a previous work~\cite{Bau18}. The potential function $\convexpotential$ incorporates the current \gls*{soc} at a vertex to provide more accurate bounds. For each vertex~$\vertex\in\vertices$, it uses a convex, piecewise linear function~$\convexlowerboundfunction_\vertex\colon\reals\to\posreals\cup\{\infty\}$ that maps the current \gls*{soc} at~$\vertex$ to a lower bound on the remaining driving time from~$\vertex$ to~$\target$. As higher \gls*{soc} allows the \gls*{ev} to drive faster, $\convexlowerboundfunction_\vertex$ is also \emph{decreasing} \wrt \gls*{soc}.
We say that the potential function $\convexpotential$ is \emph{consistent} if the condition
\begin{align*}
 \atime-\convexpotential(\vertexa,\soc)+\convexpotential(\vertexb,\socprofile_{(\vertexa,\vertexb)}(\atime,\soc))\ge0 
\end{align*}
holds for all~$(\vertexa,\vertexb)\in\arcs$,~$\atime\in[\leftintervalborder_{(\vertexa,\vertexb)},\rightintervalborder_{(\vertexa,\vertexb)}]$, and~$\soc\in[0,\maxbattery]$.
In other words, the above term must be nonnegative for any admissible parameter choice for the \gls*{soc} function~$\socprofile_{(\vertexa,\vertexb)}$. We define~$\convexpotential(\vertexa,-\infty):=\infty$. We further require that the potential $\convexpotential(\target,\soc)$ at the target $\target\in\vertices$ equals $0$ for all~$\soc\in[0,\maxbattery]$, which implies that consistent potentials maintain correctness of \gls*{tfp}; see a previous work~\cite{Bau18} for details on consistency of vertex potentials that  incorporate~\gls*{soc}.

\paragraph{Computing Vertex Potentials.}
The functions representing the potentials $\convexpotential(\cdot,\cdot)$ are determined in a label-correcting backward search from~$\target$.
This search operates on the reverse graph $\backwardgraph=(\vertices,\backwardarcs)$ of the input graph and maintains a piecewise linear function $\convexlowerboundfunction_\vertex\colon\reals\to\posreals\cup\{\infty\}$ for each~$\vertex\in\vertices$, represented by a sequence~$\convexfunctionbreakpoints_\vertex=[(\soc_1,\atime_1),\dots,(\soc_k,\atime_k)]$ of breakpoints, such that $\soc_i<\soc_j$ for $i<j$, $\convexlowerboundfunction_\vertex(\soc)=\infty$ for $\soc<\soc_1$, and $\convexlowerboundfunction_\vertex(\soc)=\atime_k$ for $\soc\ge\soc_k$. For arbitrary values $\soc_{i}\le\soc<\soc_{i+1}$, with~$i\in\{1,\dots,k-1\}$, we evaluate $\convexlowerboundfunction_\vertex$ by linear interpolation between two consecutive breakpoints.
We ignore battery constraints in the search, so we obtain lower bounds on \gls*{soc} values, which can also become negative. See Appendix~\ref{app:astar:backward-search} for the pseudocode.

Each vertex stores a \emph{single} label consisting of its (tentative) lower bound function.
The search is initialized with a function $\convexlowerboundfunction_\target$ at the target~$\target\in\vertices$ that evaluates to~$0$ for arbitrary nonnegative~\gls*{soc}.
Labels are extracted in increasing order of their minimum driving time. Given the label extracted at some~$\vertexa\in\vertices$, the search \emph{links} the lower bound in this label with each outgoing arc~$(\vertexa,\vertexb)\in\arcs$.
To this end, it first \emph{converts} the consumption function~$\tcfunction_{(\vertexa,\vertexb)}$ mapping driving time to energy consumption to a piecewise linear function $\convexlowerboundfunction_{(\vertexa,\vertexb)}$ mapping \gls*{soc} to a lower bound on driving time.
Let $\leftintervalborder\in\strictposreals$ and~$\rightintervalborder\in\strictposreals$ denote the minimum and maximum driving time of $\tcfunction_{(\vertexa,\vertexb)}$, respectively.
If $\tcfunction_{(\vertexa,\vertexb)}$ is constant (\ie,~$\leftintervalborder=\rightintervalborder$), we immediately obtain the lower bound defined by~$\convexfunctionbreakpoints_{(\vertexa,\vertexb)}:=[(\tcfunction_{(\vertexa,\vertexb)}(\leftintervalborder),\leftintervalborder)]$.
Otherwise, we consider two geometric lines~$\aline_1$ and $\aline_2$ defined as follows; see also Figure~\ref{fig:tradeoff-bound-simplification}. The first line $\aline_1$ passes through the point $\pointa=(\tcfunction_{(\vertexa,\vertexb)}(\rightintervalborder),\rightintervalborder)$ with slope equal to the (right) derivative of the inverse $\tcfunction_{(\vertexa,\vertexb)}^{-1}$ of $\tcfunction_{(\vertexa,\vertexb)}$ at~$\tcfunction_{(\vertexa,\vertexb)}(\rightintervalborder)$.
The second line $\aline_2$ goes through $\pointb=(\tcfunction_{(\vertexa,\vertexb)}(\leftintervalborder),\leftintervalborder)$ and its slope equals the (left) derivative of $\tcfunction_{(\vertexa,\vertexb)}^{-1}$ at~$\tcfunction_{(\vertexa,\vertexb)}(\leftintervalborder)$.
Since $\leftintervalborder\neq\rightintervalborder$ and $\tcfunction_{(\vertexa,\vertexb)}^{-1}$ is convex on its domain~$[\tcfunction_{(\vertexa,\vertexb)}(\rightintervalborder),\tcfunction_{(\vertexa,\vertexb)}(\leftintervalborder)]$, the lines $\aline_1$ and $\aline_2$ intersect in a unique point~$\pointc\in\reals^2$.
Then, a convex lower bound $\convexlowerboundfunction_{(\vertexa,\vertexb)}$ is defined by the breakpoints~$[\pointa,\pointc,\pointb]$.
To increase accuracy of the lower bound, we may repeat this operation recursively for initial points $\pointa$ or $\pointb$ together with the unique point $\pointc'\in\reals^2$ on the inverse of $\tcfunction_{(\vertexa,\vertexb)}$ that has the same c-coordinate as $\pointc$; see Figure~\ref{fig:tradeoff-bound-simplification}. Recursion stops as soon as the maximum difference between $\tcfunction_{(\vertexa,\vertexb)}^{-1}$ and the function $\convexlowerboundfunction_{(\vertexa,\vertexb)}$ induced by the current sequence of points falls below a predefined threshold~$\varepsilon>0$. Note that this difference is obtained in constant time, since it always occurs in the latest point $\pointc$ that was added to~$\convexlowerboundfunction_{(\vertexa,\vertexb)}$.

\begin{figure}[t]
 \centering%
 \ConstrainedExampleDrawTradeoffFunctionSimplification
 \caption{Constructing the lower bound function $\convexlowerboundfunction_{(\vertexa,\vertexb)}$. It is defined by the endpoints $\pointa$ and $\pointb$ of the inverse of the consumption function $\tcfunction_{(\vertexa,\vertexb)}$ with $\leftintervalborder=2$ and~$\rightintervalborder=6$, and the intersection of the tangents $\aline_1$ and~$\aline_2$. The value $\varepsilon$ indicates the maximum error of the lower bound. Recursion starts at~$\pointc'$.}%
\label{fig:tradeoff-bound-simplification}%
\end{figure}

We link the function~$\convexlowerboundfunction_\vertexa$ at~$\vertexa$ with the resulting lower bound $\convexlowerboundfunction_{(\vertexa,\vertexb)}$ by computing, for any \gls*{soc} value~$\soc\in\reals$, an optimal distribution of the available amount $\soc$ of energy among the arc $(\vertexa,\vertexb)$ and the remaining $\vertexa$--$\target$~path. Formally, the resulting function $\convexlowerboundfunction=\linkop(\convexlowerboundfunction_\vertexa,\convexlowerboundfunction_{(\vertexa,\vertexb)})$ evaluates to
\begin{align*}
\convexlowerboundfunction(\soc):=\min_{\soc^*\in\reals}\convexlowerboundfunction_{(\vertexa,\vertexb)}(\soc^*)+\convexlowerboundfunction_\vertexa(\soc-\soc^*) 
\end{align*}
for all~$\soc\in\reals$.
This function can be computed in a linear scan over the breakpoints of both functions~\cite{Bau18}.
To improve running times in practice, we can discard the label $\convexlowerboundfunction$ at this point if $\convexlowerboundfunction(\soc)>\maxbattery$ for all~$\soc\in\reals$.

After scanning the arc, we \emph{merge} the resulting function~$\convexlowerboundfunction$ with the function~$\convexlowerboundfunction_\vertexb$ in the label of~$\vertexb$, \ie, we compute the pointwise minimum $\convexlowerboundfunction_\vertexb=\mergeop(\convexlowerboundfunction_\vertexb,\convexlowerboundfunction)$ in a linear scan over the breakpoints of $\convexlowerboundfunction_\vertexb$ and~$\convexlowerboundfunction$.
To ensure that the resulting function $\convexlowerboundfunction_\vertexb$ is again convex, we apply the linear-time scan of Graham~\cite{Gra72}: Starting with an empty sequence of breakpoints, it iteratively appends the breakpoints of $\convexlowerboundfunction_\vertexb$ in increasing order of driving time, each time removing previously added breakpoints from the tail of the sequence if that is necessary to maintain the function's convexity.

\paragraph{Running \gls*{tfp} with Vertex Potentials.}
Once the backward search has terminated, we define $\convexpotential(\vertex,\soc):=\convexlowerboundfunction_\vertex(\soc)$ for all $\vertex\in\vertices$ and $\soc\in[0,\maxbattery]$ (see Appendix~\ref{app:astar:lemma} for a formal proof of potential consistency).
Afterwards, we start the actual \gls*{tfp} search, during which we obtain the key of a label $\tcfunction$ at some vertex $\vertex\in\vertices$ by setting it to
\begin{align*}
 \queuekey(\vertex,\tcfunction):=\min_{\atime\in\posreals}\atime+\convexpotential(\vertex,\maxbattery-\tcfunction(\atime)).
\end{align*}
Recall that $\maxbattery-\tcfunction(\atime)$ yields the \gls*{soc} at $\vertex$ for a driving time of~$\atime$ (\cf line~\ref{line:tfp:initsourcelabel} of Algorithm~\ref{alg:tfp} in Section~\ref{sec:approach:description}).
Computing the key requires a linear scan over the subfunctions defining $\tcfunction$ and the lower bound function $\convexlowerboundfunction_\vertex$ at~$\vertex$.
In each step of the scan, we update the minimum by evaluating the term $\atime+\convexlowerboundfunction_\vertex(\maxbattery-\tcfunction(\atime))$ at up to three values~$\atime\in\posreals$, namely, the boundaries of the intersection of the subdomains of the considered subfunctions of $\tcfunction$ and $\convexlowerboundfunction_\vertex$ and at the unique extreme point of their sum (if it falls within the current intersection of subdomains).

\section{Contraction Hierarchies}\label{sec:ch}

We propose an adaptation of the speedup technique \gls*{ch}~\cite{Gei12b} to the \gls*{evcas} problem. The algorithm distinguishes an offline \emph{preprocessing phase} and an online \emph{query phase}~\cite{Bas14}.
Plain \gls*{ch}~\cite{Gei12b} accelerates single-criterion shortest-path queries \wrt a given cost function~$\costfunction\colon\arcs\to\posreals$ by augmenting the graph with \emph{shortcut arcs}. These shortcuts are computed during a preprocessing routine that iteratively \emph{contracts} all vertices of the input graph in a (heuristic) order.
To this end, a \emph{core} graph $\graph'$ is maintained, initially set to $\graph'=\graph$.
When a vertex $\vertexb\in\vertices$ is contracted, it is removed from $\graph'$ together with all its incident arcs, while shortcut arcs are added between its neighbors to preserve distances, if necessary:
Let $\vertexa\in\vertices$ and $\vertexc\in\vertices$ be neighbors of $\vertexb$ in~$\graph'$. To determine whether a shortcut candidate $(\vertexa,\vertexb)$ with cost $\costfunction^*:=\costfunction(\vertexa,\vertexb)+\costfunction(\vertexb,\vertexc)$ is needed, a \emph{witness search} runs Dijkstra's algorithm to compute the distance $\distance_\costfunction(\vertexa,\vertexc)$ in~$\graph'$ (after removing~$\vertexb$, but before adding the shortcut). If $\distance_\costfunction(\vertexa,\vertexc)\le\costfunction^*$ holds, the shortcut is not required.
Eventually, contracting all vertices results in an ``empty'' graph $\graph'=(\emptyset,\emptyset)$.
As final step of the preprocessing stage, an \emph{augmented} graph is constructed, which consists of all vertices and arcs of the original graph~$\graph$, together with all shortcuts that were inserted into the core graph at some point during preprocessing.

Given a source $\source\in\vertices$ and a target~$\target\in\vertices$, the query algorithm runs a bidirectional variant of Dijkstra's algorithm on the augmented graph: a forward search from $\source$ and a backward search from~$\target$. Each search scans only \emph{upward} arcs in the augmented graph \wrt the contraction order, \ie, only those arcs where the tail vertex was contracted before the head vertex.
One can show that this algorithm computes the distance between $\source$ and $\target$ in the input graph~\cite{Gei12b}.

\paragraph{Adapting \gls*{ch} to the \gls*{evcas} problem.}
As in plain \gls*{ch}~\cite{Gei12b}, we contract vertices iteratively during preprocessing and insert shortcut arcs to maintain distances.
However, we contract only a subset of the vertices, leaving an uncontracted core graph---a common approach in complex settings~\cite{Bau15,Dib15,Har14,Sto12c}.
The query algorithm then scans upward arcs and all arcs that are still contained in the core graph.

Unfortunately, the \gls*{soc} at the target vertex $\target\in\vertices$ is not known at query time. This makes backward search difficult, since it would require us to propagate \gls*{soc} functions.
Their bivariate nature makes explicit construction and comparison of \gls*{soc} functions rather involved.
Therefore, we first run (at query time) a \gls*{bfs} from~$\target$ on the backward graph including shortcuts from preprocessing, scanning only \emph{downward arcs} (no core arcs), \ie, arcs for which the head (\wrt direction in~$\graph$) was contracted before the tail, \emph{marking} each as visited. Afterwards, we execute \gls*{tfp} from the source vertex~$\source\in\vertices$, but let it only scan upward arcs, core arcs, and downward arcs marked by the~\gls*{bfs}.
In other words, we use Algorithm~\ref{alg:tfp}, with the sole modification that only the arcs mentioned above are scanned.

In the remainder of this section, we discuss changes that need to be made to the preprocessing routine to adapt it to our setting.
Most importantly, the \gls*{soc} at a vertex is only known at query time. Hence, any shortcut would have to store a \emph{bivariate} \gls*{soc} function (\cf Section~\ref{sec:model:example}). As argued before, this is rather impractical.
In Section~\ref{sec:ch:simple-soc-functions}, we examine special categories of \gls*{soc} functions that enable both a simple representation and efficient dominance tests. Afterwards, we describe how \gls*{ch} can utilize these simple \gls*{soc} functions if we modify several key steps of the preprocessing routine:
\begin{itemize}
 \item We only allow vertex contraction in cases where all resulting shortcuts fall into the aforementioned categories (Section~\ref{sec:ch:contraction}).
 \item We efficiently decide which shortcuts to keep if there are \emph{multiple} candidates connecting the same pair of vertices (Section~\ref{sec:ch:shortcut-comparison}).
 \item We adapt the witness search to deal with bivariate \gls*{soc} functions (Section~\ref{sec:ch:witness-search}).
\end{itemize}
Omitted proofs can be found in Appendix~\ref{app:ch:lemmas}.
In what follows, we say that an \gls*{soc} function $\socprofile_1$ \emph{dominates} another \gls*{soc} function $\socprofile_2$ if $\socprofile_1(\atime,\soc)\ge\socprofile_2(\atime,\soc)$ holds for all $\atime\in\posreals$ and~$\soc\in[0,\maxbattery]$.

\subsection{Simple Representations of Bivariate Functions}\label{sec:ch:simple-soc-functions}

First, we derive the bivariate \gls*{soc} functions of paths where all consumption values are nonnegative.
Consider the consumption functions $\tcfunction_1,\dots,\tcfunction_{k-1}$ of the arcs of a path $\apath=[\vertex_1,\dots,\vertex_k]$ in~$\graph$. Assume that $\tcfunction_i(\atime)\ge0$ holds for all driving times $\atime\in\posreals$ and~$i\in\{1,\dots,k-1\}$.
In this case, battery constraints can render driving at high speed infeasible. On the other hand, recuperation never occurs on~$\apath$. Therefore, the best speed on an arc depends only on the slope of its consumption function, but not on the position of the arc in the path (in contrast to the situation discussed in Section~\ref{sec:model:example} and sketched in Figure~\ref{fig:tradeoff-simple-model}, where the order of arcs clearly matters). Consequently, it is sufficient to first link the consumption functions and apply battery constraints only \emph{once} afterwards.
Lemma~\ref{lem:tradeoff-nonnegative-shortcut} specifies the resulting bivariate \gls*{soc} function, which is represented by a single univariate (consumption) function.

\begin{lemma}
 \label{lem:tradeoff-nonnegative-shortcut}
 Let $\apath = [\vertex_1, \dots, \vertex_k]$ be a path in~$\graph$ and let $\tcfunction_i$ denote the consumption function of the arc~$(\vertex_i,\vertex_{i+1})$ for~$i\in\{1,\dots,k-1\}$. If all consumption functions are nonnegative, \ie, $\tcfunction_i(\atime)\ge0$ holds for all $\atime\in\posreals$ and~$i\in\{1,\dots,k-1\}$, the \gls*{soc} function of~$\apath$ evaluates to
 \begin{align*}
  \socprofile(\atime, \soc)=
  \begin{cases}
   -\infty     & \mbox{if } \soc < \tcfunction(\atime), \\
   \soc - \tcfunction(\atime) & \mbox{otherwise,}
  \end{cases}
 \end{align*}
 where $\tcfunction$ denotes the function obtained after iteratively linking the functions $\tcfunction_1,\dots,\tcfunction_{k-1}$.
\end{lemma}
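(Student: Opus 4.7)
The plan is to prove the two cases of the claim by first establishing a key structural observation and then arguing feasibility separately from optimality. The structural observation is that since every $\tcfunction_i$ is pointwise nonnegative, applying Equation~\eqref{eq:tradeoff-socfunction} iteratively along $\apath$ yields an \gls*{soc} sequence that is non-increasing in $i$ for any fixed choice of per-arc driving times. Consequently, the upper cap at $\maxbattery$ in Equation~\eqref{eq:tradeoff-socfunction} is never triggered at intermediate vertices: the initial \gls*{soc} at $\vertex_1$ is already at most $\maxbattery$, and the \gls*{soc} cannot grow. Thus, evaluating Equation~\eqref{eq:soc-eval} along $\apath$ reduces, as long as the sequence stays finite, to simple subtraction of the per-arc consumptions, and the final \gls*{soc} is $\soc-\sum_{i=1}^{k-1}\tcfunction_i(\atime_i)$.

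Next, I would translate the problem of computing $\socprofile(\atime,\soc)$ into a pure optimization over admissible splits of the total driving time: among all tuples $(\atime_1,\dots,\atime_{k-1})\in\posreals^{k-1}$ with $\sum_i\atime_i=\atime$, maximize the final \gls*{soc} subject to the intermediate feasibility constraints $\soc-\sum_{j\le i}\tcfunction_j(\atime_j)\ge0$ for each $i$. Because all $\tcfunction_j$ are nonnegative, the partial sums $\sum_{j\le i}\tcfunction_j(\atime_j)$ are monotonically increasing in $i$, so the intermediate feasibility constraints are all implied by the final one $\soc-\sum_{j=1}^{k-1}\tcfunction_j(\atime_j)\ge0$. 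Hence the optimization collapses to maximizing $\soc-\sum_j\tcfunction_j(\atime_j)$ subject only to $\sum_j\atime_j=\atime$, admissible per-arc driving times, and the single constraint that the sum does not exceed $\soc$.

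Now invoke the link operation from Section~\ref{sec:operations:linking}: by Equation~\eqref{eq:tradeoff-linkedfunction} and induction on the number of arcs, the minimum of $\sum_j\tcfunction_j(\atime_j)$ over admissible splits with $\sum_j\atime_j=\atime$ equals $\tcfunction(\atime)$, where $\tcfunction$ is the iterated link of $\tcfunction_1,\dots,\tcfunction_{k-1}$. Combining this with the previous paragraph yields the two cases directly: if $\soc<\tcfunction(\atime)$, no admissible split satisfies the feasibility constraint, so $\socprofile(\atime,\soc)=-\infty$; otherwise, the split attaining the minimum is feasible (all partial sums stay below $\soc$ since they are bounded by the nonnegative total $\tcfunction(\atime)\le\soc$), and it realizes $\socprofile(\atime,\soc)=\soc-\tcfunction(\atime)$.

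The main obstacle I anticipate is purely notational: making precise the claim that intermediate feasibility is automatically implied by final feasibility under nonnegative consumption, and carefully handling the boundary case $\atime<\leftintervalborder$ of the linked function (where $\tcfunction(\atime)=\infty$ and both branches of the case distinction collapse to $-\infty$). Everything else is a direct consequence of Equations~\eqref{eq:tradeoff-socfunction}, \eqref{eq:soc-eval}, and~\eqref{eq:tradeoff-linkedfunction}.
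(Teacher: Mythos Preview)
Your proposal is correct and follows essentially the same approach as the paper: both hinge on the observation that nonnegative per-arc consumption makes the upper cap at $\maxbattery$ irrelevant and renders all intermediate feasibility constraints redundant (since partial sums are monotone), so that the \gls*{soc} function is determined entirely by the linked consumption $\tcfunction(\atime)$. The paper organizes this as an explicit induction on the number of arcs, building up the linked function one step at a time, whereas you argue directly on the full optimization over time splits; this is a presentational difference only.
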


Note that the functions $\tcfunction_1,\dots,\tcfunction_{k-1}$ can be linked in arbitrary order, since the link operation is both commutative and associative.
Thus, we can easily construct a shortcut for~$\apath$ by iteratively contracting its internal vertices $\vertex_2,\dots,\vertex_{k-1}$ in any given order, each time linking the consumption functions of both incident (shortcut) arcs to compute a new shortcut.

A symmetric argument holds for paths consisting of only nonpositive consumption functions.
Consequently, we can allow contraction of a vertex if \emph{all} (finite) values of consumption functions assigned to \emph{any} of its incident arcs have the same sign, since the resulting bivariate \gls*{soc} functions can be represented efficiently.
On real-world instances, where the majority of consumption values is positive, this approach already allows contraction of large parts of the graph (more than 50\,\% of the vertices in our tests). Nevertheless, the size of the resulting core graph is still too large to achieve significant speedups.

\paragraph{Discharging Paths.}
We discuss simple representations in more general cases, exploiting that most consumption values are positive in realistic instances.
We say that a path~$\apath$ is \emph{discharging} if the \gls*{soc} on $\apath$ never exceeds the (arbitrary) initial~\gls*{soc}, \ie, there is no prefix of~$\apath$ that has negative minimum consumption for any driving time. Definition~\ref{def:discharging-path} formalizes this notion. 
\begin{definition}[Discharging Path]
\label{def:discharging-path}
 A path $\apath=[\vertex_1,\dots,\vertex_k]$ in the graph is discharging if for arbitrary admissible driving times $\atime_i\in\posreals$ with~$i\in\{1,\dots,k-1\}$, we obtain $\sum_{i=1}^j\consumptionfunction_{(\vertex_i,\vertex_{i+1})}(\atime_i)\ge0$ for every~$j\in\{1,\dots,k-1\}$, \ie, the total energy consumption on the path is nonnegative on each prefix $[\vertex_1,\dots,\vertex_{j+1}]$ of~$\apath$, including $\apath$ itself.
\end{definition}
Note that even though the total consumption on a discharging path must be nonnegative, it may still contain subpaths with negative consumption (as long as they are preceded by subpaths with at least as much positive consumption).
Apparently, it is not necessary for us to explicitly check whether the \gls*{soc} exceeds~$\maxbattery$ on a discharging path, since this will never occur. Hence, making sure that battery constraints are not violated becomes easier.
We now show how the \gls*{soc} function of a discharging path is represented by at most two consumption functions.

Clearly, a path consisting of only arcs with nonnegative consumption values (for arbitrary driving times) is discharging. We already showed how it can be represented by a single consumption function.
As a more intricate example, assume we are given a path~$\apath=\apath_1\circ\apath_2$ consisting of two subpaths~$\apath_1$ and $\apath_2$ that can be represented by two consumption functions~$\tcfunction_1$ and~$\tcfunction_2$. Let $\leftintervalborder_1$,~$\rightintervalborder_1$,~$\leftintervalborder_2$, and~$\rightintervalborder_2$ denote the respective minimum and maximum driving times of $\tcfunction_1$ and~$\tcfunction_2$. Moreover, assume that the consumption $\tcfunction_1(\atime)>0$ is \emph{positive} for all $\atime\in\posreals$, while $\tcfunction_2(\atime)\le0$ is \emph{nonpositive} for all~$\atime\in[\leftintervalborder_2,\infty)$.
Finally, assume that~$|\tcfunction_1(\rightintervalborder_1)|\ge|\tcfunction_2(\rightintervalborder_2)|$, \ie, the cost of $\apath_1$ is higher than the gain of~$\apath_2$ for \emph{any} driving time, so $\apath$ is discharging.
We derive the \gls*{soc} function of $\apath$, represented by a \emph{positive part} $\tcfunction^+$ with $\tcfunction^+(\atime):=\tcfunction_1(\atime-\leftintervalborder_2)$ and a \emph{negative part} $\tcfunction^-(\atime)$ with $\tcfunction^-(\atime):=\tcfunction_2(\atime+\leftintervalborder_2)$.
The original functions are shifted along the x-axis to simplify the analysis (note that the minimum driving time of $\tcfunction^-$ is~$0$).
Given some initial \gls*{soc}~$\soc\in[0,\maxbattery]$, the positive part~$\tcfunction^+$, and the negative part~$\tcfunction^-$, we define the \emph{constrained positive part} $\tcfunction^+_\soc\colon\posreals\to\posreals\cup\{\infty\}$ as
\begin{align*}
 \tcfunction^+_\soc(\atime) :=
\begin{cases}
 \infty                 & \mbox{if } \soc < \tcfunction^+(\atime)\mbox{,} \\
 \tcfunction^+(\atime)  & \mbox{otherwise,}
\end{cases}
\end{align*}
which ensures that battery constraints along $\apath_1$ are not violated if the initial \gls*{soc} is~$\soc$; see Figure~\ref{fig:tradeoff-linked-shortcut}. Then, the \gls*{soc} function $\socprofile$ of the path $\apath$ evaluates to $\socprofile(\atime,\soc)=\soc-\linkop(\tcfunction^+_\soc,\tcfunction^-)(\atime)$ for arbitrary $\atime\in\posreals$ and $\soc\in[0,\maxbattery]$. Given some initial~\gls*{soc}, the function $\socprofile$ first checks battery constraints on the positive part $\tcfunction^+$ and links the resulting function $\tcfunction^+_\soc$ with the negative part~$\tcfunction^-$.
Since the underlying path $\apath$ is discharging, we know that no further constraints need to be checked for~$\tcfunction^-$, so the function computed by $\linkop(\tcfunction^+_\soc,\tcfunction^-)$ yields minimum energy consumption subject to driving time for the initial \gls*{soc}~$\soc$.

\begin{figure}[t]
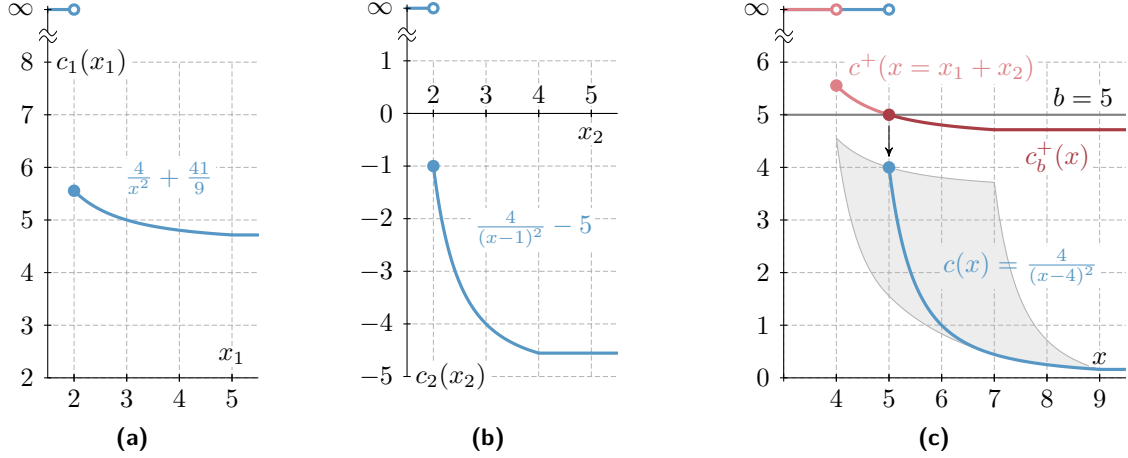

 \centering
 \begin{subfigure}[b]{.285\textwidth}%
 \centering%
 \ConstrainedExampleDrawRealisticSoCFunctionFirstEdge
 \caption{}%
 \label{fig:tradeoff-linked-shortcut:first-arc}%
 \end{subfigure}%
 \begin{subfigure}[b]{.285\textwidth}%
 \centering%
 \ConstrainedExampleDrawRealisticSoCFunctionSecondEdge
 \caption{}%
 \label{fig:tradeoff-linked-shortcut:second-arc}%
 \end{subfigure}%
 \begin{subfigure}[b]{.43\textwidth}%
 \centering%
 \ConstrainedExampleDrawRealisticSoCFunctionLink
 \caption{}%
 \label{fig:tradeoff-linked-shortcut:result}%
 \end{subfigure}%
 \caption{Construction of a consumption function depending on initial~\gls*{soc}. (a) The consumption function~$\tcfunction_1$ of a path $\apath_1$ with positive consumption. (b) The consumption function~$\tcfunction_2$ of a path $\apath_2$ with negative consumption. (c) Deriving the consumption function of the path~$\apath:=\apath_1\circ\apath_2$. Due to battery constraints, the minimum driving time of~$\tcfunction^+$ is $5$ for an initial \gls*{soc}~$\soc=5$. This yields the consumption function $\tcfunction:=\linkop(\tcfunction^+_\soc,\tcfunction^-)$ for the path $\apath$ (the function~$\tcfunction^-$, which corresponds to $\tcfunction_2$ being shifted to the left, is not shown). The shaded area indicates possible images of consumption functions for different values of initial~\gls*{soc}.}
 \label{fig:tradeoff-linked-shortcut}
\end{figure}

\subsection{Vertex Contraction}\label{sec:ch:contraction}

We use our insights about discharging paths to establish a preprocessing routine for~\gls*{ch}.
For the sake of simplicity, we assume that for each arc in the graph, the energy consumption is either nonnegative for all admissible driving times or nonpositive for all admissible driving times. Note that we can always enforce this by splitting an arc $\arc\in\arcs$ with $\tcfunction_\arc(\leftintervalborder_\arc)>0$ and $\tcfunction_\arc(\rightintervalborder_\arc)<0$ into two consecutive arcs corresponding to the positive part and the (shifted) negative part of~$\tcfunction_\arc$, respectively.

\paragraph{Active Vertices.}
Shortcuts that represent either discharging paths or paths consisting solely of arcs whose energy consumption is nonpositive for all admissible driving times are called \emph{discharging} or \emph{nonpositive}, respectively. In both cases, we can use the simple representations we derived above to efficiently store their \gls*{soc} functions.
Consequently, we only allow a vertex $\vertex\in\vertices$ to be contracted if all new shortcuts created as part of its contraction are discharging or nonpositive. We call $\vertex$ \emph{active} in this case.
Note that the number of active vertices grows as contraction proceeds, since this results in longer shortcuts, which are more likely to consist of significant positive parts.
It remains to show how to decide efficiently whether a vertex is active during preprocessing and construct the necessary shortcuts if it is indeed contracted.

Assume that at some point during preprocessing, we want to know if a vertex $\vertexb\in\vertices$ incident to two (original or shortcut) arcs $(\vertexa,\vertexb)$ and~$(\vertexb,\vertexc)$ in the core graph is active.
We have to determine whether a new shortcut $(\vertexa,\vertexc)$ can be constructed from $(\vertexa,\vertexb)$ and~$(\vertexb,\vertexc)$ that is either discharging or nonpositive.
Clearly, a nonpositive shortcut can be constructed if and only if both $(\vertexa,\vertexb)$ and~$(\vertexb,\vertexc)$ are nonpositive.
Otherwise, we want to know whether we can construct a discharging shortcut.
Let $\apath_1$ be the underlying path in the original graph represented by $(\vertexa,\vertexb)$ and let $\apath_2$ be the path represented by $(\vertexb,\vertexc)$.
We have to decide whether $\apath=\apath_1\circ\apath_2$ is a discharging path, \ie, energy consumption is nonnegative on every prefix of $\apath$, regardless of the driving time. Apparently, this can only be the case if $\apath_1$ is a discharging path itself. For $\apath_2$, we distinguish two cases.
First, if $\apath_2$ is discharging as well, to follows immediately that also $\apath$ must be discharging.
Second, if $\apath_2$ is not a discharging path, it must consist solely of arcs for which energy consumptions are \emph{nonpositive} for arbitrary (admissible) driving times, since a shortcut $(\vertexb,\vertexc)$ would not have been created otherwise.
Hence, $\apath_2$ is represented by a single consumption function $\tcfunction_2$ with minimum and maximum driving times $\leftintervalborder_2\in\strictposreals$ and $\rightintervalborder_2\in\strictposreals$, such that $\tcfunction_2(\atime)\le0$ for all $\atime\in[\leftintervalborder_2,\infty)$.
To test whether a discharging shortcut $(\vertexa,\vertexc)$ can be constructed in this case, we consider the positive part $\tcfunction^+_1$ and the negative part $\tcfunction^-_1$ of $\apath_1$ with corresponding maximum driving times $\rightintervalborder^+_1$ and~$\rightintervalborder^-_1$.
The path $\apath$ is discharging if and only if~$\tcfunction^+_1(\rightintervalborder^+_1)+\tcfunction^-_1(\rightintervalborder^-_1)+\tcfunction_2(\rightintervalborder_2)\ge0$, as these driving times minimize consumption on~$\apath$ (or any prefix of $\apath$ that ends at a vertex of~$\apath_2$).

\paragraph{Constructing Shortcuts.}
When we contract an active vertex, we have to insert new shortcuts from pairs of existing incident arcs in the current graph. To this end, we have to compute the corresponding \gls*{soc} function from two given discharging or nonpositive shortcuts.
As before, let $\vertexb\in\vertices$ denote the vertex to be contracted and let $(\vertexa,\vertexb)$ and $(\vertexb,\vertexc)$ denote two incident (shortcut) arcs representing underlying paths $\apath_1$ and $\apath_2$ in the core graph.

If both $(\vertexa,\vertexb)$ and $(\vertexb,\vertexc)$ are nonpositive, we simply have to link their underlying consumption functions.
Otherwise, $(\vertexa,\vertexb)$ is discharging, since $\vertexb$ is an active vertex. Let its \gls*{soc} function be defined by two consumption functions~$\tcfunction^+_1$ and~$\tcfunction^-_1$. We now construct the \gls*{soc} function for the path $\apath:=\apath_1\circ\apath_2$, which must be discharging as well.
If $(\vertexb,\vertexc)$ is nonpositive, let its consumption function be given as~$\tcfunction_2$. We immediately obtain the positive part $\tcfunction^+$ and the negative part $\tcfunction^-$ of the discharging path~$\apath$, with $\tcfunction^+(\atime)=\tcfunction^+_1(\atime-\leftintervalborder_2)$ and $\tcfunction^-(\atime)=\linkop(\tcfunction^-_1,\tcfunction_2)(\atime+\leftintervalborder_2)$ for all~$\atime\in\posreals$.

Finally, assume that $\apath_2$ is a discharging path as well, with the respective consumption functions $\tcfunction^+_2$ and~$\tcfunction^-_2$.
Apparently, if we know the initial \gls*{soc}, we can compute the energy consumption on $\apath$ by computing $\linkop(\linkop(\linkop(\tcfunction_1^+,\tcfunction_1^-),\tcfunction_2^+),\tcfunction_2^-)$ and applying battery constraints \emph{before} each link operation, like in the \gls*{tfp} algorithm (see Section~\ref{sec:approach:description}).
However, we want to represent $\apath$ with only two consumption functions $\tcfunction^+$ and~$\tcfunction^-$.
Recall that the only constraint we have to check for discharging paths is whether the \gls*{soc} drops below~$0$.
Since both $\tcfunction^-_1$ and $\tcfunction^-_2$ are nonpositive for all admissible driving times, the constraint needs only to be checked for $\tcfunction^+_1$ and~$\tcfunction^+_2$ (\ie, before the first and third link operation).
To integrate these checks into a single new positive part~$\tcfunction^+$, we first compute the consumption function $h:=\linkop(\tcfunction^-_1,\tcfunction^+_2)$.
Clearly, the battery can only run empty on $\apath_2$ if this consumption function is \emph{positive} for some admissible driving time (otherwise, we always gain more energy with $\tcfunction_1^-$ than is lost on~$\tcfunction_2^+$).
To distinguish, we split $h$ into a positive part $h^+\colon\posreals\to\posreals\cup\{\infty\}$ with $h^+(\atime):=\max\{h(\atime),0\}$ and a negative part $h^-\colon\posreals\to\negreals\cup\{\infty\}$ with
\begin{align*}
 h^-(\atime) :=
 \begin{cases}
  \infty    & \mbox{if } \atime<\rightintervalborder\text{ and }h(\atime)>0\text{,}\\
  \min\{h(\atime),0\} & \mbox{otherwise,}
 \end{cases}
\end{align*}
where $\rightintervalborder\in\strictposreals$ denotes the maximum driving time of~$h$.
Since $h$ is a valid consumption function, so are both $h^+$ and~$h^-$.
Observe that in case $h$ yields nonnegative (nonpositive) energy consumption for all admissible driving times, the function $h^-$ ($h^+$) always evaluates to $0$ on its subdomain with finite image.
We derive the positive part $\tcfunction^+$ of $\apath$ by setting $\tcfunction^+(\atime):=\linkop(\tcfunction_1^+,h^+)(\atime)$ for all $\atime\in\posreals$ and the negative part $\tcfunction^-$ of $\apath$ by setting~$\tcfunction^-(\atime):=\linkop(h^-,\tcfunction^-_2)(\atime+\leftintervalborder)$ for all $\atime\in\posreals$, where $\leftintervalborder\in\strictposreals$ is the minimum driving time of~$h^-$ (we shift the function to ensure that its minimum driving time is~$0$).
Our way of splitting the function $h$ ensures that battery constraints are only applied to prefixes of $\apath$ with positive energy consumption.
The \gls*{soc} function of $\apath$ is obtained from $\tcfunction^+$ and $\tcfunction^-$ as described in Section~\ref{sec:ch:simple-soc-functions}.

\subsection{Comparing Shortcut Candidates}\label{sec:ch:shortcut-comparison}

In a bicriteria setting, vertex contraction may result in multi-arcs between neighbors of the contracted vertex. In such cases, we only want to keep shortcuts if their \gls*{soc} function is not dominated by \gls*{soc}~functions of parallel arcs.
Hence, after the contraction of a vertex, we want to delete (parts of) \gls*{soc} functions of shortcut candidates that are dominated by existing functions between the same pair of vertices (and vice versa).
To this end, we require efficient dominance checks for \gls*{soc} functions that are either \emph{discharging}, \ie, represent a discharging path, or have nonpositive energy consumption for all admissible driving times.
Assume two given \gls*{soc} functions $\socprofile_1$ and $\socprofile_2$ defined by the respective consumption functions $\tcfunction^+_1$,~$\tcfunction^-_1$,~$\tcfunction^+_2$, and~$\tcfunction^-_2$ (we assume that the positive part evaluates to $0$ for all admissible driving times if consumption is always nonpositive). Further, we assume that our goal is to remove dominated parts of $\socprofile_2$, \ie, we want to identify intervals $\interval\subseteq\posreals$ where $\socprofile_1(\atime,\soc)\ge\socprofile_2(\atime,\soc)$ holds for all driving times $\atime\in\interval$ and all values $\soc\in[0,\maxbattery]$ of initial~\gls*{soc}.
If both \gls*{soc} functions have nonpositive consumption, each is represented by a single consumption function and we can immediately apply the dominance tests described in Section~\ref{sec:operations:dominance}.
For the remaining cases, note that a discharging \gls*{soc} function cannot dominate a function with nonpositive consumption for all possible values of initial \gls*{soc}.
Thus, we consider the case where at least $\socprofile_2$ is discharging. We propose a method that may miss dominated parts of \gls*{soc}~functions, but requires only a linear scan over the consumption functions that define the \gls*{soc} functions $\socprofile_1$ and~$\socprofile_2$. Thereby, correctness is maintained, but unnecessary shortcuts may be inserted.
Let $\varepsilon\ge0$ denote the smallest nonnegative real value such that $\tcfunction^-_1(\atime)\le\tcfunction^-_2(\atime)+\varepsilon$ holds for all $\atime\in\posreals$. This value can be computed in a linear scan over the subfunctions of $\tcfunction^-_1$ and $\tcfunction^-_2$, similar to a dominance test.
Lemma~\ref{lem:tradeoff-profile-dominance} claims that if $\tcfunction^+_1(\atime)+\varepsilon\le\tcfunction^+_2(\atime)$ holds for some~$\atime\in\posreals$, choosing a driving time of $\atime$ for the positive part $\tcfunction^+_2$ always results in a solution that is dominated by~$\socprofile_1$, regardless of the initial~\gls*{soc}.

\begin{lemma}
 \label{lem:tradeoff-profile-dominance}
 Given a nonpositive or discharging \gls*{soc}~function $\socprofile_1$ and a discharging \gls*{soc}~function~$\socprofile_2$, such that their respective consumption functions are~$\tcfunction_1^+$,~$\tcfunction_1^-$,~$\tcfunction_2^+$, and~$\tcfunction_2^-$, let the value $\varepsilon\ge0$ be defined as described above.
 If $\tcfunction^+_1(\atime^+)+\varepsilon\le\tcfunction^+_2(\atime^+)$ holds for some~$\atime^+\in\posreals$, any solution where $\atime^+$ is the (optimal) amount of time spent for $\tcfunction^+_2$ is dominated by $\socprofile_1$, \ie, we obtain either $\tcfunction_2^+(\atime^+)=\infty$ or $\socprofile_1(\atime^++\atime^-,\soc)\ge\socprofile_2(\atime^++\atime^-,\soc)=\soc-(\tcfunction_2^+(\atime^+)+\tcfunction_2^-(\atime^-))$ for all $\atime^-\in\posreals$ and~$\soc\in[0,\maxbattery]$.
\end{lemma}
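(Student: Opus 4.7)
The plan is to show that $\socprofile_1$, using its own optimal strategy, does at least as well as $\socprofile_2$ does under the specified strategy, by exhibiting a concrete feasible strategy for $\socprofile_1$ (namely, matching the split $(\atime^+,\atime^-)$) and comparing the two resulting total consumptions additively via the hypothesis and the definition of $\varepsilon$.

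First, I would dispose of the trivial case $\tcfunction^+_2(\atime^+)=\infty$, which is exactly the first alternative of the conclusion. So assume $\tcfunction^+_2(\atime^+)<\infty$. The hypothesis $\tcfunction^+_1(\atime^+)+\varepsilon\le\tcfunction^+_2(\atime^+)$ together with $\varepsilon\ge0$ then forces $\tcfunction^+_1(\atime^+)<\infty$, so $\atime^+$ is admissible for $\tcfunction^+_1$ as well. Fix arbitrary $\atime^-\in\posreals$ and $\soc\in[0,\maxbattery]$. If $\tcfunction^-_2(\atime^-)=\infty$ or $\soc<\tcfunction^+_2(\atime^+)$, then $\socprofile_2(\atime^++\atime^-,\soc)=-\infty$ and there is nothing to prove; so assume $\tcfunction^-_2(\atime^-)<\infty$ and $\soc\ge\tcfunction^+_2(\atime^+)$. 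The defining inequality of $\varepsilon$ then implies that $\atime^-$ is also admissible for $\tcfunction^-_1$.

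Next, I would verify that the split $(\atime^+,\atime^-)$ is a feasible strategy for $\socprofile_1$. Since $\socprofile_1$ is discharging or nonpositive, the only battery constraint that must be checked is $\soc\ge\tcfunction^+_1(\atime^+)$; in the nonpositive case $\tcfunction^+_1(\atime^+)=0$ and in the discharging case $\tcfunction^+_1(\atime^+)\le\tcfunction^+_2(\atime^+)-\varepsilon\le\soc$, so the strategy is feasible in both subcases. Because $\socprofile_1$ returns the SoC under the optimal split for the given driving time and initial SoC, this feasible split provides the lower bound
\begin{align*}
\socprofile_1(\atime^++\atime^-,\soc)\;\ge\;\min\bigl\{\maxbattery,\;\soc-\tcfunction^+_1(\atime^+)-\tcfunction^-_1(\atime^-)\bigr\}.
\end{align*}
On the other side, since $\socprofile_2$ is discharging, its value cannot exceed the initial SoC, so $\socprofile_2(\atime^++\atime^-,\soc)=\soc-\tcfunction^+_2(\atime^+)-\tcfunction^-_2(\atime^-)\le\soc\le\maxbattery$, which lets me drop the $\maxbattery$ cap in the comparison.

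The final step is the clean additive combination: adding $\tcfunction^+_1(\atime^+)+\varepsilon\le\tcfunction^+_2(\atime^+)$ to $\tcfunction^-_1(\atime^-)\le\tcfunction^-_2(\atime^-)+\varepsilon$ yields $\tcfunction^+_1(\atime^+)+\tcfunction^-_1(\atime^-)\le\tcfunction^+_2(\atime^+)+\tcfunction^-_2(\atime^-)$, and subtracting from $\soc$ on both sides gives exactly $\socprofile_1(\atime^++\atime^-,\soc)\ge\socprofile_2(\atime^++\atime^-,\soc)$. The main obstacle I anticipate is bookkeeping around the battery cap and the admissibility of $(\atime^+,\atime^-)$ in the nonpositive vs.\ discharging distinction for $\socprofile_1$; once these edge cases are organized as above, the inequality itself is a one-line consequence of the hypothesis and the choice of $\varepsilon$.
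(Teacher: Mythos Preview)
Your proposal is correct and follows essentially the same approach as the paper: exhibit the split $(\atime^+,\atime^-)$ as a feasible strategy for $\socprofile_1$, then add the hypothesis $\tcfunction^+_1(\atime^+)+\varepsilon\le\tcfunction^+_2(\atime^+)$ to the defining inequality $\tcfunction^-_1(\atime^-)\le\tcfunction^-_2(\atime^-)+\varepsilon$ to conclude. The paper phrases it as a short proof by contradiction and is terser about the edge cases (it handles admissibility of $\atime^-$ for $\tcfunction^-_1$ by noting that the negative part has minimum driving time $0$, and does not explicitly address the $\maxbattery$ cap), but the core argument is identical.
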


After creating a new shortcut $(\vertexa,\vertexb)$ with positive part $\tcfunction^+$, we compare it to existing shortcuts between $\vertexa\in\vertices$ and $\vertexb\in\vertices$ as follows. First, we compute the value $\varepsilon$ defined above \wrt each existing shortcut. Then, we determine parts of~$\tcfunction^+$ that are dominated by existing positive parts (after we increase their consumption by~$\varepsilon$) and set $\tcfunction^+(\atime)=\infty$ for such values $\atime\in\posreals$. We do this in a coordinated linear scan over $\tcfunction^+$ and the positive parts of consumption functions of all existing shortcuts, as in our basic dominance tests (see Section~\ref{sec:operations:dominance}).
If~$\tcfunction^+\equiv\infty$ holds afterwards, we remove the shortcut. Analogously, we identify parts of \gls*{soc} functions of existing shortcuts that are dominated by the \gls*{soc} function of the new shortcut candidate.

\begin{figure}[t]
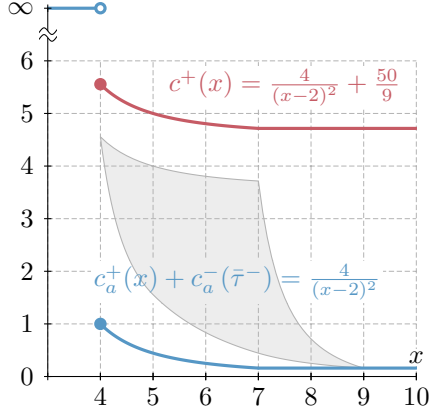

 \centering%
 \ConstrainedExampleDrawRealisticSoCFunctionUpperBound
 \caption{Bound functions computed by the witness search for the \gls*{soc} function depicted in Figure~\ref{fig:tradeoff-linked-shortcut:result}. The shaded area indicates possible values taken by this \gls*{soc} function depending on driving time and~\gls*{soc}. It is upper bounded by the function $\tcfunction^+$ (red) computed by the witness search. The lower bound (blue) is used to identify dominated parts of the \gls*{soc} function if it represents a shortcut candidate.}%
\label{fig:tradeoff-function-upper-bound}%
\end{figure}

\subsection{Witness Search}\label{sec:ch:witness-search}

Consider a discharging shortcut candidate $(\vertexa,\vertexb)$ from $\vertexa\in\vertices$ to $\vertexb\in\vertices$ that is neither dominated by any existing parallel shortcut (from $\vertexa$ to~$\vertexb$) nor dominates an existing parallel shortcut itself.
Before adding $(\vertexa,\vertexb)$ to the graph, we run a witness search (as in plain~\gls*{ch}) to test if the shortcut is necessary to maintain distances in the current graph~$\graph'$ (for some values of driving times and \gls*{soc}).
An exact approach would compute bivariate \gls*{soc} functions representing $\vertexa$--$\vertexb$~paths in $\graph'$ to identify dominated parts of the shortcut candidate.
Like before, this is difficult and potentially expensive due to the lack of efficient operations for construction and comparison of such \gls*{soc} functions. Instead, we only compute univariate \emph{upper bounds} on energy consumption during witness search.

As a key idea, the search drops negative parts from \gls*{soc} functions entirely, so labels in the search consist of only the positive parts. Clearly, these labels are upper bounds on energy consumption.
Our witness search runs the basic \gls*{tfp} algorithm from $\vertexa$ on $\graph'$, but ignores battery constraints and links labels only with positive parts of arcs (we assume that the positive part evaluates to $0$ for all admissible driving times if consumption on the corresponding arc is always nonpositive).
As a result, each label stores an upper bound on overall (finite) consumption that is independent of initial~\gls*{soc} and represented by a single consumption function~$\tcfunction^+$; see Figure~\ref{fig:tradeoff-function-upper-bound}.
Moreover, since the majority of arcs have positive consumption in realistic instances, the upper bounds have relatively small error in most cases.

\paragraph{Witness Comparisons.}
Whenever a label $\tcfunction^+$ at the head vertex $\vertexb$ of the shortcut candidate $\arc:=(\vertexa,\vertexb)$ is extracted during the witness search, we compare the witness $\tcfunction^+$ to the \gls*{soc} function $\socprofile_{\arc}$ of the shortcut candidate as follows. Let $\tcfunctionshortcut^+$ and $\tcfunctionshortcut^-$ denote the consumption functions defining $\socprofile_{\arc}$. Moreover, let $\tcfunctionshortcut^-(\rightbordershortcut^-)$ be the \emph{minimum consumption} of~$\tcfunctionshortcut^-$.
We know that $\socprofile_{\arc}$ cannot be a (unique) optimal choice for a driving time $\atime\in\posreals$ if $\tcfunction^+(\atime)\le\tcfunctionshortcut^+(\atime)+\tcfunctionshortcut^-(\rightbordershortcut^-)$, \ie, the upper bound $\tcfunction^+$ provides better or equal consumption than a \emph{lower bound} on the consumption of~$\socprofile_{\arc}$; see Figure~\ref{fig:tradeoff-function-upper-bound}.
We proceed along the lines of the dominance tests described in Section~\ref{sec:operations:dominance} to identify such values $\atime\in\posreals$ and remove them from the subdomain of admissible driving times of~$\tcfunctionshortcut^+$. If $\tcfunctionshortcut^+\equiv\infty$ holds afterwards, the shortcut is not required.
Otherwise, the witness search stops once the minimum driving time of a scanned label exceeds the maximum driving time of the shortcut candidate.
During the search, we can discard labels if their minimum consumption exceeds the maximum consumption $\tcfunctionshortcut^+(\leftbordershortcut^+)+\tcfunctionshortcut^-(\rightbordershortcut^-)$ of the lower bound of the shortcut candidate.

\paragraph{Simplified Upper Bounds.}
Going even further, we replace upper bounds defined by \emph{multiple} subfunctions that are computed during the witness search with \emph{single} (less accurate) tradeoff functions. Then, the witness search propagates labels of constant size, enabling faster operations and better data locality, since we can preallocate memory for the labels (making cache misses become less likely). Moreover, we use lightweight dominance tests that employ pairwise label comparison (see Section~\ref{sec:operations:dominance}), which are faster in this simplified setting. In Appendix~\ref{app:ch:upper-bounds}, we describe in detail how good bounds of constant descriptive complexity are computed.

\section{CHAsp}\label{sec:chasp}

It is straightforward to combine our variants of A*~search and \gls*{ch} to obtain our fastest algorithm, which we refer to as \emph{\acrshort*{champ}\glsunset{champ} (\acrlong*{champ})}.
Queries are answered by first computing a potential function on the \gls*{ch} search graph and afterwards running \gls*{tfp} (or its heuristic variant) on it, utilizing the corresponding key function.
Note that \gls*{champ} does not alter the output of the underlying basic algorithm, so correctness is maintained when using plain~\gls*{tfp}. Running time remains exponential in the worst case, but we observe significant speedups in practice (see Section~\ref{sec:experiments}). Moreover, \gls*{champ} can be combined with our polynomial-time heuristic in just the same way. In summary, our algorithm consists of several alternative building blocks and both its preprocessing and query stage comprise different steps, which we recap in the overview given below. For implementation details, see also Appendix~\ref{app:implementation}.

\paragraph{Preprocessing.}
In the preprocessing stage, we are given the input graph $\graph=(\vertices,\arcs)$ and the consumption functions of all arcs, on which we perform the following steps.
\begin{enumerate}
 \item Determine a heuristic vertex order by assigning a priority to each vertex (see Geisberger et~al.~\cite{Gei12b} and Appendix~\ref{app:implementation}) and identify active vertices (see Section~\ref{sec:ch:contraction}).
 \item Initialize the core graph by setting it to $\graph'=\graph$. Contract active vertices in the given order, until no active vertices are left or the average degree of active vertices reaches a certain threshold. When contracting a vertex~$\vertex\in\vertices$,
 \begin{itemize}
  \item construct shortcut candidates for each pair of neighbors of $\vertex$ and compare each with any existing arcs in the core graph between the same pair of neighbors (see Section~\ref{sec:ch:shortcut-comparison});
  \item run witness searches between pairs of neighbors with any remaining shortcut candidates to identify unnecessary candidates (see Section~\ref{sec:ch:witness-search});
  \item remove $\vertex$ from $\graph'$, add remaining shortcut candidates to~$\graph'$, update the priority and activity status of all neighbors of~$\vertex$.
 \end{itemize}
 \item Output the contraction order and the graph $\graph$ enriched with all shortcuts that at some point were added to or are still present in $\graph'$ (arcs still present are called \emph{core arcs}).
\end{enumerate}

\paragraph{Queries.}
After completion of the preprocessing stage, queries consist of a source vertex~$\source\in\vertices$, a target vertex~$\target\in\vertices$, and an initial~\gls*{soc}~$\soc_\source\in[0,\maxbattery]$. A query is answered by executing the steps listed below.
\begin{enumerate}
 \item Run a \gls*{bfs} from both $\source$ and~$\target$, scanning only upward and downward arcs (\wrt the contraction order), respectively, and no core arcs. Build a search graph $\graph_{\source,\target}$ consisting of all arcs visited by either \gls*{bfs} and all core arcs.
 \item Compute either the vertex potential $\drivingtimepotential$ or~$\convexpotential$ by running the corresponding backward search from $\target$ on~$\graph_{\source,\target}$ (see Section~\ref{sec:astar}).
 \item Run \gls*{tfp} (see Section~\ref{sec:approach:description}) or its heuristic variant (see Section~\ref{sec:approach:heuristic}) from $\source$ on~$\graph_{\source,\target}$, but use the respective key function from Section~\ref{sec:astar}.
\end{enumerate}

\section{Experiments}\label{sec:experiments}

We implemented all approaches in~C++, using~g++ 7.3.1 (flag -O3) as compiler. Experiments were conducted on a single core of a 4-core Intel Xeon E5-1630v3 clocked at 3.7\,GHz, with 128\,GiB of DDR4-2133 RAM, 10\,MiB of L3 cache, and 256\,KiB of L2 cache.

\begin{table}[t]
\caption{Considered input instances. We report the number of vertices, arcs, arcs with negative consumption~(as a fraction of total number of arcs), and nonconstant arcs~(as a fraction of total number of arcs).}
\label{tbl:graphs-adaptive-speeds}
\centering
\small
\setlength{\tabcolsep}{1.5ex}
\begin{tabular}{lrrr@{}rr@{}rr}
\toprule
Instance & \#\,Vertices & \#\,Arcs & \multicolumn{2}{r}{\#\,Arcs with $\tcfunction<0$} & \multicolumn{2}{r}{\#\,Nonconstant Arcs} \\ 
\midrule
\instanceGerAux   &  4\,692\,091 & 10\,805\,429 &    846\,505 &  (7.83\,\%) &  2\,730\,390 & (25.27\,\%) \\
\instanceGerNoAux &  4\,692\,091 & 10\,805\,429 & 1\,317\,969 & (12.20\,\%) &  2\,730\,390 & (25.27\,\%) \\
\instanceEurAux   & 22\,198\,628 & 51\,088\,095 & 4\,887\,770 &  (9.57\,\%) & 19\,212\,330 & (37.61\,\%) \\
\instanceEurNoAux & 22\,198\,628 & 51\,088\,095 & 6\,607\,508 & (12.93\,\%) & 19\,212\,330 & (37.61\,\%) \\
\bottomrule
\end{tabular}
\end{table}

We consider two graphs representing the road networks of Europe with 22\,198\,628 vertices and 51\,088\,095 arcs, and Germany with 4\,692\,091 vertices and 10\,805\,429 arcs, kindly provided by~PTV~AG.\footnote{\url{ptvgroup.com}}
Applying elevation data from the Shuttle Radar Topography Mission,~v4.1,\footnote{\url{srtm.csi.cgiar.org}} we derived realistic energy consumption functions from two detailed micro-scale emission models~\cite{Hau09}. The first is a vehicle model that is calibrated to a Peugeot~iOn.
The second is an artificial model~\cite{Tie12} that, in contrast to the first, takes power demand of auxiliary consumers (\eg, air conditioning) into account.
These data sources are proprietary, but enable evaluation on detailed and realistic input data.
We extracted functions following Equation~\eqref{eq:tradeoff-physicalmodel} from given samples of speed and energy consumption via regression.
Combining reasonable minimum speeds for different road types (\eg, 80\,km/h on motorways and 30\,km/h in residential areas) with the posted speed limits (if higher), we get intervals of admissible speeds per road segment. As a result, 25\,\% and 38\,\% of the arcs allow adapting the speed for the network of Germany and Europe, respectively. The respective average speed interval lengths among those arcs are 17\,km/h and 27\,km/h for the two instances. The notable differences between the two instances can be explained by a larger percentage of arcs corresponding to local or residential roads in the Germany instance, which only allow for little speed adaptation, but also by differences in road categories and speeds of the different countries in the input data in general.
We denote our instances by Germany-Aux~(\instanceGerAux), Germany-Peugeot~(\instanceGerNoAux), Europe-Aux~(\instanceEurAux), and Europe~Peugeot~(\instanceEurNoAux). They have negative consumption (at least for maximum driving times) on 7.8\,\%~(\instanceGerAux) to~12.9\,\%~(\instanceEurNoAux) of their arcs. Table~\ref{tbl:graphs-adaptive-speeds} gives an overview of the different input instances.

Battery capacities are measured in~kWh, where 1\,kWh corresponds to a range of roughly 5--10\,km (depending on speed and terrain).
Most experiments use realistic ranges of~16--64\,kWh, but we resort to queries of shorter range when necessary to evaluate slower baseline approaches.

Unless mentioned otherwise, our study evaluates random \emph{in-range} queries, \ie, we pick a source vertex~$\source\in\vertices$ uniformly at random. Among all vertices in range from~$\source$ with an initial \gls*{soc}~$\soc_\source=\maxbattery$, we pick the target $\target\in\vertices$ uniformly at random.
Since unreachable targets can be detected by backward search phases of A*~search or by any algorithm for computing energy-optimal routes~\cite{Bau13a,Eis11,Sac11}, this results in more difficult and interesting queries for us.

\begin{table}[t]
\caption{Benefits of our approach (\instanceGerNoAux, 2\,kWh). For each basic technique (\gls*{bsp}, \gls*{tfp} with pairwise dominance checks, \gls*{tfp}-d with improved dominance checks), we report average figures of 100 queries on the thousandth part of the number of settled labels~(\#\,Labels), relaxed arcs~(\#\,Rel.), and pairwise label comparisons during dominance checks~(\#\,Cmp). We further show the average time for a single link operation or dominance check (in nanoseconds), as well as the median~(Med.), average~(Avg.) with standard deviation~(Std.), and maximum (Max.) running time of each algorithm (in milliseconds).}
\label{tbl:adaptive-speeds-model}
\small
\centering
\setlength{\tabcolsep}{0.98ex}
\begin{tabular}{lrrrrrcrrrr}
\toprule
&\multicolumn{5}{c}{Query Statistics} && \multicolumn{4}{c}{Query Times}\\
\cmidrule(lr){2-6}\cmidrule(lr){8-11}
              &   \#\,Labels &          \#\,Rel. &          \#\,Cmp. &      Lnk. &      Dom. &&    Med. &     Avg. &     Std. &        Max. \\
Algo.         &          [k] &               [k] &               [k] &      [ns] &      [ns] &&    [ms] &     [ms] &     [ms] &        [ms] \\
\midrule
\gls*{bsp}    &      35\,586 &           66\,217 &      76\,457\,603 &        25 &    2\,418 && 46\,223 & 174\,810 & 347\,083 & 2\,165\,515 \\
\gls*{tfp}    &           58 &                81 &            1\,035 &       221 &    1\,161 &&      18 &      130 &      314 &      2\,314 \\
\gls*{tfp}-d &           33 &                46 &               302 &       181 &       725 &&      11 &       53 &      116 &         645 \\
\bottomrule
\end{tabular}
\end{table}

\subsection{Model Validation}\label{sec:experiments:model}

We argued that an approach based fully on consumption \emph{functions} unlocks both better tractability and improved solution quality compared to discrete speeds.
Therefore, we also consider instances with multi-arcs to model speed adaptation---as was best practice in previous approaches~\cite{Bau14a,Goo14,Har14}. We generate multi-arcs in a rather conservative way, by sampling consumption functions at velocity steps of~20\,km/h. Optimal paths (\wrt to the simple model) are then computed by the \gls*{bsp} algorithm described in Section~\ref{sec:model:evcsp}.
Indeed, we observe a significant speedup by simply switching to our more realistic model, as Table~\ref{tbl:adaptive-speeds-model} shows.
Even for a small range~(2\,kWh), we see that \gls*{tfp} is three orders of magnitudes faster than \gls*{bsp}, since it evaluates speed-consumption tradeoffs more fine-granularly and with less overhead: Instead of huge discrete Pareto sets, each route is represented by a single continuous function represented by few parameters. As a result, the number of labels settled by \gls*{tfp} (each representing a distinct \emph{route}) is up to three orders of magnitudes less compared to the labels scanned by~\gls*{bsp} (each representing a distinct combination of speed samples on some route). Pairwise label comparisons even drop by more than five orders of magnitude.

\begin{table}[t]
\caption{Running \gls*{bsp} with varying speed steps (\instanceGerNoAux, 2\,kWh). Using A*-$\drivingtimepotential$ to render query times feasible, we ran 100 queries of \gls*{bsp} for different step sizes, resulting in the indicated number of arcs. We also consider instances with two multi-arcs per tradeoff function (step size~$\infty$), the original instance without speed adaptation~(cnst.), and continuous functions (step size~$0$).
We report average label scans and comparisons, running time and standard deviation, as well as average and maximum quality loss of solutions (due to discretized speeds).}
\label{tbl:speed-steps}
\small
\centering
\setlength{\tabcolsep}{0.98ex}
\begin{tabular}{rcrrrrrcrr}
\toprule
Step&&&\multicolumn{4}{c}{Query} &&\multicolumn{2}{c}{Quality Loss}\\
\cmidrule(lr){4-7}\cmidrule(lr){9-10}
[km/h]     &&     \#\,Arcs & \#\,Labels  &          \#\,Cmp. & Avg.\,[ms] &       Std. &&       Avg. &    Max. \\
\midrule
 1         && 60\,511\,294 & 1\,457\,402 & 10\,448\,724\,562 &  93\,407.2 & 382\,901.2 &&   1.000001 & 1.000051\\
 2         && 36\,861\,919 &    899\,737 &  5\,907\,007\,043 &  24\,573.5 & 108\,487.2 &&   1.000020 & 1.000215\\
 5         && 22\,672\,294 &    398\,377 &  1\,622\,249\,986 &   2\,848.7 &  11\,097.0 &&   1.000136 & 1.000872\\
 10        && 17\,942\,419 &    202\,525 &     402\,370\,894 &      595.5 &   2\,419.5 &&   1.000377 & 1.002332\\
 20        && 14\,778\,112 &    117\,205 &     135\,477\,211 &      203.2 &      688.4 &&   1.001481 & 1.022579\\
 $\infty$  && 13\,535\,819 &     35\,006 &      11\,070\,106 &       26.7 &       64.0 &&   1.006657 & 1.091806\\
 cnst.     && 10\,805\,429 &         181 &               204 &        4.4 &        3.8 &&   1.022592 & 1.225135\\
 0         && 10\,805\,429 &         342 &               922 &        6.2 &        4.2 &&   1.000000 & 1.000000\\
\bottomrule
\end{tabular}
\end{table}

Regarding query statistics, note that the number of arc relaxations coincides with the number of link operations. \gls*{bsp} spends almost an order of magnitude less time per operation (essentially, only two pairs of real values have to be added up).
In case of \gls*{bsp}, the number of arc relaxations further coincides with the number of dominance tests, whereas it is an upper bound for \gls*{tfp} (we postpone dominance tests by managing sets of unsettled labels; recall Section~\ref{sec:approach:description}).
Although label comparisons are much simpler for \gls*{bsp} (comparing two pairs of real-valued numbers), the average running time of a single dominance test is higher compared to \gls*{tfp}. This is due to a larger number of labels per set, resulting in significantly more pairwise comparisons for~\gls*{bsp}.
For \gls*{tfp}-d (using improved dominance checks), each dominance test compares a new label to a set of $k$ settled labels in a single coordinated sweep, which we count as $k$ comparisons in Table~\ref{tbl:adaptive-speeds-model}. Even though the coordinated sweep comes at some overhead, the average time per dominance check actually decreases, as more dominated (sub)functions are detected and the number of required comparisons is reduced by more than a factor of~3. Linking becomes somewhat faster for the same reason (labels store fewer subfunctions).
Overall, dominance tests take more than 90\,\% of the running time of \gls*{bsp}. For \gls*{tfp}, linking and dominance tests require roughly 15\,\% and 70\,\% of the overall running time, respectively.
Although atomic operations (linking and comparing labels) are more expensive for~\gls*{tfp}, the drastic reduction in the number of vertex scans explains the speedup. This is interesting, as sampling was expressly considered to manage tractability~\cite{Bau14a,Goo14,Har14,Mer15}.

We observe that the average query time of all approaches is mostly determined by outliers, which is not surprising given that their asymptotic complexity is exponential. As a result, most queries are actually answered much faster than the average, with the median running time being about a factor of~5 lower on average.

\paragraph{Speed Resolution.}
In Table~\ref{tbl:speed-steps}, we evaluate the effect of different step sizes on the performance of \gls*{bsp} and solution quality (see Appendix~\ref{app:experiments} for corresponding plots of query times and result quality).
Since including more multi-arcs can slow down the algorithm significantly, we use A*~search with the potential function~$\drivingtimepotential$; see Baum et~al.~\cite{Bau14a} and Section~\ref{sec:astar:single-criterion}.
Each row of the table shows the result of running \gls*{bsp} (with potentials) on the graph resulting from sampling each tradeoff function at the indicated rate (1--20\,km/h). We also provide results when each function corresponds to exactly two samples at its minimum and maximum driving time, respectively (step size~$\infty$), on the unaltered original graph~(cnst.), and when running \gls*{tfp} on continuous functions (step size~0).
In instances with multi-arcs, we always include the arc that minimizes energy consumption to maintain reachability of the target (by contrast, only 77\,\% of the targets could be reached on the unaltered original instance).

Note that using A*~search comes at the cost of side effects: In cases where the target is (almost) reachable on the unconstrained fastest route, potentials ensure that \gls*{bsp} and \gls*{tfp} (almost) only scan vertices on the shortest route. Therefore, the median query time on \emph{every} instance considered in Table~\ref{tbl:speed-steps} is below 10\,ms (not reported in the table). Furthermore, the average time of faster approaches with small search spaces, such as~\gls*{tfp}, is dominated by the backward search (more than 90\,\% of its total running time).

On average, \gls*{tfp} is significantly faster than \gls*{bsp}, even if only two multi-arcs are added per tradeoff function, and several orders of magnitudes faster for sensible speed steps. At the same time, it finds paths that are up to 9.1\,\% quicker (within battery constraints). While quality loss quickly decreases with smaller speed steps (sampling 10\,km/h steps suffices to keep it below~1\,\%), its running time makes \gls*{bsp} impractical: The number of arcs in the graph less than doubles for step sizes up to~10\,km/h, but the number of labels can grow exponentially even along a single route (see Section~\ref{sec:model:evcsp}).

\begin{table}[t]
\caption{Varying speed interval lengths~(\instanceGerNoAux, 16\,kWh). Each row shows a modified instance, shrinking speed intervals (0\,\%, 25\,\%, or 50\,\%) or increasing the minimum adaptive speed (60\,km/h, 80\,km/h, or 100\,km/h). For 100 in-range queries of \gls*{tfp} with the potential function~$\drivingtimepotential$, we report the average number of settled labels and label comparisons, average running time, and standard deviation. Further, we show the number of optimal results found (using our original instance as ground truth), as well as average and maximum quality loss.}
\label{tbl:speed-corridors}
\small
\centering
\setlength{\tabcolsep}{0.98ex}
\begin{tabular}{rcrrrrcrrr}
\toprule
&&\multicolumn{4}{c}{Query} &&\multicolumn{3}{c}{Quality Loss}\\
\cmidrule(lr){3-6}\cmidrule(lr){8-10}
        Corridor &&  \#\,Labels &      \#\,Cmp. & Avg.\,[ms] &       Std. &&    Opt. &     Avg.&   Max.\\
\midrule
           0\,\% && 1\,503\,555 & 269\,194\,933 &     8\,226 &    29\,298 &&  50\,\% & 1.06733 & 1.48192\\
          25\,\% &&    656\,629 &  51\,987\,431 &     6\,737 &    26\,019 &&  56\,\% & 1.03245 & 1.22557\\
          50\,\% &&    541\,874 &  37\,728\,672 &     5\,342 &    22\,417 &&  62\,\% & 1.01693 & 1.14067\\
         100\,\% &&    364\,662 &  17\,808\,829 &     2\,680 &    12\,748 && 100\,\% & 1.00000 & 1.00000\\ 
\addlinespace
 $\ge$\,60\,km/h &&    553\,521 &  52\,183\,681 &     5\,351 &    25\,812 &&  97\,\% & 1.00002 & 1.00050\\
 $\ge$\,80\,km/h &&    875\,462 & 140\,608\,647 &    10\,772 &    44\,813 &&  79\,\% & 1.01180 & 1.10670\\
$\ge$\,100\,km/h && 1\,323\,155 & 242\,448\,187 &     9\,689 &    37\,055 &&  60\,\% & 1.05274 & 1.48192\\
\bottomrule
\end{tabular}
\end{table}

\paragraph{Varying Speed Interval Lengths.}
Minimum speeds of tradeoff functions in our instances are chosen such that driving slower would become unprofitable or cause the \gls*{ev} to impede traffic flow. However, as adapting the driving speed pays off the most on faster roads, one could consider increasing minimum adaptive speeds.
Below, we evaluate such modifications, by either increasing the minimum speed on every nonconstant arc to shrink its speed interval to a certain fraction of its original length (0\,\%, 25\,\%, or 50\,\%) or by only allowing adaptive speeds up to a certain lower threshold (60\,km/h, 80\,km/h, or 100\,km/h).

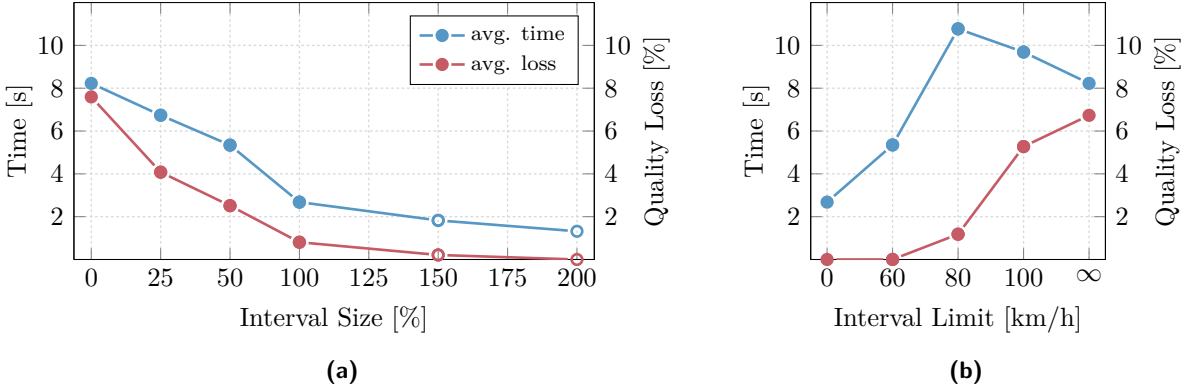
\begin{figure}[t]
 \centering
 \begin{subfigure}[b]{.57\textwidth}%
 \centering%
 \tikzstyle{markSign} = [mark=*]
\tikzstyle{shortenLines} = [shorten <= 3.5pt,shorten >= 3.5pt]

\begin{tikzpicture}[figure]
\pgfplotsset{
    grid style = {dash pattern = on 1pt off 1pt, black15,line width = 0.5pt  }
 }
\pgfplotsset{ every non boxed x axis/.append style={x axis line style=-},
     every non boxed y axis/.append style={y axis line style=-},
     legend image post style={line width=1.5pt}}

\colorlet{plotColor1}{thesisblue} 
\colorlet{plotColor3}{thesisred} 
\colorlet{plotColor4}{thesisyellow} 
     
\begin{axis}[
   height=5cm,
   width=8.5cm,
   xmin=0.75,
   xmax=8.25,
   ymin=0,
   ymax=12,
   ytick pos=left,
   ytick={2,4,6,8,10},
   xlabel={Interval Size [\%]},
   ylabel={Time [s]},
   ylabel style={yshift=-0.2cm},
   xtick={1, 2, 3, 4, 5, 6, 7, 8},
   xticklabels={0, 25, 50, 100, 125, 150, 175, 200},
   grid=major,
   legend entries={avg. time, avg. loss},
   legend cell align=left,
   legend style={at={(0.97,0.95)},
   anchor=north east,
   font=\scriptsize}
]

\addlegendimage{legend line with exact mark,plotColor1}
\addlegendimage{legend line with exact mark,plotColor3}
\addlegendimage{legend line with exact mark,plotColor4}

\addplot [color=plotColor1,markSign,shortenLines,line width=1pt] table {
   1 8.225752407
   2 6.737202741
};
\addplot [color=plotColor1,markSign,shortenLines,line width=1pt] table {
   2 6.737202741
   3 5.341606879
};
\addplot [color=plotColor1,markSign,shortenLines,line width=1pt] table {
   3 5.341606879
   4 2.679503057
};
\addplot [color=plotColor1,heuristicMarkSign,shortenLines,line width=1pt] table {
   4 2.679503057
   6 1.825607091
};
\addplot [color=plotColor1,heuristicMarkSign,shortenLines,line width=1pt] table {
   6 1.825607091
   8 1.318467629
};

\end{axis}

\begin{axis}[
   height=5cm,
   width=8.5cm,
   ytick pos=right,
   ymin=0,
   ymax=12,
   ytick={2,4,6,8,10},
   xmin=0.75,
   xmax=8.25,
   axis x line=none,
   ylabel={Quality Loss [\%]},
   ylabel style={yshift=0.1cm}]

\addplot [color=plotColor3,markSign,shortenLines,line width=1pt] table {
   1 7.5940
   2 4.0781
};
\addplot [color=plotColor3,markSign,shortenLines,line width=1pt] table {
   2 4.0781
   3 2.5131
};
\addplot [color=plotColor3,markSign,shortenLines,line width=1pt] table {
   3 2.5131
   4 0.8065
};
\addplot [color=plotColor3,heuristicMarkSign,shortenLines,line width=1pt] table {
   4 0.8065
   6 0.2115
};
\addplot [color=plotColor3,heuristicMarkSign,shortenLines,line width=1pt] table {
   6 0.2115
   8 0.0000
};

\end{axis}
\end{tikzpicture}
 \caption{}%
 \label{fig:corridor-plot:rel}%
 \end{subfigure}%
 \begin{subfigure}[b]{.43\textwidth}%
 \centering%
 \tikzstyle{markSign} = [mark=*]
\tikzstyle{shortenLines} = [shorten <= 3.5pt,shorten >= 3.5pt]

\begin{tikzpicture}[figure]
\pgfplotsset{
    grid style = {dash pattern = on 1pt off 1pt, black15,line width = 0.5pt  }
 }
\pgfplotsset{ every non boxed x axis/.append style={x axis line style=-},
     every non boxed y axis/.append style={y axis line style=-},
     legend image post style={line width=1.5pt}}

\colorlet{plotColor1}{thesisblue} 
\colorlet{plotColor3}{thesisred} 
\colorlet{plotColor4}{thesisyellow} 
     
\begin{axis}[
   height=5cm,
   width=5.5cm,
   xmin=0.75,
   xmax=5.25,
   ymin=0,
   ymax=12,
   ytick pos=left,
   ytick={2,4,6,8,10},
   xlabel={Interval Limit [km/h]},
   ylabel={Time [s]},
   ylabel style={yshift=-0.2cm},
   xtick={1, 2, 3, 4, 5},
   xticklabels={0, 60, 80, 100, $\infty$},
   grid=major,
]

\addlegendimage{legend line with exact mark,plotColor1}
\addlegendimage{legend line with exact mark,plotColor3}
\addlegendimage{legend line with exact mark,plotColor4}

\addplot [color=plotColor1,markSign,shortenLines,line width=1pt] table {
   1 2.679503057
   2 5.350632822
};
\addplot [color=plotColor1,markSign,shortenLines,line width=1pt] table {
   2 5.350632822
   3 10.772301841
};
\addplot [color=plotColor1,markSign,shortenLines,line width=1pt] table {
   3 10.772301841
   4 9.689220622
};
\addplot [color=plotColor1,markSign,shortenLines,line width=1pt] table {
   4 9.689220622
   5 8.225752407
};

\end{axis}

\begin{axis}[
   height=5cm,
   width=5.5cm,
   ytick pos=right,
   ymin=0,
   ymax=12,
   ytick={2,4,6,8,10},
   xmin=0.75,
   xmax=5.25,
   axis x line=none,
   ylabel={Quality Loss [\%]},
   ylabel style={yshift=0.1cm}]

\addplot [color=plotColor3,markSign,shortenLines,line width=1pt] table {
   1 0.0000
   2 0.0016
};
\addplot [color=plotColor3,markSign,shortenLines,line width=1pt] table {
   2 0.0016
   3 1.1800
};
\addplot [color=plotColor3,markSign,shortenLines,line width=1pt] table {
   3 1.1800
   4 5.2744
};
\addplot [color=plotColor3,markSign,shortenLines,line width=1pt] table {
   4 5.2744
   5 6.7332
};

\end{axis}
\end{tikzpicture}
 \caption{}%
 \label{fig:corridor-plot:abs}%
 \end{subfigure}%
 \caption{Average relative increase of query time and quality loss for the settings considered in Table~\ref{tbl:speed-corridors}. (a)~Results when shrinking all speed intervals to a given fraction of the original size (100\,\%) or increasing the speed interval size (150\,\%, 200\,\%). (b)~Results for intervals with a given minimum speed ($0$ corresponds to the original instance, $\infty$ to an instance in which all arcs have constant driving time).}
 \label{fig:corridor-plot}
\end{figure}

Table~\ref{tbl:speed-corridors} shows the results of 100 queries for each setting, chosen such that the target is in range for even the most restrictive setting (0\,\%, \ie, constant speed on all arcs). As in the previous experiment, we use A*~search with the potential function $\drivingtimepotential$ to enable a longer range of 16\,kWh for this experiment. As before, this affects the distribution of query times (the median is 300--400\,ms in all cases), but we still observe differences in average performance.
Interestingly, shrinking the interval actually \emph{increases} query times in most cases: Shorter intervals imply that more functions are required to cover the whole Pareto front, so more labels have to be propagated. At the same time, solution quality consistently deteriorates, as there are fewer speed options.
For the first category of instances (shrinking intervals of admissible driving times), the number of tradeoff functions remains unchanged (about 25\,\% of all arcs; see Table~\ref{tbl:graphs-adaptive-speeds}). The second category, however, adjusting the minimum adaptive speed means that arcs for which the maximum speed does not exceed this minimum become constant. Since the majority of arcs in a road graph represent slower roads, the number of (nonconstant) tradeoff functions decreases significantly to 9\,\% (3.5\,\%, 0.5\,\%) of all arcs for a minimum adaptive speed of~60\,km/h (80\,km/h, 100\,km/h).
This also explains why query times decrease again if we only allow speed adaption beyond~100\,km/h: Most labels are tuples of constant values and dominance tests become simpler. A similar effect explains why compared to intervals shrunk to 25\,\% of their original size, the number of comparisons when using only constant labels (first row of the table) increases by a factor of more than~5, but the query time just is slightly higher.

Shrinking the admissible intervals of all tradeoff functions consistently deteriorates solution quality. Compared to our basic setting~(100\,\%), results are up to 50\,\% worse. Interestingly, we observe almost no quality loss if we employ a lower adaptive speed limit of~60\,km/h. Query times still increase, indicating that adaptive speeds on slower roads are not needed for result quality, but to remove unimportant tradeoffs from label sets.
Figure~\ref{fig:corridor-plot} plots running times and quality for the different settings shown in Table~\ref{tbl:speed-corridors}. For comparison, we also include extended speed intervals (compared to our basic instance). Larger intervals actually improve performance further, but running times and quality improvements converge (and we argued that speed recommendations below reasonable thresholds are not meaningful in practice).

\subsection{\gls*{champ}}\label{sec:experiments:chasp}

In what follows, we focus on the evaluation of different variants of our fastest approach,~\gls*{champ}. A comparison of \gls*{champ} and basic algorithms is presented in Section~\ref{sec:experiments:scalability}.

\begin{table}[t]
\caption{Impact of core size on performance (\acrshort*{champ},~\instanceGerNoAux,~16\,kWh). Vertex contraction stopped once the average degree of active vertices in the core reached a given threshold (\O\,Deg.). We report the resulting core size (\#\,Vertices) and remaining active (\ie, contractable) vertices (\#\,Active), preprocessing time, and average query times of 1\,000 queries using \gls*{champ} with potential functions $\drivingtimepotential$ and~$\convexpotential$, respectively.}
\label{tbl:optimal-core-cize-champ}
\setlength{\tabcolsep}{1.5ex}
\centering
\small
\begin{tabular}{rrrrrr}
 \toprule
 \multicolumn{3}{c}{Core size} & Prepr. & \multicolumn{2}{c}{Query [ms]} \\
 \cmidrule(lr){1-3}\cmidrule(lr){4-4}\cmidrule(lr){5-6}
 \O\,Deg. &                    \#\,Vertices &            \#\,Active & [h:m:s] & \gls*{champ}-$\drivingtimepotential$ & \gls*{champ}-$\convexpotential$ \\ \midrule
  0       &                             --- &                   --- &     --- &  3\,253.6 &  4\,828.8 \\
  8       & 720\,514            (15.36\,\%) &   326\,527 (6.96\,\%) &    4:59 &     728.6 &     798.3 \\
  16      & 400\,174 \hphantom{1}(8.53\,\%) &   114\,510 (2.44\,\%) &   12:55 &     485.9 &     485.0 \\
  24      & 333\,819 \hphantom{1}(7.11\,\%) &    75\,108 (1.60\,\%) &   21:22 &     445.7 &     442.6 \\
  32      & 305\,301 \hphantom{1}(6.51\,\%) &    58\,796 (1.25\,\%) &   30:28 &     441.1 &     434.0 \\
  48      & 279\,943 \hphantom{1}(5.97\,\%) &    44\,301 (0.94\,\%) &   47:40 &     461.7 &     451.0 \\
  64      & 268\,436 \hphantom{1}(5.72\,\%) &    37\,455 (0.80\,\%) & 1:05:42 &     492.2 &     473.1 \\
 128      & 251\,410 \hphantom{1}(5.36\,\%) &    26\,920 (0.57\,\%) & 2:22:36 &     632.3 &     586.1 \\
 256      & 242\,817 \hphantom{1}(5.18\,\%) &    21\,040 (0.45\,\%) & 5:39:15 &     901.8 &     802.3 \\
\bottomrule
\end{tabular}
\end{table}

Table~\ref{tbl:optimal-core-cize-champ} shows details on \gls*{ch} preprocessing effort and its impact on query performance subject to different core sizes on~\instanceGerNoAux, assuming a battery capacity of~16\,kWh. Vertex contraction was stopped as soon as the average degree of active (\ie, contractable) vertices in the core reached the indicated threshold~(\O\,Deg.).
We report resulting core sizes and preprocessing time, as well as query times of our fastest exact algorithms.
We see that contraction becomes much slower beyond a core degree of~32, due tp the small number of remaining active vertices: Only 58\,796 out of 305\,301 remaining vertices in the core are active when the average degree reaches~32.
This also explains why the speedup compared to the baseline (a threshold of $0$ for the average degree of active core vertices yields plain \gls*{tfp} combined with A*~search) is much smaller than in single-criterion~\gls*{ch} with scalar arc costs~\cite{Gei12b}. Similar deteriorations in speedup were observed in other complex problems, such as time-dependent profile computation~\cite{Bat13}, time-dependent aircraft flight planning~\cite{Bla16}, and multicriteria routing~\cite{Fun13}. Nevertheless, \gls*{ch} still yields an improvement by up to an order of magnitude in our case.

In all experiments below, we picked an average core degree of 32 as stopping criterion of \gls*{ch} preprocessing, as it yields best results not only in terms of average query times, but also all other considered time measures (median, standard deviation, maximum; not reported in the table).
The resulting core size depends on different parameters, including vehicle range and error thresholds (of heuristic variants).
Relative core sizes thus vary between 2.8\,\% for~\instanceGerAux and 8.5\,\% for~\instanceEurNoAux, which is explained by the difference in the amount of arcs with negative consumption. Recall that this has a significant impact on the number of active vertices and the contraction order (see Section~\ref{sec:ch}).

\begin{table}[t]
\caption{Preprocessing and query performance of~\gls*{champ}. For each instance, we provide \gls*{ch} preprocessing times and query times of exact \gls*{champ}, using the potential functions $\drivingtimepotential$ and~$\convexpotential$, respectively.
Query figures are average values of 1\,000 in-range queries (999 queries on \instanceEurNoAux for a range of 64\,kWh and the potential~$\drivingtimepotential$, indicated by an asterisk). We show timings of the A* backward search, the number of settled labels (\#\,Labels) and label comparisons (\#\,Cmp.) during the forward search, and total query timings.}
\label{tbl:performance}
\setlength{\tabcolsep}{0.98ex}
\small
\centering
\begin{tabular}{p{10pt}rlrrrrrrrr}
\toprule
& & & Prepr.\, & \multicolumn{2}{c}{Backward} & \multicolumn{2}{c}{Forward} & \multicolumn{3}{c}{Query\,[ms]}  \\
\cmidrule(lr){4-4}\cmidrule(lr){5-6}\cmidrule(lr){7-8}\cmidrule(lr){9-11}
Ins.& $\maxbattery$ & Pot. &        [h:m:s] & Avg.\,[ms] &     Std. &  \#\,Labels &         \#\,Cmp. &       Med. &       Avg. &          Std. \\
 \midrule
 \multirow{4}{*}{\vspace{-28pt}\hspace{10pt}\vertical{\instanceGerAux}\hspace{-5pt}}
  & 16 &  $\drivingtimepotential$ &   29:21 &        2.8 &      1.6 &         153 &           3\,789 &        3.1 &        4.0 &           5.0 \\
  & 16 &       $\convexpotential$ &   29:21 &       15.8 &     11.7 &          62 &              448 &       13.8 &       16.5 &          12.0 \\
  & 64 &  $\drivingtimepotential$ &   29:22 &       31.2 &     14.1 &      9\,424 &      1\,004\,292 &       38.9 &      172.2 &         441.7 \\
  & 64 &       $\convexpotential$ &   29:22 &      685.3 &    276.3 &         249 &           4\,040 &      660.3 &      686.9 &         276.4 \\
 \addlinespace\multirow{4}{*}{\vspace{-28pt}\hspace{10pt}\vertical{\instanceEurAux}\hspace{-5pt}}
  & 16 &  $\drivingtimepotential$ & 3:01:41 &        2.8 &      2.8 &         125 &           2\,175 &        2.7 &        3.8 &           4.6 \\
  & 16 &       $\convexpotential$ & 3:01:41 &       14.6 &     26.2 &          74 &           1\,006 &       11.3 &       15.3 &          26.3 \\
  & 64 &  $\drivingtimepotential$ & 3:01:41 &       33.2 &     22.1 &     17\,426 &      3\,914\,318 &       39.8 &      455.2 &      3\,935.7 \\
  & 64 &       $\convexpotential$ & 3:01:41 &      605.7 &    369.5 &         608 &          31\,143 &      595.0 &      615.8 &         384.4 \\
 \addlinespace\multirow{4}{*}{\vspace{-28pt}\hspace{10pt}\vertical{\instanceGerNoAux}\hspace{-5pt}}
  & 16 &  $\drivingtimepotential$ &   30:12 &       27.4 &     15.0 &     32\,774 &      6\,352\,489 &       49.1 &      442.9 &      1\,832.5 \\
  & 16 &       $\convexpotential$ &   30:12 &      341.5 &    193.7 &      6\,008 &         491\,173 &      380.5 &      420.5 &         309.7 \\
  & 64 &  $\drivingtimepotential$ &   30:14 &      189.6 &     24.6 & 4\,047\,690 & 6\,438\,508\,333 &   1\,898.1 & 335\,399.8 & 1\,304\,029.5 \\
  & 64 &       $\convexpotential$ &   30:14 &  11\,885.0 & 2\,464.7 &     25\,998 &      3\,585\,423 &  11\,519.2 &  12\,428.3 &      3\,015.3 \\
 \addlinespace\multirow{4}{*}{\vspace{-28pt}\hspace{10pt}\vertical{\instanceEurNoAux}\hspace{-5pt}}
  & 16 &  $\drivingtimepotential$ & 2:59:32 &       23.7 &     19.2 &     23\,305 &      5\,024\,403 &       42.2 &      337.5 &      1\,450.8 \\
  & 16 &       $\convexpotential$ & 2:59:32 &      235.7 &    200.5 &      6\,630 &         800\,431 &      233.0 &      329.8 &         516.3 \\
  & 64 & $\drivingtimepotential$* & 2:59:48 &      267.1 &    122.0 & 3\,100\,076 & 7\,618\,974\,437 &   5\,180.6 & 343\,288.6 & 1\,418\,006.5 \\
  & 64 &       $\convexpotential$ & 2:59:48 &  15\,113.1 & 9\,198.6 &     47\,257 &     16\,209\,990 &  15\,349.6 &  16\,684.1 &     11\,279.6 \\
\bottomrule
\end{tabular}
\end{table}

\paragraph{Exact \gls*{champ}.}
Table~\ref{tbl:performance} shows the performance of \gls*{champ} on all four considered instances. We report vertex scans and dominance tests of the forward search only, excluding the A* backward search (query times include both the forward and the backward search, though). Timings of only the A* backward search of each approach are shown for comparison.

Preprocessing time mostly depends on the graph size, taking about 6 times longer on Europe, but not on the vehicle model: The inclusion of auxiliary consumers reduces the number of negative arcs and hence, contraction can proceed faster, but it also takes longer to reach the threshold degree in the core graph. As a result, the core graph size (not reported in the table) is reduced by almost a factor of 2 after about the same computation time.

Regarding queries, we observe significant differences in performance, depending on the input instance and vehicle range.
For the artificial model, we achieve quite practical times in the order of milliseconds for a realistic range of 16\,kWh and less than a second on average for the long range of~64\,kWh.
For the Peugeot model, on the other hand, average running times exceed tens of seconds and even minutes, with outliers well beyond an hour of running time, depending on the vehicle range.
This gap in running time between both consumption models is explained by the difference in the number of arcs with negative cost. One could argue that the instances \instanceGerNoAux and \instanceEurNoAux are rather excessive in this regard, by not accounting for any auxiliary consumers at all.
As a result, these instances are significantly more difficult to solve for our algorithms.

The search space is consistently smaller when using the potential function~$\convexpotential$, but the backward search is more expensive. In fact, it becomes the major bottleneck for a battery capacity of~16\,kWh on the easier instances. Consequently, average query times are slowed down by about a factor of~4 in this case.
For the harder instances (no auxiliary consumers), the potential function $\convexpotential$ provides better results due to its better scalability and more stable query times. This becomes evident in particular for the longer range of 64\,kWh: On \instanceEurNoAux, one query did not terminate within a preset time limit of ten hours when using the potential~$\drivingtimepotential$. Hence, all reported statistics only take the remaining 999 queries into account (indicated by the asterisk in Table~\ref{tbl:performance}).

Average running times are affected by outliers in most cases, with median query times being up to two orders of magnitude lower.
Comparing the two road networks Germany and Europe, we achieve slightly better running times on Europe for the medium range (explained by differences in the length of admissible speed intervals), but Germany is easier to solve for long ranges (searches are more likely to reach the border of the graph).
In summary, we solve \gls*{evcas} \emph{optimally} in less than a second on average for typical ranges, even on hard instances. For long ranges, our algorithm computes the optimal solution within seconds on the most difficult instance when using the potential function~$\convexpotential$, despite its exponential worst-case running time.

\begin{table}[t]
\caption{Performance of the heuristic variant of \gls*{champ}-$\drivingtimepotential$, for different choices of the parameter~$\varepsilon$ on \instanceGerNoAux and~\instanceEurNoAux. We show figures on query performance for the same 1\,000 random queries as in Table~\ref{tbl:performance}. Additionally, we report the percentage of feasible (F.) and optimal (O.) results, as well as the average and maximum relative error of all queries in which a feasible solution was found.}
\label{tbl:heuristics}
\small
\centering
\setlength{\tabcolsep}{0.98ex}
\begin{tabular}{p{10pt}rcrrrrrrrrr}
\toprule
& & & Pr. &\multicolumn{4}{c}{Query} & \multicolumn{4}{c}{Result Quality}   \\
\cmidrule(lr){4-4}\cmidrule(lr){5-8}\cmidrule(lr){9-12}
Ins.& $\maxbattery$ & $\varepsilon$ & [h:m:s] & \#\,Labels & \#\,Cmp. & Avg.\,[ms] & Std. & F.\,[\%]  & O.\,[\%] & Avg.& Max.\\
 \midrule
 \multirow{8}{*}{\vspace{-35pt}\hspace{10pt}\vertical{\instanceGerNoAux}\hspace{-5pt}}
 & 16 & 0.00 &   30:12 &     32\,774 &      6\,352\,489 &      442.9 &      1\,832.5 & 100.0 & 100.0 & 1.0000 & 1.0000 \\
 & 16 & 0.01 &   29:21 &     19\,922 &      1\,949\,458 &      218.8 &         678.4 & 100.0 &  89.4 & 1.0001 & 1.0047 \\
 & 16 & 0.10 &   25:04 &      6\,891 &         208\,058 &       71.4 &         142.3 &  98.9 &  62.8 & 1.0013 & 1.0502 \\
 & 16 & 1.00 &   17:28 &      1\,743 &          11\,150 &       29.4 &          29.9 &  95.1 &  47.6 & 1.0144 & 1.2294 \\
 \addlinespace
 & 64 & 0.00 &   30:14 & 4\,047\,690 & 6\,438\,508\,333 & 335\,399.8 & 1\,304\,029.5 & 100.0 & 100.0 & 1.0000 & 1.0000 \\
 & 64 & 0.01 &   27:23 &    602\,227 &     70\,352\,417 &  10\,119.0 &     18\,049.5 &  99.9 &  58.7 & 1.0005 & 1.0180 \\
 & 64 & 0.10 &   19:56 &    163\,434 &      4\,798\,549 &   2\,029.8 &      3\,131.9 &  98.2 &  44.2 & 1.0058 & 1.1217 \\
 & 64 & 1.00 &   15:28 &     50\,488 &         459\,291 &      645.0 &         818.9 &  93.6 &  37.1 & 1.0345 & 1.1785 \\
 \addlinespace
 \multirow{8}{*}{\vspace{-35pt}\hspace{10pt}\vertical{\instanceEurNoAux}\hspace{-5pt}}
 & 16 & 0.00 & 2:59:32 &     23\,305 &      5\,024\,403 &      337.5 &      1\,450.8 & 100.0 & 100.0 & 1.0000 & 1.0000 \\
 & 16 & 0.01 & 2:57:02 &     12\,804 &      1\,132\,685 &      143.2 &         461.4 & 100.0 &  82.8 & 1.0001 & 1.0145 \\
 & 16 & 0.10 & 2:41:39 &      5\,046 &         126\,663 &       57.6 &         102.0 &  99.5 &  57.5 & 1.0020 & 1.0418 \\
 & 16 & 1.00 & 2:11:29 &      1\,429 &           7\,642 &       26.9 &          33.8 &  92.7 &  45.8 & 1.0203 & 1.3960 \\
 \addlinespace
 & 64 & 0.00 & 2:59:48 & 3\,100\,076 & 7\,618\,974\,437 & 343\,288.6 & 1\,418\,006.5 & 100.0 & 100.0 & 1.0000 & 1.0000 \\
 & 64 & 0.01 & 2:50:24 &    388\,548 &     41\,533\,224 &   7\,391.8 &     13\,019.9 &  99.8 &  48.6 & 1.0010 & 1.0249 \\
 & 64 & 0.10 & 2:22:52 &    110\,189 &      3\,184\,124 &   1\,734.7 &      2\,684.2 &  95.2 &  36.9 & 1.0108 & 1.1771 \\
 & 64 & 1.00 & 2:01:57 &     24\,652 &         179\,941 &      447.3 &         567.6 &  85.8 &  34.5 & 1.0394 & 1.3255 \\
\bottomrule
\end{tabular}
\end{table}

\paragraph{Heuristic \gls*{champ}.}
Table~\ref{tbl:heuristics} evaluates our heuristic approach for different choices of the parameter~$\varepsilon$ (in \% of total battery capacity; see Section~\ref{sec:approach:heuristic}). During preprocessing of~\gls*{champ}, new shortcuts are included only if they \emph{significantly} improve over the existing ones. Thus, preprocessing becomes faster and core sizes (not reported in the table) decrease down to around 70--80\,\% of their original size.

We also observe a drop in query times:
For a range of~16\,kWh, we achieve a considerable speedup of an order of magnitude. Regarding quality of results, the choice of~$\varepsilon$ clearly matters. For~$\varepsilon=0.01$, the decrease in quality is negligible, but speedup (about a factor of~2) is moderate. For $\varepsilon=0.1$, on the other hand, the optimal solution is still found in many cases. The average error is roughly~0.2\,\%, while the overall maximum is 5\,\%, which is acceptable in practice.
Finally, for~$\varepsilon=1.0$, both the average and maximum error increase significantly. Given that speedup is limited compared to the case $\varepsilon=0.1$, we conclude that the latter provides the best tradeoff in terms of quality and query performance.
Providing high-quality solutions, it enables query times of well below 100\,ms, which is fast enough even for interactive applications.
Moreover, note that in cases where no path is found (about 1\,\% of all queries for~$\varepsilon=0.1$), a simple fallback could return the energy-optimal path, which can be computed quickly~\cite{Bau13a,Eis11,Sac11}.

Considering the longer range of~64\,kWh, the choice of $\varepsilon$ as a percentage of the battery capacity affects (relative) preprocessing and query times, which decrease to a smaller fraction of the exact variant (dominance checks become coarser, allowing more shortcuts and labels to be pruned). We achieve a query speedup by more than two orders of magnitude and observe milder outliers. For~$\varepsilon\ge0.1$, query times of a couple of seconds and less are quite practical. However, quality suffers from the coarser dominance relaxation and we observe average errors of around 1\,\%, with outliers exceeding 10\,\% in quality loss. Depending on the application, a value $\varepsilon\in[0.01,0.1]$ provides the most reasonable tradeoff between running time and quality.

For the easier instances \instanceGerAux and~\instanceEurAux (not reported in the table), we generally observe smaller errors, but also a smaller speedup. This is due to the fact that the A* backward search quickly becomes the bottleneck when combined with a heuristic variant of~\gls*{tfp}. Similarly, using the more sophisticated potential function $\convexpotential$ does not pay off (even for the harder instances).
Finally, recall that our actual implementation does not utilize exact dominance checks for the sake of avoiding expensive and numerically instable intersection tests. Although this means that the bound established in Lemma~\ref{lem:poly-label-size-heuristic} does not hold in theory, it was not exceeded once in our experiments.

\begin{figure}[t]
 \centering
 \input{fig-trsc/evtc-queries-16-ranks}
 \caption{Running times of \gls*{champ} subject to Dijkstra rank. We show results for both potential functions $\drivingtimepotential$ and~$\convexpotential$. For each rank, we consider 1\,000 random queries on \instanceEurNoAux, assuming a battery capacity of~16\,kWh.}
 \label{fig:rankplot}
\end{figure}

\subsection{Evaluating Scalability}\label{sec:experiments:scalability}

To assess the scalability of \gls*{champ}, we consider two different input parameters that have a big impact on the size of the search space: the distance between the source and the target of a query, and the range of the vehicle.
The first experiment presented below evaluates our fastest exact algorithms following the methodology of \emph{Dijkstra ranks}, defined for a query as the number of vertex scans when running Dijkstra's algorithm with costs representing unconstrained driving time~\cite{Bas14,San05}.
Thus, higher ranks correspond to harder queries (of longer distance).
In the second experiment, we compare the different approaches discussed in this work and examine how their query times depend on the vehicle's range.

\paragraph{Query Times Subject to Dijkstra Rank.}
Figure~\ref{fig:rankplot} shows results for our fastest exact approaches on~\instanceEurNoAux, assuming a battery capacity of~16\,kWh. We ran 1\,000 random queries per Dijkstra rank (note that these are not necessarily in-range queries). Queries of higher rank correspond to a longer distance between source and target.
It turns out that median running times of both \gls*{champ}-$\drivingtimepotential$ and \gls*{champ}-$\convexpotential$ are quite robust towards varying Dijkstra ranks. We obtain the most expensive queries at ranks from $2^{17}$ to~$2^{19}$.
(Note that random in-range queries are likely to be among these most difficult ranks.)
For higher ranks, the target is often unreachable. In most cases, this is detected by the backward searches for potential computation, as lower bounds on consumption exceed the battery capacity (see Section~\ref{sec:astar}). We could achieve further speedup for high ranks by running any technique that quickly computes energy-optimal routes to detect unreachable targets~\cite{Bau13a,Eis11,Sac11}.
For lower ranks (\ie, more local queries), the target is often reachable on an unconstrained shortest path, so goal direction of the potential functions works very well. In such cases, the backward phase of A*~search becomes the major bottleneck of the query, hence the lightweight potential $\drivingtimepotential$ yields better query times.
Note that query times vary even for these very low or high ranks, as they largely depend on the portion of the core graph that is inspected by the backward search.

Although the median running time of \gls*{champ}-$\convexpotential$ is consistently higher than the median of~\gls*{champ}-$\drivingtimepotential$, the former is also more robust in that it produces fewer outliers. Its more sophisticated potential function pays off especially for harder queries. Nevertheless, as worst-case running time is exponential, we observe a few outliers that exceed the median by several orders of magnitude.

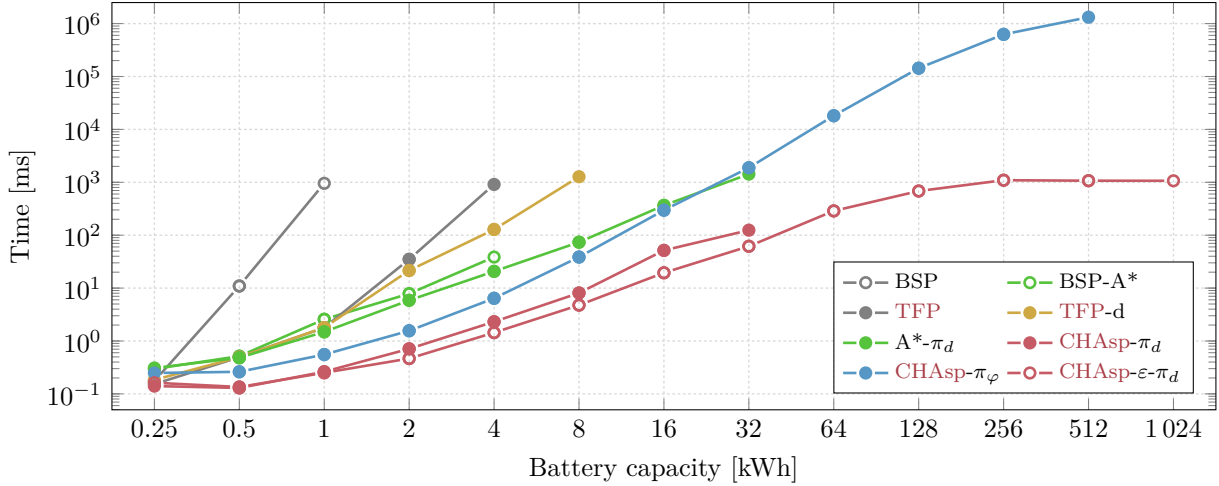
\begin{figure}[t]
  \centering
  \begin{tikzpicture}[figure]
\pgfplotsset{
   grid style = {dash pattern = on 1pt off 1pt, black15,line width = 0.5pt  }
}

\colorlet{plotColor1}{black50} 
\colorlet{plotColor2}{thesisgreen} 
\colorlet{plotColor3}{black50} 
\colorlet{plotColor4}{thesisyellow} 
\colorlet{plotColor5}{thesisgreen} 
\colorlet{plotColor7}{thesisblue} 
\colorlet{plotColor8}{thesisred} 
\colorlet{plotColor9}{thesisred} 

\begin{axis}[
   height=7.0cm,
   width=0.98\textwidth,
   xmin=0.5,
   xmax=13.5,
   ymin=0.05,
   ymax=2500000,
   ymode=log,
   xlabel={Battery capacity [kWh]},
   ylabel={Time [ms]},
   y label style={at={(axis description cs:-0.06,0.5)}},
   /pgf/number format/.cd,
   1000 sep={\,},
   xtick={0, 1, 2, 3, 4, 5, 6, 7, 8, 9, 10, 11, 12, 13},
   xticklabel=\pgfmathparse{2^(\tick-3)}${\pgfmathprintnumber{\pgfmathresult}}$,
   grid=major,
   legend entries={BSP, BSP-A*, \gls{tfp}, \gls{tfp}-d, A*-$\drivingtimepotential$, \gls{champ}-$\drivingtimepotential$, \gls{champ}-$\convexpotential$, \gls{champ}-$\varepsilon$-$\drivingtimepotential$},
   legend cell align=left,
   legend columns=2,
   legend style={at={(0.98,0.04)},
   anchor=south east,
   font=\scriptsize}
]

 \addlegendimage{legend line with heuristic mark,plotColor1}
 \addlegendimage{legend line with heuristic mark,plotColor2}
 \addlegendimage{legend line with exact mark,plotColor3}
 \addlegendimage{legend line with exact mark,plotColor4}
 \addlegendimage{legend line with exact mark,plotColor5}
 \addlegendimage{legend line with exact mark,plotColor8}
 \addlegendimage{legend line with exact mark,plotColor7}
 \addlegendimage{legend line with heuristic mark,plotColor9}

\addplot[color=plotColor1,heuristicMarkSign,shortenLines,line width=1pt] table {
    1 0.175537
    2 10.948500
};
\addplot[color=plotColor1,heuristicMarkSign,shortenLines,line width=1pt] table {
    2 10.948500
    3 955.845500
};

\addplot[color=plotColor2,heuristicMarkSign,shortenLines,line width=1pt] table {
    1 0.296997
    2 0.516113
};
\addplot[color=plotColor2,heuristicMarkSign,shortenLines,line width=1pt] table {
    2 0.516113
    3 2.570560
};
\addplot[color=plotColor2,heuristicMarkSign,shortenLines,line width=1pt] table {
    3 2.570560
    4 7.845455
};
\addplot[color=plotColor2,heuristicMarkSign,shortenLines,line width=1pt] table {
    4 7.845455
    5 38.601400
};

\addplot[color=plotColor3,exactMarkSign,shortenLines,line width=1pt] table {
    1 0.153198
    2 0.492920
};
\addplot[color=plotColor3,exactMarkSign,shortenLines,line width=1pt] table {
    2 0.492920
    3 1.788085
};
\addplot[color=plotColor3,exactMarkSign,shortenLines,line width=1pt] table {
    3 1.788085
    4 34.981950
};
\addplot[color=plotColor3,exactMarkSign,shortenLines,line width=1pt] table {
    4 34.981950
    5 910.824500
};

\addplot[color=plotColor4,exactMarkSign,shortenLines,line width=1pt] table {
    1 0.183960
    2 0.494019
};
\addplot[color=plotColor4,exactMarkSign,shortenLines,line width=1pt] table {
    2 0.494019
    3 1.791990
};
\addplot[color=plotColor4,exactMarkSign,shortenLines,line width=1pt] table {
    3 1.791990
    4 21.505500
};
\addplot[color=plotColor4,exactMarkSign,shortenLines,line width=1pt] table {
    4 21.505500
    5 128.214000
};
\addplot[color=plotColor4,exactMarkSign,shortenLines,line width=1pt] table {
    5 128.214000
    6 1273.605000
};

\addplot[color=plotColor5,exactMarkSign,shortenLines,line width=1pt] table {
    1 0.308472
    2 0.481445
};
\addplot[color=plotColor5,exactMarkSign,shortenLines,line width=1pt] table {
    2 0.481445
    3 1.488400
};
\addplot[color=plotColor5,exactMarkSign,shortenLines,line width=1pt] table {
    3 1.488400
    4 5.874390
};
\addplot[color=plotColor5,exactMarkSign,shortenLines,line width=1pt] table {
    4 5.874390
    5 20.697000
};
\addplot[color=plotColor5,exactMarkSign,shortenLines,line width=1pt] table {
    5 20.697000
    6 73.774050
};
\addplot[color=plotColor5,exactMarkSign,shortenLines,line width=1pt] table {
    6 73.774050
    7 365.816500
};
\addplot[color=plotColor5,exactMarkSign,shortenLines,line width=1pt] table {
    7 365.816500
    8 1430.080000
};

\addplot[color=plotColor7,exactMarkSign,shortenLines,line width=1pt] table {
    1 0.248414
    2 0.261597
};
\addplot[color=plotColor7,exactMarkSign,shortenLines,line width=1pt] table {
    2 0.261597
    3 0.554077
};
\addplot[color=plotColor7,exactMarkSign,shortenLines,line width=1pt] table {
    3 0.554077
    4 1.559935
};
\addplot[color=plotColor7,exactMarkSign,shortenLines,line width=1pt] table {
    4 1.559935
    5 6.407105
};
\addplot[color=plotColor7,exactMarkSign,shortenLines,line width=1pt] table {
    5 6.407105
    6 38.606100
};
\addplot[color=plotColor7,exactMarkSign,shortenLines,line width=1pt] table {
    6 38.606100
    7 296.771000
};
\addplot[color=plotColor7,exactMarkSign,shortenLines,line width=1pt] table {
    7 296.771000
    8 1884.400000
};
\addplot[color=plotColor7,exactMarkSign,shortenLines,line width=1pt] table {
    8 1884.400000
    9 18135.100000
};
\addplot[color=plotColor7,exactMarkSign,shortenLines,line width=1pt] table {
    9 18135.100000
    10 143551.500000
};
\addplot[color=plotColor7,exactMarkSign,shortenLines,line width=1pt] table {
    10 143551.500000
    11 625446.000000
};
\addplot[color=plotColor7,exactMarkSign,shortenLines,line width=1pt] table {
    11 625446.000000
    12 1320655.000000
};

\addplot[color=plotColor8,exactMarkSign,shortenLines,line width=1pt] table {
    1 0.140503
    2 0.129517
};
\addplot[color=plotColor8,exactMarkSign,shortenLines,line width=1pt] table {
    2 0.129517
    3 0.262451
};
\addplot[color=plotColor8,exactMarkSign,shortenLines,line width=1pt] table {
    3 0.262451
    4 0.710938
};
\addplot[color=plotColor8,exactMarkSign,shortenLines,line width=1pt] table {
    4 0.710938
    5 2.311035
};
\addplot[color=plotColor8,exactMarkSign,shortenLines,line width=1pt] table {
    5 2.311035
    6 8.082520
};
\addplot[color=plotColor8,exactMarkSign,shortenLines,line width=1pt] table {
    6 8.082520
    7 51.606450
};
\addplot[color=plotColor8,exactMarkSign,shortenLines,line width=1pt] table {
    7 51.606450
    8 124.436000
};

\addplot[color=plotColor9,heuristicMarkSign,shortenLines,line width=1pt] table {
    1 0.162475
    2 0.135010
};
\addplot[color=plotColor9,heuristicMarkSign,shortenLines,line width=1pt] table {
    2 0.135010
    3 0.250489
};
\addplot[color=plotColor9,heuristicMarkSign,shortenLines,line width=1pt] table {
    3 0.250489
    4 0.466064
};
\addplot[color=plotColor9,heuristicMarkSign,shortenLines,line width=1pt] table {
    4 0.466064
    5 1.439700
};
\addplot[color=plotColor9,heuristicMarkSign,shortenLines,line width=1pt] table {
    5 1.439700
    6 4.763920
};
\addplot[color=plotColor9,heuristicMarkSign,shortenLines,line width=1pt] table {
    6 4.763920
    7 19.469450
};
\addplot[color=plotColor9,heuristicMarkSign,shortenLines,line width=1pt] table {
    7 19.469450
    8 61.712450
};
\addplot[color=plotColor9,heuristicMarkSign,shortenLines,line width=1pt] table {
    8 61.712450
    9 287.989000
};
\addplot[color=plotColor9,heuristicMarkSign,shortenLines,line width=1pt] table {
    9 287.989000
    10 683.819500
};
\addplot[color=plotColor9,heuristicMarkSign,shortenLines,line width=1pt] table {
    10 683.819500
    11 1094.895000
};
\addplot[color=plotColor9,heuristicMarkSign,shortenLines,line width=1pt] table {
    11 1094.895000
    12 1071.955000
};
\addplot[color=plotColor9,heuristicMarkSign,shortenLines,line width=1pt] table {
    12 1071.955000
    13 1067.890000
};

\end{axis}
\end{tikzpicture}
  \caption{Median running times for different battery capacities. For each considered capacity (ranging from 0.25\,kWh to 1\,024\,kWh), the plot shows the median running time of 100 random in-range queries, provided that no query exceeded an hour of computation time.
  We evaluate the \gls*{bsp} algorithm using multi-arcs corresponding to speed samples, A*~search for \gls*{bsp} with the potential function $\drivingtimepotential$ (\gls*{bsp}-A*), the \gls*{tfp} algorithm using pairwise dominance tests, \gls*{tfp} with improved dominance tests (\gls*{tfp}-d), and our proposed speedup techniques (A*-$\drivingtimepotential$,~\gls*{champ}-$\drivingtimepotential$, and~\gls*{champ}-$\convexpotential$). We also show our heuristic approach  with parameter choice~$\varepsilon=0.1$, denoted \gls*{champ}-$\varepsilon$-$\drivingtimepotential$.}
  \label{fig:champ-limitplot-med}
\end{figure}

\paragraph{Query Times Subject to Vehicle Range.}
While query times are relatively stable for different Dijkstra ranks, the range of an \gls*{ev} has a major influence on performance.
We evaluate our approaches for different battery limits in Figure~\ref{fig:champ-limitplot-med}. For every considered battery capacity, we report the median running time of 100 random in-range queries up to the range at which at least one query did not terminate within one hour.
For small capacities, this enables our basic approaches, which are shown in the plot as well.
We also evaluate the \gls*{bsp} algorithm, using multi-arcs to model speed adaptation (20\,km/h step width).
As before, we observe a considerable speedup by switching to our more realistic model, which is based on consumption functions.
As discussed in Section~\ref{sec:experiments:model}, a vast reduction in the number of label scans more than compensates for the more expensive basic operations when using~\gls*{tfp}. It achieves a speedup by up to two orders of magnitude over~\gls*{bsp}.
Plugging in the improved dominance checks described in Section~\ref{sec:approach} pays off as well, as it yields further speedup for battery capacities beyond~1\,kWh.
Adding A*~search, we achieve reasonable median running times of about a second for capacities of up to~32\,kWh, without any preprocessing.
Our technique \gls*{champ}-$\drivingtimepotential$ adds preprocessing to provide further speedup by another order of magnitude.
Matching our previous observations, median running times of \gls*{champ}-$\convexpotential$ are slower for all ranges up to~32\,kWh. However, this algorithm is more robust against outliers and is the only exact method that terminates within an hour for all queries at 64\,kWh and up. As a result, we are able to compute \emph{provably optimal} results for (hypothetical) ranges of up to 512\,kWh (around 3\,000\,km) in less than an hour.
Finally, our heuristic variant scales well with vehicle range. Query times actually bottom out for large battery capacities, as the vehicle range gets close to the graph diameter (for~1\,024\,kWh, the whole graph is always reachable).

\begin{figure}[t]
  \centering
  \begin{tikzpicture}[figure]
\pgfplotsset{
   grid style = {dash pattern = on 1pt off 1pt, black15,line width = 0.5pt  }
}

\colorlet{plotColor1}{black50} 
\colorlet{plotColor2}{thesisgreen} 
\colorlet{plotColor3}{black50} 
\colorlet{plotColor4}{thesisyellow} 
\colorlet{plotColor5}{thesisgreen} 
\colorlet{plotColor7}{thesisblue} 
\colorlet{plotColor8}{thesisred} 
\colorlet{plotColor9}{thesisred} 

\begin{axis}[
   height=7.0cm,
   width=0.98\textwidth,
   xmin=0.5,
   xmax=13.5,
   ymin=0.05,
   ymax=3600000,
   ymode=log,
   xlabel={Battery capacity [kWh]},
   ylabel={Time [ms]},
   y label style={at={(axis description cs:-0.06,0.5)}},
   /pgf/number format/.cd,
   1000 sep={\,},
   xtick={0, 1, 2, 3, 4, 5, 6, 7, 8, 9, 10, 11, 12, 13},
   xticklabel=\pgfmathparse{2^(\tick-3)}${\pgfmathprintnumber{\pgfmathresult}}$,
   grid=major,
   legend entries={BSP, BSP-A*, \gls{tfp}, \gls{tfp}-d, A*-$\drivingtimepotential$, \gls{champ}-$\drivingtimepotential$, \gls{champ}-$\convexpotential$, \gls{champ}-$\varepsilon$-$\drivingtimepotential$},
   legend cell align=left,
   legend columns=2,
   legend style={at={(0.98,0.04)},
   anchor=south east,
   font=\scriptsize}
]

 \addlegendimage{legend line with heuristic mark,plotColor1}
 \addlegendimage{legend line with heuristic mark,plotColor2}
 \addlegendimage{legend line with exact mark,plotColor3}
 \addlegendimage{legend line with exact mark,plotColor4}
 \addlegendimage{legend line with exact mark,plotColor5}
 \addlegendimage{legend line with exact mark,plotColor8}
 \addlegendimage{legend line with exact mark,plotColor7}
 \addlegendimage{legend line with heuristic mark,plotColor9}

\addplot[color=plotColor1,heuristicMarkSign,shortenLines,line width=1pt] table {
    1 51.586900
    2 6486.300000
};
\addplot[color=plotColor1,heuristicMarkSign,shortenLines,line width=1pt] table {
    2 6486.300000
    3 355139.000000
};

\addplot[color=plotColor2,heuristicMarkSign,shortenLines,line width=1pt] table {
    1 1.979000
    2 14.922900
};
\addplot[color=plotColor2,heuristicMarkSign,shortenLines,line width=1pt] table {
    2 14.922900
    3 7509.380000
};
\addplot[color=plotColor2,heuristicMarkSign,shortenLines,line width=1pt] table {
    3 7509.380000
    4 116941.000000
};
\addplot[color=plotColor2,heuristicMarkSign,shortenLines,line width=1pt] table {
    4 116941.000000
    5 1259560.000000
};

\addplot[color=plotColor3,exactMarkSign,shortenLines,line width=1pt] table {
    1 2.513920
    2 12.625000
};
\addplot[color=plotColor3,exactMarkSign,shortenLines,line width=1pt] table {
    2 12.625000
    3 218.718000
};
\addplot[color=plotColor3,exactMarkSign,shortenLines,line width=1pt] table {
    3 218.718000
    4 13912.900000
};
\addplot[color=plotColor3,exactMarkSign,shortenLines,line width=1pt] table {
    4 13912.900000
    5 1092120.000000
};

\addplot[color=plotColor4,exactMarkSign,shortenLines,line width=1pt] table {
    1 2.223880
    2 13.193800
};
\addplot[color=plotColor4,exactMarkSign,shortenLines,line width=1pt] table {
    2 13.193800
    3 92.388900
};
\addplot[color=plotColor4,exactMarkSign,shortenLines,line width=1pt] table {
    3 92.388900
    4 3706.600000
};
\addplot[color=plotColor4,exactMarkSign,shortenLines,line width=1pt] table {
    4 3706.600000
    5 22712.600000
};
\addplot[color=plotColor4,exactMarkSign,shortenLines,line width=1pt] table {
    5 22712.600000
    6 89202.800000
};

\addplot[color=plotColor5,exactMarkSign,shortenLines,line width=1pt] table {
    1 1.970210
    2 6.040040
};
\addplot[color=plotColor5,exactMarkSign,shortenLines,line width=1pt] table {
    2 6.040040
    3 8.177000
};
\addplot[color=plotColor5,exactMarkSign,shortenLines,line width=1pt] table {
    3 8.177000
    4 28.147900
};
\addplot[color=plotColor5,exactMarkSign,shortenLines,line width=1pt] table {
    4 28.147900
    5 366.910000
};
\addplot[color=plotColor5,exactMarkSign,shortenLines,line width=1pt] table {
    5 366.910000
    6 4253.610000
};
\addplot[color=plotColor5,exactMarkSign,shortenLines,line width=1pt] table {
    6 4253.610000
    7 122847.000000
};
\addplot[color=plotColor5,exactMarkSign,shortenLines,line width=1pt] table {
    7 122847.000000
    8 2289750.000000
};

\addplot[color=plotColor7,exactMarkSign,shortenLines,line width=1pt] table {
    1 58.647000
    2 307.992000
};
\addplot[color=plotColor7,exactMarkSign,shortenLines,line width=1pt] table {
    2 307.992000
    3 161.314000
};
\addplot[color=plotColor7,exactMarkSign,shortenLines,line width=1pt] table {
    3 161.314000
    4 299.535000
};
\addplot[color=plotColor7,exactMarkSign,shortenLines,line width=1pt] table {
    4 299.535000
    5 164.124000
};
\addplot[color=plotColor7,exactMarkSign,shortenLines,line width=1pt] table {
    5 164.124000
    6 677.670000
};
\addplot[color=plotColor7,exactMarkSign,shortenLines,line width=1pt] table {
    6 677.670000
    7 4630.370000
};
\addplot[color=plotColor7,exactMarkSign,shortenLines,line width=1pt] table {
    7 4630.370000
    8 20391.900000
};
\addplot[color=plotColor7,exactMarkSign,shortenLines,line width=1pt] table {
    8 20391.900000
    9 45790.800000
};
\addplot[color=plotColor7,exactMarkSign,shortenLines,line width=1pt] table {
    9 45790.800000
    10 742787.000000
};
\addplot[color=plotColor7,exactMarkSign,shortenLines,line width=1pt] table {
    10 742787.000000
    11 2381610.000000
};
\addplot[color=plotColor7,exactMarkSign,shortenLines,line width=1pt] table {
    11 2381610.000000
    12 2768500.000000
};

\addplot[color=plotColor8,exactMarkSign,shortenLines,line width=1pt] table {
    1 0.592041
    2 1.727780
};
\addplot[color=plotColor8,exactMarkSign,shortenLines,line width=1pt] table {
    2 1.727780
    3 2.299070
};
\addplot[color=plotColor8,exactMarkSign,shortenLines,line width=1pt] table {
    3 2.299070
    4 7.789060
};
\addplot[color=plotColor8,exactMarkSign,shortenLines,line width=1pt] table {
    4 7.789060
    5 57.438000
};
\addplot[color=plotColor8,exactMarkSign,shortenLines,line width=1pt] table {
    5 57.438000
    6 432.170000
};
\addplot[color=plotColor8,exactMarkSign,shortenLines,line width=1pt] table {
    6 432.170000
    7 9671.920000
};
\addplot[color=plotColor8,exactMarkSign,shortenLines,line width=1pt] table {
    7 9671.920000
    8 430648.000000
};

\addplot[color=plotColor9,heuristicMarkSign,shortenLines,line width=1pt] table {
    1 0.550049
    2 1.630130
};
\addplot[color=plotColor9,heuristicMarkSign,shortenLines,line width=1pt] table {
    2 1.630130
    3 2.102050
};
\addplot[color=plotColor9,heuristicMarkSign,shortenLines,line width=1pt] table {
    3 2.102050
    4 5.276120
};
\addplot[color=plotColor9,heuristicMarkSign,shortenLines,line width=1pt] table {
    4 5.276120
    5 46.815900
};
\addplot[color=plotColor9,heuristicMarkSign,shortenLines,line width=1pt] table {
    5 46.815900
    6 183.429000
};
\addplot[color=plotColor9,heuristicMarkSign,shortenLines,line width=1pt] table {
    6 183.429000
    7 418.344000
};
\addplot[color=plotColor9,heuristicMarkSign,shortenLines,line width=1pt] table {
    7 418.344000
    8 3650.860000
};
\addplot[color=plotColor9,heuristicMarkSign,shortenLines,line width=1pt] table {
    8 3650.860000
    9 8429.270000
};
\addplot[color=plotColor9,heuristicMarkSign,shortenLines,line width=1pt] table {
    9 8429.270000
    10 113631.000000
};
\addplot[color=plotColor9,heuristicMarkSign,shortenLines,line width=1pt] table {
    10 113631.000000
    11 353508.000000
};
\addplot[color=plotColor9,heuristicMarkSign,shortenLines,line width=1pt] table {
    11 353508.000000
    12 1039300.000000
};
\addplot[color=plotColor9,heuristicMarkSign,shortenLines,line width=1pt] table {
    12 1039300.000000
    13 1097.670000
};

\end{axis}
\end{tikzpicture}
  \caption{Maximum running times for different battery capacities. For the same sets of queries as in Figure~\ref{fig:champ-limitplot-med}, this plot shows the corresponding maximum running times of 100 random in-range queries (unless this maximum exceeded an hour of computation time).}
  \label{fig:champ-limitplot-max}
\end{figure}
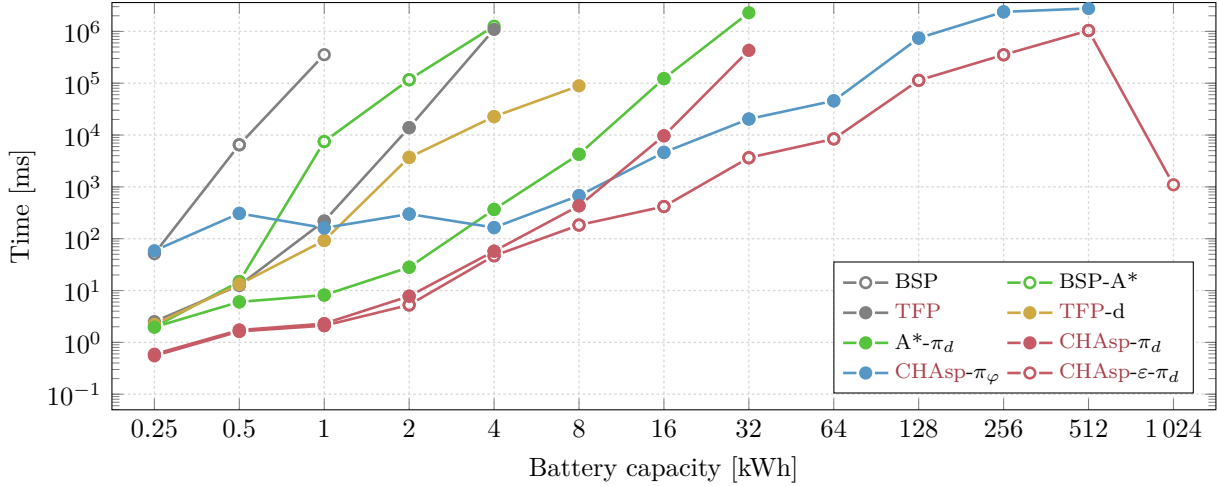

Figure~\ref{fig:champ-limitplot-max} shows maximum running times for the same sets of queries (see Appendix~\ref{app:experiments} for a plot showing average running times). Naturally, these outliers are subject to noise, so curves are less smooth. Our most robust technique,~\gls*{champ}-$\convexpotential$, starts to pays off at a range of 16\,kWh (at lower ranges, its backward search is too expensive). Interestingly, the maximum running time of the heuristic variant of \gls*{champ} drops to about a second for the longest considered range: Since the target is always reachable with either no or only little speed adaptation, goal direction makes the forward search find the target almost immediately. In case of exact~\gls*{champ}-$\convexpotential$, the expensive backward search prevents the algorithm from always terminating within one hour. In practice, it would therefore pay off to first check whether the target is reachable on the fastest path (without speed adaptation) before resorting to either \gls*{champ}-$\drivingtimepotential$ or~\gls*{champ}-$\convexpotential$. The better choice between the two potentials depends on the vehicle range and by how much the target is missed on the unconstrained fastest route.

\section{Conclusion}\label{sec:conclusion}

We introduced a novel framework for computing constrained shortest paths for \glspl*{ev} in practice, using realistic consumption models. Our \gls*{tfp} algorithm respects battery constraints and accounts for adaptive speeds in a mathematically sounder way that unlocks \emph{both} better query performance \emph{and} improved solution quality when compared to previous approaches using discretized, sampled speeds.
Nontrivial speedup techniques based on A*~search and \gls*{ch} make the algorithm practical. It computes \emph{optimal} solutions in less than a second on sensible instances, making it the first practical \emph{exact} approach---with running times similar to previous inexact methods~\cite{Bau14a,Goo14,Har14}.
Our own heuristic enables even faster queries while retaining high-quality solutions.

The result of our computations is not only the suggested route from source to target but also optimal driving speeds along that route.
In practice, these can be passed to the driver as recommendations or directly to a cruise control unit. With the advent of autonomous vehicles, the output of our algorithms can also be utilized for speed planning of self-driving~\glspl*{ev}, either directly or after refinement~\cite{Flo15}.
The \gls*{champ} algorithm can further be used as a subroutine in more complex scenarios, such as fleet management or load-balanced routing. For the latter, computing several Pareto optimal routes could provide a set of alternative candidates to choose from~\cite{Bau14a}. Another next step would be the the integration of planned charging stops~\cite{Bau18,Sto12c}; see also Niklaus~\cite{Nik17}.

To enable faster heuristic variants, it would be useful to precompute potentials for A*~search, as in \acrshort*{alt}~\cite{Gol05a,Gol05b}, or integrate other known approximate or heuristic approaches~\cite{Bau14a,Bat11,Wan16}.
We are also interested in the integration of variable speed limits imposed by, \eg, time-dependent travel times based on historic knowledge of traffic patterns~\cite{Bat13,Bau16b,Del09b,Fos14}.
From a more theoretical perspective, efficient representation and comparison of (general) bivariate \gls*{soc} functions is an open issue.
One could also extend \gls*{evcas} by asking for an optimal solution for \emph{every} initial~\gls*{soc} \wrt given source and target vertices, similar to profile queries in time-dependent or energy-optimal routing~\cite{Bat13,Bau13a}.

\bibliographystyle{plain}
\bibliography{references-trsc}

\clearpage
\appendix

\section{Omitted Details and Proofs from Section~\ref{sec:operations}}\label{app:linking}

In what follows, we provide technical details for linking consumption functions in the special case considered in Section~\ref{sec:operations:linking}, in which both functions represent arcs (Appendix~\ref{app:linking:arcs}), give a formal proof of Lemma~\ref{lem:tradeoff-function-continuous} (Appendix~\ref{app:linking:lemma}), and show how general consumption functions can be linked in linear time (Appendix~\ref{app:linking:linear-time}). Finally, Appendix~\ref{app:linking:dominance} mentions technical details omitted from Section~\ref{sec:operations:dominance}.

\subsection{Linking Functions Defined by Single Tradeoff Functions}\label{app:linking:arcs}

We show how $\tcfunction=\linkop(\tcfunction_1,\tcfunction_2)$ can be computed in constant time in the case where each of the two given consumption functions $\tcfunction_1$ and $\tcfunction_2$ is defined by a \emph{single} tradeoff subfunction, rather than multiple ones (\cf Section~\ref{sec:operations:linking}).
Recall Equation~\eqref{eq:tradeoff-linkedfunction} from Section~\ref{sec:operations:linking} restated below:
\begin{align}
 \tag{\ref{eq:tradeoff-linkedfunction}}
 \tcfunction(\atime) = \min_{\substack{\cvalue\in[\leftintervalborder_1,\rightintervalborder_1]\\\cvalue\in[\atime-\rightintervalborder_2,\atime-\leftintervalborder_2]}}\tcfunction_1(\cvalue)+\tcfunction_2(\atime-\cvalue)\text{.}
\end{align}
As argued in Section~\ref{sec:operations:linking}, the best choice of~$\cvalue$ depends on the value~$\atime\in[\leftintervalborder_1+\leftintervalborder_2,\rightintervalborder_1+\rightintervalborder_2]$.
Therefore, we consider the \emph{$\cvalue$-function} $\cfunction\colon[\leftintervalborder_1+\leftintervalborder_2,\rightintervalborder_1+\rightintervalborder_2]\to\posreals$ that maps every admissible value of~$\atime$ to the optimal choice of~$\cvalue$. Then, we immediately get for arbitrary $\atime\in\posreals$ that
\begin{align}
\tag{\ref{eq:tradeoff-linkedfunction-alt}}
 \tcfunction(\atime) =
 \begin{cases}
  \infty                                                                          & \mbox{if } \atime < \leftintervalborder_1 + \leftintervalborder_2, \\
  \tcfunction_1(\rightintervalborder_1) + \tcfunction_2(\rightintervalborder_2)   & \mbox{if } \atime > \rightintervalborder_1 + \rightintervalborder_2, \\
  \tcfunction_1(\cfunction(\atime)) + \tcfunction_2 (\atime - \cfunction(\atime)) & \mbox{otherwise.}
 \end{cases}
\end{align}
Hence, we essentially need to compute~$\cfunction$ to obtain the desired function~$\tcfunction$.
To this end, consider an arbitrary fixed driving time~$\atime\in[\leftintervalborder_1+\leftintervalborder_2,\rightintervalborder_1+\rightintervalborder_2]$. To identify the optimal value $\cfunction(\atime)$, we examine the derivative~$\tcfunction'_\atime$ of the term $\tcfunction_\atime(\cvalue):=\tcfunction_1(\cvalue)+\tcfunction_2(\atime-\cvalue)$. It evaluates to
\begin{align*}
 \tcfunction'_\atime(\cvalue) = \frac{2\aparam_1}{(\bparam_1 - \cvalue)^3} + \frac{2\aparam_2}{(\atime - \cvalue - \bparam_2)^3} \text{,}
\end{align*}
where $\aparam_1$,~$\bparam_1$,~$\aparam_2$, and $\bparam_2$ are the corresponding coefficients in the tradeoff functions of $\tcfunction_1$ and~$\tcfunction_2$. We assume that $\aparam_1>0$ and $\aparam_2>0$ are positive (the other cases $\aparam_1=0$ or $\aparam_2=0$ are trivial: at least one of the consumption functions is constant in this case and hence, has a unique admissible driving time).
Then, we obtain a unique zero $\cvalue^*_\atime$ for this derivative under the assumption that $\bparam_1<\cvalue$ and~$\cvalue<\atime-\bparam_2$. This holds true for valid choices of~$\cvalue$, because $\cvalue\ge\leftintervalborder_1>\bparam_1$ and $\cvalue\le\atime-\leftintervalborder_2<\atime-\bparam_2$ always holds; see Equation~\eqref{eq:tradeoff-linkedfunction}. Therefore, solving the equation $\tcfunction'_\atime(\cvalue)=0$ for $\cvalue$ yields
\begin{align*}
 \cvalue^*_\atime = \frac{\atime - \bparam_2 + \bparam_1\sqrt[3]{\frac{\aparam_2}{\aparam_1}}}{1 + \sqrt[3]{\frac{\aparam_2}{\aparam_1}}} \text{.}
\end{align*}
The value $\cvalue^*_\atime$ minimizes energy consumption for an unrestricted distribution of driving times that sum up to~$\atime$. From Equation~\eqref{eq:tradeoff-linkedfunction} we get the additional constraints $\cfunction(\atime)\ge\max\{\leftintervalborder_1,\atime-\rightintervalborder_2\}$ and~$\cfunction(\atime)\le\min\{\rightintervalborder_1,\atime-\leftintervalborder_2\}$.
Since $\cvalue^*_\atime$ is the unique zero of $\tcfunction_\atime'$ in the open interval~$(\bparam_1,\atime-\bparam_2)$, monotonicity of $\tcfunction_\atime$ in the intervals $(\bparam_1,\cvalue^*_\atime]$ and $[\cvalue^*_\atime,\atime-\bparam_2)$ follows. Thus, we get
\begin{align}
 \tag{\ref{eq:tradeoff-cfunction}}
 \cfunction(\atime) =
 \begin{cases}
  \max\{\leftintervalborder_1,\atime-\rightintervalborder_2\} & \mbox{if }\cvalue^*_\atime<\max\{\leftintervalborder_1,\atime-\rightintervalborder_2\}\text{,}\\
  \min\{\rightintervalborder_1,\atime-\leftintervalborder_2\} & \mbox{if }\cvalue^*_\atime>\min\{\rightintervalborder_1,\atime-\leftintervalborder_2\}\text{,}\\
  \cvalue^*_\atime & \mbox{otherwise.}
 \end{cases}
\end{align}
We already argued in Section~\ref{sec:operations:linking} that Equation~\eqref{eq:tradeoff-linkedfunction-alt} and Equation~\eqref{eq:tradeoff-cfunction} together are sufficient to specify the desired function~$\tcfunction$. Since we want to explicitly represent $\tcfunction$ using tradeoff functions, we now derive the actual subfunctions that define~$\tcfunction$, depending on the value~$\cvalue^*_\atime$.

First, solving the conditions $\cvalue^*_\atime<\max\{\leftintervalborder_1,\atime-\rightintervalborder_2\}$ and $\cvalue^*_\atime>\min\{\rightintervalborder_2,\atime-\leftintervalborder_2\}$ in Equation~\eqref{eq:tradeoff-cfunction} for $\atime$ yields four equivalences in total, namely
\begin{align}
 \label{eq:tradeoff-cfunction-first}
 &\cvalue^*_\atime < \leftintervalborder_1                 && \Leftrightarrow && \atime < \firstleftboundaryterm := \firstleftboundarytermfull\text{,}\\
 \label{eq:tradeoff-cfunction-second}
 &\cvalue^*_\atime < \atime - \rightintervalborder_2 && \Leftrightarrow && \atime > \secondrightboundaryterm := \secondrightboundarytermfull\text{,}\\
  \label{eq:tradeoff-cfunction-third}
 &\cvalue^*_\atime > \rightintervalborder_1                && \Leftrightarrow && \atime > \firstrightboundaryterm := \firstrightboundarytermfull\text{,}\\
 \label{eq:tradeoff-cfunction-fourth}
 &\cvalue^*_\atime > \atime - \leftintervalborder_2  && \Leftrightarrow && \atime < \secondleftboundaryterm := \secondleftboundarytermfull\text{.}
\end{align}
Note that we obtain similar statements when solving for equality, \eg, we have $\cvalue^*_\atime=\leftintervalborder_1$ if and only if $\atime=\firstleftboundaryterm$. Consequently, we also get $\cvalue^*_\atime>\leftintervalborder_1$ if and only if~$\atime>\firstleftboundaryterm$ (and analogous results for Equations~\eqref{eq:tradeoff-cfunction-second}--\eqref{eq:tradeoff-cfunction-fourth}).
To obtain the actual function~$\tcfunction$ and its subfunctions, we use the following Lemma~\ref{lem:tc-functionslopes}.

\setcounter{lemma}{0}
\renewcommand{\thelemma}{\Alph{section}-\arabic{lemma}}

\begin{lemma}
\label{lem:tc-functionslopes}
 Let~$\tcfunction_1$ and $\tcfunction_2$ be two consumption functions, such that each is defined by a single tradeoff function $\tcinner_1$ and $\tcinner_2$, respectively.
 Moreover, let~$\leftintervalborder_1$,~$\rightintervalborder_1$,~$\leftintervalborder_2$, and~$\rightintervalborder_2$ denote their respective minimum and maximum driving times. 
 Then the following statements hold for their derivatives $\tcfunctionderivative_1$ and $\tcfunctionderivative_2$ (even if we replace all occurrences of the relation ``$\le$'' by ``$=$'').
 \begin{enumerate}
  \item $\tcfunctionrightderivative_1(\leftintervalborder_1) \le \tcfunctionrightderivative_2(\leftintervalborder_2) \Leftrightarrow \firstleftboundaryterm \le \leftintervalborder_1 + \leftintervalborder_2 \le \secondleftboundaryterm$,
  \item $\tcfunctionrightderivative_1(\rightintervalborder_1) \le \tcfunctionrightderivative_2(\rightintervalborder_2) \Leftrightarrow \firstrightboundaryterm \le \rightintervalborder_1 + \rightintervalborder_2 \le \secondrightboundaryterm$,
  \item $\tcfunctionleftderivative_1(\rightintervalborder_1) \le \tcfunctionrightderivative_2(\leftintervalborder_2) \Leftrightarrow \firstrightboundaryterm \le \secondleftboundaryterm$,
  \item $\tcfunctionrightderivative_1(\leftintervalborder_1) \le \tcfunctionleftderivative_2(\rightintervalborder_2) \Leftrightarrow \firstleftboundaryterm \le \secondrightboundaryterm$.
 \end{enumerate}
\end{lemma}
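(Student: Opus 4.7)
My plan is to establish all four equivalences by direct algebraic manipulation of the derivative expressions. Since each $\tcinner_i$ has the form $\frac{\aparam_i}{(\atime-\bparam_i)^2}+\cparam_i$, its derivative is $\tcinner'_i(\atime)=\frac{-2\aparam_i}{(\atime-\bparam_i)^3}$. I will use that $\aparam_i\ge 0$ and that all admissible driving times satisfy $\atime-\bparam_i>0$; the constant case $\aparam_i=0$ is trivial since then both sides of each equivalence collapse to the same expression, so I may restrict attention to $\aparam_i>0$. Under these sign assumptions, an inequality between two such derivative values is equivalent, after multiplying by $-\tfrac{1}{2}$ (which flips the inequality once) and cross-multiplying by the strictly positive cubes $(\xi_i-\bparam_i)^3$, to an inequality between ratios of the form $\aparam_i/(\xi_i-\bparam_i)^3$.

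For part~1, I would start from $\tcinner'_1(\leftintervalborder_1)\le\tcinner'_2(\leftintervalborder_2)$ and transform it into $\aparam_2(\leftintervalborder_1-\bparam_1)^3\le\aparam_1(\leftintervalborder_2-\bparam_2)^3$. Applying the real cube root (strictly increasing, hence preserving both $\le$ and $=$) and rearranging yields $(\leftintervalborder_1-\bparam_1)\sqrt[3]{\aparam_2/\aparam_1}\le\leftintervalborder_2-\bparam_2$, which after adding $\leftintervalborder_1+\bparam_2$ to both sides is precisely $\firstleftboundaryterm\le\leftintervalborder_1+\leftintervalborder_2$. For the second inequality $\leftintervalborder_1+\leftintervalborder_2\le\secondleftboundaryterm$, the same chain of manipulations applies with the indices swapped: it shows that $\tcinner'_2(\leftintervalborder_2)<\tcinner'_1(\leftintervalborder_1)$ is equivalent to $\leftintervalborder_1+\leftintervalborder_2>\secondleftboundaryterm$, and taking the negation gives the desired equivalence. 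In particular, both right-hand inequalities are individually equivalent to the single left-hand inequality, so the conjunction in the statement is consistent.

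Parts~2, 3, and~4 follow by applying exactly the same algebraic template with the appropriate substitutions of $\leftintervalborder_i$ by $\rightintervalborder_i$. For example, for part~3, starting from $\tcfunctionleftderivative_1(\rightintervalborder_1)\le\tcfunctionrightderivative_2(\leftintervalborder_2)$, the same manipulations yield $\rightintervalborder_1+\bparam_2+(\rightintervalborder_1-\bparam_1)\sqrt[3]{\aparam_2/\aparam_1}\le\leftintervalborder_2+\bparam_1+(\leftintervalborder_2-\bparam_2)\sqrt[3]{\aparam_1/\aparam_2}$, which is exactly $\firstrightboundaryterm\le\secondleftboundaryterm$. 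The equality version of each statement (replacing ``$\le$'' by ``$=$'') comes for free at every step, since all intermediate operations (multiplication by strictly positive quantities, cube roots, and additions of constants) are strictly monotone bijections and thus preserve equality in both directions.

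The only genuine obstacle I anticipate is the bookkeeping of signs: the derivatives themselves are negative, so exactly one sign flip occurs when multiplying by $-\tfrac{1}{2}$, and subsequent cross-multiplications by positive cubes preserve the direction of the inequality. Once I invoke the assumptions $\leftintervalborder_i,\rightintervalborder_i>\bparam_i$ stated in Section~\ref{sec:operations:consumption-functions}, which guarantee positivity of every denominator that appears, the proof reduces to a short routine chain of equivalences. I would present a single representative case in full (most naturally part~1) with each equivalence displayed on its own line, and then remark that parts~2--4 follow by the identical template applied to the respective combinations of left and right subdomain boundaries.
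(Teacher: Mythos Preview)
Your proposal is correct and follows essentially the same approach as the paper: both write out the derivative $\tcinner'_i(\atime)=-2\aparam_i/(\atime-\bparam_i)^3$, perform a chain of equivalent rearrangements (cross-multiplying by positive cubes, taking the strictly monotone cube root, adding constants) to obtain the first inequality of part~1 in detail, and then defer the remaining parts to the identical template. Your explicit remarks on the constant case $\aparam_i=0$, the single sign flip, and why equality is preserved at each step are minor elaborations beyond what the paper spells out, but they do not change the argument.
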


\begin{proof}
  All equivalences follow after simple rearrangements. As an example, we show the first part of the first statement, namely, $\tcfunctionrightderivative_1(\leftintervalborder_1) \le \tcfunctionrightderivative_2(\leftintervalborder_2)$ if and only if $\firstleftboundaryterm\le\leftintervalborder_1+\leftintervalborder_2$. For $i\in\{1,2\}$, let $\aparam_i$ and $\bparam_i$ denote the coefficients of the tradeoff function $\tcinner_i$ as in Equation~\eqref{eq:tradeoff-functioninnerform}. We exploit that~$\aparam_1>0$,~$\aparam_2>0$,~$\leftintervalborder_1>\bparam_1$, and~$\leftintervalborder_2>\bparam_2$ must hold to get
  \begin{align*}
                   && \tcfunctionrightderivative_1(\leftintervalborder_1)                      &\le \tcfunctionrightderivative_2(\leftintervalborder_2)\\
   \Leftrightarrow && -\frac{2 \aparam_1}{(\leftintervalborder_1 - \bparam_1)^3} &\le -\frac{2 \aparam_2}{(\leftintervalborder_2 - \bparam_2)^3}\\
   \Leftrightarrow && (\leftintervalborder_2 - \bparam_2)^3                      &\ge \frac{2\aparam_2}{2\aparam_1}(\leftintervalborder_1 - \bparam_1)^3\\
   \Leftrightarrow && \leftintervalborder_2                                      &\ge \bparam_2 + \sqrt[3]{\frac{\aparam_2}{\aparam_1}}(\leftintervalborder_1 - \bparam_1)\\
   \Leftrightarrow && \leftintervalborder_1 + \leftintervalborder_2              &\ge \leftintervalborder_1 + \bparam_2 + \sqrt[3]{\frac{\aparam_2}{\aparam_1}}(\leftintervalborder_1 - \bparam_1) = \firstleftboundaryterm\text{.}
  \end{align*}
  All other statements follow from similar rearrangements.
\end{proof}

Together with Equations~\eqref{eq:tradeoff-cfunction-first}--\eqref{eq:tradeoff-cfunction-fourth}, Lemma~\ref{lem:tc-functionslopes} enables us to construct the desired function~$\tcfunction=\linkop(\tcfunction_1,\tcfunction_2)$, depending on the slopes (\ie, the derivatives) of $\tcfunction_1$ and $\tcfunction_2$ at their respective subdomain borders.
Exploiting that~$\leftintervalborder_1\le\rightintervalborder_1$ and $\leftintervalborder_2\le\rightintervalborder_2$ hold by definition, we obtain the function $\tcfunction$ after the following case distinction.
As in Lemma~\ref{lem:tc-functionslopes}, let $\tcinner_1$ denote the (unique) tradeoff function defining $\tcfunction_1$ and similarly, let $\tcinner_2$ be the tradeoff function defining~$\tcfunction_2$. We consider the slopes of these tradeoff functions at certain subdomain borders of $\tcfunction_1$ and~$\tcfunction_2$.
Without loss of generality, assume that $\tcinner'_1(\leftintervalborder_1)\le\tcinner'_2(\leftintervalborder_2)$ holds (the other case is symmetric). This leaves us with only three possible cases, which are presented below.

\begin{enumerate}
 \item $\tcfunctionrightderivative_1(\leftintervalborder_1) \le \tcfunctionrightderivative_2(\leftintervalborder_2) \le \tcfunctionleftderivative_1(\rightintervalborder_1) \le \tcfunctionleftderivative_2(\rightintervalborder_2)$:
 Consider the relevant subdomain borders $\leftintervalborder_1+\leftintervalborder_2$ and $\rightintervalborder_1+\rightintervalborder_2$ corresponding to the minimum and maximum driving time of the function $\tcfunction$ obtained after linking $\tcfunction_1$ and~$\tcfunction_2$.
 By the first, second, and third statement of Lemma~\ref{lem:tc-functionslopes}, we know that 
 \begin{align*}
  \firstleftboundaryterm\le\leftintervalborder_1+\leftintervalborder_2\le\secondleftboundaryterm\le\firstrightboundaryterm\le\rightintervalborder_1+\rightintervalborder_2\le\secondrightboundaryterm\text{.}  
 \end{align*}
 For arbitrary values $\atime\in[\leftintervalborder_1+\leftintervalborder_2,\secondleftboundaryterm)$, we use the fact $\atime\ge\firstleftboundaryterm$ and Equation~\eqref{eq:tradeoff-cfunction-first} to infer the inequality~$\cvalue^*_\atime\ge\leftintervalborder_1$. Similarly, the fact $\atime<\secondrightboundaryterm$ and Equation~\eqref{eq:tradeoff-cfunction-second} yield~$\cvalue^*_\atime>\atime-\rightintervalborder_2$.
 Hence, we have~$\cvalue^*_\atime\ge\max\{\leftintervalborder_1,\atime-\rightintervalborder_2\}$ and the first case of Equation~\eqref{eq:tradeoff-cfunction} does not apply.
 On the other hand, we know that both $\atime<\firstrightboundaryterm$ and $\atime<\secondleftboundaryterm$ hold, so by Equation~\eqref{eq:tradeoff-cfunction-third} and Equation~\eqref{eq:tradeoff-cfunction-fourth} we get~$\atime-\leftintervalborder_2<\cvalue^*_\atime<\rightintervalborder_1$. This means that we are in the second case of Equation~\eqref{eq:tradeoff-cfunction} and therefore, $\cfunction(\atime)=\min\{\rightintervalborder_1,\atime-\leftintervalborder_2\}=\atime-\leftintervalborder_2$ is the optimal choice for any $\atime\in[\leftintervalborder_1+\leftintervalborder_2,\secondleftboundaryterm)$.
 Making similar observations for the cases $\atime\in[\secondleftboundaryterm,\firstrightboundaryterm)$ and~$\atime\in[\firstrightboundaryterm,\rightintervalborder_1+\rightintervalborder_2)$, we obtain corresponding values $\cfunction(\atime)=\cvalue^*_\atime$ and $\cfunction(\atime)=\rightintervalborder_1$, respectively. This yields the desired function~$\tcfunction=\linkop(\tcfunction_1,\tcfunction_2)$, which is given by
 \begin{align*}
   \tcfunction(\atime) =
     \begin{cases}
       \infty                                                                                      & \mbox{if } \atime < \leftintervalborder_1 + \leftintervalborder_2,\\
       \tcfunction_1(\atime - \leftintervalborder_2) + \tcfunction_2 (\leftintervalborder_2)       & \mbox{if } \leftintervalborder_1 + \leftintervalborder_2 \le \atime < \secondleftboundaryterm,\\
       \tcfunction_1(\cvalue^*_\atime) + \tcfunction_2 (\atime - \cvalue^*_\atime)                 & \mbox{if } \secondleftboundaryterm \le \atime < \firstrightboundaryterm,\\
       \tcfunction_1(\rightintervalborder_1) + \tcfunction_2 (\atime - \rightintervalborder_1)     & \mbox{if } \firstrightboundaryterm \le \atime < \rightintervalborder_1 + \rightintervalborder_2,\\
       \tcfunction_1(\rightintervalborder_1) + \tcfunction_2(\rightintervalborder_2)               & \mbox{otherwise.}
     \end{cases}
 \end{align*}
 \item $\tcfunctionrightderivative_1(\leftintervalborder_1) \le \tcfunctionleftderivative_1(\rightintervalborder_1) \le \tcfunctionrightderivative_2(\leftintervalborder_2) \le \tcfunctionleftderivative_2(\rightintervalborder_2)$: In this case, we know by the four statements in Lemma~\ref{lem:tc-functionslopes} that the order $\firstleftboundaryterm\le\firstrightboundaryterm\le\leftintervalborder_1+\leftintervalborder_2\le\rightintervalborder_1+\rightintervalborder_2\le\secondleftboundaryterm\le\secondrightboundaryterm$ must hold.
 Equations~\eqref{eq:tradeoff-cfunction-first}--\eqref{eq:tradeoff-cfunction-fourth} yield $\cvalue^*_\atime\ge\max\{\leftintervalborder_1,\atime-\rightintervalborder_2\}$ and $\cvalue^*_\atime>\min\{\rightintervalborder_1,\atime-\leftintervalborder_2\}$ in the whole subdomain $[\leftintervalborder_1+\leftintervalborder_2,\rightintervalborder_1+\rightintervalborder_2)$.
 Consequently, we obtain the optimal value $\cfunction(\atime)=\min\{\rightintervalborder_1,\atime-\leftintervalborder_2\}$ and the function $\tcfunction=\linkop(\tcfunction_1,\tcfunction_2)$ is defined as
 \begin{align*}
   \tcfunction(\atime) =
     \begin{cases}
       \infty                                                                                      & \mbox{if } \atime < \leftintervalborder_1 + \leftintervalborder_2,\\
       \tcfunction_1(\rightintervalborder_1) + \tcfunction_2(\atime - \rightintervalborder_1)      & \mbox{if } \leftintervalborder_1 + \leftintervalborder_2 \le \atime < \rightintervalborder_1 + \leftintervalborder_2,\\
       \tcfunction_1(\atime - \leftintervalborder_2) + \tcfunction_2 (\leftintervalborder_2)       & \mbox{if } \rightintervalborder_1 + \leftintervalborder_2 \le \atime < \rightintervalborder_1 + \rightintervalborder_2,\\
       \tcfunction_1(\rightintervalborder_1) + \tcfunction_2(\rightintervalborder_2)               & \mbox{otherwise.}
     \end{cases}
 \end{align*}
 \item $\tcfunctionrightderivative_1(\leftintervalborder_1) \le \tcfunctionrightderivative_2(\leftintervalborder_2) \le \tcfunctionleftderivative_2(\rightintervalborder_2) \le \tcfunctionleftderivative_1(\rightintervalborder_1)$: Along the lines of the first case, Lemma~\ref{lem:tc-functionslopes} and Equations~\eqref{eq:tradeoff-cfunction-first}--\eqref{eq:tradeoff-cfunction-fourth} yield that the function $\tcfunction=\linkop(\tcfunction_1,\tcfunction_2)$ evaluates to
  \begin{align*}
   \tcfunction(\atime) =
     \begin{cases}
       \infty                                                                                      & \mbox{if } \atime < \leftintervalborder_1 + \leftintervalborder_2,\\
       \tcfunction_1(\atime - \leftintervalborder_2) + \tcfunction_2 (\leftintervalborder_2)       & \mbox{if } \leftintervalborder_1 + \leftintervalborder_2 \le \atime < \secondleftboundaryterm,\\
       \tcfunction_1(\cvalue^*_\atime) + \tcfunction_2 (\atime - \cvalue^*_\atime)                 & \mbox{if } \secondleftboundaryterm \le \atime < \secondrightboundaryterm,\\
       \tcfunction_1(\atime - \rightintervalborder_2) + \tcfunction_2 (\rightintervalborder_2)     & \mbox{if } \secondrightboundaryterm \le \atime < \rightintervalborder_1 + \rightintervalborder_2,\\
       \tcfunction_1(\rightintervalborder_1) + \tcfunction_2(\rightintervalborder_2)               & \mbox{otherwise.}
     \end{cases}
 \end{align*}
\end{enumerate}

Due to convexity of both $\tcinner_1$ and $\tcinner_2$, no other cases remain.
Hence, the function $\tcfunction$ constructed above is defined by at most five subfunctions (two of which are constant). In each expression, we can expand the functions $\tcfunction_1$ and~$\tcfunction_2$ to obtain a term that has the general form of a tradeoff function as in Equation~\eqref{eq:tradeoff-functioninnerform}. In particular, the denominator in the tradeoff functions of both $\tcfunction_1$ and $\tcfunction_2$ is (strictly) positive in all cases, \ie, we always have~$\atime>\bparam$.
Moreover, it is easy to verify that $\tcfunction$ is continuous and decreasing on the interval~$[\leftintervalborder_1+\leftintervalborder_2,\infty)$, by inspecting the corresponding limits at the endpoints of each subdomain of~$\tcfunction$.
In conclusion, the link operation requires constant time in the special case where both input functions are defined by a single tradeoff function.
Lemma~\ref{lem:tc-simplelink} from Section~\ref{sec:operations:linking} summarizes these insights. It is restated below.

\setcounter{lemma}{0}
\renewcommand{\thelemma}{\arabic{lemma}}

\begin{lemma}
 Given two consumption functions~$\tcfunction_1$ and $\tcfunction_2$, each defined by a single tradeoff subfunction, the link operation $\linkop(\tcfunction_1,\tcfunction_2)$ requires constant time and its result is a consumption function that is uniquely described by at most three tradeoff subfunctions.
\end{lemma}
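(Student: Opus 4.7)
My plan would be to build the result on top of Equation~\eqref{eq:tradeoff-linkedfunction}, which reduces the computation of $\tcfunction(\atime)$ to a one-dimensional minimization of $\tcfunction_\atime(\cvalue):=\tcfunction_1(\cvalue)+\tcfunction_2(\atime-\cvalue)$ over a closed interval. Since each $\tcfunction_i$ is given by a single subfunction $\aparam_i/(\cvalue-\bparam_i)^2+\cparam_i$, I would first differentiate $\tcfunction_\atime$ on the open interval $(\bparam_1,\atime-\bparam_2)$, where both denominators are strictly positive, and solve $\tcfunction_\atime'(\cvalue)=0$ explicitly to obtain the closed-form root
\[
 \cvalue^*_\atime=\frac{\atime-\bparam_2+\bparam_1\sqrt[3]{\aparam_2/\aparam_1}}{1+\sqrt[3]{\aparam_2/\aparam_1}}.
\]
Using $\aparam_1,\aparam_2\ge 0$ and convexity of each $\tcinner_i$ on $(\bparam_i,\infty)$, I would argue that $\tcfunction_\atime$ is convex on $(\bparam_1,\atime-\bparam_2)$, so $\cvalue^*_\atime$ is the unique unconstrained minimizer (with the degenerate $\aparam_i=0$ cases handled separately, since they collapse $\tcfunction_i$ to a constant and reduce the link to a simple shift along one axis).

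The admissibility constraints in Equation~\eqref{eq:tradeoff-linkedfunction} force $\cvalue\in[\max\{\leftintervalborder_1,\atime-\rightintervalborder_2\},\min\{\rightintervalborder_1,\atime-\leftintervalborder_2\}]$, so by monotonicity of $\tcfunction_\atime$ on either side of $\cvalue^*_\atime$ I would obtain Equation~\eqref{eq:tradeoff-cfunction} for the optimal selector $\cfunction(\atime)$. Next I would rewrite each of the four constraint inequalities $\cvalue^*_\atime\lessgtr\leftintervalborder_1$, $\cvalue^*_\atime\lessgtr\atime-\rightintervalborder_2$, etc., by substituting the closed form of $\cvalue^*_\atime$ and isolating $\atime$, yielding Equations~\eqref{eq:tradeoff-cfunction-first}--\eqref{eq:tradeoff-cfunction-fourth} with the four thresholds $\firstleftboundaryterm,\secondleftboundaryterm,\firstrightboundaryterm,\secondrightboundaryterm$. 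I would then invoke Lemma~\ref{lem:tc-functionslopes} to translate these thresholds into orderings of $\leftintervalborder_1+\leftintervalborder_2$, $\rightintervalborder_1+\rightintervalborder_2$, and the four slope evaluations $\tcinner_1'(\leftintervalborder_1),\tcinner_1'(\rightintervalborder_1),\tcinner_2'(\leftintervalborder_2),\tcinner_2'(\rightintervalborder_2)$. Assuming WLOG $\tcinner_1'(\leftintervalborder_1)\le\tcinner_2'(\leftintervalborder_2)$, convexity of each $\tcinner_i$ on its interval restricts us to exactly the three slope orderings listed in Appendix~\ref{app:linking:arcs}, so a short case split suffices to read off the structure of $\cfunction$ and hence of $\tcfunction$ on at most three nontrivial intervals, plus the two trivial pieces at the boundary (namely $\infty$ on $\atime<\leftintervalborder_1+\leftintervalborder_2$ and the constant $\tcfunction_1(\rightintervalborder_1)+\tcfunction_2(\rightintervalborder_2)$ on $\atime>\rightintervalborder_1+\rightintervalborder_2$).

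The main obstacle I expect is the third case, where the middle piece has the form $\tcfunction_1(\cvalue^*_\atime)+\tcfunction_2(\atime-\cvalue^*_\atime)$ with $\cvalue^*_\atime$ itself a nontrivial affine function of $\atime$; I would need to substitute the closed-form expression for $\cvalue^*_\atime$ into the sum and verify, by direct algebraic simplification, that the result can be cast as $\aparam/(\atime-\bparam)^2+\cparam$ for some constants $\aparam\ge 0$, $\bparam$, $\cparam$ with $\bparam<\atime$ on the relevant subdomain. The other two non-constant cases each amount to a pure $\atime$-shift of one of the original subfunctions, so they inherit the required form immediately. Finally, I would verify continuity and (strict) monotonicity on $[\leftintervalborder_1+\leftintervalborder_2,\infty)$ by checking limits at the interval boundaries induced by $\secondleftboundaryterm,\firstrightboundaryterm,\secondrightboundaryterm$; continuity follows because adjacent pieces agree at the boundary values of $\cfunction$, and monotonicity from the fact that each piece is either a shifted copy of a strictly decreasing $\tcfunction_i$ or the sum of two such decreasing functions evaluated at a nondecreasing argument. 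Combined with the observation that evaluating $\cvalue^*_\atime$, comparing the four slopes, and assembling the at most three subfunctions can all be done with $\bigO(1)$ arithmetic operations, this yields the claimed constant-time bound and the bound of three tradeoff subfunctions.
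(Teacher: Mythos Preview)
Your proposal is correct and follows essentially the same route as the paper's proof in Appendix~\ref{app:linking:arcs}: derive the closed-form unconstrained minimizer $\cvalue^*_\atime$, clip it via Equation~\eqref{eq:tradeoff-cfunction}, translate the clipping conditions into the four thresholds of Equations~\eqref{eq:tradeoff-cfunction-first}--\eqref{eq:tradeoff-cfunction-fourth}, order them using Lemma~\ref{lem:tc-functionslopes}, and carry out the same three-way slope case split under the WLOG assumption $\tcinner_1'(\leftintervalborder_1)\le\tcinner_2'(\leftintervalborder_2)$. You even flag the one genuinely nontrivial algebraic check---that the middle piece $\tcfunction_1(\cvalue^*_\atime)+\tcfunction_2(\atime-\cvalue^*_\atime)$ simplifies to the form of Equation~\eqref{eq:tradeoff-functioninnerform}---which the paper only asserts in passing.
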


\subsection{Proof of Lemma~\ref{lem:tradeoff-function-continuous}}\label{app:linking:lemma}

Lemma~\ref{lem:tradeoff-function-continuous} from Section~\ref{sec:operations:linking} claims that linking two consumption functions results in a function that is continuous on its interval of admissible driving time. The proof is given below.
\begin{lemma}
 Given two consumption functions $\tcfunction_1$ and~$\tcfunction_2$, the function $\tcfunction:=\linkop(\tcfunction_1,\tcfunction_2)$ is continuous on the interval~$[\leftintervalborder,\infty)$, where $\leftintervalborder\in\strictposreals$ denotes the minimum driving time of $\tcfunction$.
\end{lemma}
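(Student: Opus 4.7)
The plan is to exploit the representation of $\tcfunction$ derived in the preceding paragraphs: $\tcfunction$ arises as the lower envelope of a finite family of candidate consumption functions, one for each pair $(\tcinner_1^i,\tcinner_2^j)$ of tradeoff subfunctions of $\tcfunction_1$ and $\tcfunction_2$, each obtained by applying the constant-time link of Lemma~\ref{lem:tc-simplelink} to the induced consumption functions $\tcfunction_{\tcinner_1^i}$ and $\tcfunction_{\tcinner_2^j}$. Lemma~\ref{lem:tc-simplelink} guarantees that every such candidate is itself a consumption function, so on its interval $[\leftintervalborder_{\tcinner_1^i}+\leftintervalborder_{\tcinner_2^j},\infty)$ of admissible driving time it is continuous, and below that interval it equals $\infty$. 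I would then establish right- and left-continuity of $\tcfunction$ separately at every $\atime_0\in[\leftintervalborder,\infty)$.

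For right-continuity at $\atime_0$, I would argue by contradiction. Since $\tcfunction$ is decreasing and bounded below by the minimum of the candidates' constant tails, $\lim_{\atime\to\atime_0^+}\tcfunction(\atime)$ exists and is at most $\tcfunction(\atime_0)$. A strict inequality would yield some $\varepsilon>0$ and a sequence $\atime_k\to\atime_0^+$ with $\tcfunction(\atime_k)<\tcfunction(\atime_0)-\varepsilon$. Since there are only finitely many candidates, pigeonhole would provide a single candidate $c$ whose values dip below $\tcfunction(\atime_0)-\varepsilon$ along a subsequence. If $c$ is finite at $\atime_0$, continuity of $c$ at $\atime_0$ would force $\tcfunction_c(\atime_0)<\tcfunction(\atime_0)$, contradicting the definition of $\tcfunction$ as the pointwise minimum of the candidates. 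Otherwise $\leftintervalborder_c>\atime_0$, and then $\tcfunction_c(\atime_k)=\infty$ for all $\atime_k<\leftintervalborder_c$, again contradicting the bound.

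For left-continuity at $\atime_0>\leftintervalborder$, I would pick a candidate $c^*=(\tcinner_1^a,\tcinner_2^b)$ attaining $\tcfunction(\atime_0)$ and distinguish two cases. If $c^*$ is finite on a left-neighborhood of $\atime_0$, a mirror of the right-continuity argument closes the case. Otherwise $\leftintervalborder_{c^*}=\atime_0>\leftintervalborder_1+\leftintervalborder_2$, which forces at least one of $\tcinner_1^a$, $\tcinner_2^b$ to have a predecessor subfunction in its parent consumption function; say $\tcinner_1^a$ has a left neighbor $\tcinner_1^{a-1}$ with $\rightintervalborder_{\tcinner_1^{a-1}}=\leftintervalborder_{\tcinner_1^a}$. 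Continuity of $\tcfunction_1$ across this breakpoint yields $\tcinner_1^{a-1}(\rightintervalborder_{\tcinner_1^{a-1}})=\tcinner_1^a(\leftintervalborder_{\tcinner_1^a})$, so plugging the feasible split $\cvalue=\rightintervalborder_{\tcinner_1^{a-1}}$ into Equation~\eqref{eq:tradeoff-linkedfunction} for the candidate $(\tcinner_1^{a-1},\tcinner_2^b)$ at $\atime=\atime_0$ yields $\tcfunction_{(\tcinner_1^{a-1},\tcinner_2^b)}(\atime_0)\le\tcfunction_{c^*}(\atime_0)=\tcfunction(\atime_0)$. Hence this candidate also attains the minimum at $\atime_0$ and is finite on a two-sided neighborhood, reducing to the first case.

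The main obstacle will be this last reduction: showing that whenever a candidate first becomes finite at some point $\atime_0$, another candidate that is already finite just below $\atime_0$ matches its value at $\atime_0$. Without this, the lower envelope could in principle drop discontinuously from the left when a newly appearing candidate takes a value strictly below every preceding one. Establishing the dominance requires the continuity of $\tcfunction_1$ and $\tcfunction_2$ at the boundaries of their subdomains together with the explicit expression for each candidate's value at its own left boundary, as encoded in Equation~\eqref{eq:tradeoff-linkedfunction}. Once this dominance is in place, continuity of $\tcfunction$ on $[\leftintervalborder,\infty)$ follows from continuity of finitely many candidates in any small neighborhood of the point in question.
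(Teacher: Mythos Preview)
Your proposal is correct and rests on the same core idea as the paper's proof: a discontinuity of $\tcfunction$ at some $\atime_0>\leftintervalborder$ can only arise when the minimizing candidate first becomes finite at $\atime_0$, and then continuity of $\tcfunction_1$ (or $\tcfunction_2$) across the relevant subfunction boundary supplies a feasible split at $\atime_0-\varepsilon$ with value arbitrarily close to $\tcfunction(\atime_0)$, contradicting the jump. The paper carries this out more directly by perturbing the split $(\leftintervalborder^*,\atime_0-\leftintervalborder^*)$ itself via the continuity of $\tcfunction_1$ near $\leftintervalborder^*$ (or of $\tcfunction_2$ near $\atime_0-\leftintervalborder^*$), rather than routing through a neighboring candidate pair as you do, but the two arguments are essentially interchangeable.
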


\begin{proof}
Assume for contradiction that $\tcfunction$ has a discontinuity at some~$\atime>\leftintervalborder$. By construction of~$\tcfunction$, a discontinuity in the interval $[\leftintervalborder,\infty)$ corresponds to a discontinuity of some candidate function $\tcfunction^*$ induced by a tradeoff subfunction of $\tcfunction_1$ or~$\tcfunction_2$ (see Section~\ref{sec:operations:linking}). Without loss of generality, let $\tcfunction^*$ be a subfunction of~$\tcfunction_1$. We know that $\tcfunction^*$ has exactly one discontinuity at its minimum driving time~$\leftintervalborder^*\in\strictposreals$. Thus, we have~$\tcfunction(\atime)=\tcfunction_1(\leftintervalborder^*)+\tcfunction_2(\atime-\leftintervalborder^*)$. We distinguish two cases.

\case{Case 1} If $\leftintervalborder^*$ is not the minimum driving time of~$\tcfunction_1$, we know that $\tcfunction_1$ is continuous in the neighborhood of~$\leftintervalborder^*$. Since $\tcfunction$ is decreasing and has a discontinuity at~$\atime$, there exists an $\varepsilon>0$ such that~$\tcfunction_1(\leftintervalborder^*-\varepsilon)+\tcfunction_2(\atime-\leftintervalborder^*)<\tcfunction(\atime-\varepsilon)$, contradicting the fact that $\tcfunction$ minimizes this term.

\case{Case 2} If $\leftintervalborder^*$ is in fact the minimum driving time of~$\tcfunction_1$, we know that the corresponding driving time $\atime-\leftintervalborder^*$ of $\tcfunction_2$ must exceed its minimum driving time~$\leftintervalborder_2$, since~$\atime>\leftintervalborder=\leftintervalborder^*+\leftintervalborder_2$. Hence, $\tcfunction_2$ is continuous in the neighborhood of~$\atime-\leftintervalborder^*$ and along the lines of the first case we obtain a contradiction, because now $\tcfunction_1(\leftintervalborder^*)+\tcfunction_2(\atime-\leftintervalborder^*-\varepsilon)<\tcfunction(\atime-\varepsilon)$ must hold for some~$\varepsilon>0$.
\end{proof}

\subsection{Linking General Consumption Functions in Linear Time}\label{app:linking:linear-time}

In Section~\ref{sec:operations:linking}, we described a na\"{i}ve link operation, which has quadratic running time in the number of subfunctions of $\tcfunction_1$ and~$\tcfunction_2$. In what follows, we show how the complexity of the link operation for general consumption functions can be reduced to linear time.
We say that a consumption function $\tcfunction$ with minimum driving time $\leftintervalborder\in\strictposreals$ is \emph{convex} if it is convex on the interval $[\leftintervalborder,\infty)$. Note that this holds true for consumption functions of single arcs; see Section~\ref{sec:model}.
Our more sophisticated link operation exploits the fact that consumption functions are always convex, which we now prove formally.
Consider the $\cvalue$-function~$\cfunction$ of~$\tcfunction=\linkop(\tcfunction_1,\tcfunction_2)$, defined as the optimal choice of $\cvalue$ in Equation~\eqref{eq:tradeoff-linkedfunction}. See Figure~\ref{fig:tradeoff-c-function} for an example. (Note that we did not formally prove that the value $\cvalue$ is distinct in the general case, but we may as well pick the \emph{minimum} value $\cvalue$ that fulfills Equation~\eqref{eq:tradeoff-linkedfunction} to ensure that $\cfunction$ is well-defined.)
Presuming that $\tcfunction_1$ and $\tcfunction_2$ are convex, the following Lemma~\ref{lem:tradeoff-increasingcfunction} shows that both $\cfunction$ and the value $\atime-\cfunction(\atime)$ increase \wrt~$\atime\in[\leftintervalborder_1+\leftintervalborder_2,\rightintervalborder_1+\rightintervalborder_2]$.

\begin{figure}[t]
 \centering%
 \ConstrainedExampleDrawRealisticTradeoffFunctionLinkedFunctionWithCFunctions
 \caption{The functions $\cfunction$ and~$\atime-\cfunction$~(red), indicating how the available time is shared between the functions $\tcfunction_1$ and~$\tcfunction_2$ from Figure~\ref{fig:tradeoff-example-linking}, in order to obtain the function~$\tcfunction$.}%
\label{fig:tradeoff-c-function}%
\end{figure}

\setcounter{lemma}{1}
\renewcommand{\thelemma}{\Alph{section}-\arabic{lemma}}

\begin{lemma}
 \label{lem:tradeoff-increasingcfunction}
 Let~$\cfunction\colon[\leftintervalborder_1+\leftintervalborder_2,\rightintervalborder_1+\rightintervalborder_2]\to\posreals$ denote the $\cvalue$-function for two convex consumption functions~$\tcfunction_1$ and~$\tcfunction_2$ with corresponding minimum and maximum driving times~$\leftintervalborder_1$,~$\rightintervalborder_1$,~$\leftintervalborder_2$, and~$\rightintervalborder_2$. Moreover, let $\cfunctionsecond(\atime):=\atime-\cfunction(\atime)$ be defined on the domain of~$\cfunction$. Then both $\cfunction$ and $\cfunctionsecond$ are continuous and increasing.
\end{lemma}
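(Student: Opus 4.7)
The plan is to prove monotonicity of $\cfunction$ first via an exchange argument driven by convexity, then derive monotonicity of $\cfunctionsecond$ either by symmetry (swapping the roles of $\tcfunction_1$ and $\tcfunction_2$) or by the same argument, and finally to deduce continuity of both functions essentially for free from the identity $\cfunction(\atime) + \cfunctionsecond(\atime) = \atime$. The main obstacle I anticipate is handling ties in the minimization defining $\cfunction$: when $\tcfunction_1$ or $\tcfunction_2$ has a linear piece, several values of~$\cvalue$ may be optimal, and I will have to commit to a tie-breaking convention (picking the smallest optimal $\cvalue$, say) to make $\cfunction$ well-defined; otherwise the statement is only true for any optimal selection.

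For the monotonicity step, I would pick $\atime \le \atime'$ in the common domain and suppose for contradiction that $\cfunction(\atime') < \cfunction(\atime)$. Writing $a = \cfunction(\atime)$, $a' = \cfunction(\atime')$, $b = \atime - a$, $b' = \atime' - a'$, and $\delta = a - a' > 0$, the splits $(a', \atime - a') = (a', b + \delta)$ for total time~$\atime$ and $(a, \atime' - a) = (a, b' - \delta)$ for total time~$\atime'$ are feasible, so optimality of $\cfunction(\atime)$ and $\cfunction(\atime')$ gives
\begin{align*}
 \tcfunction_1(a) + \tcfunction_2(b)   &\le \tcfunction_1(a') + \tcfunction_2(b+\delta),\\
 \tcfunction_1(a') + \tcfunction_2(b') &\le \tcfunction_1(a)  + \tcfunction_2(b'-\delta).
\end{align*}
Adding and cancelling the $\tcfunction_1$ terms yields $\tcfunction_2(b) + \tcfunction_2(b') \le \tcfunction_2(b+\delta) + \tcfunction_2(b'-\delta)$. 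Since $b < b'$ (because $\atime' > \atime$ and $a' < a$) and $(b+\delta) + (b'-\delta) = b + b'$ with $\{b+\delta, b'-\delta\}$ lying in the interval $[b, b']$, convexity of $\tcfunction_2$ on $[\leftintervalborder_2,\infty)$ gives the reverse inequality $\tcfunction_2(b+\delta) + \tcfunction_2(b'-\delta) \le \tcfunction_2(b) + \tcfunction_2(b')$. Hence equality holds throughout, meaning $(a', b+\delta)$ is also optimal for~$\atime$; but $a' < a = \cfunction(\atime)$ contradicts the minimality convention defining~$\cfunction(\atime)$. The very same argument with the roles of the two arcs exchanged shows that $\cfunctionsecond$ is increasing.

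For continuity, once monotonicity is in hand the identity $\cfunction(\atime) + \cfunctionsecond(\atime) = \atime$ does almost all the work. If $\cfunction$ had a jump discontinuity at some $\atime_0 \in (\leftintervalborder_1 + \leftintervalborder_2, \rightintervalborder_1 + \rightintervalborder_2)$, then the one-sided limits $\lim_{\atime \uparrow \atime_0}\cfunction(\atime)$ and $\lim_{\atime \downarrow \atime_0}\cfunction(\atime)$ would differ by some $\eta > 0$ (these limits exist because $\cfunction$ is monotone). Combined with continuity of the identity $\atime \mapsto \atime$, this would force a jump of size $-\eta$ in $\cfunctionsecond$ at~$\atime_0$, contradicting its monotonicity. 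The same argument handles the endpoints of the domain with one-sided limits, so both $\cfunction$ and $\cfunctionsecond$ must be continuous everywhere on $[\leftintervalborder_1+\leftintervalborder_2, \rightintervalborder_1+\rightintervalborder_2]$.
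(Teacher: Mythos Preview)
Your proposal is correct and follows the same strategy as the paper: an exchange argument exploiting convexity for monotonicity of $\cfunction$ and $\cfunctionsecond$, then the identity $\cfunction(\atime)+\cfunctionsecond(\atime)=\atime$ for continuity (your four-point inequality $f(y)+f(z)\le f(x)+f(w)$ and the paper's ``$f(z)-f(z+\delta)$ is decreasing in $z$'' are equivalent formulations). One caution on the step for $\cfunctionsecond$: a \emph{literal} symmetry appeal (apply the result to the swapped pair $(\tcfunction_2,\tcfunction_1)$) proves monotonicity of $\atime\mapsto\atime-\cfunction^{\max}(\atime)$ rather than of $\cfunctionsecond=\atime-\cfunction^{\min}(\atime)$, because the tie-breaking convention flips under the swap; re-running the exchange directly (cancelling the $\tcfunction_2$ terms, using convexity of $\tcfunction_1$, and drawing the contradiction from the optimality inequality at $\atime'$) avoids this, and is exactly what both your plan's ``or by the same argument'' and the paper do.
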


\begin{proof}
 We begin by showing that $\cfunction$ is increasing. Assume for contradiction that this is not the case, \ie, for some value $\atime$ in the domain of~$\cfunction$, there are values~$\varepsilon>0$ and~$\delta>0$ such that $\cvalue=\cfunction(\atime)>\cfunction(\atime+\varepsilon)=\cvalue-\delta$.
 First of all, note that the inequality $\tcfunction_1(\cvalue)+\tcfunction_2(\atime-\cvalue)\le\tcfunction_1(\cvalue-\delta)+\tcfunction_2(\atime-\cvalue+\delta)$ must hold, since $\cvalue$ minimizes this term by definition of~$\cfunction$. Further, $\cvalue=\cfunction(\atime)$ is the \emph{smallest} among all values that minimize the term by definition, so plugging in $\cvalue-\delta<\cvalue$ actually yields a \emph{strictly} greater result.
 Analogously, we have $\tcfunction_1(\cvalue-\delta)+\tcfunction_2(\atime+\varepsilon-\cvalue+\delta)\le\tcfunction_1(\cvalue)+\tcfunction_2(\atime+\varepsilon-\cvalue)$, as this term is minimized by~$\cvalue-\delta$. Therefore, we obtain
 \begin{align*}
  \tcfunction_1(\cvalue - \delta) - \tcfunction_1(\cvalue) &\le \tcfunction_2(\atime + \varepsilon - \cvalue) - \tcfunction_2(\atime + \varepsilon - \cvalue + \delta)\\
  &\le \tcfunction_2(\atime - \cvalue) - \tcfunction_2(\atime - \cvalue + \delta)\\
  &< \tcfunction_1(\cvalue - \delta) - \tcfunction_1(\cvalue),
 \end{align*}
 a contradiction. Here, we exploit the fact that $\tcfunction_2$ is convex and decreasing and hence, $\tcfunction_2(\atime-\cvalue)-\tcfunction_2(\atime-\cvalue+\delta)$ must be decreasing \wrt $\atime$ for fixed values $\cvalue$ and~$\delta$ (the gap between two function values with constant difference on the x-axis must decrease if $\atime$ increases).

 Regarding $\cfunctionsecond$, monotonicity follows from a very similar argument. As before, assume for contradiction that $\cfunctionsecond(\atime)>\cfunctionsecond(\atime+\varepsilon)$ for some~$\varepsilon>0$, so the inequality $\atime-\cfunction(\atime)>\atime+\varepsilon-\cfunction(\atime+\varepsilon)$ holds. We plug in $\cvalue=\cfunction(\atime)$ and $\cvalue+\delta=\cfunction(\atime+\varepsilon)$ to
 obtain $\delta>\varepsilon>0$.
 As in the first case, we get $\tcfunction_1(\cvalue)+\tcfunction_2(\atime-\cvalue)\le\tcfunction_1(\cvalue+\delta)+\tcfunction_2(\atime-\cvalue-\delta)$ and $\tcfunction_1(\cvalue+\delta)+\tcfunction_2(\atime+\varepsilon-\cvalue-\delta)<\tcfunction_1(\cvalue)+\tcfunction_2(\atime+\varepsilon-\cvalue)$ by the definition of $\cfunction$.
 Along the lines of the first case, this yields a contradiction.
 
 Finally, we show that $\cfunction$ and $\cfunctionsecond$ are continuous. Since the sum of $\cfunction(\atime)$ and $\atime-\cfunction(\atime)$ always equals~$\atime$, their sum increases by exactly $\varepsilon$ for any $\atime+\varepsilon$ with $\varepsilon>0$ in the neighborhood of~$\atime$. Hence, a discontinuity in either function would imply that at least one of the two terms must decrease, but we showed before that both $\cfunction(\atime)$ and $\atime-\cfunction(\atime)$ increase \wrt~$\atime$.
\end{proof}

Note that in order to show that a consumption function $\tcfunction$ with minimum driving time $\leftintervalborder$ and maximum driving time $\rightintervalborder$ is convex, it is sufficient to consider the interval~$[\leftintervalborder,\rightintervalborder)$, since we already know that $\tcfunction$ is decreasing and that for all $\atime\in[\rightintervalborder,\infty)$, the value $\tcfunction(\atime)=\tcfunction(\rightintervalborder)$ is constant.
The following Lemma~\ref{lem:tradeoff-convexity} proves this property and hence, implies that linking two convex consumption functions indeed yields a decreasing and convex function. Consequently, Lemma~\ref{lem:tradeoff-increasingcfunction} applies to all consumption functions of the general form.

\begin{lemma}
 \label{lem:tradeoff-convexity}
 Given two consumption functions $\tcfunction_1$ and~$\tcfunction_2$ that are convex on their subdomains $[\leftintervalborder_1,\rightintervalborder_1]$ and $[\leftintervalborder_2,\rightintervalborder_2]$ of admissible driving times, the function $\tcfunction:=\linkop(\tcfunction_1,\tcfunction_2)$ is convex on the interval $[\leftintervalborder_1+\leftintervalborder_2,\rightintervalborder_1+\rightintervalborder_2)$.
\end{lemma}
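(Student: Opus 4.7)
The claim is precisely that infimal convolution preserves convexity, a standard fact from convex analysis. I would adapt the classical argument to our concrete setting. Fix two driving times $\atime, \atime' \in [\leftintervalborder_1+\leftintervalborder_2, \rightintervalborder_1+\rightintervalborder_2)$ and a coefficient $\lambda \in [0,1]$; write $\atime'' := \lambda \atime + (1-\lambda)\atime'$. Pick optimal splits $\cvalue := \cfunction(\atime) \in [\leftintervalborder_1,\rightintervalborder_1]$ and $\cvalue' := \cfunction(\atime') \in [\leftintervalborder_1,\rightintervalborder_1]$ so that
\[
  \tcfunction(\atime) = \tcfunction_1(\cvalue) + \tcfunction_2(\atime - \cvalue), \qquad
  \tcfunction(\atime') = \tcfunction_1(\cvalue') + \tcfunction_2(\atime' - \cvalue'),
\]
with $\atime - \cvalue, \atime' - \cvalue' \in [\leftintervalborder_2, \rightintervalborder_2]$ by the constraints in Equation~\eqref{eq:tradeoff-linkedfunction}.

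The plan is to use the convex combination $\cvalue'' := \lambda \cvalue + (1-\lambda)\cvalue'$ as a (possibly suboptimal) candidate split for $\atime''$. First I would verify that $\cvalue''$ is admissible: since $[\leftintervalborder_1,\rightintervalborder_1]$ is convex, $\cvalue'' \in [\leftintervalborder_1,\rightintervalborder_1]$, and likewise $\atime'' - \cvalue'' = \lambda(\atime - \cvalue) + (1-\lambda)(\atime' - \cvalue')$ lies in $[\leftintervalborder_2,\rightintervalborder_2]$. Applying the convexity of $\tcfunction_1$ on $[\leftintervalborder_1,\rightintervalborder_1]$ and of $\tcfunction_2$ on $[\leftintervalborder_2,\rightintervalborder_2]$ yields
\[
  \tcfunction_1(\cvalue'') \le \lambda \tcfunction_1(\cvalue) + (1-\lambda)\tcfunction_1(\cvalue'),
\]
\[
  \tcfunction_2(\atime'' - \cvalue'') \le \lambda \tcfunction_2(\atime - \cvalue) + (1-\lambda)\tcfunction_2(\atime' - \cvalue').
\]
Summing and using that $\tcfunction(\atime'')$ is the minimum of $\tcfunction_1(\cdot) + \tcfunction_2(\atime'' - \cdot)$ over admissible splits gives
\[
  \tcfunction(\atime'') \le \tcfunction_1(\cvalue'') + \tcfunction_2(\atime'' - \cvalue'') \le \lambda \tcfunction(\atime) + (1-\lambda)\tcfunction(\atime'),
\]
which is the definition of convexity on $[\leftintervalborder_1+\leftintervalborder_2, \rightintervalborder_1+\rightintervalborder_2)$.

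The only subtlety I anticipate is bookkeeping the two admissibility constraints simultaneously, but both reduce to the convexity of the respective subdomain intervals, so the argument is clean. Note that the proof does not use the particular form of Equation~\eqref{eq:tradeoff-functioninnerform}; it only uses that $\tcfunction_1$ and $\tcfunction_2$ are convex on the closed interval where their image is finite, together with the min-form of $\tcfunction$ in Equation~\eqref{eq:tradeoff-linkedfunction}. This matches exactly what is needed for the linear-time link operation in Appendix~\ref{app:linking:linear-time}, since it shows inductively that every consumption function produced by iterated linking remains convex on its subdomain of admissible driving times.
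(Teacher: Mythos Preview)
Your argument is correct and is in fact the standard infimal-convolution proof: pick optimal splits at the two endpoints, take their convex combination as an admissible (but possibly suboptimal) split at the intermediate point, and apply convexity of $\tcfunction_1$ and $\tcfunction_2$ coordinatewise. The admissibility check is exactly as you describe, and the minimum in Equation~\eqref{eq:tradeoff-linkedfunction} is attained because $\tcfunction_1$ and $\tcfunction_2$ are continuous on their compact admissible intervals.

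The paper takes a different route. It first establishes Lemma~\ref{lem:tradeoff-increasingcfunction} (that $\cfunction$ and $\cfunctionsecond$ are increasing) and then argues by contradiction via derivatives: if $\tcfunction'$ were not increasing, one could compare $\tcfunction'(\atime)$ with $\tcfunction'(\atime+\varepsilon)$ through the derivatives of $\tcfunction_1$ and $\tcfunction_2$ at the corresponding optimal splits, and monotonicity of $\cfunction$ together with convexity of the inputs forces $\tcfunction'(\atime+\varepsilon)\ge\tcfunction'(\atime)$. Your proof is more elementary in that it avoids derivatives entirely and does not depend on Lemma~\ref{lem:tradeoff-increasingcfunction}; it works directly from the definition of convexity and the variational form of the link operation. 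The paper's approach, on the other hand, ties the convexity result more explicitly to the structure of $\cfunction$ that is already needed for the linear-time linking in Appendix~\ref{app:linking:linear-time}, so both lemmas are developed in tandem. Either argument suffices; yours is shorter and self-contained.
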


\begin{proof}
 Assume for the sake of contradiction that $\tcfunction=\linkop(\tcfunction_1,\tcfunction_2)$ is not convex on the indicated interval. We use the previous Lemma~\ref{lem:tradeoff-increasingcfunction}, which implies that the $\cvalue$-function \wrt $\tcfunction_1$ and $\tcfunction_2$ is increasing. Moreover, both (right) derivatives $\tcfunction'_1$ and $\tcfunction'_2$ are increasing on their respective subdomains $[\leftintervalborder_1,\rightintervalborder_1)$ and $[\leftintervalborder_2,\rightintervalborder_2)$ by assumption, as $\tcfunction_1$ and $\tcfunction_2$ are decreasing and convex. Given that $\tcfunction$ is not convex, its (right) derivative $\tcfunction'$ must be decreasing on some subinterval of~$[\leftintervalborder_1+\leftintervalborder_2,\rightintervalborder_1+\rightintervalborder_2)$. Thus, there exist values $\atime\in[\leftintervalborder_1+\leftintervalborder_2,\rightintervalborder_1+\rightintervalborder_2)$ and $\varepsilon>0$ such that $\atime+\varepsilon<\rightintervalborder_1+\rightintervalborder_2$ and we get
 \begin{align*}
  \tcfunction'(\atime) &> \tcfunction'(\atime + \varepsilon)\\
   &= \tcfunction'_1(\cfunction(\atime + \varepsilon)) + \tcfunction'_2(\atime + \varepsilon - \cfunction(\atime + \varepsilon))\\
   &\ge \tcfunction'_1(\cfunction(\atime)) + \tcfunction'_2(\atime - \cfunction(\atime))\\
   &= \tcfunction'(\atime),
 \end{align*}
 a contradiction. This completes the proof.
\end{proof}

\paragraph{A Linear-Time Algorithm for Linking Consumption Functions.}
Given that the functions $\cfunction$ and $\cfunctionsecond$ (as defined in Lemma~\ref{lem:tradeoff-increasingcfunction}) of an arbitrary pair of consumption functions are continuous and increasing, we are able to perform the link operation in a single coordinated linear scan, in which we keep track of~$\cfunction$ and~$\cfunctionsecond$.
For two piecewise functions~$\tcfunction_1$ and~$\tcfunction_2$, let $\tcinner_1^1,\dots,\tcinner_1^k$ and $\tcinner_2^1,\dots,\tcinner_2^\ell$ denote their defining tradeoff functions, given in increasing order of their subdomains. For some subfunction~$\tcinner_1^{\smash{j}}$ with $j\in\{1,\dots,k\}$ or $\tcinner_2^{\smash{j}}$ with~$j\in\{1,\dots,\ell\}$, we denote by $[\leftintervalborder_i^{\smash{j}},\rightintervalborder_i^{\smash{j}})$ its subdomain and by $\tcfunction_i^{\smash{j}}$ its induced consumption function.
The linear-time link operation proceeds as follows. First, it links the consumption functions $\tcfunction_1^1$ and $\tcfunction_2^1$ induced by the two tradeoff functions $\tcinner_1^1$ and $\tcinner_2^1$ with least admissible driving times.
This results in a new convex consumption function $\linkop(\tcfunction_1^1,\tcfunction_2^1)$, which is defined by at most three tradeoff functions. Let $\cfunction$ and $\cfunctionsecond$ be the $\cvalue$-functions associated with this link operation.
We determine the next pair of consumption functions that are linked. To this end, we consider the points $\atime_1^1:=\cfunction^{-1}(\rightintervalborder_1^1)$ and $\atime_2^1:=\cfunctionsecond^{-1}(\rightintervalborder_2^1)$ at which the induced consumption functions $\tcfunction_1^1$ and $\tcfunction_2^1$ reach their maximum driving time in the linked function.
If~$\atime_1^1<\atime_2^1$, we continue with $\linkop(\tcfunction_1^2,\tcfunction_2^1)$, so $\cfunction$ can take values greater than $\rightintervalborder_1^1$. Similarly, if $\atime_1^1>\atime_2^1$ holds, we compute $\linkop(\tcfunction_1^1,\tcfunction_2^2)$ next, so that $\cfunctionsecond$ may exceed~$\rightintervalborder_2^1$. Finally, in the special case~$\atime_1^1=\atime_2^1$ we proceed with~$\linkop(\tcfunction_1^2,\tcfunction_2^2)$.
We continue this procedure until we reach the maximum driving time and link the consumption functions induced by $\tcinner_1^k$ and~$\tcinner_2^\ell$. The lower envelope of the (linear number of) computed consumption functions yields the desired result~$\linkop(\consumptionfunction_1,\consumptionfunction_2)$. Obviously, the running time of this procedure is in~$\bigO(k+\ell)$.

\subsection{Omitted Details from Section~\ref{sec:operations:dominance}}\label{app:linking:dominance}

The unique extreme point of the difference $\tcinner_1^i-\tcinner_2^{\smash{j}}$ of the subfunctions $\tcinner_1^i$ and $\tcinner_2^{\smash{j}}$ of two consumption functions (\cf Section~\ref{sec:operations:dominance}) is given by
\begin{align*}
 \atime=\frac{\bparam_2-\bparam_1\sqrt[3]{\frac{\aparam_2}{\aparam_1}}}{1-\sqrt[3]{\frac{\aparam_2}{\aparam_1}}}\text{.}
\end{align*}

\section{Omitted Details and Proofs from Section~\ref{sec:approach}}\label{app:approach}

Below, Appendix~\ref{app:approach:paths} mentions technical details from Section~\ref{sec:approach:paths}, whereas Appendix~\ref{app:approach:lemma} proves Lemma~\ref{lem:poly-label-size-heuristic} from Section~\ref{sec:approach:heuristic}.

\subsection{Omitted Details from Section~\ref{sec:approach:paths}}\label{app:approach:paths}

As mentioned in Section~\ref{sec:approach:paths}, we store a function $\cfunctionsecond$ with each label~$\tcfunction$ to enable path and speed retrieval after the search has terminated. The function $\cfunctionsecond$ is given as $\cfunctionsecond(\atime)=\atime-\cfunction(\atime)$, \wrt the link operation that resulted in the label~$\tcfunction$, and it is a piecewise function with subfunctions that have the general form
\begin{align*}
 \cfunctionsecond(\atime)=\atime-\frac{\atime-\pparam}{\qparam}\text{,}
\end{align*}
with nonnegative coefficients $\pparam\in\posreals$ and~$\qparam\in\strictposreals$; \cf Equation~\eqref{eq:tradeoff-cfunction} in Section~\ref{sec:operations:linking}.

\subsection{Proof of Lemma~\ref{lem:poly-label-size-heuristic}}\label{app:approach:lemma}

Lemma~\ref{lem:poly-label-size-heuristic} proves that the number of settled labels per vertex of the heuristic approach described in Section~\ref{sec:approach:heuristic} is upper bounded by~$\lceil1/\varepsilon\rceil+1$, which implies that the algorithm itself runs in polynomial time.

\setcounter{lemma}{2}
\renewcommand{\thelemma}{\arabic{lemma}}

\begin{lemma}
 The number of settled labels contained in the set $\labelset_\text{set}(\vertex)$ of each vertex $\vertex\in\vertices$ is at most $\lceil1/\varepsilon\rceil+1$ when running \gls*{tfp} with exact improved dominance checks.
\end{lemma}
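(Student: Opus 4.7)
The plan is to use the fact that \gls*{tfp} is label-setting, so settled labels at a vertex are extracted in nondecreasing order of the key (minimum driving time), and then to track how much the consumption values at those minimum driving times must differ in order to survive the $\varepsilon$-dominance check. Since admissible consumption values at a vertex lie in $[0,\maxbattery]$, this will yield the claimed bound by a telescoping/pigeonhole argument.

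First I would enumerate the settled labels in the order they are extracted as $\tcfunction_1,\dots,\tcfunction_N$, with associated minimum driving times $\leftintervalborder_1 \le \leftintervalborder_2 \le \cdots \le \leftintervalborder_N$, and define $c_k := \tcfunction_k(\leftintervalborder_k)$. Lines~\ref{line:tfp:cropleftborder}--\ref{line:tfp:croprightborder} of Algorithm~\ref{alg:tfp} force $c_k \in [0,\maxbattery]$ for every settled label. At the moment $\tcfunction_k$ is extracted, the algorithm's invariant (the current minimum element of $\labelset_\text{uns}(\vertex)$ is not $\varepsilon$-dominated by any element of $\labelset_\text{set}(\vertex)$) combined with the exact improved dominance check applied at the boundary point $\leftintervalborder_k$ yields
\[
   \tcfunction_j(\leftintervalborder_k) \ge \tcfunction_k(\leftintervalborder_k) + \varepsilon\maxbattery \qquad \text{for every } j<k.
\]

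Specialising to $j=k-1$ and using that $\tcfunction_{k-1}$ is decreasing on its admissible subdomain (Section~\ref{sec:operations:consumption-functions}) together with $\leftintervalborder_k \ge \leftintervalborder_{k-1}$, I obtain $\tcfunction_{k-1}(\leftintervalborder_k) \le \tcfunction_{k-1}(\leftintervalborder_{k-1}) = c_{k-1}$, so $c_{k-1} \ge c_k + \varepsilon\maxbattery$. Iterating this recurrence gives $c_1 \ge c_N + (N-1)\varepsilon\maxbattery$, and combining with $c_1 \le \maxbattery$ and $c_N \ge 0$ produces $(N-1)\varepsilon \le 1$, hence $N \le \lceil 1/\varepsilon \rceil + 1$.

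The main obstacle is justifying the key inequality at the boundary point $\atime = \leftintervalborder_k$. Inside the surviving subdomain of $\tcfunction_k$ the strict inequality $\tcfunction_j(\atime) > \tcfunction_k(\atime) + \varepsilon\maxbattery$ holds by the very definition of survival, but whether it survives strictly \emph{at} the endpoint depends on the precise convention of the exact improved dominance check. By continuity of the underlying consumption functions (Lemma~\ref{lem:tradeoff-function-continuous}), however, taking the limit from the interior of the surviving subdomain yields at least the non-strict inequality, which is what the telescoping argument needs. A secondary point to verify is that, at the moment of extraction of $\tcfunction_k$, the dominance check has been performed against \emph{all} of $\tcfunction_1,\ldots,\tcfunction_{k-1}$ — this follows from the algorithmic rule that whenever the minimum element of $\labelset_\text{uns}(\vertex)$ changes (including after each extraction, which enlarges $\labelset_\text{set}(\vertex)$), the new minimum is re-checked against the current settled set.
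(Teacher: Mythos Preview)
Your argument is correct and follows essentially the same route as the paper's proof: order the settled labels by extraction time (equivalently, by minimum driving time since the search is label-setting), use the $\varepsilon$-dominance survival condition at each label's left endpoint together with monotonicity of consumption functions to obtain $\tcfunction_{k-1}(\leftintervalborder_{k-1}) \ge \tcfunction_k(\leftintervalborder_k) + \varepsilon\maxbattery$, and telescope within the range $[0,\maxbattery]$. Your treatment is in fact slightly more careful than the paper's---you explicitly justify the non-strict inequality at the boundary via continuity (Lemma~\ref{lem:tradeoff-function-continuous}) and argue why the invariant guarantees the check has been run against the full settled set at extraction time, whereas the paper simply asserts the key inequality.
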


\begin{proof}
 Let $k:=\lceil1/\varepsilon\rceil+1$ and assume for contradiction that after running~\gls*{tfp}, the label set $\labelset_\text{set}(\vertex)$ of some vertex~$\vertex\in\vertices$ contains at least $k+1$ consumption functions. We denote these functions by $\tcfunction_1,\dots,\tcfunction_{k+1}$ and without loss of generality, we assume they are given in increasing order of minimum driving time (and hence, were inserted into the label set in this order). Since exact dominance checks are applied, we know that $\tcfunction_2$ yields an improvement over $\tcfunction_1$ by at least $\varepsilon\maxbattery$ at its minimum driving time~$\leftintervalborder_2\in\strictposreals$. Consequently, we have~$\tcfunction_2(\leftintervalborder_2)\le\tcfunction_1(\leftintervalborder_2)-\varepsilon\maxbattery$. We can apply the same argument to any function $\tcfunction_i$ with~$i\in\{2,\dots,k+1\}$ and it follows that~$\tcfunction_{i}(\leftintervalborder_{i})\le\tcfunction_{i-1}(\leftintervalborder_{i})-\varepsilon\maxbattery$. Thus, we obtain
 \begin{align*}
  \tcfunction_{k+1}(\leftintervalborder_{k+1})&\le\tcfunction_k(\leftintervalborder_{k+1})-\varepsilon\maxbattery\\
                                              &\le\tcfunction_1(\leftintervalborder_2)-k\varepsilon\maxbattery\\
                                              &\le\maxbattery-\left(\left\lceil\frac{1}{\varepsilon}\right\rceil+1\right)\varepsilon\maxbattery\\
                                              &<0\text{.}
 \end{align*}
 This contradicts the fact that $\tcfunction_{k+1}$ must have nonnegative consumption for all admissible driving times, since the algorithm ensures that battery constraints are not violated for a consumption function before running any dominance checks with it.
\end{proof}

\section{Omitted Details and Proofs from Section~\ref{sec:astar}}\label{app:astar}

This appendix contains omitted details about the potential function based on piecewise linear functions, introduced in Section~\ref{sec:astar:pwl-functions}. We first provide pseudocode and a brief description of the backward search to compute the potential function $\convexpotential$ (Appendix~\ref{app:astar:backward-search}). Afterwards, we formally prove that the resulting potential function is consistent (Appendix~\ref{app:astar:lemma}).

\subsection{Pseudocode of the Label-Correcting Backward Search}\label{app:astar:backward-search}

Algorithm~\ref{alg:tradeoff-potential-backward-search} shows pseudocode of the search that is executed from the target $\target\in\vertices$ in order to compute the functions representing the vertex potential $\convexpotential\colon\vertices\times[0,\maxbattery]\cup\{-\infty\}\to\posreals\cup\{\infty\}$.
Recall that each vertex stores a \emph{single} label consisting of its (tentative) lower bound function.
The search is initialized with a function $\convexlowerboundfunction_\target$ at the target~$\target\in\vertices$ that evaluates to~$0$ for arbitrary nonnegative~\gls*{soc}, represented by the single breakpoint~$(0,0)$; see lines~\ref{line:tfp-pot-initialize-begin}--\ref{line:tfp-pot-initialize-end} of Algorithm~\ref{alg:tradeoff-potential-backward-search}.
In the main loop, scanning an outgoing arc~$(\vertexa,\vertexb)\in\arcs$ of some vertex $\vertexa\in\vertices$ corresponds to \emph{linking} the two lower bounds representing the label at~$\vertexa$ and the arc~$(\vertexa,\vertexb)$, respectively.
We first \emph{convert} the consumption function~$\tcfunction_{(\vertexa,\vertexb)}$ mapping driving time to energy consumption to a piecewise linear function $\convexlowerboundfunction_{(\vertexa,\vertexb)}$ mapping \gls*{soc} to a lower bound on driving time (line~\ref{line:tfp-pot-convert} of Algorithm~\ref{alg:tradeoff-potential-backward-search}). Afterwards, the algorithm \emph{merges} the result with the existing label at $\vertexb$ to obtain a new convex function (line~\ref{line:tfp-pot-merge}).

\begin{algorithm}[t]
  \caption{Pseudocode of the function-propagating potential search for \gls*{tfp}}\label{alg:tradeoff-potential-backward-search}%
  \tcp{initialize labels}%
  \ForEach{$\vertex\in\vertices$}{\label{line:tfp-pot-initialize-begin}%
    $\convexlowerboundfunction_\vertex\assign\emptyset$\;%
  }%
  $\convexlowerboundfunction_\target\assign[(0,0)]$\;%
  $\queue$.\queueInsert{$\target,0$}\;\label{line:tfp-pot-initialize-end}%
  \BlankLine
  \tcp{run main loop}%
  \While{$\queue$.\queueIsNotEmpty{}}{%
    \BlankLine
    $\vertexa\assign\queue$.\queueDeleteMin{}\;%
    \BlankLine
    \tcp{scan outgoing arcs}%
    \ForEach{$(\vertexa,\vertexb)\in\backwardarcs$}{%
      $\convexlowerboundfunction_{(\vertexa,\vertexb)}\assign\convertop(\tcfunction_{(\vertexa,\vertexb)})$\;\label{line:tfp-pot-convert}%
      $\convexlowerboundfunction\assign\linkop(\convexlowerboundfunction_\vertexa,\convexlowerboundfunction_{(\vertexa,\vertexb)})$\;%
      \If{$\exists \atime\in\reals\colon\convexlowerboundfunction(\atime)<\convexlowerboundfunction_\vertexb(\atime)$}{%
        $\convexlowerboundfunction_\vertexb\assign\mergeop(\convexlowerboundfunction_\vertexb,\convexlowerboundfunction)$\;\label{line:tfp-pot-merge}%
        $\queue$.\queueUpdate{$\vertexb,\queuekey(\vertexb,\convexlowerboundfunction_\vertexb)$}\;%
      }%
    }%
  }%
\end{algorithm}

\subsection{Potential Function Consistency}\label{app:astar:lemma}

Below, Lemma~\ref{lem:tradeoff-convex-feasible-potential} formally proves that the potential function $\convexpotential$ is consistent when setting $\convexpotential(\vertex,\soc):=\convexlowerboundfunction_\vertex(\soc)$ for all $\vertex\in\vertices$ and $\soc\in[0,\maxbattery]$, using the piecewise linear functions computed as described above.

\setcounter{lemma}{0}
\renewcommand{\thelemma}{\Alph{section}-\arabic{lemma}}

\begin{lemma}
 \label{lem:tradeoff-convex-feasible-potential}
 The potential function $\convexpotential$ is consistent.
\end{lemma}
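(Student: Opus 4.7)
The plan is to prove consistency by first establishing that, after termination of the backward search, the label $\convexlowerboundfunction_\vertex$ at every vertex $\vertex\in\vertices$ is indeed a lower bound on the minimum driving time from $\vertex$ to $\target$ subject to the given \gls*{soc} (ignoring the battery capacity on intermediate vertices); the consistency inequality will then follow by applying this property at both endpoints of an arc together with the relaxation step that was executed on that arc. Concretely, the target is to show that for all $(\vertexa,\vertexb)\in\arcs$, $\atime\in[\leftintervalborder_{(\vertexa,\vertexb)},\rightintervalborder_{(\vertexa,\vertexb)}]$, and $\soc\in[0,\maxbattery]$,
\begin{align*}
 \convexlowerboundfunction_\vertexa(\soc)\le\atime+\convexlowerboundfunction_\vertexb\bigl(\socprofile_{(\vertexa,\vertexb)}(\atime,\soc)\bigr).
\end{align*}

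First I would prove an auxiliary statement about the per-arc bound $\convexlowerboundfunction_{(\vertexa,\vertexb)}=\convertop(\tcfunction_{(\vertexa,\vertexb)})$: namely, $\convexlowerboundfunction_{(\vertexa,\vertexb)}(\tcfunction_{(\vertexa,\vertexb)}(\atime))\le\atime$ for every admissible $\atime\in[\leftintervalborder_{(\vertexa,\vertexb)},\rightintervalborder_{(\vertexa,\vertexb)}]$. This is where the tangent construction from Section~\ref{sec:astar:pwl-functions} is critical: since $\tcfunction_{(\vertexa,\vertexb)}^{-1}$ is convex on its domain, each of its one-sided tangents $\aline_1,\aline_2$ lies pointwise below $\tcfunction_{(\vertexa,\vertexb)}^{-1}$, and hence so does the piecewise linear function defined by $[\pointa,\pointc,\pointb]$ (and likewise after any recursive refinement, since each new piece is a chord below the convex inverse only at its breakpoints but each subfunction of $\convexlowerboundfunction_{(\vertexa,\vertexb)}$ is built from tangents, never chords). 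A short monotonicity argument handles the two constant extensions at the endpoints and the degenerate case $\leftintervalborder=\rightintervalborder$.

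Next I would chain the lower bounds. Once the backward search terminates, the label $\convexlowerboundfunction_\vertexa$ has absorbed the contribution of every outgoing arc $(\vertexa,\vertexb)\in\arcs$ via one $\linkop$ and one $\mergeop$. The $\mergeop$ step, being a pointwise minimum followed by convexification, preserves the property that $\convexlowerboundfunction_\vertexa$ lies \emph{at most} as high as any function it was merged with; hence
\begin{align*}
 \convexlowerboundfunction_\vertexa(\soc)\le\linkop(\convexlowerboundfunction_{(\vertexa,\vertexb)},\convexlowerboundfunction_\vertexb)(\soc)\le\convexlowerboundfunction_{(\vertexa,\vertexb)}(\soc^*)+\convexlowerboundfunction_\vertexb(\soc-\soc^*)
\end{align*}
for every admissible split $\soc^*\in\reals$. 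Choosing the specific split $\soc^*=\tcfunction_{(\vertexa,\vertexb)}(\atime)$ and invoking the auxiliary statement yields $\convexlowerboundfunction_\vertexa(\soc)\le\atime+\convexlowerboundfunction_\vertexb(\soc-\tcfunction_{(\vertexa,\vertexb)}(\atime))$. Since $\convexlowerboundfunction_\vertexb$ is decreasing (lower bounds can only improve with more energy), and since $\soc-\tcfunction_{(\vertexa,\vertexb)}(\atime)\ge\socprofile_{(\vertexa,\vertexb)}(\atime,\soc)$ whenever the latter is finite (because the $\maxbattery$-cap in Equation~\eqref{eq:tradeoff-socfunction} only \emph{decreases} the \gls*{soc}), we obtain $\convexlowerboundfunction_\vertexb(\soc-\tcfunction_{(\vertexa,\vertexb)}(\atime))\le\convexlowerboundfunction_\vertexb(\socprofile_{(\vertexa,\vertexb)}(\atime,\soc))$, which gives the claimed inequality. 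The case $\socprofile_{(\vertexa,\vertexb)}(\atime,\soc)=-\infty$ is handled by the convention $\convexpotential(\cdot,-\infty)=\infty$, and the case $\vertex=\target$ by the initialization $\convexlowerboundfunction_\target\equiv0$ for nonnegative~\gls*{soc}.

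The main obstacle I foresee is the careful handling of the interplay between the link operation, the $\maxbattery$-cap, and the witness split $\soc^*=\tcfunction_{(\vertexa,\vertexb)}(\atime)$. In particular, $\soc^*$ may lie outside the subdomain on which the construction of $\convexlowerboundfunction_{(\vertexa,\vertexb)}$ produced a finite value (when $\atime$ is one of the saturated extensions below~$\leftintervalborder_{(\vertexa,\vertexb)}$ or beyond~$\rightintervalborder_{(\vertexa,\vertexb)}$), and the argument has to show that this never breaks the lower-bound chain. A secondary subtlety is that recursive refinement of the tangent construction (Figure~\ref{fig:tradeoff-bound-simplification}) must preserve the below-the-inverse property at every step, which requires observing that each added breakpoint $\pointc$ is the intersection of two tangents rather than a sample of $\tcfunction^{-1}$ itself, so the resulting piecewise linear function never rises above $\tcfunction^{-1}$ on the refined subinterval.
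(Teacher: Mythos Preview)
Your proposal is correct and follows essentially the same route as the paper: the paper's proof is a short chain of inequalities that uses precisely the three ingredients you identify---the tangent-based lower bound $\convexlowerboundfunction_{(\vertexa,\vertexb)}(\soc^*)\le\tcfunction_{(\vertexa,\vertexb)}^{-1}(\soc^*)$, the merge inequality $\convexlowerboundfunction_\vertexa\le\linkop(\convexlowerboundfunction_\vertexb,\convexlowerboundfunction_{(\vertexa,\vertexb)})$, and the monotonicity of $\convexlowerboundfunction_\vertexb$ to absorb the $\maxbattery$-cap---with the same witness split $\soc^*=\tcfunction_{(\vertexa,\vertexb)}(\atime)$. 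Your first ``obstacle'' is not actually an issue, since consistency is only required for admissible $\atime\in[\leftintervalborder_{(\vertexa,\vertexb)},\rightintervalborder_{(\vertexa,\vertexb)}]$, which places $\soc^*$ inside the domain on which $\convexlowerboundfunction_{(\vertexa,\vertexb)}$ was constructed.
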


\begin{proof}
For some arc $(\vertexa,\vertexb)\in\arcs$, let $\convexlowerboundfunction_\vertexa$ and $\convexlowerboundfunction_\vertexb$ be the piecewise linear functions at $\vertexa$ and~$\vertexb$, respectively, after the backward search has terminated.
Let $\convexlowerboundfunction_{(\vertexa,\vertexb)}$ denote the lower bound function of~$(\vertexa,\vertexb)$.
We know that $\convexlowerboundfunction_\vertexa$ is upper bounded by the function computed by~$\linkop(\convexlowerboundfunction_\vertexb,\convexlowerboundfunction_{(\vertexa,\vertexb)})$, since this function was merged into~$\convexlowerboundfunction_\vertexa$ at some point during the backward search.
For an arbitrary driving time~$\atime\in\posreals$, let~$\soc^*:=\tcfunction_{(\vertexa,\vertexb)}(\atime)$ denote the corresponding energy consumption on the arc~$(\vertexa,\vertexb)$. Then, we obtain for all $\soc\in[0,\maxbattery]$ that
 \begin{align*}
  \atime+\convexpotential(\vertexb,\socprofile_{(\vertexa,\vertexb)}(\atime,\soc))&=\atime+\convexlowerboundfunction_\vertexb(\socprofile_{(\vertexa,\vertexb)}(\atime,\soc))\\
  &\ge\atime+\convexlowerboundfunction_\vertexb(\soc-\tcfunction_{(\vertexa,\vertexb)}(\atime))\\
  &=\tcfunction_{(\vertexa,\vertexb)}^{-1}(\soc^*)+\convexlowerboundfunction_\vertexb(\soc-\soc^*)\\
  &\ge\convexlowerboundfunction_{(\vertexa,\vertexb)}(\soc^*)+\convexlowerboundfunction_\vertexb(\soc-\soc^*)\\
  &\ge\min_{\soc^*\in\reals}\convexlowerboundfunction_{(\vertexa,\vertexb)}(\soc^*)+\convexlowerboundfunction_\vertexa(\soc-\soc^*)\\
  &\ge\convexpotential(\vertexa, \soc)\text{,}
 \end{align*}
which implies that the term $\atime-\convexpotential(\vertexa, \soc)+\convexpotential(\vertexb,\socprofile_{(\vertexa,\vertexb)}(\atime,\soc))$ is nonnegative. This completes the proof.
\end{proof}

\section{Omitted Details and Proofs from Section~\ref{sec:ch}}\label{app:ch}

The following Appendix~\ref{app:ch:lemmas} provides proofs from Section~\ref{sec:ch} that were omitted from the main manuscript. Afterwards, Appendix~\ref{app:ch:upper-bounds} describes how upper bound functions represented by a single tradeoff subfunction are derived for the witness search presented in Section~\ref{sec:ch:witness-search}.

\subsection{Omitted Proofs from Section~\ref{sec:ch:simple-soc-functions} and Section~\ref{sec:ch:shortcut-comparison}}\label{app:ch:lemmas}

Lemma~\ref{lem:tradeoff-nonnegative-shortcut} formally proves that the bivariate \gls*{soc} function of a path consisting solely of nonnegative consumption functions can be represented by a single univariate (consumption) function.

\setcounter{lemma}{3}
\renewcommand{\thelemma}{\arabic{lemma}}

\begin{lemma}
 Let $\apath = [\vertex_1, \dots, \vertex_k]$ be a path in~$\graph$ and let $\tcfunction_i$ denote the consumption function of the arc~$(\vertex_i,\vertex_{i+1})$ for~$i\in\{1,\dots,k-1\}$. If all consumption functions are nonnegative, \ie, $\tcfunction_i(\atime)\ge0$ holds for all $\atime\in\posreals$ and~$i\in\{1,\dots,k-1\}$, the \gls*{soc} function of~$\apath$ evaluates to
 \begin{align*}
  \socprofile(\atime, \soc)=
  \begin{cases}
   -\infty     & \mbox{if } \soc < \tcfunction(\atime), \\
   \soc - \tcfunction(\atime) & \mbox{otherwise,}
  \end{cases}
 \end{align*}
 where $\tcfunction$ denotes the function obtained after iteratively linking the functions $\tcfunction_1,\dots,\tcfunction_{k-1}$.
\end{lemma}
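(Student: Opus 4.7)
My plan is to exploit the monotonicity of SoC along a path with nonnegative consumption: since SoC can only decrease, the upper battery cap $\maxbattery$ never becomes active, and a prefix-by-prefix feasibility check reduces to a single constraint on total consumption. Thus the bivariate function factors cleanly into a univariate link followed by a single battery check.

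First I would fix an initial SoC $\soc \in [0,\maxbattery]$ together with a total driving time $\atime \in \posreals$ and admissible per-arc times $\atime_1,\dots,\atime_{k-1}$ summing to $\atime$. Using Equation~\ref{eq:soc-eval}, I would prove by induction on $i$ that $\soc_{i+1} = \soc - \sum_{j=1}^{i} \tcfunction_j(\atime_j)$ whenever none of the intermediate values becomes $-\infty$. The key observation is that because each $\tcfunction_j(\atime_j) \ge 0$, the partial sums $\sum_{j \le i} \tcfunction_j(\atime_j)$ are nondecreasing in $i$; hence $\soc_{i+1} \le \soc \le \maxbattery$, so the second branch of Equation~\ref{eq:tradeoff-socfunction} (the $\maxbattery$ cap) is never triggered. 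Consequently, the feasibility condition ``$\soc_i \ge 0$ for all $i$'' collapses to the single prefix constraint on the last vertex: $\sum_{j=1}^{k-1} \tcfunction_j(\atime_j) \le \soc$.

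Next I would minimize over the driving-time distributions. By definition of the link operation (Equation~\ref{eq:tradeoff-linkedfunction}) and its commutativity and associativity (shown in Section~\ref{sec:operations:linking}), the iterated link $\tcfunction = \linkop(\tcfunction_1, \dots, \tcfunction_{k-1})$ satisfies
\begin{align*}
 \tcfunction(\atime) \;=\; \min_{\substack{\atime_1+\dots+\atime_{k-1}=\atime\\ \atime_j \in [\leftintervalborder_j,\rightintervalborder_j]}} \sum_{j=1}^{k-1} \tcfunction_j(\atime_j).
\end{align*}
Therefore, if $\soc < \tcfunction(\atime)$ then every admissible distribution violates the prefix constraint derived above, so no feasible completion exists and $\socprofile(\atime,\soc) = -\infty$. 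Conversely, if $\soc \ge \tcfunction(\atime)$, I would pick any distribution $(\atime_1^*,\dots,\atime_{k-1}^*)$ realizing the minimum; by the inductive formula the terminal SoC equals $\soc - \tcfunction(\atime)$, which lies in $[0,\maxbattery]$ and is clearly the largest achievable terminal SoC (any other choice yields at most $\soc$ minus a sum that is $\ge \tcfunction(\atime)$).

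Combining both directions yields the stated closed form of $\socprofile(\atime,\soc)$. The only subtlety I anticipate is rigorously justifying the claim that the $\maxbattery$ cap is never reached along the path, which requires the inductive argument above rather than just looking at endpoints; however, this follows straightforwardly from nonnegativity of each $\tcfunction_j(\atime_j)$ combined with the assumption $\soc \le \maxbattery$. The extension to the edge case $\atime < \leftintervalborder$ (infeasible driving time) is automatic, since then $\tcfunction(\atime) = \infty > \soc$ and the first branch applies.
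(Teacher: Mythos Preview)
Your proposal is correct and follows essentially the same approach as the paper: both arguments hinge on the observation that nonnegativity of each $\tcfunction_j$ makes the $\maxbattery$ cap inactive and reduces feasibility to a single terminal constraint. The paper organizes the induction over path prefixes (building up $\tcfunction_{1,\dots,i}$ and its SoC function step by step), whereas you first fix a time distribution, compute the SoC directly, and then optimize over distributions; this is a minor reordering of the same idea.
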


\begin{proof}
More formally, let~$\tcfunction:=\linkop(\dots\linkop(\dots\linkop(\tcfunction_1,\tcfunction_2),\dots),\tcfunction_{k-1})$.
First, consider the \gls*{soc} function $\socprofile_i$ of the consumption function~$\tcfunction_i$ of an individual arc $(\vertex_i,\vertex_{i+1})$ with $i\in\{1,\dots,k-1\}$. It equals
\begin{align*}
 \socprofile_i(\atime, \soc)=
\begin{cases}
 -\infty     & \mbox{if } \soc < \tcfunction_i(\atime), \\
 \soc - \tcfunction_i(\atime) & \mbox{otherwise,}
\end{cases}
\end{align*}
since $\tcfunction_i$ has only nonnegative values by assumption, so the \gls*{soc} at $\vertex_{i+1}$ never exceeds~$\maxbattery$.
We prove the lemma by induction.
Assume that for some $i\in\{1,\dots,k-2\}$, we are given the result $\tcfunction_{1,\dots,i}:=\linkop(\dots\linkop(\dots\linkop(\tcfunction_1,\tcfunction_2),\dots),\tcfunction_i)$ of linking all arcs in $[\vertex_1,\dots,\vertex_{i+1}]$ and the corresponding \gls*{soc} function is
\begin{align*}
 \socprofile_{1,\dots,i}(\atime,\soc)=
 \begin{cases}
  -\infty     & \mbox{if } \soc < \tcfunction_{1,\dots,i}(\atime), \\
  \soc - \tcfunction_{1,\dots,i}(\atime) & \mbox{otherwise.}
 \end{cases}
\end{align*}
We construct the \gls*{soc} function $\socprofile_{1,\dots,i+1}$ using~$\tcfunction_{1,\dots,i+1}:=\linkop(\tcfunction_{1,\dots,i},\tcfunction_{i+1})$. Since $\tcfunction_{i+1}(\atime)$ is nonnegative for all~$\atime\in\posreals$, we obtain $\tcfunction_{1,\dots,i}(\atime)\le\tcfunction_{1,\dots,i+1}(\atime)$ for arbitrary driving times~$\atime\in\posreals$. Hence, the path is infeasible for a pair of driving time $\atime\in\posreals$ and initial \gls*{soc} $\soc\in[0,\maxbattery]$ if and only if $\soc<\max\{\tcfunction_{1,\dots,i}(\atime),\tcfunction_{1,\dots,i+1}(\atime)\}=\tcfunction_{1,\dots,i+1}(\atime)$. Otherwise, the function $\tcfunction_{1,\dots,i+1}$ minimizes consumption on the path by definition of the link operation (still, no recuperation is possible). We obtain the function
\begin{align*}
 \socprofile_{1,\dots,i+1}(\atime, \soc)=
\begin{cases}
 -\infty     & \mbox{if } \soc < \tcfunction_{1,\dots,i+1}(\atime), \\
 \soc - \tcfunction_{1,\dots,i+1}(\atime) & \mbox{otherwise,}
\end{cases}
\end{align*}
which completes the proof.
\end{proof}

Lemma~\ref{lem:tradeoff-profile-dominance}, restated below, is used in Section~\ref{sec:ch:shortcut-comparison} to derive a simple method for comparing bivariate \gls*{soc} function when inserting new shortcuts. It is proven formally here.

\begin{lemma}
 Given a nonpositive or discharging \gls*{soc}~function $\socprofile_1$ and a discharging \gls*{soc}~function $\socprofile_2$, such that their respective consumption functions are~$\tcfunction_1^+$,~$\tcfunction_1^-$,~$\tcfunction_2^+$, and~$\tcfunction_2^-$, let the value $\varepsilon\ge0$ be defined as described above.
 If $\tcfunction^+_1(\atime^+)+\varepsilon\le\tcfunction^+_2(\atime^+)$ holds for some~$\atime^+\in\posreals$, any solution where $\atime^+$ is the (optimal) amount of time spent for $\tcfunction^+_2$ is dominated by $\socprofile_1$, \ie, we obtain either $\tcfunction_2^+(\atime^+)=\infty$ or $\socprofile_1(\atime^++\atime^-,\soc)\ge\socprofile_2(\atime^++\atime^-,\soc)=\soc-(\tcfunction_2^+(\atime^+)+\tcfunction_2^-(\atime^-))$ for all $\atime^-\in\posreals$ and~$\soc\in[0,\maxbattery]$.
\end{lemma}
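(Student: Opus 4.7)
}
The plan is to use the same time split $(\atime^+,\atime^-)$ inside $\socprofile_1$ that achieves the value of $\socprofile_2$ on its right-hand side, and then combine the two hypotheses $\tcfunction_1^+(\atime^+)+\varepsilon\le\tcfunction_2^+(\atime^+)$ and $\tcfunction_1^-(\atime^-)\le\tcfunction_2^-(\atime^-)+\varepsilon$ to show that the consumption for $\socprofile_1$ under this split is no larger than the consumption that realizes $\socprofile_2(\atime^++\atime^-,\soc)$. Because $\socprofile_1$ takes the pointwise minimum energy consumption over all admissible splits (maximum \gls*{soc}), comparing with this particular split yields the desired inequality.

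First I would dispose of the trivial case: if $\tcfunction_2^+(\atime^+)=\infty$, the claim holds immediately. Otherwise, the split $(\atime^+,\atime^-)$ is admissible for the path underlying $\socprofile_2$ and, since that path is discharging, no intermediate cap check against $\maxbattery$ is needed; the only battery constraint is $\soc\ge\tcfunction_2^+(\atime^+)$ (else $\socprofile_2=-\infty$ and the claim is vacuous). The hypothesis $\tcfunction_1^+(\atime^+)\le\tcfunction_2^+(\atime^+)-\varepsilon\le\tcfunction_2^+(\atime^+)$ then shows that the same split is feasible for the positive part of $\socprofile_1$; in the discharging case this is the only battery-floor check needed on $\socprofile_1$ as well, while in the nonpositive case $\tcfunction_1^+$ is identically zero on its admissible domain and the floor check is trivial (here $\atime^+$ should be absorbed entirely into the nonpositive part; the argument still goes through because $\tcfunction_1^+(\atime^+)=0\le\tcfunction_2^+(\atime^+)-\varepsilon$).

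Next I would add the two hypotheses to obtain
\[
\tcfunction_1^+(\atime^+)+\tcfunction_1^-(\atime^-)\;\le\;\bigl(\tcfunction_2^+(\atime^+)-\varepsilon\bigr)+\bigl(\tcfunction_2^-(\atime^-)+\varepsilon\bigr)\;=\;\tcfunction_2^+(\atime^+)+\tcfunction_2^-(\atime^-),
\]
so that the consumption attained by $\socprofile_1$ with this particular split is at most the consumption achieving $\socprofile_2(\atime^++\atime^-,\soc)$. Using the definition of $\socprofile_1$ as the minimum-consumption (equivalently, maximum-remaining-\gls*{soc}) evaluation of its underlying linked consumption function under battery constraints, and the feasibility verified above, this yields
\[
\socprofile_1(\atime^++\atime^-,\soc)\;\ge\;\soc-\bigl(\tcfunction_1^+(\atime^+)+\tcfunction_1^-(\atime^-)\bigr)\;\ge\;\soc-\bigl(\tcfunction_2^+(\atime^+)+\tcfunction_2^-(\atime^-)\bigr)\;=\;\socprofile_2(\atime^++\atime^-,\soc),
\]
which is the desired inequality.

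The main subtlety I expect is the bookkeeping around the battery floor on $\socprofile_1$: although $\varepsilon$ controls the gap between the \emph{final} values of $\tcfunction_1^-$ and $\tcfunction_2^-$, we implicitly need that the chosen split is truly feasible for the positive part of $\socprofile_1$ whenever it is for $\socprofile_2$. The hypothesis $\tcfunction_1^+(\atime^+)+\varepsilon\le\tcfunction_2^+(\atime^+)$ handles this cleanly, so the only real care required is in distinguishing the nonpositive case (where $\tcfunction_1^+\equiv0$ and the relevant split for $\socprofile_1$ is $(0,\atime^++\atime^-)$) from the discharging case (where the natural split $(\atime^+,\atime^-)$ is used verbatim). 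Once these cases are separated, the algebra above closes the proof.
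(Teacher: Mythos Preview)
Your proposal is correct and follows essentially the same approach as the paper: use the same time split $(\atime^+,\atime^-)$ as a feasible candidate for $\socprofile_1$, verify feasibility of its positive part via $\tcfunction_1^+(\atime^+)\le\tcfunction_2^+(\atime^+)$, and then add the two inequalities so that the $\varepsilon$ terms cancel. The paper frames this as a short contradiction argument and does not separately treat the nonpositive case, whereas you give a direct proof with slightly more explicit case handling; the core argument is the same.
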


\begin{proof}
 Assume for the sake of contradiction that there exists some value $\atime^+\in\posreals$ such that~$\tcfunction^+_1(\atime^+)+\varepsilon\le\tcfunction^+_2(\atime^+)$ and for some time $\atime^-\in\posreals$ and \gls*{soc} $\soc\in[0,\maxbattery]$, the value $\soc-(\tcfunction^+_2(\atime^+)+\tcfunction^-_2(\atime^-))$ is a feasible solution that is not dominated by $\socprofile_1(\atime^++\atime^-,\soc)$.
 Since~$\varepsilon\ge0$, we know that $\tcfunction^+_1(\atime^+) \le \tcfunction^+_2(\atime^+)$ holds. This implies that $\tcfunction^+_1(\atime^+) + \tcfunction^-_1(\atime^-)$ is a feasible solution for an \gls*{soc} of~$\soc$ (recall that the minimum driving time of $\tcfunction^-_1$ is~$0$). Finally, we know that $\tcfunction^+_1(\atime^+)\le\tcfunction^+_2(\atime^+)-\varepsilon$ holds by assumption and $\tcfunction^-_1(\atime^-)\le\tcfunction^-_2(\atime^-)+\varepsilon$ holds by the definition of $\varepsilon$. This yields
 \begin{align*}
  \socprofile_1(\atime^++\atime^-,\soc)&\ge\soc-\big(\tcfunction^+_1(\atime^+)+\tcfunction^-_1(\atime^-)\big)\\
  &\ge\soc-\big(\tcfunction^+_2(\atime^+)-\varepsilon+\tcfunction^-_2(\atime^-)+\varepsilon\big)\text{,}
 \end{align*}
 which contradicts our assumption and completes the proof.
\end{proof}

\subsection{Upper Bound Functions Described by a Constant Number of Coefficients}\label{app:ch:upper-bounds}

In Section~\ref{sec:ch:witness-search}, we mentioned that our witness search can utilize upper bounds represented by single tradeoff subfunctions to enable faster operations and better data locality. To derive such bounds, consider a piecewise-defined consumption function $\tcfunction$ with minimum and maximum driving time $\leftintervalborder\in\strictposreals$ and~$\rightintervalborder\in\strictposreals$, respectively, that is defined by several tradeoff subfunctions~$\tcinner_1,\dots,\tcinner_k$. We seek a tradeoff function $\tcfunctionupperbound$ that has the general form $\tcfunctionupperbound(\atime)=\aparam/(\atime-\bparam)^2+\cparam$ as in Equation~\eqref{eq:tradeoff-functioninnerform} for all $\atime\in[\leftintervalborder,\rightintervalborder]$ in the interval of admissible driving times. Further, we demand that $\tcfunctionupperbound(\atime)\ge\tcfunction(\atime)$ holds for all~$\atime\in[\leftintervalborder,\rightintervalborder]$.
To achieve this, we first set
\begin{align}
 \bparam &:= \min_{i\in\{1,\dots,k\}}\bparam_i\label{eq:upper-bound-parameters:beta}
\end{align}
where $\bparam_i$ denotes the coefficient of the tradeoff subfunction~$\tcinner_i$ of $\tcfunction$ (see Equation~\eqref{eq:tradeoff-functioninnerform}).
Then, we can fix the function values $\tcfunctionupperbound(\leftintervalborder):=\tcfunction(\leftintervalborder)$ and $\tcfunctionupperbound(\rightintervalborder):=\tcfunction(\rightintervalborder)$ at its domain borders to uniquely define the two remaining coefficients $\aparam$ and $\cparam$ from Equation~\eqref{eq:tradeoff-functioninnerform}. In particular, we obtain
\begin{align}
 \cparam &= \frac{\tcfunction(\rightintervalborder)(\bparam - \rightintervalborder)^2 - \tcfunction(\leftintervalborder)(\bparam - \leftintervalborder)^2}{(\bparam - \rightintervalborder)^2 - (\bparam - \leftintervalborder)^2},\label{eq:upper-bound-parameters:gamma}\\
 \aparam &= (\tcfunction(\leftintervalborder)-\cparam)(\leftintervalborder-\bparam)^2.\label{eq:upper-bound-parameters:alpha}
\end{align}
Lemma~\ref{lem:tradeoff-upper-bound} shows that the resulting function is in fact an upper bound on~$\tcfunction$.
Upper bounds are also \emph{robust} towards incremental linking in the sense that the error does not increase if we recompute the upper bound whenever linking several bound functions results in a bound consisting of multiple subfunctions. This is due to the fact that the bounds are uniquely defined by their (minimum) coefficient $\beta$ and the domain borders of the original functions, which remain unchanged in the upper bound.

During witness search, whenever linking two bound functions results in a function defined by more than one tradeoff subfunction, we compute and store the upper bound instead. Note that we do not even have to link functions explicitly, but simply compute the new coefficient $\bparam$ in a linear scan that simulates the link operation.

\setcounter{lemma}{0}
\renewcommand{\thelemma}{\Alph{section}-\arabic{lemma}}

\begin{lemma}
 \label{lem:tradeoff-upper-bound}
 The function $\tcfunctionupperbound$ defined by Equations~\eqref{eq:upper-bound-parameters:beta}--\eqref{eq:upper-bound-parameters:alpha} is an upper bound on the original consumption function $\tcfunction$ within the interval $[\leftintervalborder,\rightintervalborder]$, \ie, $\tcfunctionupperbound(\atime)\ge\tcfunction(\atime)$ holds for all~$\atime\in[\leftintervalborder,\rightintervalborder]$.
\end{lemma}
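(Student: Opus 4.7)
The plan is to prove the inequality via the change of variable $u := (\atime - \bparam)^{-2}$, which is strictly decreasing and smooth on $[\leftintervalborder, \rightintervalborder]$. Note that $\bparam < \leftintervalborder$, since each subfunction $\tcinner_i$ has $\bparam_i < \leftintervalborder$ by the assumption on consumption functions in Section~\ref{sec:operations:consumption-functions}, and $\bparam = \min_i \bparam_i$ by construction. Under this substitution, the candidate upper bound $\tcfunctionupperbound(\atime) = \aparam(\atime - \bparam)^{-2} + \cparam$ becomes affine: $\tilde{\tcfunctionupperbound}(u) = \aparam u + \cparam$. By the choice of $\aparam$ and $\cparam$, the function $\tcfunctionupperbound$ agrees with $\tcfunction$ at both endpoints $\leftintervalborder$ and $\rightintervalborder$. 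Writing $\tilde{\tcfunction}(u) := \tcfunction(\atime(u))$, it therefore suffices to show that $\tilde{\tcfunction}$ is convex on $[u(\rightintervalborder), u(\leftintervalborder)]$, because an affine function matching a convex function at the endpoints of an interval lies above it throughout the interval; this would immediately give $\tilde{\tcfunctionupperbound} \ge \tilde{\tcfunction}$, which translates back to $\tcfunctionupperbound(\atime) \ge \tcfunction(\atime)$ on $[\leftintervalborder, \rightintervalborder]$.

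To establish convexity of $\tilde{\tcfunction}$ in $u$, I would argue in two steps: piecewise convexity, then convex kinks at the breakpoints. For a single tradeoff subfunction $\tcinner_i(\atime) = \aparam_i(\atime - \bparam_i)^{-2} + \cparam_i$, set $w_i := (\atime - \bparam_i)^{-2}$ so that $\tcinner_i = \aparam_i w_i + \cparam_i$. Writing $y := u^{-1/2} = \atime - \bparam$ and $\delta_i := \bparam_i - \bparam \ge 0$ (nonnegativity is exactly the payoff of the choice of $\bparam$), we get $w_i = (y - \delta_i)^{-2}$, and a chain-rule computation using $dy/du = -\tfrac{1}{2}u^{-3/2}$ and $d^2 y/du^2 = \tfrac{3}{4}u^{-5/2}$ yields
\begin{equation*}
  \frac{d^2 w_i}{du^2} \;=\; \frac{3\,\delta_i}{2\, u^{5/2}\,(y - \delta_i)^4} \;\ge\; 0,
\end{equation*}
where $y - \delta_i = \atime - \bparam_i > 0$ on the subdomain of $\tcinner_i$. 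Since $\aparam_i \ge 0$, each subfunction $\tcinner_i$ is convex in $u$. For the breakpoints, convexity of $\tcfunction$ in $\atime$ (which follows by induction from Lemma~\ref{lem:tradeoff-convexity} applied to the sequence of link operations defining $\tcfunction$, together with the fact that every single-arc tradeoff function is convex) means that at each breakpoint $m$, the left and right derivatives $L, R$ of $\tcfunction$ in $\atime$ satisfy $L \le R$. Because $du/d\atime = -2(\atime - \bparam)^{-3}$ depends only on the single, shared $\bparam$ and not on $\bparam_i$, the chain rule converts $L$ and $R$ into one-sided derivatives of $\tilde{\tcfunction}$ at $u(m)$ by dividing by this common negative factor; the division flips the inequality, and the orientation reversal of the substitution (the $\atime$-left piece becomes the $u$-right piece) then swaps the roles of the one-sided derivatives back, so that the $u$-right derivative of $\tilde{\tcfunction}$ at $u(m)$ is at least its $u$-left derivative. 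Together with piecewise convexity this gives convexity of $\tilde{\tcfunction}$ throughout $[u(\rightintervalborder), u(\leftintervalborder)]$.

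The main subtlety I expect is the careful bookkeeping of the orientation reversal at breakpoints when transferring convexity from $\atime$- to $u$-coordinates: dividing $L \le R$ by the negative factor $du/d\atime$ reverses the inequality once, but the reversal of which subfunction sits on the left of the breakpoint in $u$ versus in $\atime$ reverses it a second time, so that the net effect preserves the correct orientation for convexity in $u$. Everything else reduces to an elementary differentiation and the key structural observation that the single choice $\bparam = \min_i \bparam_i$ simultaneously ensures $\delta_i \ge 0$ for every subfunction, which is precisely what makes the piecewise second-derivative computation non-negative.
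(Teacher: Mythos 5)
Your proposal is correct, but it takes a genuinely different route from the paper. You linearize the candidate bound by the substitution $u=(\atime-\bparam)^{-2}$, under which $\tcfunctionupperbound$ becomes affine in $u$ and agrees with $\tcfunction$ at both endpoints by construction, and you then reduce the lemma to the chord property of a convex function: piecewise convexity of $\tcfunction$ in the variable $u$ follows from your second-derivative computation (which I checked; it is where the choice $\bparam=\min_i\bparam_i$ enters, via $\delta_i\ge0$), and convexity across breakpoints follows from convexity of $\tcfunction$ in $\atime$ via the double sign flip you describe, using that the chain-rule factor depends only on the shared~$\bparam$. The paper instead reduces to the case of two subfunctions by iterating a pairwise $\boundop$ operation (observing that the final coefficients depend only on $\min_i\bparam_i$, the domain borders, and the endpoint values), and for two pieces proves three claims about the value at the joint $\intervalborder$ and the slopes at $\leftintervalborder$ and $\rightintervalborder$ by direct algebraic manipulation, concluding with a root-counting argument that two such tradeoff functions intersect at most twice. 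Your argument handles arbitrary $k$ in one stroke, avoids the heavy algebra, and makes the role of $\bparam=\min_i\bparam_i$ transparent; both proofs rely on convexity of $\tcfunction$ in $\atime$ (the paper invokes Lemma~\ref{lem:tradeoff-convexity} for exactly this), so you are not assuming anything the paper does not. One small inaccuracy: you claim $\bparam_i<\leftintervalborder$ for every subfunction, whereas the model only guarantees $\bparam_i<\leftintervalborder_i$ on each subdomain; what your argument actually needs is $\bparam=\min_i\bparam_i\le\bparam_1<\leftintervalborder_1=\leftintervalborder$ (so the substitution is smooth on $[\leftintervalborder,\rightintervalborder]$) together with $\atime-\bparam_i>0$ on each piece, and both of these do hold.
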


\begin{proof}
 Let $\tcinner_1,\dots,\tcinner_k$ denote the tradeoff subfunctions defining~$\tcfunction$ and without loss of generality, assume that these subfunctions are given in increasing order of their admissible driving times. First, we argue that it is sufficient to prove the lemma for the case $k=2$. To show this, we define an operation $\boundop\colon\functionspace\times\functionspace\to\functionspace$ that takes as input two consumption functions, each defined by a \emph{single} tradeoff subfunction, and computes an upper bound as described above. 
 For~$i\in\{1,\dots,k-1\}$, consider two consumption functions $\tcfunction_i$ and~$\tcfunction_{i+1}$ induced by two consecutive tradeoff functions $\tcinner_i$ and~$\tcinner_{i+1}$ (see Section~\ref{sec:operations:linking} for the definition of induced consumption functions). Let their corresponding minimum and maximum driving times be~$\leftintervalborder_i$,~$\rightintervalborder_i=\leftintervalborder_{i+1}$, and~$\rightintervalborder_{i+1}$.
 The $\boundop$ operation computes the consumption function $\tcfunction_{i,i+1}:=\boundop(\tcfunction_i,\tcfunction_{i+1})$ with minimum driving time~$\leftintervalborder_i$, maximum driving time~$\rightintervalborder_{i+1}$, and a single tradeoff subfunction~$\tcfunctionupperbound_{i,i+1}$. According to Equations~\eqref{eq:upper-bound-parameters:beta}--\eqref{eq:upper-bound-parameters:alpha}, the coefficients of this tradeoff function depend only on the values $\bparam=\min\{\bparam_i,\bparam_{i+1}\}$, the driving times $\leftintervalborder_i$ and~$\rightintervalborder_{i+1}$, as well as the consumption values $\tcfunction_i(\leftintervalborder_i)=\tcfunctionupperbound_{i,i+1}(\leftintervalborder_i)$ and $\tcfunction_{i+1}(\rightintervalborder_{i+1})=\tcfunctionupperbound_{i,i+1}(\rightintervalborder_{i+1})$ at the domain borders of~$\tcfunctionupperbound_{i,i+1}$.
 Linking the consumption function $\tcfunction_{i,i+1}$ with another consecutive (induced) consumption function yields a new function defined by the same corresponding values.
 Consequently, the result $\boundop(\dots\boundop(\dots\boundop(\tcfunction_1,\tcfunction_2),\dots),\tcfunction_k)$ of iteratively applying the $\boundop$ operation to the $k$ induced consumption functions of $\tcfunction$ is the function that is defined by the coefficient~$\bparam=\min_{i\in\{0,\dots,k\}}\bparam_i$ (see Equation~\eqref{eq:upper-bound-parameters:beta}), the minimum driving time~$\leftintervalborder=\leftintervalborder_1$, the maximum driving time~$\rightintervalborder=\rightintervalborder_k$, the maximum consumption~$\tcfunction(\leftintervalborder)=\tcfunction_1(\leftintervalborder_1)$, and the minimum consumption~$\tcfunction(\rightintervalborder)=\tcfunction_k(\rightintervalborder_k)$.
 This is exactly the function $\tcfunctionupperbound$ defined above.
 Thus, we can construct $\tcfunctionupperbound$ by iteratively applying the $\boundop$ operation to consumption functions induced by the tradeoff subfunctions of~$\tcfunction$. To prove the lemma, we now show that each function constructed by the operation $\boundop$ is in fact an upper bound on its \emph{two} input functions. Observe that this implies that $\tcfunctionupperbound$ is an upper bound on $\tcfunction$ within the interval~$[\leftintervalborder,\rightintervalborder]$.

 In the remainder of the proof, let $\tcfunction$ be a consumption function defined by two tradeoff subfunctions $\tcinner_1$ and~$\tcinner_2$, which induce two consumption functions $\tcfunction_1$ and~$\tcfunction_2$. We prove that the function $\tcfunctionupperbound\colon\strictposreals\to\reals$ computed by $\boundop(\tcfunction_1,\tcfunction_2)$ yields an upper bound on $\tcfunction$ on the interval~$[\leftintervalborder,\rightintervalborder]$.
 Let the subdomains of $\tcinner_1$ and $\tcinner_2$ be $[\leftintervalborder,\intervalborder)$ and~$[\intervalborder,\rightintervalborder)$, respectively. By continuity of~$\tcfunction$ on the interval~$[\leftintervalborder,\rightintervalborder]$ and by continuity of both $\tcinner_1$ and $\tcinner_2$ on~$\strictposreals$, we know that~$\tcinner_1(\intervalborder)=\tcinner_2(\intervalborder)$.
 To prove the lemma, we use the following three claims.
 \begin{enumerate}
  \item The inequality $\tcfunctionupperbound(\intervalborder)\ge\tcinner_1(\intervalborder)=\tcinner_2(\intervalborder)$ holds.
  \item The slopes (\ie, the derivatives) of $\tcfunctionupperbound$ and $\tcinner_1$ are equal at $\leftintervalborder$ if and only if $\tcfunctionupperbound\equiv\tcinner_1\equiv\tcinner_2$. Otherwise, the slope of $\tcfunctionupperbound$ is \emph{greater} at this point.
  \item The slopes of $\tcfunctionupperbound$ and $\tcinner_2$ are equal at $\rightintervalborder$ if and only if $\tcfunctionupperbound\equiv\tcinner_1\equiv\tcinner_2$. Otherwise, the slope of $\tcfunctionupperbound$ is \emph{smaller} at this point.
 \end{enumerate}
 Then, $\tcfunctionupperbound(\intervalborder)\ge\tcinner_1(\intervalborder)$ holds by our first claim, $\tcfunctionupperbound$ and $\tcinner_1$ intersect at $\leftintervalborder$ by construction, and $\tcfunctionupperbound(\leftintervalborder+\varepsilon)\ge\tcinner_1(\leftintervalborder+\varepsilon)$ holds for $\varepsilon>0$ in the neighborhood of $\leftintervalborder$ by our second claim.
 This implies that $\tcfunctionupperbound$ must be an upper bound on $\tcinner_1$ on the interval~$[\leftintervalborder,\intervalborder]$, because the functions $\tcfunctionupperbound$ and $\tcinner_1$ can intersect at most twice in this interval unless~$\tcfunctionupperbound\equiv\tcinner_1$. This is easy to verify by determining the number of zeros of $\tcfunctionupperbound-\tcinner_1$ within the considered interval~$[\leftintervalborder,\intervalborder]$.
 A similar argument holds for $\tcfunctionupperbound$ and $\tcinner_2$ on the interval~$[\intervalborder,\rightintervalborder]$. Hence, the lemma follows after proving the three claims made above. We detail the rather technical proofs of these claims below.

 Assume that the functions $\tcinner_1$ and $\tcinner_2$ are given as~$\tcinner_i(\atime)=\aparam_i/(\atime-\bparam_i)^2+\cparam_i$ for all $\atime\in\strictposreals$ and for~$i\in\{1,2\}$.
 For the sake of simplicity and without loss of generality, we presume that $\min\{\bparam_1,\bparam_2\}=0$. Note that we can always enforce this property by \emph{shifting} both functions (and their subdomains) along the x-axis. Afterwards, we obtain the same function~$\tcfunctionupperbound$ on the shifted subdomains. By a similar argument, we presume that $\cparam_1=0$ holds.
 Below, we consider the case $\bparam_1=0$ and~$\bparam_2\ge0$. The case $\bparam_1\ge0$ and $\bparam_2=0$ is analogous. Since $\cparam_2\in\reals$ is allowed to become negative, no further case distinction is necessary.

 To prove the first claim, we have to show that $\tcfunctionupperbound(\intervalborder)\ge\tcinner_1(\intervalborder)=\tcinner_2(\intervalborder)$ holds.
 For the sake of contradiction, assume $\tcfunctionupperbound(\intervalborder)<\tcinner_1(\intervalborder)$.
 As mentioned before, continuity of the consumption function~$\tcfunction$ on the interval~$[\leftintervalborder,\rightintervalborder]$ implies that $\tcinner_1(\intervalborder)=\tcinner_2(\intervalborder)$, \ie,
 \begin{align}
  \label{eq:tradeoff-bound-proof-intersection-vals}
  \frac{\aparam_1}{\intervalborder^2}=\frac{\aparam_2}{(\intervalborder-\bparam_2)^2}+\cparam_2\text{.}
 \end{align}
 Furthermore, we know that $\tcfunction$ is a \emph{convex} function (see Lemma~\ref{lem:tradeoff-convexity}), so when evaluating the derivatives of $\tcinner_1$ and $\tcinner_2$ at $\intervalborder$, we get the inequality
 \begin{align}
  \label{eq:tradeoff-bound-proof-intersection-slopes}
  -\frac{2\aparam_1}{\intervalborder^3}\le-\frac{2\aparam_2}{(\intervalborder-\bparam_2)^3}\text{.}
 \end{align}
 Finally, we know that the inequalities $0<\leftintervalborder<\intervalborder<\rightintervalborder$,~$\bparam_2\ge0$,~$\aparam_1>0$,~$\aparam_2>0$, and~$\intervalborder>\bparam_2$ hold by definition for consumption functions composed of multiple tradeoff subfunctions.
 We now show that altogether, these inequalities yield a contradiction.
 First, we plug the values of $\aparam\in\posreals$ from Equation~\eqref{eq:upper-bound-parameters:alpha} and $\cparam\in\reals$ from Equation~\eqref{eq:upper-bound-parameters:gamma} into the term~$\tcfunctionupperbound(\intervalborder)=\aparam/\intervalborder^2+\cparam$.
 Afterwards, we replace $\cparam_2=\aparam_1/\intervalborder^2-\aparam_2/(\intervalborder-\bparam_2)^2$ according to Equation~\eqref{eq:tradeoff-bound-proof-intersection-vals} and exploit that the inequality $\aparam_1\ge\aparam_2\intervalborder^3/(\intervalborder-\bparam_2)^3>0$ holds by Equation~\eqref{eq:tradeoff-bound-proof-intersection-slopes} to obtain
 \begin{align*}
  && \tcfunctionupperbound(\intervalborder)=\frac{\tcinner_1(\leftintervalborder)\leftintervalborder^2(\rightintervalborder^2-\intervalborder^2)+\tcinner_2(\rightintervalborder)\rightintervalborder^2(\intervalborder^2-\leftintervalborder^2)}{\intervalborder^2(\rightintervalborder^2-\leftintervalborder^2)}&<\tcinner_1(\intervalborder)\\
  \Leftrightarrow && \tcinner_1(\leftintervalborder)\leftintervalborder^2(\rightintervalborder^2-\intervalborder^2)-\tcinner_1(\intervalborder)\intervalborder^2(\rightintervalborder^2-\leftintervalborder^2)+\tcinner_2(\rightintervalborder)\rightintervalborder^2(\intervalborder^2-\leftintervalborder^2)&<0\\
  \Leftrightarrow && (\intervalborder^2-\leftintervalborder^2)\Big(\aparam_1(\rightintervalborder^2-\intervalborder^2)(\rightintervalborder-\bparam_2)^2(\intervalborder-\bparam_2)^2+\aparam_2\intervalborder^2\rightintervalborder^2\big((\intervalborder-\bparam_2)^2-(\rightintervalborder-\bparam_2)^2\big)\Big)&<0\\
  \Rightarrow && \frac{\aparam_2\intervalborder^2}{\intervalborder-\bparam_2}\Big(\intervalborder(\rightintervalborder^2-\intervalborder^2)(\rightintervalborder-\bparam_2)^2+\rightintervalborder^2(\intervalborder-\bparam_2)\big((\intervalborder-\bparam_2)^2-(\rightintervalborder-\bparam_2)^2\big)\Big)&<0\\
  \Leftrightarrow && \bparam_2(\rightintervalborder^2-\intervalborder^2)(2\rightintervalborder\intervalborder-2\rightintervalborder\bparam_2+\rightintervalborder^2-\intervalborder\bparam_2)&<0\\
  \Leftrightarrow && 2\rightintervalborder\intervalborder-2\rightintervalborder\bparam_2+\rightintervalborder^2-\intervalborder\bparam_2\hphantom{)}&<0\text{.}
 \end{align*}
 This yields a contradiction, because we know that $0\le\bparam_2<\intervalborder<\rightintervalborder$ holds. Thus, both $2\rightintervalborder\intervalborder-2\rightintervalborder\bparam_2$ and $\rightintervalborder^2-\intervalborder\bparam_2$ are positive terms and their sum cannot be negative.

 For the second claim, we examine the slopes of $\tcinner_1$ and $\tcfunctionupperbound$ at the domain border $\leftintervalborder$.
 Let the parameter $\aparam\in\posreals$ of $\tcfunctionupperbound$ be defined as in Equation~\eqref{eq:upper-bound-parameters:alpha}. Plugging in the coefficient $\bparam=0$ and the value of $\cparam\in\reals$ according to Equation~\eqref{eq:upper-bound-parameters:gamma}, we obtain
 \begin{align*}
  \aparam=\frac{\big(\tcinner_1(\leftintervalborder)-\tcinner_2(\rightintervalborder)\big)\leftintervalborder^2\rightintervalborder^2}{\rightintervalborder^2-\leftintervalborder^2}\text{.}
 \end{align*}
 As before, we use $\cparam_2=\aparam_1/\intervalborder^2-\aparam_2/(\intervalborder-\bparam_2)^2$ and the inequality~$\aparam_1\ge\aparam_2\intervalborder^3/(\intervalborder-\bparam_2)^3>0$. For the difference between the derivatives $\tcfunctionupperbound'$ and $\tcinner'_1$ at~$\leftintervalborder$, this yields
 \begin{align*}
  \tcfunctionupperbound'(\leftintervalborder)-\tcinner'_1(\leftintervalborder)&=\frac{2\aparam_1}{\leftintervalborder^3}-\frac{2\aparam}{\leftintervalborder^3}\\
  &=\frac{2\left(\tcinner_2(\rightintervalborder)\rightintervalborder^2\leftintervalborder^2-\aparam_1\leftintervalborder^2\right)}{\leftintervalborder^3\left(\rightintervalborder^2-\leftintervalborder^2\right)}\\
  &=\frac{2\leftintervalborder^2}{\leftintervalborder^3\left(\rightintervalborder^2-\leftintervalborder^2\right)}\left(\aparam_2\rightintervalborder^2\frac{(\intervalborder-\bparam_2)^2-(\rightintervalborder-\bparam_2)^2}{(\rightintervalborder-\bparam_2)^2(\intervalborder-\bparam_2)^2}+\aparam_1\left(\frac{\rightintervalborder^2}{\intervalborder^2}-1\right)\right)\\
  &\ge\frac{2\aparam_2\leftintervalborder^2}{\leftintervalborder^3(\rightintervalborder^2-\leftintervalborder^2)}\left(\rightintervalborder^2\frac{(\intervalborder-\bparam_2)^2-(\rightintervalborder-\bparam_2)^2}{(\rightintervalborder-\bparam_2)^2(\intervalborder-\bparam_2)^2}+\frac{\intervalborder^3}{(\intervalborder-\bparam_2)^3}\left(\frac{\rightintervalborder^2}{\intervalborder^2}-1\right)\right)\\
  &=\frac{2\aparam_2\bparam_2\leftintervalborder^2(\rightintervalborder-\intervalborder)^2(2\rightintervalborder\intervalborder-2\rightintervalborder\bparam_2+\rightintervalborder^2-\intervalborder\bparam_2)}{\rightintervalborder^3(\rightintervalborder^2-\leftintervalborder^2)(\rightintervalborder-\bparam_2)^2(\intervalborder-\bparam_2)^3}\text{.}
 \end{align*}
 As in the proof of the first claim, we observe that each term in the product of the numerator is nonnegative, while each term in the product of the denominator is positive. Moreover, the numerator is equal to $0$ if and only if $\beta_2=0$ holds. Using Equation~\eqref{eq:tradeoff-bound-proof-intersection-vals}, it is easy to verify that this implies $\aparam_1=\aparam_2$ and~$\cparam_2=0$, which corresponds to the case where the three functions $\tcfunctionupperbound$,~$\tcinner_1$, and~$\tcinner_2$ are equivalent.

 Finally, we deal with the slopes of $\tcinner_2$ and $\tcfunctionupperbound$ at $\rightintervalborder$ to prove the third claim. Below, we first replace the values~$\aparam$, $\aparam_2$, and~$\cparam_2$ as in our proof of the second claim. Afterwards, we exploit the fact that $(\intervalborder^3-\atime)(\rightintervalborder-\bparam_2)^3-(\rightintervalborder^3-\atime)(\intervalborder-\bparam_2)^3$ decreases with increasing~$\atime\in\posreals$, since its derivative \wrt $\atime$ is $(\intervalborder-\bparam_2)^3-(\rightintervalborder-\bparam_2)^3<0$.
 After some further rearrangements, we obtain
 \begin{align*}
  \tcinner'_2(\rightintervalborder)-\tcfunctionupperbound'(\rightintervalborder)&=\frac{2\aparam}{\rightintervalborder^3}-\frac{2\aparam_2}{(\rightintervalborder-\bparam_2)^3}\\
  &\ge\frac{2\aparam_2\big((\intervalborder^3-\leftintervalborder^2\bparam_2)(\rightintervalborder-\bparam_2)^3-(\rightintervalborder^3-\leftintervalborder^2\bparam_2)(\intervalborder-\bparam_2)^3\big)}{\rightintervalborder(\rightintervalborder^2-\leftintervalborder^2)(\rightintervalborder-\bparam_2)^3(\intervalborder-\bparam_2)^3}\\
  &\ge\frac{2\aparam_2\big((\intervalborder^3-\intervalborder^2\bparam_2)(\rightintervalborder-\bparam_2)^3-(\rightintervalborder^3-\intervalborder^2\bparam_2)(\intervalborder-\bparam_2)^3\big)}{\rightintervalborder(\rightintervalborder^2-\leftintervalborder^2)(\rightintervalborder-\bparam_2)^3(\intervalborder-\bparam_2)^3}\\
  &=\frac{2\aparam_2\bparam_2(\rightintervalborder-\intervalborder)^2(\rightintervalborder\intervalborder-\rightintervalborder\bparam_2+\rightintervalborder\intervalborder-\intervalborder\bparam_2+\intervalborder^2-\intervalborder\bparam_2)}{\rightintervalborder(\rightintervalborder^2-\leftintervalborder^2)(\rightintervalborder-\bparam_2)^3(\intervalborder-\bparam_2)^2}\text{.}
 \end{align*}
 Again, we end up with products for which all factors are nonnegative (and strictly positive in case of the denominator). As before, the numerator equals $0$ if and only if~$\tcfunctionupperbound\equiv\tcinner_1\equiv\tcinner_2$.
 Hence, all three claims hold and the proof is complete.
\end{proof}

\section{Implementation Details}\label{app:implementation}

Our implementation stores graphs as adjacency arrays~\cite{Cor01}, following the dynamic data structures of Delling~\cite{Del09d} for efficient insertion and deletion of shortcuts during the preprocessing routine of~\gls*{ch}.
All algorithms use $k$-heaps~\cite{Cor01,Joh75} as priority queue, where $k=4$ except for unsettled label sets, which use~$k=2$. Compared to Fibonacci heaps~\cite{Fre87}, these heaps have a higher worst-case complexity, but are faster on sparse graphs (such as road networks) in practice~\cite{Che96}.
Implementation details of our speedup techniques are given below.

\paragraph{A* Search.}
When computing the potential function~$\convexpotential$ (see Section~\ref{sec:astar:pwl-functions}), the number of breakpoints of lower bounds can become quite large.
Therefore, we reduce it as follows (while slightly deteriorating the quality of the bounds).
Before applying Graham's scan, we replace consecutive pairs of breakpoints in the piecewise linear function by a single one if they are close to each other, \ie, their difference \wrt driving time or \gls*{soc} is below a certain threshold $\Delta_\atime\in\posreals$ or~$\Delta_\soc\in\posreals$, respectively. Two such points $\pointa=(\soc_\pointa,\atime_\pointa)$ and $\pointb=(\soc_\pointb,\atime_\pointb)$ are replaced by~$\pointc:=(\min\{\soc_\pointa,\soc_\pointb\},\min\{\atime_\pointa,\atime_\pointb\})$.
Furthermore, if two consecutive segments $\pointa\pointb$ and $\pointb\pointc$ with $\pointa=(\soc_\pointa,\atime_\pointa)$,~$\pointb=(\soc_\pointb,\atime_\pointb)$, and~$\pointc=(\soc_\pointc,\atime_\pointc)$ have similar slopes $\slope_{\pointa\pointb}\approx\slope_{\pointb\pointc}$ (\ie, the difference $|\slope_{\pointa\pointb}-\slope_{\pointb\pointc}|$ is below some threshold $\Delta_\slope\in\posreals$), we replace them by a single segment from $(\soc_\pointa,\atime_\pointa)$ to $(\soc^*,\atime_\pointc)$ with slope $\min\{\slope_{\pointa\pointb},\slope_{\pointb\pointc}\}$, which uniquely defines the value $\soc^*\in\reals$. Clearly, the modified function remains a lower bound. Moreover, consistency of the potential is maintained, as function values can only decrease and changes in the function are propagated by the search. Thus, all steps in the proof of Lemma~\ref{lem:tradeoff-convex-feasible-potential} still apply.

Graham's scan and the breakpoint reduction step are performed on-the-fly during the merge operation. Moreover, we convert consumption functions $\tcfunction_\arc$ of all arcs $\arc\in\arcs$ to their corresponding lower bounds $\convexlowerboundfunction_\arc$ during preprocessing for faster query times.
The thresholds $\varepsilon$ to determine lower bound errors, $\Delta_\atime$ and $\Delta_\soc$ for close points, and $\Delta_\slope$ for similar slopes are tuning parameters.
Smaller thresholds increase accuracy of bounds, but also slow down the backward search. Therefore, we set above thresholds to $2^{\delta-\lfloor\log\maxbattery\rfloor}$ in our experiments, where $\delta\in\naturals$ is a constant and $\maxbattery$ is the battery capacity (assumed to be given in kWh). Hence, bounds are more accurate for higher capacities (where the forward search becomes more expensive). The value of $\delta$ is again a tuning parameter.
In our experiments, we use $\delta=10$ for~$\Delta_\atime$ (the resulting threshold is measured in seconds), $\delta=17$ for~$\Delta_\soc$, and $\delta=15$ for~$\varepsilon$ (both measured in~Wh).
For example, a battery capacity of 16\,kWh yields $\Delta_\atime=64$~(seconds),~$\Delta_\soc=2^{13}$~(Wh), and~$\varepsilon=2^{11}$~(Wh).
The value $\Delta_\slope=2^{-4}$ is constant and chosen independently of~$\maxbattery$ (all parameters were determined in preliminary experiments).

\paragraph{Contraction Hierarchies.}
During preprocessing, we determine the next vertex to be contracted using the measures \gls*{ed} and \gls*{cq} according to Geisberger et~al.~\cite{Gei12b}.
To reflect the complexity of \gls*{soc} functions, we add another term \emph{\gls*{sc}}, which is defined as $|\tcfunction^+|+k|\tcfunction^-|$ for the \gls*{soc} function of a given shortcut candidate, where $|\tcfunction^+|$ and $|\tcfunction^-|$ denote the number of tradeoff subfunctions that define the positive and negative part of a shortcut, respectively, and~$k\in\naturals$ is a tuning parameter. Using penalized weights for negative parts, we favor earlier contraction of \gls*{soc} functions without a negative part (we use $k=4$ in our experiments).
The priority of a vertex (higher priority means higher importance) is then set to~$64 \operatorname{\gls*{ed}}+\operatorname{\gls*{cq}}+\operatorname{\gls*{sc}}$.
We set the priority of all inactive vertices to~$\infty$.

Further, we employ a \emph{settled node limit}~\cite{Gei12b} of~$128$, which limits the maximum number of queue extractions per witness search for better performance.
If multiple shortcut candidates with the same tail vertex $\vertexa\in\vertices$ are constructed during contraction of a vertex, we save time by running only a \emph{single} multi-target witness search from~$\vertexa$.
Finally, to improve performance of the backward searches during a query (\gls*{bfs} and potential computation), we explicitly construct and store their more lightweight search graphs from the input graph (enriched with shortcuts, but storing less complex cost functions) during preprocessing.

\section{Omitted Plots from Section~\ref{sec:experiments}}\label{app:experiments}

\begin{figure}[t]
 \centering
 \begin{subfigure}[b]{.5\textwidth}%
 \centering%
 \tikzstyle{markSign} = [mark=*]
\tikzstyle{shortenLines} = [shorten <= 3.5pt,shorten >= 3.5pt]

\begin{tikzpicture}[figure]
\pgfplotsset{
    grid style = {dash pattern = on 1pt off 1pt, black15,line width = 0.5pt  }
 }
\pgfplotsset{ every non boxed x axis/.append style={x axis line style=-},
     every non boxed y axis/.append style={y axis line style=-},
     legend image post style={line width=1.5pt}}

\colorlet{plotColor1}{thesisblue}
\colorlet{plotColor2}{thesisred}
     
\begin{axis}[
   height=5cm,
   width=0.95\textwidth,
   xmin=1.75,
   xmax=8.25,
   ymin=1,
   ymax=5000000,
   ymode=log,
   ytick pos=left,
   ytick={10,100,1000,10000,100000,1000000},
   xlabel={Step Size [km/h]},
   ylabel={Time [ms]},
   xtick={1, 2, 3, 4, 5, 6, 7, 8},
   xticklabels={0, 1, 2, 5, 10, 20, $\infty$, cnst.},
   grid=major,
   legend entries={avg. time, max. time},
   legend cell align=left,
   legend style={at={(0.97,0.95)},
   anchor=north east,
   font=\scriptsize}
]

\addlegendimage{legend line with exact mark,plotColor1}
\addlegendimage{legend line with exact mark,plotColor2}

\addplot [color=thesisblue-light,line width=1pt] table {
1.75 6.234645
8.25 6.234645
};
\addplot [color=thesisred-light,line width=1pt] table {
1.75 22.849100
8.25 22.849100
};

\addplot [color=plotColor2,markSign,shortenLines,line width=1pt] table {
   2 3498280.000000
   3 1022610.000000
};
\addplot [color=plotColor2,markSign,shortenLines,line width=1pt] table {
   3 1022610.000000
   4 89553.900000
};
\addplot [color=plotColor2,markSign,shortenLines,line width=1pt] table {
   4 89553.900000
   5 17001.400000
};
\addplot [color=plotColor2,markSign,shortenLines,line width=1pt] table {
   5 17001.400000
   6 4770.600000
};
\addplot [color=plotColor2,markSign,shortenLines,line width=1pt] table {
   6 4770.600000
   7 476.579000
};
\addplot [color=plotColor2,markSign,shortenLines,line width=1pt] table {
   7 476.579000
   8 18.915000
};

\addplot [color=plotColor1,markSign,shortenLines,line width=1pt] table {
   2 93407.18589
   3 24573.46175
};
\addplot [color=plotColor1,markSign,shortenLines,line width=1pt] table {
   3 24573.46175
   4 2848.703090
};
\addplot [color=plotColor1,markSign,shortenLines,line width=1pt] table {
   4 2848.703090
   5 595.527929
};
\addplot [color=plotColor1,markSign,shortenLines,line width=1pt] table {
   5 595.527929
   6 203.225958
};
\addplot [color=plotColor1,markSign,shortenLines,line width=1pt] table {
   6 203.225958
   7 26.731756
};
\addplot [color=plotColor1,markSign,shortenLines,line width=1pt] table {
   7 26.731756
   8 4.374692
};

\end{axis}
\end{tikzpicture}
 \caption{}%
 \label{fig:step-plot:time}%
 \end{subfigure}%
 \begin{subfigure}[b]{.5\textwidth}%
 \centering%
 \tikzstyle{markSign} = [mark=*]
\tikzstyle{shortenLines} = [shorten <= 3.5pt,shorten >= 3.5pt]

\begin{tikzpicture}[figure]
\pgfplotsset{
    grid style = {dash pattern = on 1pt off 1pt, black15,line width = 0.5pt  }
 }
\pgfplotsset{ every non boxed x axis/.append style={x axis line style=-},
     every non boxed y axis/.append style={y axis line style=-},
     legend image post style={line width=1.5pt}}

\colorlet{plotColor1}{thesisgreen}
\colorlet{plotColor2}{thesisyellow}
     
\begin{axis}[
   height=5cm,
   width=0.95\textwidth,
   ymode=log,
   ytick pos=right,
   ymin=0.00001,
   ymax=50.0,
   /pgf/number format/.cd,
   use comma,
   xmin=1.75,
   xmax=8.25,
   xlabel={Step Size [km/h]},
   ylabel={Quality Loss [\%]},
   ytick={0.0001,0.001,0.01,0.1,1,10},
   xtick={1, 2, 3, 4, 5, 6, 7, 8},
   xticklabels={0, 1, 2, 5, 10, 20, $\infty$, cnst.},
   grid=major,
   legend entries={avg. loss, max. loss},
   legend cell align=left,
   legend style={at={(0.97,0.05)},
   anchor=south east,
   font=\scriptsize}
]

\addlegendimage{legend line with exact mark,plotColor1}
\addlegendimage{legend line with exact mark,plotColor2}

\addplot [color=plotColor1,markSign,shortenLines,line width=1pt] table {
   2 0.0001
   3 0.0020
};
\addplot [color=plotColor1,markSign,shortenLines,line width=1pt] table {
   3 0.0020
   4 0.0136
};
\addplot [color=plotColor1,markSign,shortenLines,line width=1pt] table {
   4 0.0136
   5 0.0377
};
\addplot [color=plotColor1,markSign,shortenLines,line width=1pt] table {
   5 0.0377
   6 0.1481
};
\addplot [color=plotColor1,markSign,shortenLines,line width=1pt] table {
   6 0.1481
   7 0.6657
};
\addplot [color=plotColor1,markSign,shortenLines,line width=1pt] table {
   7 0.6657
   8 2.2592
};

\addplot [color=plotColor2,markSign,shortenLines,line width=1pt] table {
   2 0.0051
   3 0.0215
};
\addplot [color=plotColor2,markSign,shortenLines,line width=1pt] table {
   3 0.0215
   4 0.0872
};
\addplot [color=plotColor2,markSign,shortenLines,line width=1pt] table {
   4 0.0872
   5 0.2332
};
\addplot [color=plotColor2,markSign,shortenLines,line width=1pt] table {
   5 0.2332
   6 2.2579
};
\addplot [color=plotColor2,markSign,shortenLines,line width=1pt] table {
   6 2.2579
   7 9.1806
};
\addplot [color=plotColor2,markSign,shortenLines,line width=1pt] table {
   7 9.1806
   8 22.5135
};

\end{axis}
\end{tikzpicture}
 \caption{}%
 \label{fig:step-plot:quality}%
 \end{subfigure}%
 \caption{Trading accuracy for running time when using \gls*{bsp} (\instanceGerNoAux, 2\,kWh). Both plots show query times and result quality for the same set of queries as in Table~\ref{tbl:speed-steps}. (a)~Average and maximum query times for different speed steps. Horizontal lines correspond to the average and maximum query times when running~\gls*{tfp}. (b)~Average and maximum loss in quality compared to the result of \gls*{tfp}, \ie, continuous tradeoffs.}
 \label{fig:step-plot}
\end{figure}
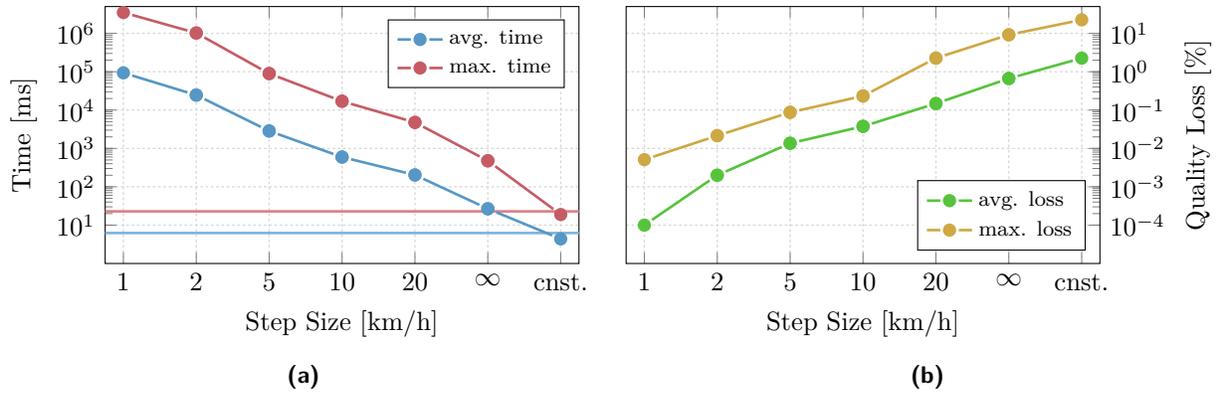

Figure~\ref{fig:step-plot} plots query times against solution qualities for different speed step sizes (visualizing the same figures as in Table~\ref{tbl:speed-steps}).
Figure~\ref{fig:champ-limitplot-avg} shows average running times for the same queries as in Figure~\ref{fig:champ-limitplot-med}, comparing the scalability of different approaches. Recall that a run was aborted if at least one of the 100 queries exceeded an hour of computation time.

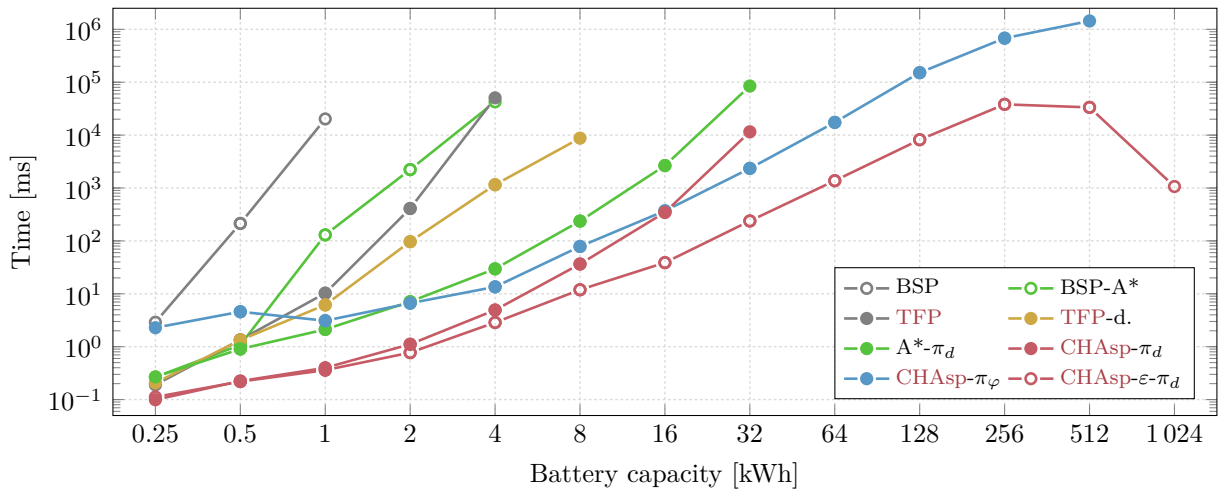
\begin{figure}[t]
  \centering
  \begin{tikzpicture}[figure]
\pgfplotsset{
   grid style = {dash pattern = on 1pt off 1pt, black15,line width = 0.5pt  }
}

\colorlet{plotColor1}{black50} 
\colorlet{plotColor2}{thesisgreen} 
\colorlet{plotColor3}{black50} 
\colorlet{plotColor4}{thesisyellow} 
\colorlet{plotColor5}{thesisgreen} 
\colorlet{plotColor7}{thesisblue} 
\colorlet{plotColor8}{thesisred} 
\colorlet{plotColor9}{thesisred} 

\begin{axis}[
   height=7.0cm,
   width=0.98\textwidth,
   xmin=0.5,
   xmax=13.5,
   ymin=0.05,
   ymax=2500000,
   ymode=log,
   xlabel={Battery capacity [kWh]},
   ylabel={Time [ms]},
   y label style={at={(axis description cs:-0.06,0.5)}},
   /pgf/number format/.cd,
   1000 sep={\,},
   xtick={0, 1, 2, 3, 4, 5, 6, 7, 8, 9, 10, 11, 12, 13},
   xticklabel=\pgfmathparse{2^(\tick-3)}${\pgfmathprintnumber{\pgfmathresult}}$,
   grid=major,
   legend entries={BSP, BSP-A*, \gls{tfp}, \gls{tfp}-d., A*-$\drivingtimepotential$, \gls{champ}-$\drivingtimepotential$, \gls{champ}-$\convexpotential$, \gls{champ}-$\varepsilon$-$\drivingtimepotential$},
   legend cell align=left,
   legend columns=2,
   legend style={at={(0.98,0.04)},
   anchor=south east,
   font=\scriptsize}
]

 \addlegendimage{legend line with heuristic mark,plotColor1}
 \addlegendimage{legend line with heuristic mark,plotColor2}
 \addlegendimage{legend line with exact mark,plotColor3}
 \addlegendimage{legend line with exact mark,plotColor4}
 \addlegendimage{legend line with exact mark,plotColor5}
 \addlegendimage{legend line with exact mark,plotColor8}
 \addlegendimage{legend line with exact mark,plotColor7}
 \addlegendimage{legend line with heuristic mark,plotColor9}

\addplot[color=plotColor1,heuristicMarkSign,shortenLines,line width=1pt] table {
    1 2.901470
    2 213.859676
};
\addplot[color=plotColor1,heuristicMarkSign,shortenLines,line width=1pt] table {
    2 213.859676
    3 20211.224772
};

\addplot[color=plotColor2,heuristicMarkSign,shortenLines,line width=1pt] table {
    1 0.266506
    2 1.077578
};
\addplot[color=plotColor2,heuristicMarkSign,shortenLines,line width=1pt] table {
    2 1.077578
    3 129.981777
};
\addplot[color=plotColor2,heuristicMarkSign,shortenLines,line width=1pt] table {
    3 129.981777
    4 2229.902942
};
\addplot[color=plotColor2,heuristicMarkSign,shortenLines,line width=1pt] table {
    4 2229.902942
    5 42559.306855
};

\addplot[color=plotColor3,exactMarkSign,shortenLines,line width=1pt] table {
    1 0.190112
    2 1.349123
};
\addplot[color=plotColor3,exactMarkSign,shortenLines,line width=1pt] table {
    2 1.349123
    3 10.260272
};
\addplot[color=plotColor3,exactMarkSign,shortenLines,line width=1pt] table {
    3 10.260272
    4 409.848081
};
\addplot[color=plotColor3,exactMarkSign,shortenLines,line width=1pt] table {
    4 409.848081
    5 50581.923936
};

\addplot[color=plotColor4,exactMarkSign,shortenLines,line width=1pt] table {
    1 0.207168
    2 1.333068
};
\addplot[color=plotColor4,exactMarkSign,shortenLines,line width=1pt] table {
    2 1.333068
    3 6.165035
};
\addplot[color=plotColor4,exactMarkSign,shortenLines,line width=1pt] table {
    3 6.165035
    4 96.788332
};
\addplot[color=plotColor4,exactMarkSign,shortenLines,line width=1pt] table {
    4 96.788332
    5 1154.349906
};
\addplot[color=plotColor4,exactMarkSign,shortenLines,line width=1pt] table {
    5 1154.349906
    6 8755.665961
};

\addplot[color=plotColor5,exactMarkSign,shortenLines,line width=1pt] table {
    1 0.268757
    2 0.904524
};
\addplot[color=plotColor5,exactMarkSign,shortenLines,line width=1pt] table {
    2 0.904524
    3 2.121455
};
\addplot[color=plotColor5,exactMarkSign,shortenLines,line width=1pt] table {
    3 2.121455
    4 7.099843
};
\addplot[color=plotColor5,exactMarkSign,shortenLines,line width=1pt] table {
    4 7.099843
    5 29.642054
};
\addplot[color=plotColor5,exactMarkSign,shortenLines,line width=1pt] table {
    5 29.642054
    6 236.698116
};
\addplot[color=plotColor5,exactMarkSign,shortenLines,line width=1pt] table {
    6 236.698116
    7 2655.627588
};
\addplot[color=plotColor5,exactMarkSign,shortenLines,line width=1pt] table {
    7 2655.627588
    8 84719.889459
};

\addplot[color=plotColor7,exactMarkSign,shortenLines,line width=1pt] table {
    1 2.285319
    2 4.581735
};
\addplot[color=plotColor7,exactMarkSign,shortenLines,line width=1pt] table {
    2 4.581735
    3 3.100730
};
\addplot[color=plotColor7,exactMarkSign,shortenLines,line width=1pt] table {
    3 3.100730
    4 6.641269
};
\addplot[color=plotColor7,exactMarkSign,shortenLines,line width=1pt] table {
    4 6.641269
    5 13.524816
};
\addplot[color=plotColor7,exactMarkSign,shortenLines,line width=1pt] table {
    5 13.524816
    6 78.469881
};
\addplot[color=plotColor7,exactMarkSign,shortenLines,line width=1pt] table {
    6 78.469881
    7 370.166582
};
\addplot[color=plotColor7,exactMarkSign,shortenLines,line width=1pt] table {
    7 370.166582
    8 2346.711010
};
\addplot[color=plotColor7,exactMarkSign,shortenLines,line width=1pt] table {
    8 2346.711010
    9 17400.965450
};
\addplot[color=plotColor7,exactMarkSign,shortenLines,line width=1pt] table {
    9 17400.965450
    10 151433.589000
};
\addplot[color=plotColor7,exactMarkSign,shortenLines,line width=1pt] table {
    10 151433.589000
    11 679855.710000
};
\addplot[color=plotColor7,exactMarkSign,shortenLines,line width=1pt] table {
    11 679855.710000
    12 1437867.530000
};

\addplot[color=plotColor8,exactMarkSign,shortenLines,line width=1pt] table {
    1 0.099351
    2 0.223936
};
\addplot[color=plotColor8,exactMarkSign,shortenLines,line width=1pt] table {
    2 0.223936
    3 0.398247
};
\addplot[color=plotColor8,exactMarkSign,shortenLines,line width=1pt] table {
    3 0.398247
    4 1.112500
};
\addplot[color=plotColor8,exactMarkSign,shortenLines,line width=1pt] table {
    4 1.112500
    5 4.929401
};
\addplot[color=plotColor8,exactMarkSign,shortenLines,line width=1pt] table {
    5 4.929401
    6 36.533460
};
\addplot[color=plotColor8,exactMarkSign,shortenLines,line width=1pt] table {
    6 36.533460
    7 346.256005
};
\addplot[color=plotColor8,exactMarkSign,shortenLines,line width=1pt] table {
    7 346.256005
    8 11535.030447
};

\addplot[color=plotColor9,heuristicMarkSign,shortenLines,line width=1pt] table {
    1 0.112021
    2 0.219375
};
\addplot[color=plotColor9,heuristicMarkSign,shortenLines,line width=1pt] table {
    2 0.219375
    3 0.360147
};
\addplot[color=plotColor9,heuristicMarkSign,shortenLines,line width=1pt] table {
    3 0.360147
    4 0.772839
};
\addplot[color=plotColor9,heuristicMarkSign,shortenLines,line width=1pt] table {
    4 0.772839
    5 2.868105
};
\addplot[color=plotColor9,heuristicMarkSign,shortenLines,line width=1pt] table {
    5 2.868105
    6 11.923126
};
\addplot[color=plotColor9,heuristicMarkSign,shortenLines,line width=1pt] table {
    6 11.923126
    7 38.847463
};
\addplot[color=plotColor9,heuristicMarkSign,shortenLines,line width=1pt] table {
    7 38.847463
    8 238.299267
};
\addplot[color=plotColor9,heuristicMarkSign,shortenLines,line width=1pt] table {
    8 238.299267
    9 1374.157484
};
\addplot[color=plotColor9,heuristicMarkSign,shortenLines,line width=1pt] table {
    9 1374.157484
    10 8182.909492
};
\addplot[color=plotColor9,heuristicMarkSign,shortenLines,line width=1pt] table {
    10 8182.909492
    11 38144.922870
};
\addplot[color=plotColor9,heuristicMarkSign,shortenLines,line width=1pt] table {
    11 38144.922870
    12 33447.646600
};
\addplot[color=plotColor9,heuristicMarkSign,shortenLines,line width=1pt] table {
    12 33447.646600
    13 1066.983700
};

\end{axis}
\end{tikzpicture}
  \caption{Average running times for different battery capacities. For the same set of queries as in Figure~\ref{fig:champ-limitplot-med}, this plot shows the corresponding average running time of 100 random in-range queries.}
  \label{fig:champ-limitplot-avg}
\end{figure}

\end{document}